\def\lie{{\cal L}}
\def\tr{\mbox{tr}}
\def\hi{\mathcal{H}}
\def\o{\mathcal{O}}
\def\pu{\partial_{u}}
\def\pv{\partial_{v}}
\def\s{\mathcal{S}}
\newcommand{\li}{\mbox{$\lie \mkern-9.5mu /$}}
\def\f12{\frac 1 2}
\def\a{\alpha}
\def\b{\beta}
\def\hh{\mathcal{H}}
\def\vh{\mathcal{V}_{\hh}}
\def\m{\mathcal{M}}
\def\h{\mathcal{H}}
\newcommand{\nabb}{\mbox{$\nabla \mkern-13mu /$\,}}
\newcommand{\lapp}{\mbox{$\triangle \mkern-13mu /$\,}}
\newcommand{\epsi}{\mbox{$\epsilon \mkern-7.4mu /$\,}}
\newcommand{\gi}{\mbox{$g \mkern-8.8mu /$\,}}
\newcommand{\di}{\mbox{$d \mkern-9.2mu /$\,}}
\newcommand{\divv}{\mbox{$\div \mkern-16mu /$\,\,}}
\def\f12{\frac 1 2}
\def\div{\text{div}}
\newtheorem{definition}{Definition}[section]
\newtheorem{remark}{Remark}[section]
\newtheorem{lemma}{Lemma}[section]
\newtheorem{theorem}{Theorem}[section]
\newtheorem{proposition}{Proposition}[section]
\newtheorem{corollary}{Corollary}[section]
\newtheorem{mytheo}{Theorem}
\begin{document}
\title{The characteristic gluing problem and conservation laws  for the wave equation on null hypersurfaces}% Remove command to get current date 

\author[1,2]{Stefanos Aretakis}
\affil[1]{ Princeton University, Department of Mathematics, Fine Hall, NJ 08544, USA}
\affil[2]{Institute for Advanced Study, Einstein Drive, Princeton, NJ 08540, USA}
\normalsize

\date{October 4, 2013}

\maketitle

\begin{abstract}

We obtain necessary and sufficient conditions for the existence of ``conservation laws''  on null hypersurfaces  for the wave equation on general four-dimensional Lorentzian manifolds. 
Examples of null hypersurfaces exhibiting such conservation laws include the standard null cones of Minkowski spacetime and the degenerate horizons of extremal black holes. Another (limiting) example of such a conservation law is that which gives rise to the well-known Newman--Penrose constants along the null infinity of asymptotically flat spacetimes. The existence of such conservation laws can be viewed as an obstruction to a certain gluing construction for characteristic initial data for the wave equation. We initiate the general study of the latter gluing problem and show that the existence of conservation laws is in fact the only obstruction. Our method relies on a novel elliptic structure  associated to a foliation with 2-spheres of a null hypersurface.

\end{abstract}

\tableofcontents

\section{Introduction}
\label{sec:Introduction}

This paper will address the question of existence of \textit{conserved charges  on null hypersurfaces} (and their associated \textit{conservation laws}) for the wave equation
\begin{equation}
\Box_{g}\psi=0
\label{wave}
\end{equation}
on a general four-dimensional Lorentzian manifold $(\mathcal{M},g)$. 

The simplest example of such a conserved charge arises in Minkowski space and is given by \[char\left[\psi;S_{v}\right]=\int_{S_{v}}\frac{1}{r^{2}}\partial_{u}(r\psi) \]
 which for all solutions of \eqref{wave} satisfies the conservation law 
\[\partial_{v}\Big(char\left[\psi;S_{v}\right]\Big)=0.\]
 Here $u,v$ are standard null coordinates and $S_{v}$ are the spherical sections of the standard null cones $\left\{u=c\right\}$.   Another example is the recently discovered conservation laws on the degenerate event horizons of extremal black hole spacetimes (see \cite{aretakis4, hj2012, murata2012}). A third (limiting) example are the celebrated \textit{Newman--Penrose constants} which are conserved along  null infinity in any asymptotically flat spacetime (see \cite{np1,np2}).

In the present paper, we define a general notion of conserved charges (see Section \ref{sec:ConservationLawsForTheWaveEquations}) encompassing all the above examples and give a characterization of  null hypersurfaces admitting such charges in terms of the kernel of an elliptic operator (defined for the first time here and in our companion paper \cite{aretakiselliptic}).\footnote{It will follow in particular from this characterization that  generic Lorentzian manifolds do not admit such charges.}    In fact,
we show that the only information that can be propagated by all solutions to the wave equation along null hypersurfaces is given precisely by these conserved charges.  For this,  we initiate the general study of \textit{gluing constructions for the characteristic initial value problem} (see Section \ref{sec:TheCharacteristicInitialValueProblem}) and we show that the only obstruction to gluing along a null hypersurface $\hh$ is the existence of conserved charges (in our sense) on $\hh$. 

Part of the importance of the conservation laws on degenerate horizons referred to  above lies in their role in the instability properties of the wave equation on extremal black holes  (see  Section \ref{sec:Remarks}). This result led to the so-called ``horizon instability of extremal black holes''. The present general study may thus shed light on new aspects of the global evolution of the wave equation on more general backgrounds.

The statement of the main result can be found in Section \ref{sec:TheMainResultxx}. Our proof introduces a new method which we hope will be relevant for applications to other linear and non-linear equations such as the Einstein equations.

\subsection{Conservation laws for the wave equation}
\label{sec:ConservationLawsForTheWaveEquations}

We first present some basic geometric definitions that will be useful for defining the notion of conservation laws on null hypersurfaces. For more details about the geometric setting see Section \ref{sec:TheDoubleNullFoliation}; our notation follows  \cite{DC09,christab}. 

\paragraph{Null foliations\medskip \\}
\label{sec:nullfoliations}

Let $\hh$ be a regular null hypersurface of a four-dimensional Lorentzian manifold $(\m,g)$. A foliation  $\mathcal{S}=\big(S_{v}\big)_{v\in\mathbb{R}}$ of $\hh$, that is a collection of sections $S_{v}$ which vary smoothly in $v$ such that $\cup_{v}S_{v}=\hh$, can be uniquely determined by the choice of one section $S_{0}$, the choice of a smooth function $\Omega$ on $\hh$ and the choice of a null geodesic vector field $L_{geod}$ tangential to the null generators of $\hh$ and such that 
\[\nabla_{L_{geod}}L_{geod}=0.\]
Indeed, if we define the vector field 
\[L=\Omega^{2}\cdot L_{geod}\]
on $\hh$ and consider the affine parameter $v$ of $L$ such that 
\[Lv=1, \text{ with }v=0 \text{ on } S_{0},\]
then the level sets $S_{v}$ of $v$ on $\hh$ are precisely the leaves of the foliation $\mathcal{S}$. We use the notation
\begin{equation}
\mathcal{S}=\Big\langle S_{0},L_{geod}, \Omega\Big\rangle.
\label{foliation}
\end{equation}
 \begin{figure}[H]
   \centering
		\includegraphics[scale=0.06]{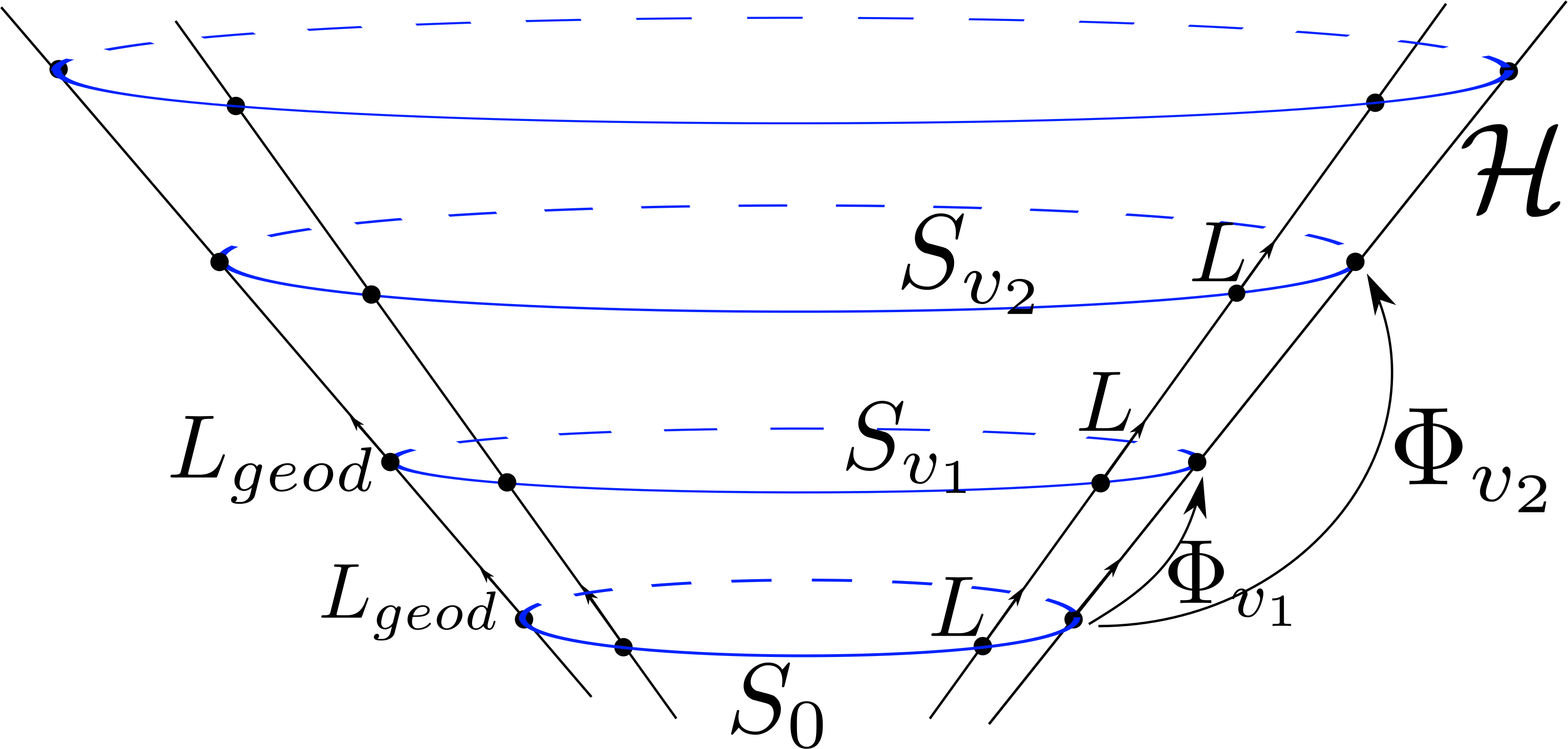}
	\label{fig:glue1}
\end{figure}
We will henceforth restrict to the case that all sections $S_{v}$ above are diffeomorphic (via a diffeomorphism $\Phi$) to the 2-sphere.\footnote{ Topologies with higher genus can be treated analogously. Our argument heavily relies  on the compactness of the sections and hence the non-compact case remains an open problem.} The flow of $L$ on $\hh$ provides a diffeomorphism $\Phi_{v}$ between the sections $S_{v}$ and $S_{0}$. In addition to the induced metric on $S_{v}$, which we will denote by $\gi$, we can also equip all sections with the standard metric on the unit sphere $\gi_{\mathbb{S}^{2}}$ (via $\Phi$) such that it is invariant under the flow of $L$. The volume form on $S_{v}$ with respect to $\gi_{\mathbb{S}^{2}}$ will be denoted by $d\mu_{_{\mathbb{S}^{2}}}$.

Given any section $S_{v}$, there is a unique metric $\hat{g}$ which is conformal to the induced metric $\gi$ such that the volume form $d\mu_{_{\hat{g}}}$ with respect to $\hat{g}$ and the volume form $d\mu_{_{\mathbb{S}^{2}}}$ with respect to $\gi_{\mathbb{S}^{2}}$ are equal:
\[d\mu_{_{\hat{g}}}=d\mu_{_{\mathbb{S}^{2}}}. \]We denote by $\phi$ the conformal factor:
\begin{equation}
\gi=\phi^{2}\cdot \hat{g}. 
\label{eq:theconformalfactorintroduction}
\end{equation}

Furthermore, given a foliation $\mathcal{S}$ we denote by $Y^{\s}$ the unique null vector field   which is normal to the sections $S_{v}$, conjugate to $\hh$ and normalized such that 
\begin{equation}
g\big(L_{geod},Y^{\s}\big)=-1.
\label{Y}
\end{equation}
 \begin{figure}[H]
   \centering
		\includegraphics[scale=0.06]{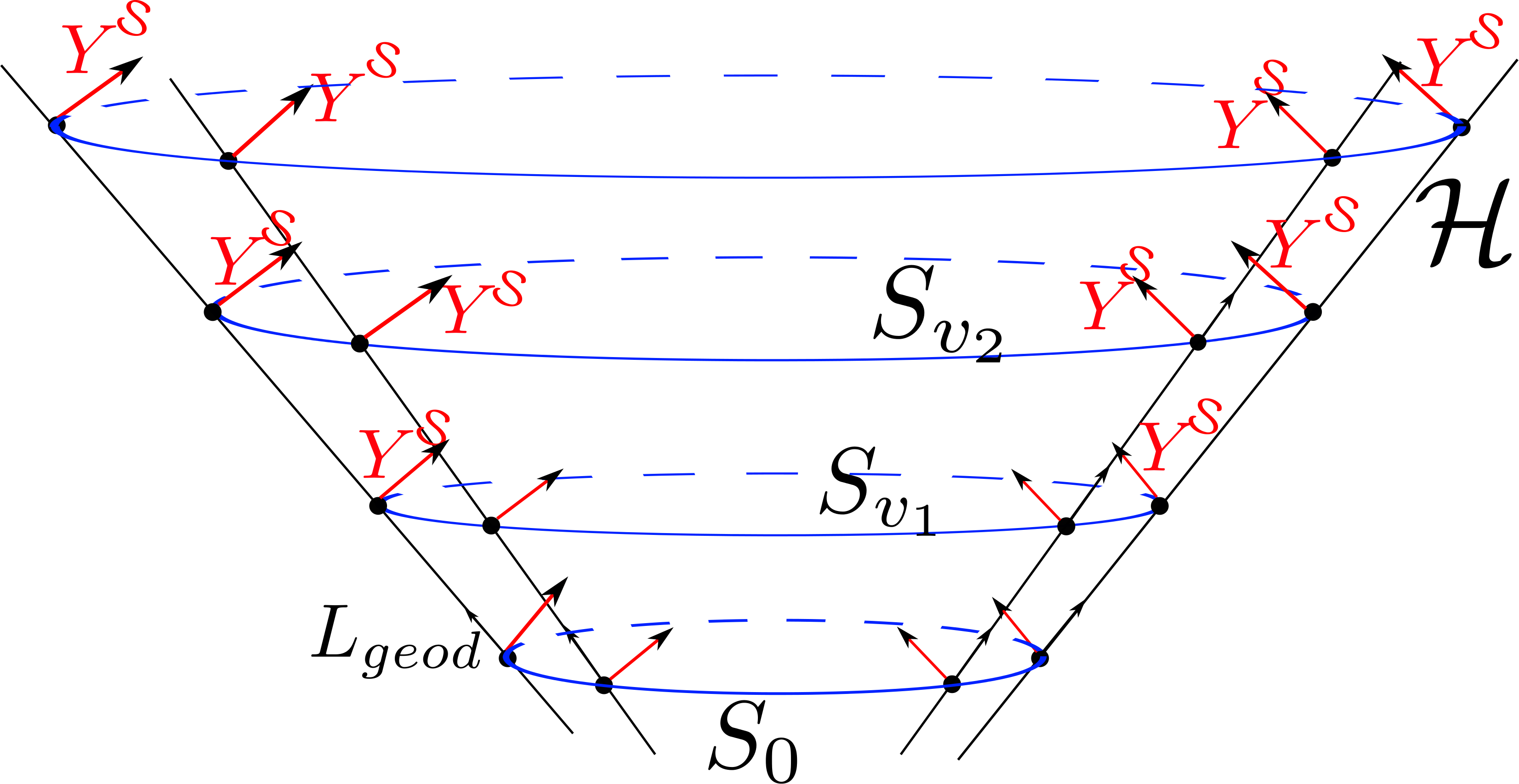}
	\label{fig:glue2}
\end{figure}
The vector field $Y^{\s}$ is transversal to $\hh$ and can be seen as the generator of an appropriately normalized ``retarded time'' $u$ such that $u=0$ on $\hh$. Specifically, one can construct an optical function $u$ such that $Y^{\s}u=1$ on $\hh$ and the level sets of $u$ are ``outgoing'' null hypersurfaces $\hh_{u}$ (hence we assume here that $\hh$ is an ``outgoing'' null hypersurface). Note that in a similar  fashion as above we can define the conformal factor $\phi$ of the section $S_{u,v}$ which are the intersections of $\hh_{u}$ and the ``incoming'' null hypersurfaces $\underline{\hh}_{v}$ generated by the null geodesics normal to $S_{v}$ and conjugate to $\hh$.
 \begin{figure}[H]
   \centering
		\includegraphics[scale=0.07]{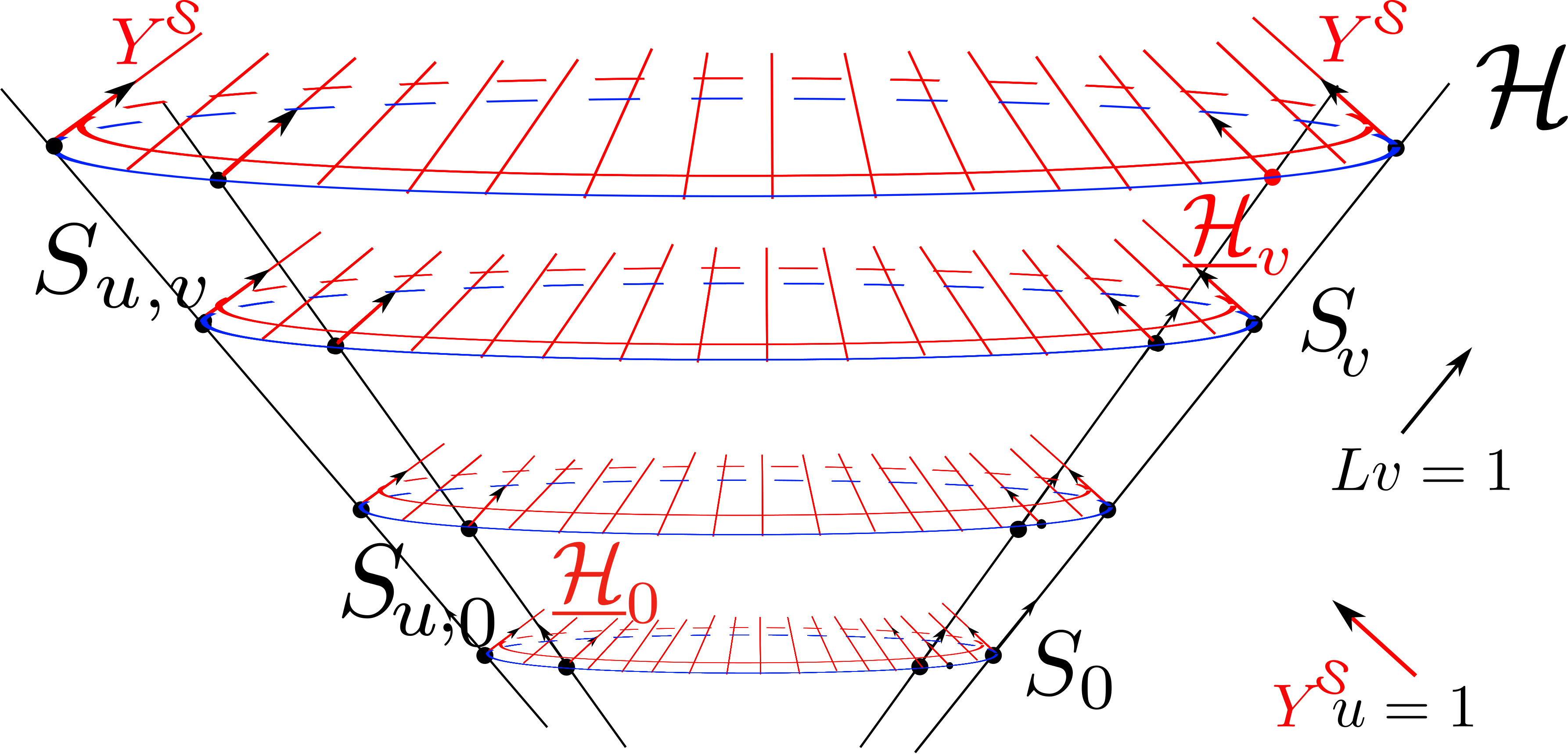}
	\label{fig:glue3}
\end{figure}

\newpage 

\paragraph{Conservation laws on $\hh$\medskip \\}
\label{sec:conservation laws}

Consider the linear space $\mathcal{V}_{\hh}$ consisting of all smooth functions on $\hh$ which are constant along the null generators of $\hh$, i.e.
\begin{equation}
\mathcal{V}_{\hh}=\Big\{f\in C^{\infty}(\hh)\, :\, Lf=0\Big\}.
\label{linearspace}
\end{equation}
Let $\mathcal{S}=\big(S_{v}\big)_{v\in\mathbb{R}}$ be a foliation of $\hh$ and let $Y^{\s}$ be the vector field and $\phi$ the conformal factor defined above.  
We define the linear space $\mathcal{W}^{\mathcal{S}}$ to be the subspace of $\mathcal{V}_{\hh}$ such that for all $\Theta^{\s}\in \mathcal{W}^{\mathcal{S}}$ and for all solutions $\psi$ to the wave equation \eqref{wave} the integrals
\begin{equation}
\int_{S_{v}}Y^{\s}\big(\phi\cdot\psi\big)\cdot\Theta^{\s} \, d\mu_{_{\mathbb{S}^{2}}}
\label{eq:integrals}
\end{equation}
are conserved, i.e.~independent of $v$. That is,
\begin{equation}
\mathcal{W}^{\s}=\left\{\Theta^{\s}\in C^{\infty}(\hh)\, :\, L\Theta^{\s}=0, \ \partial_{v}\left(\int_{S_{v}}Y^{\s}\big(\phi\cdot\psi\big)\cdot\Theta^{\s} \, d\mu_{_{\mathbb{S}^{2}}}\right)=0\right\}\subset \mathcal{V}_{\hh}. 
\label{eq:}
\end{equation}

 We make the following definition:
\begin{definition} \textbf{(Conservation laws on $\hh$)}:
We say that a null hypersurface $\hh$ admits (first order) conservation laws  with respect to a foliation $\mathcal{S}$ of $\hh$ if
\begin{equation}
\dim\mathcal{W}^{\mathcal{S}}\geq 1.
\label{dimensionofw}
\end{equation} 
If \eqref{dimensionofw} holds then we will refer to the space $\mathcal{W}^{\mathcal{S}}$ and the number $\dim \mathcal{W}^{\s}$ as the kernel and the dimension of the conservation laws, respectively. The integrals of the form \eqref{eq:integrals} will be called conserved charges and will be denoted by $char\big(S_{v}\big)[\psi; \Theta^{\s}]$.
\label{definitionconservationlaw}
\end{definition}
A priori Definition \ref{definitionconservationlaw} appears to be a very restrictive notion of conservation laws. However, as we shall show (see Theorem \ref{theoremmainintro}), the conservation laws in the sense of Definition \ref{definitionconservationlaw} are in fact the only type of ``first order'' conservation laws that a null hypersurface $\hh$ might admit.

 One could also define higher order conservation laws by considering higher derivatives of $\phi\cdot\psi$ in \eqref{eq:integrals}. In order to make our method clear, in the bulk of this paper we only consider first order conservation laws and for this reason we will simply refer to them as conservation laws (see however Section \ref{sec:Genericity} where we consider higher order conservation laws for spherically symmetric geometries).

\paragraph{Examples \\}
\label{sec:examples}

We next consider three main examples of spacetimes admitting conserved charges for the wave equation.

\paragraph{\small 1. Minkowski spacetime\medskip \\}
\label{sec:MinkowskiSpacetimea}

\normalsize

The wave equation in double null coordinates $(u,v)$ on the Minkowski spacetime reads
\[\partial_{v}\partial_{u}(r\psi)=\frac{1}{2r^{2}}\lapp_{\mathbb{S}^{2}}(r\psi),   \]
where $\lapp_{\mathbb{S}^{2}}$ is the standard Laplacian on the unit sphere. Hence, 
\[\int_{S_{v}}\pv\pu(r\psi)\, d\mu_{_{\mathbb{S}^{2}}}=\frac{1}{2r^{2}}\int_{S_{v}}\lapp_{\mathbb{S}^{2}}(r\psi)\, d\mu_{_{\mathbb{S}^{2}}}=0\]
and thus the integral 
\[\int_{S_{v}}\pu(r\psi)\, d\mu_{_{\mathbb{S}^{2}}}\, \]
is conserved along the null hypersurfaces $\left\{u=c\right\}$. This conserved charge can be written in  terms of Definition \ref{definitionconservationlaw}; indeed, if we consider the foliation $\s=\left\langle S_{0}, L_{geod}=\partial_{v},\Omega=1\right\rangle$ of $\hh$ then $\phi=r$, $Y^{\s}=\partial_{u}$ and  $\left\langle 1\right\rangle\subset \mathcal{W}^{\s}$ and hence $\dim \mathcal{W}^{\s}\geq 1$.

\paragraph{\small 2. Extremal black holes\medskip \\}
\label{sec:MinkowskiSpacetime4}

\normalsize

Consider the coordinate vector fields
\[T=\partial_{v}, \ \ \ \ \ R=\partial_{r}, \ \ \ \ \ \Phi=\partial_{\phi^{*}} \]
with respect to the ingoing Eddington--Finkelstein coordinates  $(v,r,\theta,\phi^{*})$ on an extremal Kerr black hole with mass parameter equal to $M$. Then, the quantity
\begin{equation}
\int_{S_{v}} \left[R\psi+\frac{\sin^{2}\theta}{4}\cdot T\psi+ \frac{1}{2M}\cdot\psi\right]\,d\mu_{_{\mathbb{S}^{2}}}
\label{eextremalcharge}
\end{equation}
is conserved along the horizon $\hh=\left\{r=M\right\}$, i.e.~it is independent of $v$. This conservation law was first found in \cite{aretakis4} and was then generalized to all extremal black holes by Lucietti and Reall \cite{hj2012} and Murata \cite{murata2012}. 

The above conserved charge can be written in  terms of  Definition \ref{definitionconservationlaw} as follows: Consider the foliation $\s=\left\langle S_{0}, L_{geod}=\partial_{v},\Omega=1\right\rangle$ of $\hh$. Then  \[Y^{\s}=\frac{1}{\frac{\sin^{2}\theta}{2}-1}\cdot\left[R+\frac{\sin^{2}\theta}{4}\cdot T+\frac{3+\cos^{2}\theta}{8M}\cdot\Phi\right]\] and 
\[\left.\phi\right|_{\hh}=\sqrt{2}\cdot M,\ \ \ \ \ \left.Y^{\s}\phi\right|_{\hh}=\frac{\sqrt{2}}{2}\cdot\frac{1}{\frac{\sin^{2}\theta}{2}-1}, \ \ \ \ \  \left\langle 
\frac{\sin^{2}\theta}{2}-1\right\rangle\subset \mathcal{W}^{\s} \]
and hence $\dim \mathcal{W}^{\s}\geq 1$. Note that since the integral curves of $\Phi$ are closed, the $\Phi$-derivative drops out from \eqref{eextremalcharge}. 

For more results based on this conservation law see \ref{sec:Remarks}.

\paragraph{\small 3. Null infinity of asymptotically flat spacetimes \medskip \\}
\label{sec:MinkowskiSpacetime}

\normalsize

 Given sufficient smoothness for $\psi$ at the null infinity $\mathcal{I}^{+}$ of an asymptotically flat spacetime we can write
\[\psi\big(u,r,\theta^{1},\theta^{2}\big)=\frac{\a_{1}\big(u,\theta^{1},\theta^{2}\big)}{r}+\frac{\a_{2}\big(u,\theta^{1},\theta^{2}\big)}{r^{2}}+O\left(\frac{1}{r^{3}}\right)\]
with respect to outgoing Eddington--Finkelstein coordinates $\big(u,r,\theta^{1},\theta^{2}\big)$. Here we identity $\mathcal{I}^{+}=\left\{r=\infty\right\}$. If $\psi$ is a solution to the wave equation, then the quantity
\begin{equation}
\lim_{r\rightarrow+\infty}\int_{S_{u}}r^{2}\cdot\partial_{r}(r\psi)\, d\mu_{_{\mathbb{S}^{2}}}
\label{eq:npc}
\end{equation}
does not depend on $u$. This charge as well as other charges involving higher order derivatives were found by Newman and Penrose \cite{np1,np2} (see also \cite{npexton}) and are known as Newman--Penrose constants.

The origin of these peculiar constants has been the object of intense study. In particular, we mention the work of Goldberg \cite{goldberg1,goldberg2} who showed that these constants do not arise from non-trivial transformation laws. The same author was able to rederive these constants in the flat case by using Green's theorem in appropriate regions in conjunction with the fundamental solution to the wave equation. See also the related work by Robinson \cite{robinson}. Further work on the Newman--Penrose constants can be found in \cite{chrugrav,goldberg3,pressnp,valientenp1, valientenp2,valientenp3} and references there-in. A nice geometric relation of the the Newman--Penrose constants and the  charges at the event horizon of extremal Kerr  was given by Bizon and Friedrich \cite{bizon2012}  and independently by Lucietti et al \cite{hm2012}.

The Newman--Penrose constants can be seen as a limiting example of the conserved charges given by Definition \ref{definitionconservationlaw} as follows: Let $\mathcal{I}_{S_{0}}$ be an incoming null hypersurface  and  $\s=\left\langle S_{0}, L_{geod}, \Omega=1\right\rangle$ be a foliation  of it such that $\frac{1}{2}\big(L_{geod}+Y^{\s}\big)\left.\!\right|_{S_{0}}$ is the (unit timelike) binormal of $S_{0}$ (see Section \ref{sec:TheNullInfinityMathcalI}). Let also $r=\sqrt{A/4\pi}$  be the area-radius function of the sections of $\s$ on $\mathcal{I}_{S_{0}}$, where $A$ is the area of the sections. Then $\mathcal{I}_{S_{0}}\rightarrow \mathcal{I}$ as $r\big(S_{0}\big)\rightarrow+\infty$ and the Newman--Penrose constants can be retrieved in the limit as $r\rightarrow +\infty$ by the conservation law of Definition \ref{definitionconservationlaw} if we take $Y^{\s}$ to be the limit of $Y^{\s}\left.\right|_{\mathcal{I}_{S_{0}}}$ and $\Theta^{S}$ to be the limit of $r^{2}\left.\right|_{\mathcal{I}_{S_{0}}}$ as $\mathcal{I}_{S_{0}}\rightarrow\mathcal{I}$ .

Our general theory (see Theorem \ref{theo3}) will in particular show that  the conserved charge \eqref{eq:npc} is the \textbf{only} non-trivial conserved  charge along the null infinity $\mathcal{I}$ which involves the 1-jet of $\psi$.

\subsection{The characteristic gluing problem}
\label{sec:TheCharacteristicInitialValueProblem}

The characteristic gluing problem for the wave equation  provides a means to formally show that Definition \ref{definitionconservationlaw} is the right notion of conservation laws on null hypersurfaces. We introduce this problem below.

Let $(\m,g)$ be a four-dimensional Lorentzian manifold and $\hh, \underline{\hh}$  be two regular null hypersurfaces intersecting at a two dimensional sphere $S_{0}$. Characteristic initial data for the wave equation \eqref{wave} correspond to prescribing the restriction of $\psi$ on  the union $\hh\cup \underline{\hh}$. 
In fact, given smooth data at $\mathcal{A},\underline{\mathcal{A}}$, as depicted below, there is a unique smooth solution to the wave equation in the domain of dependence $\mathcal{R}$, depicted schematically below:

 \begin{figure}[H]
   \centering
		\includegraphics[scale=0.09]{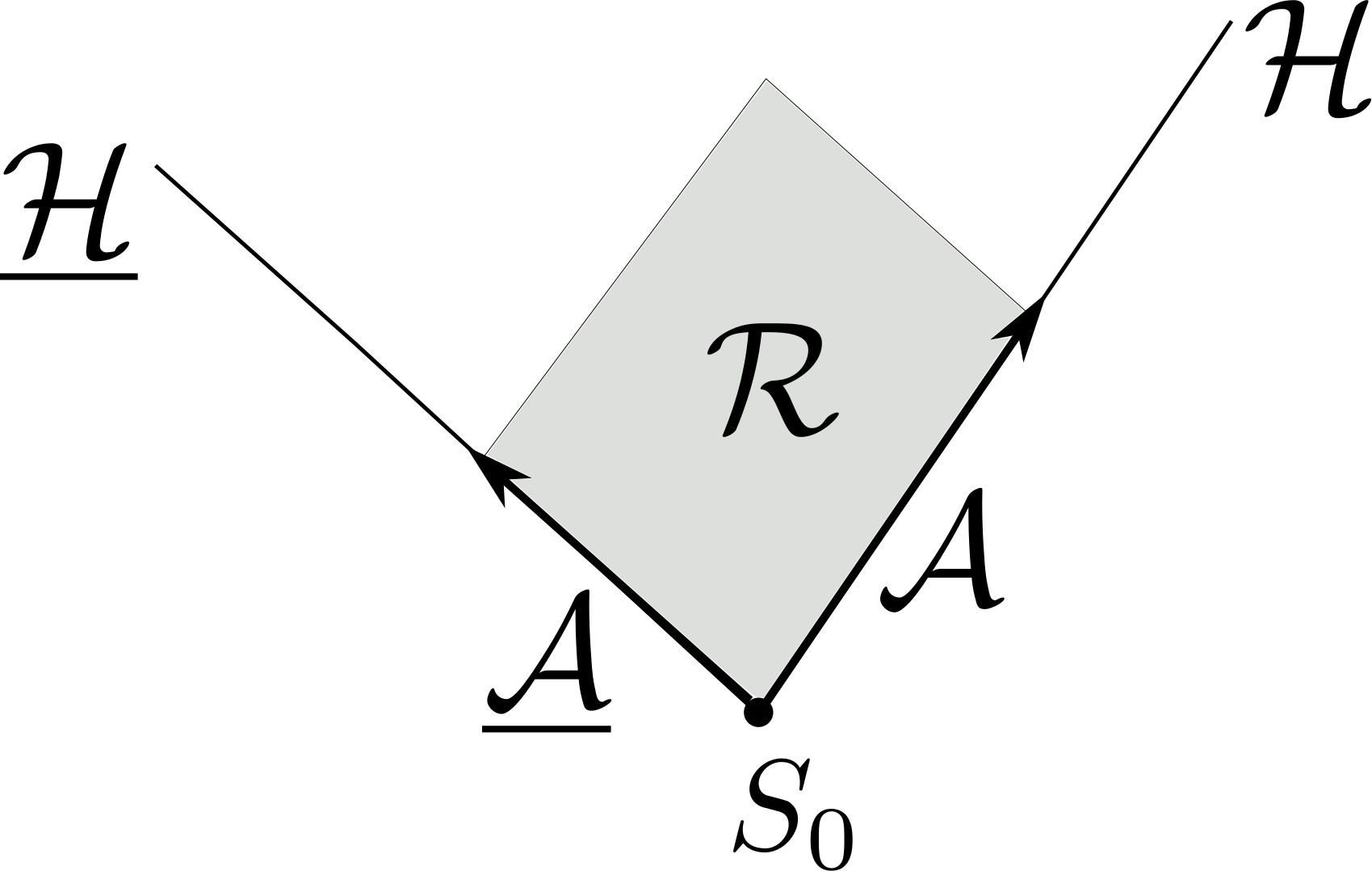}
	\label{fig:pfsdkfjapwoeijwe45}
\end{figure}

The problem of gluing constructions which we wish to formulate is the following: Consider a null hypersurface $\hh$ and two conjugate null hypersurfaces $\underline{\hh}_{0}$ and $\underline{\hh}_{1}$ intersecting $\hh$ at the two-dimensional spheres $S_{0}$ and $S_{1}$. We  prescribe initial data for the wave equation \eqref{wave} on the hypersurfaces $\mathcal{A}_{0},\underline{\mathcal{A}}_{0}$ and $\mathcal{A}_{1},\underline{\mathcal{A}}_{1}$ depicted in the figure below

 \begin{figure}[H]
   \centering
		\includegraphics[scale=0.129]{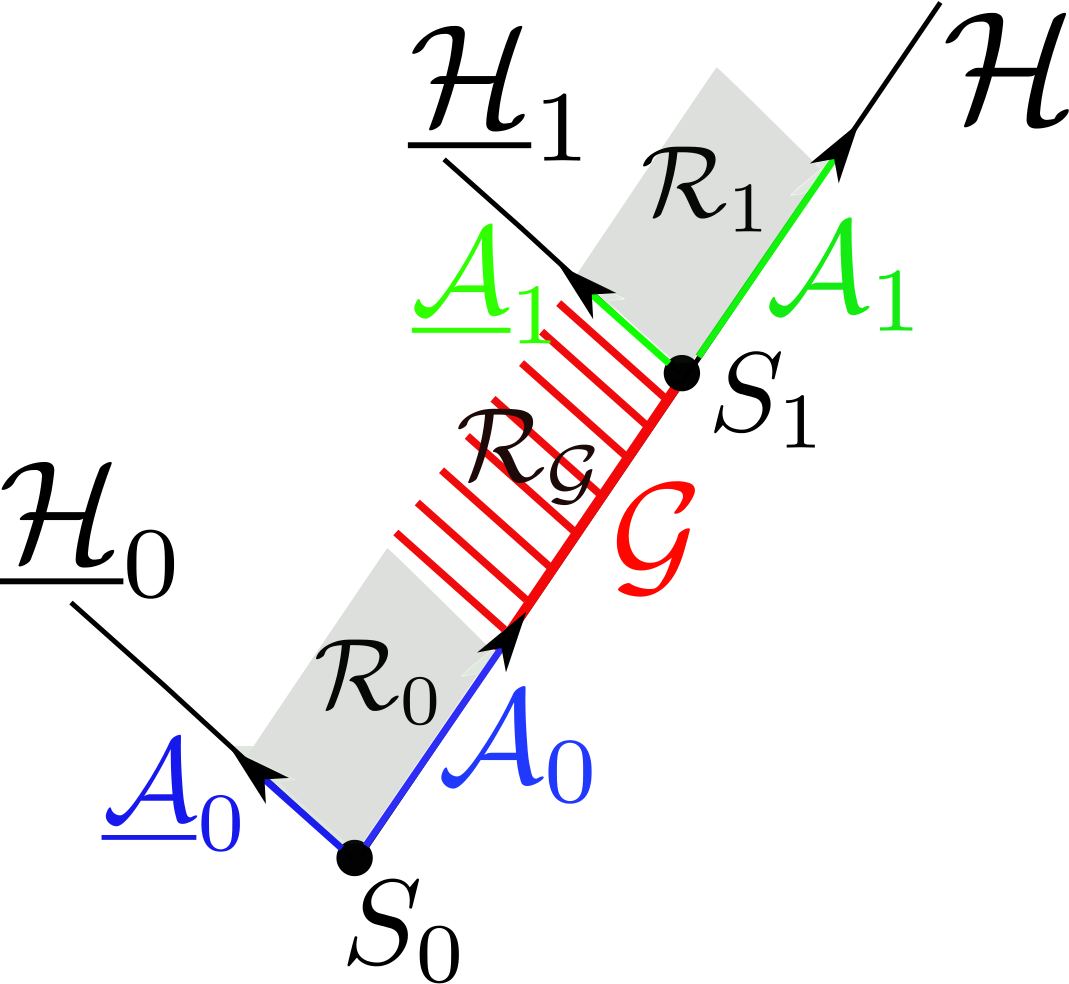}
	\label{figfdsp45picutre02}
\end{figure}

\noindent and we want to extend the data on the  truncated hypersurface $\mathcal{G}$ such that there is a smooth solution $\psi$ to the wave equation in the region $\mathcal{R}_{0}\cup\mathcal{R}_{\mathcal{G}}\cup\mathcal{R}_{1}$ such that $\left.\psi\right|_{\underline{A}_{0}\cup\underline{A}_{1}}$ and $\left.\psi\right|_{\mathcal{A}_{0}\cup\mathcal{A}_{1}}$ coincide with the prescribed data. 

In this paper we address in fact a weaker version of the above gluing problem, which is however sufficient for the complete classification of all null hypersurfaces admitting conservation laws. We make the following definition
\begin{definition} We shall say that ``we can perform first order gluing along $\hh$'' of the characteristic data $(\underline{\mathcal{A}}_{0},\mathcal{A}_{0})$, $(\underline{\mathcal{A}}_{1},\mathcal{A}_{1})$ as defined above if we can  smoothly extend the data in $\mathcal{G}$ such that the arising solutions

\begin{itemize}
	\item $\psi_{0}$ with data  given on $\underline{\mathcal{A}}_{0},\mathcal{A}_{0}\cup\mathcal{G}$, and 
	\item $\psi_{1}$ with data given on $\underline{\mathcal{A}}_{1},\mathcal{A}_{1}$
\end{itemize} agree at $S_{1}$ to all orders tangential to  $\hh$ and up to first order in directions transversal to $\hh$; that is, $\psi_{0}=\psi_{1}$ at $S_{1}$ to all orders tangential to  $\hh$ and $Y\psi_{0}=Y\psi_{1}$ at $S_{1}$, where $Y$ is a smooth vector field transversal to $\hh$. 
\label{firstordergluingdefinition}
\end{definition}

 \begin{figure}[H]
   \centering
		\includegraphics[scale=0.1]{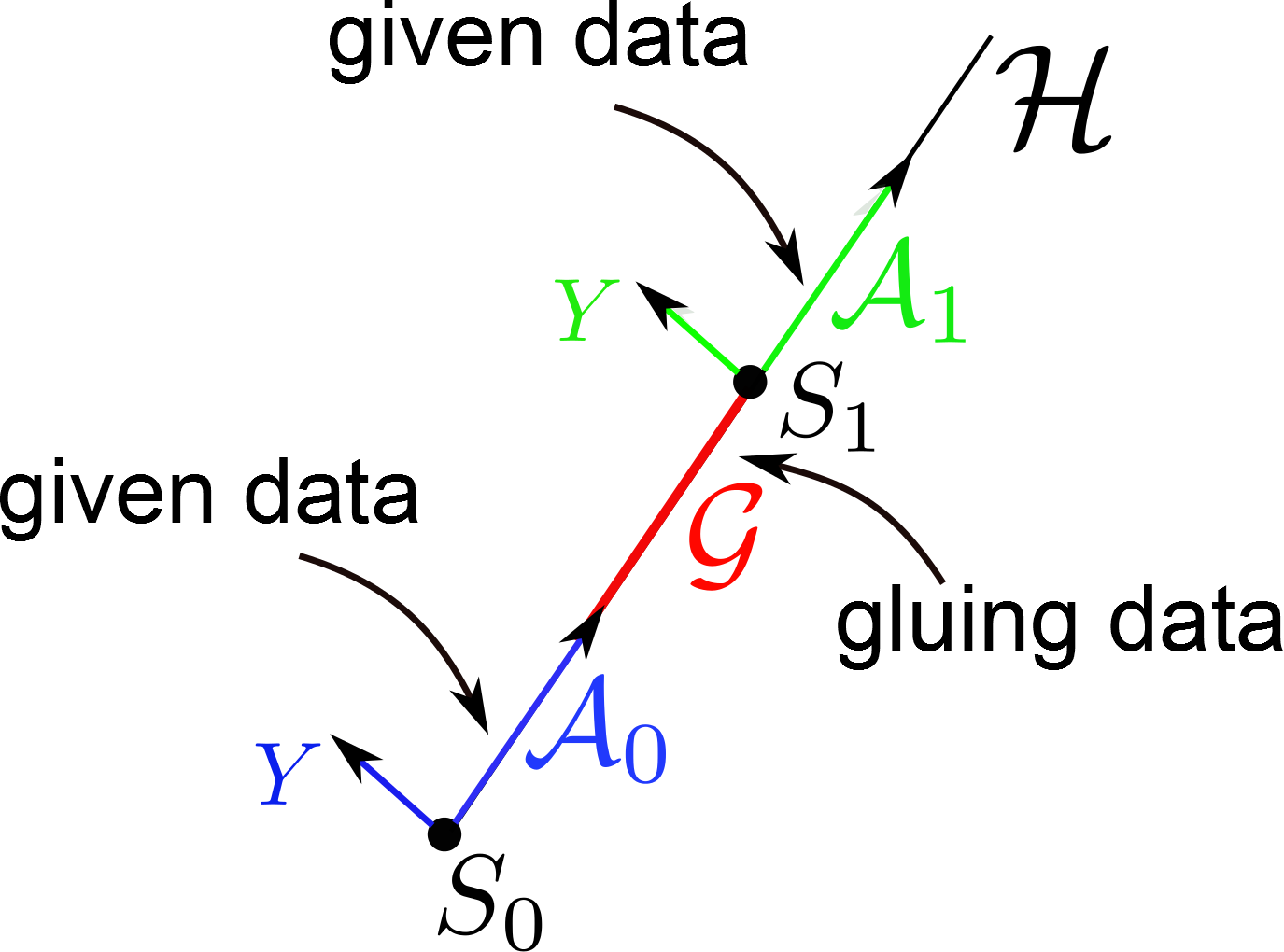}
	\label{figfdsp45picutre03}
\end{figure}

If $\psi$ solves the wave equation then the transversal derivative $Y\psi$ on $\hh$ is completely determined by the data $\left.\psi\right|_{\hh}$ on  $\hh$ and the transversal derivative $Y\psi$ at a section $S$ of $\hh$. For this reason, it is convenient to  ``forget'' about the incoming null hypersurfaces $\underline{\hh}_{1},\underline{\hh}_{2}$ and hence just ``keep'' the following data
\[\psi\left.\right|_{\mathcal{A}_{0}}, \ \ \ Y\psi\left.\right|_{S_{0}}\]
and 
\[\psi\left.\right|_{\mathcal{A}_{1}}, \ \ \ Y\psi\left.\right|_{S_{1}}.\]
In fact,\textit{ we can simply think of the data as given at the two spheres $S_{0},S_{1}$ as follows
\[ \text{Data}(S_{0})=\left\{  Y\psi\left.\right|_{S_{0}}, \ L^{n}\psi\left.\right|_{S_{0}},\, n\geq 0\right\}\]
and 
\[ \text{Data}(S_{1})=\left\{  Y\psi\left.\right|_{S_{1}}, \ L^{n}\psi\left.\right|_{S_{1}},\, n\geq 0\right\},\]
where $L$ is tangential to the null generator of $\hh$. 
Then, our problem is to smoothly extend $\psi$ on $\hh$ between $S_{0}$ and $S_{1}$ such that the transversal derivative $Y\psi$ is continuous on $\hh\cap\left\{0\leq v\leq 1\right\}$ (and hence such that $\psi$ is $C^{1}$ on $\hh\cap\left\{0\leq v\leq 1\right\}$). 
}

\medskip

Note that the $C^{k}$ case with $k>1$ (where one needs to ``glue'' transversal derivatives up to the $k$'th order) is addressed here, for simplicity, only in the spherically symmetric case; however the general case can be addressed by an amalgamation of the methods presented in Sections \ref{sec:TheGeneralCase} and \ref{sec:Genericity}.

Let  $S_{0}$ and $S_{1}$ be two leaves of a foliation $\s$ of $\hh$. Clearly, if $\hh$ admits non-trivial conservation laws with respect to the foliation $\s$ in the sense of  Definition \ref{definitionconservationlaw}, then gluing constructions of data at $S_{0}$ and $S_{1}$ are not always possible since the prescribed charges at $S_{0}$ and $S_{1}$ may not coincide. However, it is not a priori obvious if such charges are the only obstruction to gluing of characteristic data. On the other hand, if we can show that we can always glue to first order  characteristic data on $\hh$ then it immediately follows that $\hh$ does not admit any non-trivial charges.

\subsection{The main theorems}
\label{sec:TheMainResultxx}

The next theorem 1) characterizes the conservation laws on null hypersurfaces,  2) derives their role in the characteristic gluing problem, 3) provides necessary and sufficient conditions for their existence in terms of the kernel of the elliptic operator $\mathcal{O}^{\s}$ defined by \eqref{adjoint}, 4) uncovers their behavior under change of foliations and 5) establishes their non-genericity. 

\begin{mytheo}
Let $\hh$ be a regular, free from conjugate or focal points null hypersurface of a four-dimensional Lorentzian manifold $(\m,g)$. Let also  $\mathcal{S}=\big(S_{v}\big)_{v\in\mathbb{R}}$ be a foliation of $\hh$, such that $S_{v}$ are diffeomorphic to $\mathbb{S}^{2}$, and $\o^{\mathcal{S}}$ be the associated elliptic operator given by \eqref{adjoint}. Then we have the following
\begin{itemize}

	\item \textbf{Characteristic gluing constructions and conservations laws:}	
	One can perform first order gluing constructions on $\hh$ in the sense of  Definition \ref{firstordergluingdefinition} for general characteristic data if and only if there are no first order conservation laws on $\hh$ in the sense of  Definition \ref{definitionconservationlaw}, i.e. $\mathcal{W}^{\s}=\left\{0\right\}$. If $\hh$ admits conservation laws, then we can glue characteristic data if and only if their associated charges are equal. 
		
	\item  \textbf{Classification of null hypersurfaces admitting conservation laws:} Consider the following linear space
	\begin{equation}
	\mathcal{U}^{\s}=\left\{\Theta^{\s}\in C^{\infty}(\hh)\, :\ L\Theta^{\s}=0, \ \mathcal{O}^{\s}\left(\frac{1}{\phi}\cdot\Theta^{\s}\right)=0 \text{ on }\hh  \right\}\subset \mathcal{V}_{\hh},
	\label{newdef}
	\end{equation}
	where $\phi$ denotes the conformal factor of the sections of $\mathcal{S}$. Then, the null hypersurface $\hh$ admits first order conservation laws with respect to $\s$ for the wave equation \eqref{wave} in the sense of Definition \ref{definitionconservationlaw} if and only if   $\mathcal{U}^{\mathcal{S}}\neq \left\{ 0\right\}$.  In fact, the kernel of the conservation laws satisfies $\mathcal{W}^{\mathcal{S}}=\mathcal{U}^{\mathcal{S}}$.

	\item \textbf{Classification of conservation laws on null hypersurfaces:} No conserved (linear or non-linear) quantities (or, more generally, monotonic in $v$ quantities) involving the 1-jet of solutions to the wave equation exist on $\hh$ apart from the conservation laws given precisely by  Definition \ref{definitionconservationlaw}. Moreover, $\hh$ can only admit finitely many linearly independent conservation laws, i.e.~$\dim \mathcal{W}^{\mathcal{S}}<\infty$. 
	
		\item \textbf{Conservation laws and change of foliation:} If $\hh$ admits a conservation law with respect to the foliation $\mathcal{S}=\left\langle S_{0},L_{geod}, \Omega\right\rangle$, then it also admits a conservation law with respect to any other foliation $\mathcal{S}'=\big\langle S'_{0},L_{geod}',\Omega'\big\rangle$.  Specifically, the kernels $\mathcal{W}^{\s},\mathcal{W}^{\s'}$ satisfy $\mathcal{W}^{\s'}=f^{2}\cdot\mathcal{W}^{\s}=\left\{f^{2}\cdot \Theta^{\s},\,\Theta^{\s}\in \mathcal{W}^{\s}\right\}$, where $f\in\vh$ such that $L_{geod}'=f^{2}\cdot L_{geod}$, and so $\dim \mathcal{W}^{\s'}=\dim \mathcal{W}^{\s}$. Moreover, the value of the conserved charges is independent of the choice of foliation.

\item \textbf{Non-genericity of conservation laws:} A null hypersurface does not admit conservation laws for generic ambient metrics. The same result holds even if we restrict to spacetimes $(\m,g)$ satisfying the Einstein-vacuum equations. 
\end{itemize}
\label{theoremmainintro}
\end{mytheo}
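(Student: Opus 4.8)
The plan is to reduce everything to one structural fact about the wave operator on $\hh$. Restricting \eqref{wave} to $\hh$ and using the null structure equations of the foliation $\s$ together with the defining property of the conformal factor $\phi$ — which is chosen precisely so that, after the renormalisation $\psi\mapsto\phi\cdot\psi$, the coefficient of the transversal derivative drops out — one obtains along the generators of $\hh$ a transport identity of the schematic form $L\big(Y^{\s}(\phi\psi)\big)=\tfrac{1}{\phi}\,\mathcal{P}^{\s}(\psi)$, where $\mathcal{P}^{\s}$ is a second order differential operator acting tangentially on each section $S_{v}$, and $\o^{\s}$ is, up to the conformal weights recorded in \eqref{adjoint}, the formal $L^{2}(d\mu_{\mathbb{S}^{2}})$-adjoint of $\mathcal{P}^{\s}$; this is the content of the companion paper \cite{aretakiselliptic}. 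The hypothesis that $\hh$ is free of conjugate and focal points is used here and below to guarantee that $\gi$ remains a genuine Riemannian metric on every $S_{v}$ — so that $\o^{\s}$ is an honest elliptic operator on each section — and that the vector field $Y^{\s}$, the conjugate null hypersurfaces and the transport from $S_{0}$ to $S_{1}$ are all regular.

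Granting this, the classification via $\o^{\s}$ (the identity $\mathcal{W}^{\s}=\mathcal{U}^{\s}$) is immediate. Since $d\mu_{\mathbb{S}^{2}}$ is invariant under the flow of $L$ and $L\Theta^{\s}=0$, differentiating the charge \eqref{eq:integrals} and using the transport identity followed by integration by parts on the sphere gives $\partial_{v}\,char(S_{v})[\psi;\Theta^{\s}]=\int_{S_{v}}\psi\cdot\o^{\s}\big(\tfrac{1}{\phi}\Theta^{\s}\big)\,d\mu_{\mathbb{S}^{2}}$ — this is exactly relation \eqref{adjoint}. As $\psi|_{\hh}$ (hence $\psi|_{S_{v}}$) is free characteristic data, the right-hand side vanishes for every solution if and only if $\o^{\s}(\tfrac{1}{\phi}\Theta^{\s})=0$ on $\hh$, i.e. $\Theta^{\s}\in\mathcal{U}^{\s}$; thus $\mathcal{W}^{\s}=\mathcal{U}^{\s}$. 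Restricting the equation $\o^{\s}(\tfrac{1}{\phi}\Theta^{\s})=0$ to $S_{0}$ and recalling that $\Theta^{\s}\in\vh$ is determined by $\Theta^{\s}|_{S_{0}}$ embeds $\mathcal{U}^{\s}$ into $\ker(\o^{\s}\ \text{on}\ S_{0})$ via $\Theta^{\s}\mapsto\tfrac{1}{\phi}\Theta^{\s}|_{S_{0}}$, so $\dim\mathcal{W}^{\s}=\dim\mathcal{U}^{\s}<\infty$, which is the finiteness statement of the third bullet.

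For the gluing statement, the ``only if'' direction is elementary: if $\mathcal{W}^{\s}\neq\{0\}$, picking $0\neq\Theta^{\s}\in\mathcal{W}^{\s}$ and data at $S_{0},S_{1}$ with unequal $\Theta^{\s}$-charges obstructs gluing, since the charge is conserved along every solution; in general gluing forces the prescribed charges to agree. For the ``if'' direction, first extend $\psi|_{\hh}$ between $S_{0}$ and $S_{1}$ in any way matching the prescribed jets at the two ends, solve the transport equation from the prescribed value of $Y^{\s}(\phi\psi)$ at $S_{0}$, and let $\Delta$ be the mismatch between the resulting $Y^{\s}(\phi\psi)|_{S_{1}}$ and the target; then correct the extension by an increment $\eta$ supported in $\{0<v<1\}$ and vanishing to infinite order at $S_{0},S_{1}$, whose effect on $Y^{\s}(\phi\psi)|_{S_{1}}$ is the linear map $T\eta=\int_{0}^{1}\mathcal{R}^{\s}_{v}\big(\eta(v,\cdot)\big)\,dv$, with $\mathcal{R}^{\s}_{v}$ the transport operator on $S_{v}$. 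The crux is that $T$ is onto $\big(\mathcal{U}^{\s}\big)^{\perp}$: its formal adjoint sends $\mu$ to $\big(v\mapsto\o^{\s}(\tfrac{1}{\phi}\mu)\big)$, which vanishes iff $\mu\in\mathcal{U}^{\s}$, and the ellipticity of the operators $\o^{\s}$ on the sections (which already gave $\dim\mathcal{U}^{\s}<\infty$) yields that $T$ has closed range. Since $\langle\Theta^{\s},\Delta\rangle$ computes the difference of the prescribed $\Theta^{\s}$-charges at $S_{0}$ and $S_{1}$ (by conservation along the solution with the uncorrected extension), the equation $T\eta=-\Delta$ is solvable iff those charges agree; if $\mathcal{W}^{\s}=\{0\}$ it is always solvable. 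This closed-range/surjectivity step — choosing function spaces so that no obstruction survives beyond the finite-dimensional $\mathcal{U}^{\s}$ — is where the elliptic structure does its work and is the step I expect to be the main obstacle.

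The remaining assertions follow. For the classification of conservation laws (third bullet): if $F$ is any functional of the $1$-jet of $\psi$ on the sections that is conserved (or merely monotone) in $v$ for all solutions, then given two $1$-jets at $S_{0},S_{1}$ with equal $\mathcal{W}^{\s}$-charges the gluing statement produces a solution interpolating them, so conservation of $F$ forces its value at $S_{1}$ to depend on the data only through these finitely many charges; hence $F$ is a function of conserved quantities, in particular itself conserved (excluding the merely monotone case) and, under the linearity built into Definition \ref{definitionconservationlaw}, a linear combination of the charges there. For the change of foliation (fourth bullet): writing $\s'=\langle S_{0}',L_{geod}',\Omega'\rangle$ and noting that $\nabla_{L'}L'=0$ forces $L_{geod}'=f^{2}L_{geod}$ with $f\in\vh$, I would compute the transformation laws of $\phi$, $Y^{\s}$ and $\o^{\s}$ from the standard change-of-null-frame formulas and check that $\Theta^{\s}\mapsto f^{2}\Theta^{\s}$ carries $\mathcal{W}^{\s}$ onto $\mathcal{W}^{\s'}$ while leaving the value of \eqref{eq:integrals} unchanged; the mere existence part is also seen to be foliation-independent directly from the gluing statement, which does not refer to a foliation. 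For non-genericity (fifth bullet): by the second bullet $\dim\mathcal{W}^{\s}\le\dim\ker(\o^{\s}\ \text{on}\ S_{0})$, and $\o^{\s}$ is a second order elliptic operator of index $0$ on the closed surface $S_{0}$ whose lower-order coefficients — in particular a zeroth-order one involving a transverse curvature component of $g$ — can be perturbed essentially freely by perturbing the ambient metric near $\hh$; since the operators with non-trivial kernel form a proper analytic subvariety of such an elliptic family, a generic metric gives $\ker\o^{\s}=\{0\}$, hence $\mathcal{W}^{\s}=\{0\}$. To keep this within the Einstein-vacuum class one realises the needed perturbations using well-posedness of the characteristic Cauchy problem for the Einstein equations (the conformal geometry of $\hh$ being free data); checking that this residual freedom still moves $\o^{\s}$ off the bad subvariety is the delicate point of this last step.
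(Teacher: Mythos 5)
Your high-level architecture matches the paper's: express the restricted wave equation in the form $-2\pv\pu(\phi\psi)+\phi\,\mathcal{Q}^{\s}\psi=0$, read off $\o^{\s}$ as the $L^{2}(S_{v},\gi)$-adjoint of $\mathcal{Q}^{\s}$, and reduce everything to a statement about the transport-then-integrate operator. Your direct argument for $\mathcal{W}^{\s}=\mathcal{U}^{\s}$ and $\dim\mathcal{W}^{\s}<\infty$ (differentiate the charge, move $\o^{\s}$ onto $\frac{1}{\phi}\Theta^{\s}$ by adjointness, use that $\psi|_{S_{v}}$ is free) is actually cleaner than the paper, which extracts these as byproducts of the gluing machinery; and your treatments of the change-of-foliation bullet (via the transformation law for $\o^{\s}$ from the companion paper) and non-genericity (perturb the zeroth-order coefficient of $\o^{\s}$, realize it inside the constraint set for Einstein-vacuum) track the paper's Sections 3.3 and 3.4 in spirit.

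The gap is exactly where you flagged it, but it is a real gap and not merely a technical chore. Your crux is that the map $T\eta=\int_{0}^{1}\mathcal{R}^{\s}_{v}\big(\eta(v,\cdot)\big)\,dv$ is onto $(\mathcal{U}^{\s})^{\perp}$, and you justify this by: (i) $\ker T^{*}=\mathcal{U}^{\s}$ is finite-dimensional, and (ii) ``the ellipticity of the operators $\o^{\s}$ on the sections yields that $T$ has closed range.'' Step (ii) does not follow. Ellipticity makes each $\o^{\s}_{v}$ Fredholm on $S_{v}$, but $T$ is an integral over $v$ of those operators; it is not itself elliptic, and a finite-dimensional $\ker T^{*}$ does not force closed range. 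The obstruction is precisely that $\dim\ker(\o^{\s}_{v})$ can jump as $v$ varies: if the common intersection $\bigcap_{v}\ker(\o^{\s}_{v})$ is smaller than $\ker(\o^{\s}_{v_{0}})$ for every $v_{0}$, then no choice of $\eta$ localized near a single $v_{0}$ can hit a prescribed $\rho\in(\mathcal{U}^{\s})^{\perp}$, and you have to show surjectivity by combining contributions from several slices while controlling how the kernels and their orthogonal complements move. This is what the paper's Lemma \ref{lemmagiaelliptickernel} (upper semicontinuity of $\ker(\o^{\s}_{v})$, existence of a dense set $\mathcal{Z}$ of $v$'s around which the kernel varies smoothly with locally constant dimension) and Lemma \ref{deuterolemma}/Lemma \ref{argitragelemma} (an explicit construction of $F_{\rho}$ satisfying the smoothness, integral, and fiberwise-orthogonality constraints, via a decomposition $L^{2}(\mathbb{S}^{2})=V\oplus\widetilde{\Pi}(v)\oplus\widetilde{O}(v)\oplus W^{\perp}$ built from finitely many slices $v_{1},\dots,v_{n}$) are doing for you. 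Nothing in your proposal replaces these; you would need to supply either that construction or a genuine closed-range proof that accounts for the $v$-dependent kernels, and without it the ``if'' direction of the gluing bullet --- and with it the third bullet, which you correctly deduce from gluing --- is unproven.
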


Theorem \ref{theoremmainintro} can be used to show that the event horizon of a subextremal black hole does not admit conservation laws and that the event horizon of an extemal black hole admits a unique conservation law:

\begin{mytheo}
 \textbf{Conservation laws on extremal black holes:}	The event horizon ${\hh}^{+}$ of 
an extremal black hole satisfies $\dim \mathcal{U}^{\mathcal{S}}_{{\hh}^{+}} =1$ and the unique corresponding conservation law coincides with the conservation law on  extremal black holes found in \cite{aretakis4, hj2012, murata2012}. On the other hand, any Killing horizon with positive surface gravity and negative transversal null mean curvature (such as the event horizon of any subextremal Kerr black hole) does not admit conservation laws.

\label{the2}
\end{mytheo}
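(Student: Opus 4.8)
The plan is to reduce the statement to the characterization $\mathcal{W}^{\s}=\mathcal{U}^{\s}$ of Theorem~\ref{theoremmainintro}, with $\mathcal{U}^{\s}$ the space \eqref{newdef} attached to the elliptic operator $\o^{\s}$ of \eqref{adjoint}, and then to compute $\dim\mathcal{U}^{\s}$ in a gauge adapted to the symmetry. Since the change-of-foliation part of Theorem~\ref{theoremmainintro} makes $\dim\mathcal{W}^{\s}$ independent of the foliation, I would work with the foliation tied to the Killing flow: if $T$ is the Killing generator of $\hh$, null on $\hh$ with $\nabla_{T}T=\kappa T$ there, take $\s=\big\langle S_{0},L_{geod},\Omega\big\rangle$ with $L_{geod}=e^{-\kappa v}T$ and $\Omega=e^{\kappa v/2}$, so $L=\Omega^{2}L_{geod}=T$ and $v$ is the Killing parameter. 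Because $L=T$ is Killing the induced metrics $\gi$, hence the conformal factor $\phi$, are $v$-independent, while $L_{geod}=e^{-\kappa v}T$ and the conjugate null field $Y^{\s}=e^{\kappa v}\widehat{Y}$ (normalised by $g(L_{geod},Y^{\s})=-1$) introduce only explicit $e^{\pm\kappa v}$ factors into $\o^{\s}$. Under this choice, membership of $\Theta^{\s}$ in $\mathcal{U}^{\s}$ amounts to: $\Theta^{\s}$ is the pullback to $\hh$ of a function on $\mathbb{S}^{2}$, and $f:=\phi^{-1}\Theta^{\s}$ lies in the kernel of the one-parameter family $\{\o^{\s}_{v}\}_{v}$ of elliptic operators on $\mathbb{S}^{2}$.

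\emph{Extremal horizons ($\kappa=0$).} Here $T=L_{geod}=L$ is already an affine null generator ($\nabla_{T}T=0$), all horizon data are $v$-independent, and $\mathcal{U}^{\s}$ is exactly the kernel of a single fixed second-order elliptic operator $\mathcal{O}_{0}$ on $\mathbb{S}^{2}$ whose principal part is the Laplacian of the induced metric on the sections; in particular $\ker\mathcal{O}_{0}$ is finite-dimensional. For the lower bound, Section~\ref{sec:MinkowskiSpacetime4} already gives $\langle\tfrac{\sin^{2}\theta}{2}-1\rangle\subset\mathcal{W}^{\s}$ for extremal Kerr, and for a general extremal black hole one checks that the charge of \cite{aretakis4,hj2012,murata2012} furnishes a nowhere-vanishing element $\theta_{0}\in\ker\mathcal{O}_{0}$. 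For the upper bound I would use positivity of $\theta_{0}$: writing $\theta=\theta_{0}h$ for $\theta\in\ker\mathcal{O}_{0}$ and subtracting $h\,\mathcal{O}_{0}\theta_{0}=0$ cancels the zeroth-order term, so $h$ solves a second-order elliptic equation on $\mathbb{S}^{2}$ \emph{with no zeroth-order term}; by the strong maximum principle on the closed surface $\mathbb{S}^{2}$, $h$ must be constant, hence $\theta\in\langle\theta_{0}\rangle$. Thus $\dim\mathcal{U}^{\s}=1$, and by construction the unique generator is the conservation law of \cite{aretakis4,hj2012,murata2012}.

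\emph{Killing horizons with $\kappa>0$ and $\tr\chib<0$.} Now $L_{geod}$ and $Y^{\s}$ genuinely depend on $v$, so $\o^{\s}_{v}=A+e^{\kappa v}B+e^{-\kappa v}C$ with $v$-independent elliptic operators $A,B,C$ on $\mathbb{S}^{2}$, the extra terms $B,C$ carrying the transversal curvature $\tr\chib$. As $f=\phi^{-1}\Theta^{\s}$ is $v$-independent, $f\in\ker\o^{\s}_{v}$ for all $v$ forces $Af=Bf=Cf=0$ separately; one of these is an elliptic equation whose zeroth-order coefficient is a multiple of $\tr\chib$ of a favourable sign, hence everywhere $\leq 0$ and, since $\tr\chib<0$, nowhere zero, so the maximum principle on $\mathbb{S}^{2}$ gives $f\equiv 0$ and $\mathcal{W}^{\s}=\{0\}$. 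Equivalently one may run the transport identity $\partial_{v}\!\int_{S_{v}}Y^{\s}(\phi\psi)\,\Theta^{\s}\,d\mu_{\mathbb{S}^{2}}=\int_{S_{v}}(\phi\psi)\,\o^{\s}\!\big(\phi^{-1}\Theta^{\s}\big)\,d\mu_{\mathbb{S}^{2}}$ and use $\kappa>0$ and the sign of $\tr\chib$ to see the right-hand side cannot vanish for all $v$ and all solutions $\psi$ unless $\Theta^{\s}\equiv0$. The event horizon of a subextremal Kerr black hole has $\kappa>0$ and $\tr\chib<0$, so it carries no conservation law.

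\emph{Main obstacle.} The hard part is making the two kernel computations uniform over \emph{all} extremal (resp.\ subextremal) black holes rather than just in the explicitly computable Kerr family: this needs the precise form of the operator \eqref{adjoint} together with the near-horizon constraint equations (the extremal near-horizon geometry equation when $\kappa=0$, the $\kappa$-relations for a general Killing horizon) to expose the relevant signs. Concretely, one must show that $\o^{\s}$ is critical---annihilates a positive function $\theta_{0}$---exactly on degenerate horizons, and that the $e^{\kappa v}$-terms forced by $\kappa>0$ together with $\tr\chib<0$ make the problem overdetermined and coercive in the non-degenerate case; identifying $\theta_{0}$ and running the conjugation/maximum-principle argument is the crux. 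The finiteness $\dim\mathcal{W}^{\s}<\infty$ and the elliptic estimates used here are those established in the companion paper \cite{aretakiselliptic}.
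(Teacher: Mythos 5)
Your overall strategy -- reduce to the kernel of $\mathcal{O}^{\s}$ via Theorem~\ref{theoremmainintro}, pass to a Killing-adapted foliation, and then run a positivity argument (principal eigenfunction / maximum principle) on the resulting elliptic operator on $\mathbb{S}^{2}$ -- is the same as the paper's. The paper works in the gauge $\s=\langle S_{0},\,L_{geod}|_{S_{0}}=\xi|_{S_{0}},\,\Omega=1\rangle$, uses Lemma~\ref{lemma} (from \cite{aretakiselliptic}) to get $\chi=0$, $\li_{L}\zeta=0$ and $\partial_{v}tr\underline{\chi}=\kappa\,tr\underline{\chi}|_{S_{0}}$, and thereby reduces $\o^{\s}_{v}$, modulo the identification $\Phi_{v}$, to the fixed operator $\lapp+\divv(2\,\cdot\,\zeta)+\big(\kappa\,tr\underline{\chi}|_{S_{0}}\big)\cdot$. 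In the extremal case this is a pure divergence, so the positive principal eigenfunction is a kernel element (your citation of \cite{aretakis4,hj2012,murata2012} for existence is replaced in the paper by this one-line integration-by-parts argument), and its simplicity gives $\dim\mathcal{U}^{\s}=1$; your conjugation-by-$\theta_{0}$ route to simplicity is a valid and arguably more elementary substitute for the Krein--Rutman type reference. So far, fine.

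The concrete gap is in the subextremal part. First, your asserted splitting $\o^{\s}_{v}=A+e^{\kappa v}B+e^{-\kappa v}C$ with three genuinely distinct pieces does not occur: if you set $\Omega=e^{\kappa v/2}$ and $L_{geod}=e^{-\kappa v}T$, then $\nabb\Omega^{2}=0$, $\chi=0$, and, since $\Omega\,tr\underline{\chi}$ and $\underline{L}$ are the \emph{same} geometric quantities as in the paper's $\Omega=1$ gauge (both are normalised by $g(L_{geod},\underline{L})=-1$ with the same $L_{geod}$), the entire operator collapses to $e^{\kappa v}\big[\lapp+2\zeta^{\sharp}\cdot\nabb+\big(2\divv\zeta^{\sharp}+\kappa\,tr\underline{\chi}|_{S_{0}}\big)\cdot\big]$, i.e.\ $A=C=0$. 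There is no separate system $Af=Bf=Cf=0$ to exploit. Second, the direct maximum principle on the surviving operator $B$ does not go through as stated, because the zeroth-order coefficient of $B$ is $2\divv\zeta^{\sharp}+\kappa\,tr\underline{\chi}|_{S_{0}}$, and the divergence term $2\divv\zeta^{\sharp}$ has no definite sign; the hypotheses $\kappa>0$, $tr\underline{\chi}|_{S_{0}}<0$ only control the second summand. The paper avoids this by integrating $\o^{\s}_{v}\Psi=\lambda\Psi$ against $d\mu_{\gi}$ with $\Psi>0$ the principal eigenfunction: the divergence terms $\lapp\Psi$ and $\divv(2\Psi\zeta)$ integrate to zero, leaving $\int\big(\kappa\,tr\underline{\chi}|_{S_{0}}\big)\Psi=\lambda\int\Psi$, so $\lambda<0$ and the kernel is trivial. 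Your maximum-principle route can be rescued, but only after first conjugating by the positive kernel element $\theta_{0}$ of the associated extremal-type operator $\lapp+\divv(2\,\cdot\,\zeta)$ -- exactly the conjugation you used in the extremal case -- which eliminates the $2\divv\zeta^{\sharp}$ term and leaves the genuinely sign-definite potential $\kappa\,tr\underline{\chi}|_{S_{0}}$; as written, you do not perform this step, and the ``favourable sign'' claim is unsupported.
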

	
	Theorem \ref{theoremmainintro} applied in a limiting sense on null infinity of asymptotically flat spacetime recovers the Newman--Penrose constant and in fact shows that it is the \textbf{only} conserved charge along null infinity:
	
	\begin{mytheo}
	\textbf{Conservation laws and the Newman--Penrose constant:}
	The null infinity $\mathcal{I}$ of an asympotically flat spacetime satisfies $\dim\mathcal{U}^{\mathcal{S}}_{\mathcal{I}}=1$ with respect to an appropriately rescaled operator $\o_{\mathcal{I}}^{\ \mathcal{S}}$, and the \textbf{unique} corresponding conservation law  coincides with the  (first-order) Newman--Penrose constant on $\mathcal{I}$.
	\label{theo3}
	\end{mytheo}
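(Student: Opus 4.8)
The plan is to obtain this as a limiting case of the general machinery of Theorem \ref{theoremmainintro} applied to the family of incoming null hypersurfaces $\mathcal{I}_{S_0}$ with foliation $\mathcal{S}=\langle S_0, L_{geod},\Omega=1\rangle$ as described in Section \ref{sec:MinkowskiSpacetime}, and then to take $r(S_0)\to+\infty$. First I would recall from the classification part of Theorem \ref{theoremmainintro} that the existence and dimension of the conservation laws on $\mathcal{I}_{S_0}$ is governed by the kernel $\mathcal{U}^{\mathcal{S}}$ of the operator $\Theta\mapsto \mathcal{O}^{\mathcal{S}}(\phi^{-1}\Theta)$ acting on functions constant along the generators, equivalently by the kernel of $\mathcal{O}^{\mathcal{S}}$ itself up to the conformal rescaling by $\phi$. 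So the entire problem reduces to analyzing $\ker \mathcal{O}^{\mathcal{S}}$, and more precisely its rescaled limit $\mathcal{O}_{\mathcal{I}}^{\ \mathcal{S}}$ as $r\to+\infty$: I must show this rescaled operator exists, is a well-defined elliptic operator on $\mathbb{S}^2$ (via the identification $\Phi$ of all sections with the unit sphere), and has a one-dimensional kernel.

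The key computational step is to expand $\mathcal{O}^{\mathcal{S}}$ in powers of $1/r$. Using the asymptotic expansion $\psi = \alpha_1/r + \alpha_2/r^2 + O(r^{-3})$ and the known asymptotics for the foliation data along $\mathcal{I}_{S_0}$ (the metric $\gi$ approaches $r^2$ times the round metric, so $\phi\sim r$ and $\hat g\to \gi_{\mathbb{S}^2}$; the connection coefficients and the transversal vector field $Y^{\mathcal{S}}$ have standard peeling-type expansions near null infinity), I would identify the leading-order term of an appropriately rescaled $\mathcal{O}^{\mathcal{S}}$. The expectation, consistent with the Minkowski computation in Section \ref{sec:MinkowskiSpacetimea} and with the structure of the charge \eqref{eq:npc}, is that the leading operator is essentially the round Laplacian $\lapp_{\mathbb{S}^2}$ (or a zeroth-order-shifted version whose kernel is still the constants), so that $\mathcal{O}_{\mathcal{I}}^{\ \mathcal{S}}$ annihilates exactly the constants on $\mathbb{S}^2$, giving $\dim \mathcal{U}^{\mathcal{S}}_{\mathcal{I}}=1$. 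Tracing through the conformal factor, $\Theta^{\mathcal{S}}=\lim r^2$ corresponds under $\phi^{-1}\Theta^{\mathcal{S}}$ to the constant function, which is precisely the element of $\ker\mathcal{O}_{\mathcal{I}}^{\ \mathcal{S}}$; substituting this $\Theta^{\mathcal{S}}$ and $Y^{\mathcal{S}}=\partial_r$-type limit into the charge \eqref{eq:integrals} reproduces exactly \eqref{eq:npc}, the first-order Newman--Penrose constant.

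Finally, to conclude that it is the \emph{only} conserved charge along $\mathcal{I}$ involving the $1$-jet of $\psi$, I would invoke the ``Classification of conservation laws on null hypersurfaces'' bullet of Theorem \ref{theoremmainintro}: every conserved (or monotonic-in-$v$) first-order quantity on a null hypersurface must be one of the charges of Definition \ref{definitionconservationlaw}, hence must arise from an element of $\mathcal{W}^{\mathcal{S}}=\mathcal{U}^{\mathcal{S}}$; since that space is one-dimensional in the limit, the Newman--Penrose constant exhausts all such charges. I expect the main obstacle to be the limiting argument itself: one must check that the operator $\mathcal{O}^{\mathcal{S}}$ on $\mathcal{I}_{S_0}$, after the correct $r$-dependent rescaling, converges in a strong enough sense (e.g. coefficient-wise in a fixed elliptic class on $\mathbb{S}^2$) that the dimension of its kernel is preserved in the limit — kernels are only upper semicontinuous under perturbation in general, so one needs either an explicit identification of the limiting operator together with a genuine ellipticity/index argument, or a uniform-in-$r$ a priori estimate showing no extra kernel elements appear or disappear. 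Controlling the subleading terms in the peeling expansion well enough to pin down the zeroth-order part of the limiting operator (which is what determines whether the kernel is the constants or something larger) is where the real work lies.
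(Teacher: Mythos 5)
Your proposal is essentially the paper's proof (Section~\ref{sec:TheNewmanPenroseConstants}): with the double null foliation normalized by $\tr\chi+\tr\underline{\chi}=0$ at $S_\tau$, the asymptotics~\eqref{asym} show that $\zeta$ and the rescaled coefficient $r^{2}\big(\partial_u\tr\chi+\tfrac{1}{2}\tr\chi\,\tr\underline{\chi}\big)$ decay like $1/r$, so the rescaled operator $\o^{\underline{\s}}\psi=\lim_{r\to\infty}\o^{\underline{\s}_\tau}(r^{2}\psi)=\lapp_{\mathbb{S}^{2}}\psi$ exactly, with no zeroth-order shift --- precisely the computation you flag as where the real work lies. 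The semicontinuity-of-kernels worry is sidestepped in the paper's formulation: the theorem is stated directly in terms of the rescaled limiting operator, so one reads off $\dim\ker\lapp_{\mathbb{S}^{2}}=1$ and checks that $\Theta^{\underline{\s}}=r^{2}$, $Y^{\underline{\s}}=\partial_{r}$ recover~\eqref{eq:npc}; no passage from finite-$r$ kernels to a limit kernel is needed.
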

	
	Finally, the following theorem derives necessary and sufficient conditions for the existence of higher order conservation laws in the context of spherical symmetry.

	 \begin{mytheo}
	\textbf{Higher order gluing constructions and conservation laws in spherical symmetry:} Let $(\m,g)$ be a spherically symmetric spacetime and $\hh$ be a spherically symmetric null hypersurface. Then, for all $k,l\in\mathbb{N}$ there is a unique expression $R_{k,l}$ which depends only on the geometry of $\hh$ such that if  $R_{i,l}\neq 0$ for  $i=1,...,k-1$ and $R_{k,l}=0$ on $\hh$ then there is a conservation law involving the $k$-jet of solutions to the wave equation. The kernel of this conservation law consists of all the eigenfunctions of the standard spherical Laplacian $\lapp_{\mathbb{S}^{2}}$ which correspond to the eigenvalue $-l(l+1)$. If, on the other hand, we have that for all $l\in \mathbb{N}$  $R_{i,l}\neq 0,\,  i=1,...,k$ almost everywhere on $\hi$ then we can glue characteristic data up to the $k$'th order. Moreover, the higher-order Newman--Penrose constants are limiting examples of the above charges. 

\label{theo4}
\end{mytheo}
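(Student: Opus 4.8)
The plan is to reduce the wave equation in spherical symmetry to a system of transport equations along the null generators of $\hh$ for the spherical harmonic modes, and then to iterate the first-order analysis carried out in the general case. First I would fix a spherically symmetric foliation $\s=\langle S_0, L_{geod}=L, \Omega=1\rangle$ of $\hh$ and decompose any solution $\psi$ into its spherical harmonic projections $\psi = \sum_{l,m}\psi_{l,m}(v)\cdot Y_{l,m}$, where $Y_{l,m}$ are the eigenfunctions of $\lapp_{\mathbb{S}^2}$ with eigenvalue $-l(l+1)$; since the geometry is spherically symmetric, the conformal factor $\phi$ depends only on $v$ on $\hh$ and each mode decouples. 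Writing the wave equation $\Box_g\psi=0$ in double-null coordinates adapted to $\hh$, the restriction to $\hh$ and its transversal derivatives $Y^n\psi$ satisfy a hierarchy of linear ODEs in $v$ for each fixed $(l,m)$: schematically $L\big(Y^n(\phi\psi_{l,m})\big)$ is expressed through $Y^{n}(\phi\psi_{l,m})$, lower transversal derivatives $Y^{j}(\phi\psi_{l,m})$, $j<n$, and the eigenvalue $-l(l+1)$, with coefficients built from the connection coefficients (second fundamental forms $\chi,\chib$, torsion, mass aspect) of $\hh$, all of which are functions of $v$ alone.

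The core of the argument is then an inductive construction of the quantities $R_{k,l}$. At the first order, the general theory (Theorem \ref{theoremmainintro}, specialized to spherical symmetry) already identifies the obstruction: the operator $\o^\s$ becomes, on each mode, multiplication by a function $R_{1,l}(v)$ of $v$, and a first-order conservation law with kernel the $l$-th eigenspace exists precisely when $R_{1,l}\equiv 0$. Assuming $R_{i,l}\neq 0$ for $i=1,\dots,k-1$, I would use these non-vanishing conditions to successively solve, order by order, for $Y^{i}(\phi\psi_{l,m})$ in terms of $\psi|_\hh$ and the lowest transversal data $Y\psi|_{S_0}$; the point is that non-vanishing of $R_{i,l}$ is exactly the solvability condition that lets one invert the relevant transport operator at step $i$ and propagate the data. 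Carrying this to step $k$, the transport equation for the quantity of interest takes the form $L\big(\text{(weighted }k\text{-jet)}\big) = R_{k,l}(v)\cdot(\text{something determined by }\psi|_\hh)$, so that integrating over $S_v$ against $d\mu_{\mathbb{S}^2}$ yields a conserved charge if and only if $R_{k,l}\equiv 0$ on $\hh$. Uniqueness of $R_{k,l}$ up to normalization follows because at each stage the inversion is unique once $R_{i,l}\neq 0$, and the kernel is the $l$-th eigenspace by construction of the mode decomposition.

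For the gluing statement, I would run the same hierarchy but now use the hypothesis that for every $l$ one has $R_{i,l}\neq 0$ for $i=1,\dots,k$ almost everywhere: this guarantees that the transport operators governing $Y^i(\phi\psi)$, $i\le k$, are invertible (a.e.), so given prescribed data at $S_0$ and $S_1$ one can freely choose an extension of $\psi|_\hh$ on $\hh\cap\{0\le v\le1\}$ whose associated $k$-jet matches at $S_1$ — the obstruction integrals all vanish identically because the corresponding $R$'s are never identically zero on any mode. Finally, the higher-order Newman--Penrose constants are recovered by the same limiting procedure as in the first-order case (Theorem \ref{theo3}): rescaling $\o^\s$ and the weight $\Theta^\s$ by appropriate powers of the area-radius $r$ and taking $r(S_0)\to\infty$, the quantities $R_{k,l}$ converge to the known coefficients whose vanishing produces the $k$-th Newman--Penrose constant.

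The main obstacle I expect is the bookkeeping of the transport hierarchy: showing that at each order $n$ the quantity $L\big(Y^n(\phi\psi)\big)$ really does close up into the stated form with coefficients depending only on the geometry of $\hh$ (and not on transversal data of order $>n$), and extracting from this the clean recursive definition of $R_{k,l}$ together with its normalization. This requires a careful analysis of the commutators $[L, Y]$ and $[Y,\lapp_{\mathbb{S}^2}]$ and of how the Ricci and connection coefficients enter after each transversal differentiation of $\Box_g\psi=0$; the spherical symmetry is what keeps this tractable, since it forces all structure functions to depend on $v$ alone and makes the angular operator act diagonally. The "almost everywhere" qualifier in the gluing statement is precisely what accommodates the possibility that some $R_{i,l}$ vanishes on a measure-zero set of $v$'s without creating a genuine conservation law, and handling this will require a short approximation/density argument in the choice of the extension.
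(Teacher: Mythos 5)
Your proposal follows essentially the same route as the paper: decompose into spherical harmonics so each mode reduces to a transport hierarchy in $v$ along $\hh$ (the paper's constraint equations $\partial_v(Y^{k+1}\Psi)=Y^k(c_1\Psi)$ with $\Psi=r\psi$ and $c_1$ playing the role of $R_{1,l}$), recursively define $R_{k,l}$ from the ODE structure, show that $R_{k,l}\equiv 0$ together with $R_{i,l}\neq 0$ for $i<k$ yields a $k$-jet conserved quantity, and recover higher NP constants in the limit $r\to\infty$; the paper works this out explicitly to third order and then states that the inductive scheme continues.

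One point of imprecision worth flagging: you describe $R_{i,l}\neq 0$ as ``the solvability condition that lets one invert the relevant transport operator at step $i$,'' but the transport equations for $Y^i\Psi$ along $\hh$ are directly integrable regardless of the $R_{i,l}$ --- no inversion of transport is needed, and $Y^i\Psi|_{\hh}$ is always determined from $\Psi|_{\hh}$ and the initial jet at $S_0$. What $R_{i,l}\neq 0$ actually buys you, in the paper's computation, is the legitimacy of the successive changes of variables $\Psi\mapsto\Phi_1=\int c_1\Psi\mapsto\Phi_2\mapsto\cdots$ (each requires the coefficient to be nonzero so the substitution is invertible), which reorganize the prescribed integral constraints at $S_1$ into an upper-triangular form solvable for a free extension of $\Psi$; when some $R_{k,l}\equiv 0$ this reorganization terminates in an identity that no extension can alter, which is precisely the conserved charge. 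This distinction is exactly the ``bookkeeping obstacle'' you correctly anticipate, and carrying it out as in the paper's explicit $c_2$, $c_3$ computations would close the gap.
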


\subsection{PDE aspects of the conservation laws}
\label{sec:Remarks}

The existence of the charge \eqref{eextremalcharge} implies that the solutions to the wave equation on extremal black holes do not disperse along the event horizon. This is in stark contrast to the result of Dafermos and Rodnianski \cite{tria} who derived decay results for $\psi$ and all its derivatives up to and including the event horizon for the general subextremal Kerr family $|a|<M$ (for decay results on slowly rotating Kerr black holes see \cite{blukerr, lecturesMD, tataru2}). In fact, a closer analysis \cite{aretakis1,aretakis2,aretakis3} showed that for generic initial data we have
\[\left|\psi\right|\rightarrow 0, \ \ \left|Y\psi\right|\nrightarrow 0, \ \ \left|Y^{k}\psi\right|\rightarrow+\infty,\text{ for }k\geq 2, \]
along the (degenerate) event horizon. Similar blow-up results were obtained for the higher order energies.  The asymptotic behavior of tails along the event horizon was studied by Lucietti et al \cite{hm2012} and Ori \cite{ori2013}.  Further applications were given by  Dain and Dotti  \cite{dd2012}. A very recent numerical analysis of this instability in the context of the Einstein-Maxwell-scalar field system under spherical symmetry can be found in \cite{harvey2013}.  A non-linear instability arrising from the conserved charges was given in \cite{aretakis2013}.

In a broader content and at a slightly philosophical level, transversal derivatives to globally distinguished null hypersurfaces usually pose significant difficulties in proving dispersive estimates for wave type equations in view of the fact that one does not have an a priori equation for these derivatives along their direction (i.e.~we do not have an equation of the form $Y^{\s}Y^{\s}\psi=...$).  For this reason, understanding the structure of the transversal derivatives in PDE is of fundamental importance; see for example the null condition introduced by Klainerman \cite{SK86} which allowed him to obtain global existence of quasilinear equations with small initial data. It would be interesting to see further global repercussions of the conservation laws defined here to other linear and non-linear settings.

\subsection{Outline of the paper}
\label{sec:OutlineOfThePaper}

An outline of the paper is as follows: In Section \ref{sec:TheGeometricSetting} we introduce the basic geometric concepts relevant to a double null foliation and in Section \ref{sec:TheGeneralCase} we derive the relation between the  conservation laws and  characteristic gluing constructions on a general null hypersurface. In Section \ref{sec:TheNullInfinityMathcalI} we restrict to the null infinity of asympotically flat spacetimes and recover the Newman--Penrose constants. In Section \ref{sec:KillingHorizons} we apply the main theorem to Killing horizons with particular emphasis on the event horizon of black holes. Finally, in the Section \ref{sec:Genericity} we derive necessary and sufficient conditions for higher order conservation laws and gluing constructions in spherical symmetry.

\section{The geometric setting}
\label{sec:TheGeometricSetting}

Let $\hh$ be a regular null hypersurface of a four-dimensional Lorentzian manifold $(\m,g)$. We will express the wave equation in an appropriate frame which will allow us to explicitly make use of the elliptic structure on $\hh$. For we work with canonical coordinates suitably adapted to a double null foliation of $\m$ which we introduce in the next subsection.

\subsection{The double null foliation}
\label{sec:TheDoubleNullFoliation}

\paragraph{Null Foliations and Optical Functions\medskip \\}
\label{sec:NullFoliationsandOpticalFunctions}

A \textit{foliation} $\mathcal{S}$ of a null hypersurface $\hh$ in a four-dimensional Lorentzian manifold $(\m,g)$ is a collection of sections $S_{v}$, smoothly varying in $v$, such that $\cup_{v}S_{v}=\hh$. We assume that $S_{v}$ are diffeomorphic to the 2-sphere. We will show that any foliation is uniquely determined by the choice of one section, say $S_{0}$, the choice of a null tangential to $\hh$ vector field $\left.L_{geod}\right|_{S_{0}}$ restricted on $S_{0}$ and a function $\Omega$ on $\hh$. We extend $\left.L_{geod}\right|_{S_{0}}$  to a  null vector field tangential to the null generators of $\hh$ such that
\[\nabla_{{L}_{geod}}{{L}_{geod}}=0.\]We then define the vector field
\begin{equation}
L=\Omega^{2}\cdot L_{geod}
\label{eq:l}
\end{equation}
on $\hh$ and consider the affine parameter $v$ of $L$ such that 
\[Lv=1, \text{ with } v=0, \text{ on } S_{0}.\]
Let $S_{v}$ denote the level sets of $v$ on $\hh$ which are the leaves of the foliaton $\mathcal{S}$. We use the notation
\begin{equation}
\mathcal{S}=\left\langle S_{0}, \left.L_{geod}\right|_{S_{0}}, \Omega \right\rangle.
\label{foliationdef}
\end{equation}
We can also extend the above construction to obtain a foliation of (an appropriate region of) $\m$ by null hypersurfaces $C_{u},\underline{C}_{v}$ intersecting at embedded 2-spheres $S_{u,v}$.  We denote $C_{0}=\hh$ and define $\underline{C}_{0}$ to be the null hypersurface generated by null hypersurfaces normal to $S_{0}$ and conjugate to $C_{0}$. 
 \begin{figure}[H]
   \centering
		\includegraphics[scale=0.1]{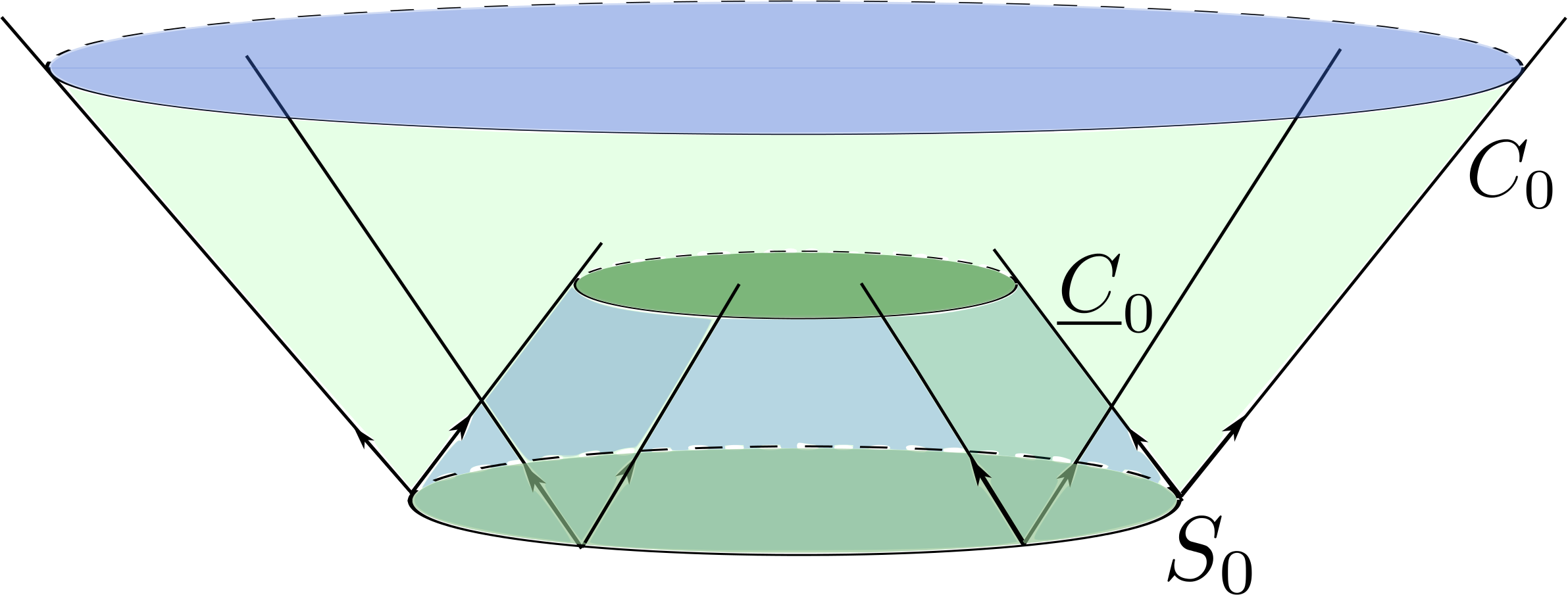}
	\label{fig:p43}
\end{figure}
We then consider the vector field $\underline{L}_{geod}\!\left.\right|_{S_{0}}$ at $S_{0}$ tangential to the null generators of $\underline{C}_{0}$ determined by the relation
\[g({L}_{geod},\underline{L}_{geod})=-\Omega^{-2}\]
and extend it on $\underline{C}_{0}$ via the geodesic equation
\[ \nabla_{\underline{L}_{geod}}\underline{L}_{geod}=0. \]
We now extend $\Omega$ to be a function on $\underline{C}_{0}$ and, as before, consider the vector field
\begin{equation}
\begin{split}
 \underline{L}=\Omega^{2}\cdot \underline{L}_{geod}
\end{split}
\label{normalizedvf}
\end{equation}
and similarly define the function $u$ on $\underline{C}_{0}$ by 
\begin{equation*}
\underline{L}u=1, \ \ \text{with } u=0 \text{ on }S_{0}.
\end{equation*}
Let $\underline{S}_{\tau}$ be the embedded 2-surface on $\underline{C}_{0}$ such that $u=\tau$.
 \begin{figure}[H]
   \centering
		\includegraphics[scale=0.101]{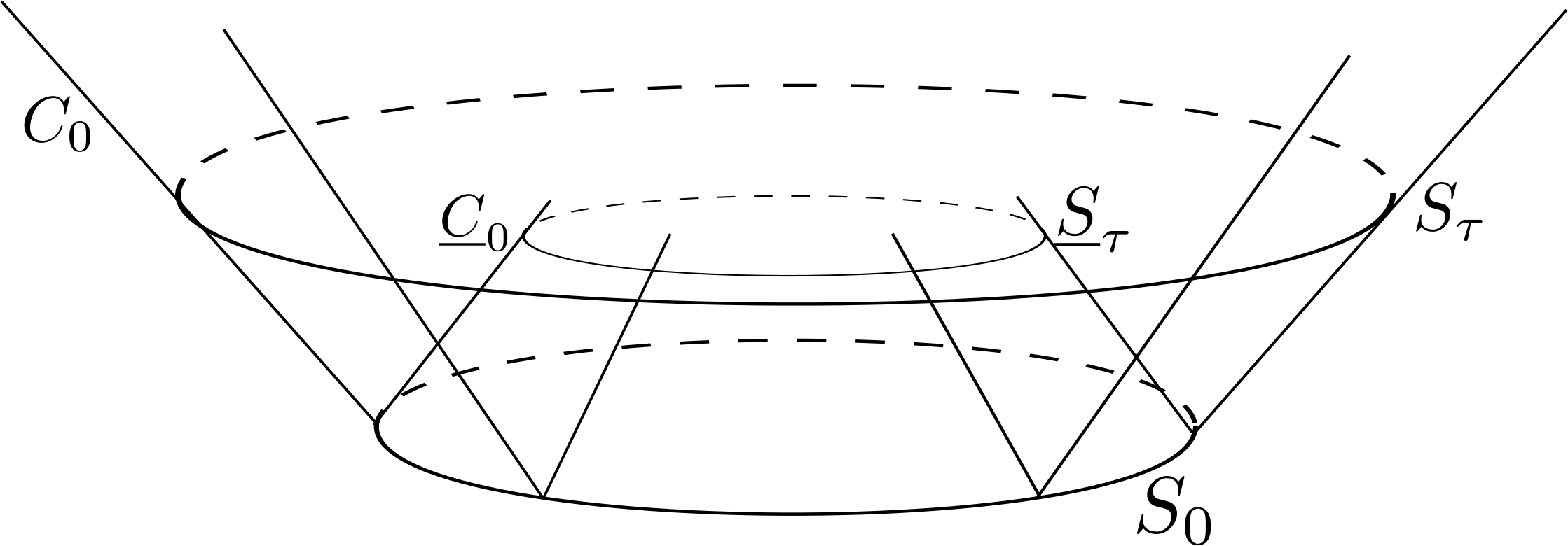}
	\label{fig:p44}
\end{figure}
\noindent We also define $\underline{L}_{geod}$ on $C_{0}$ such that $\underline{L}_{geod}$ is null and normal to $S_{\tau}$ and $g(L_{geod},\underline{L}_{geod})=-\Omega^{-2}$. We similarly extend $L_{geod}$ on $\underline{C}_{0}$. 
 \begin{figure}[H]
   \centering
		\includegraphics[scale=0.11]{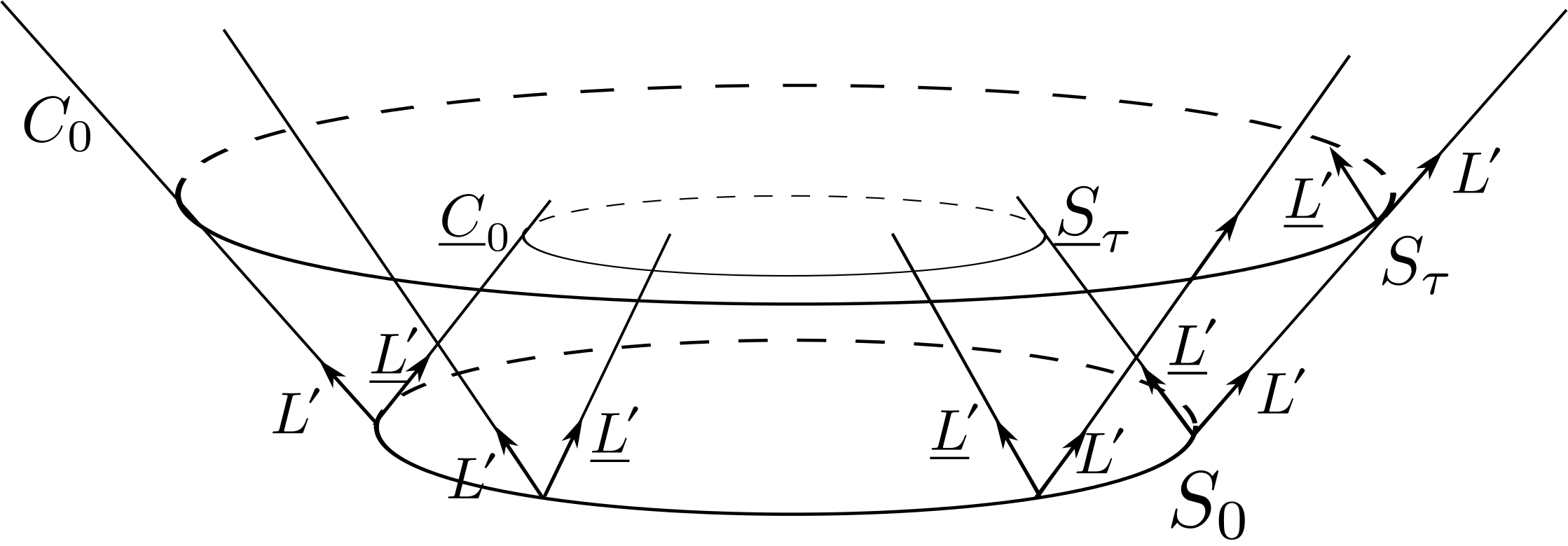}
	\label{fig:p44}
\end{figure}
Consider the affinely parametrized null geodesics which emanate from the points on $S_{\tau}$ with initial tangent vector $\underline{L}_{geod}$. These geodesics, whose tangent we denote by $\underline{L}_{geod}$, span  null hypersurfaces  which we will denote by $\underline{C}_{\tau}$. Hence, $C_{0}\cap\underline{C}_{\tau}=S_{\tau}$. Similarly, we define $L_{geod}$ globally (and the hypersurfaces $C_{\tau}$ such that their normal is $L_{geod}$). Extend the vector field $L,\underline{L}$ to global vector fields such that 
 \begin{equation*}
\begin{split}
L=\Omega^{2}\cdot L_{geod}, \ \ \ \underline{L}=\Omega^{2}\cdot\underline{L}_{geod}.
\end{split}
\end{equation*}
We will refer to $\Omega$ as the null lapse function. We also extend the functions $u,v$ to global functions such that 
 \begin{equation*}
\begin{split}
Lu=0, \ \ \ \ \ \underline{L}\,v=0.
\end{split}
\end{equation*}
Therefore, 
 \begin{equation*}
\begin{split}
C_{\tau}=\left\{u=\tau\right\}, \ \ \ \underline{C}_{\tau}=\left\{v=\tau\right\},
\end{split}
\end{equation*}
and hence $u,v$ are optical functions.  
\begin{figure}[H]
   \centering
		\includegraphics[scale=0.087]{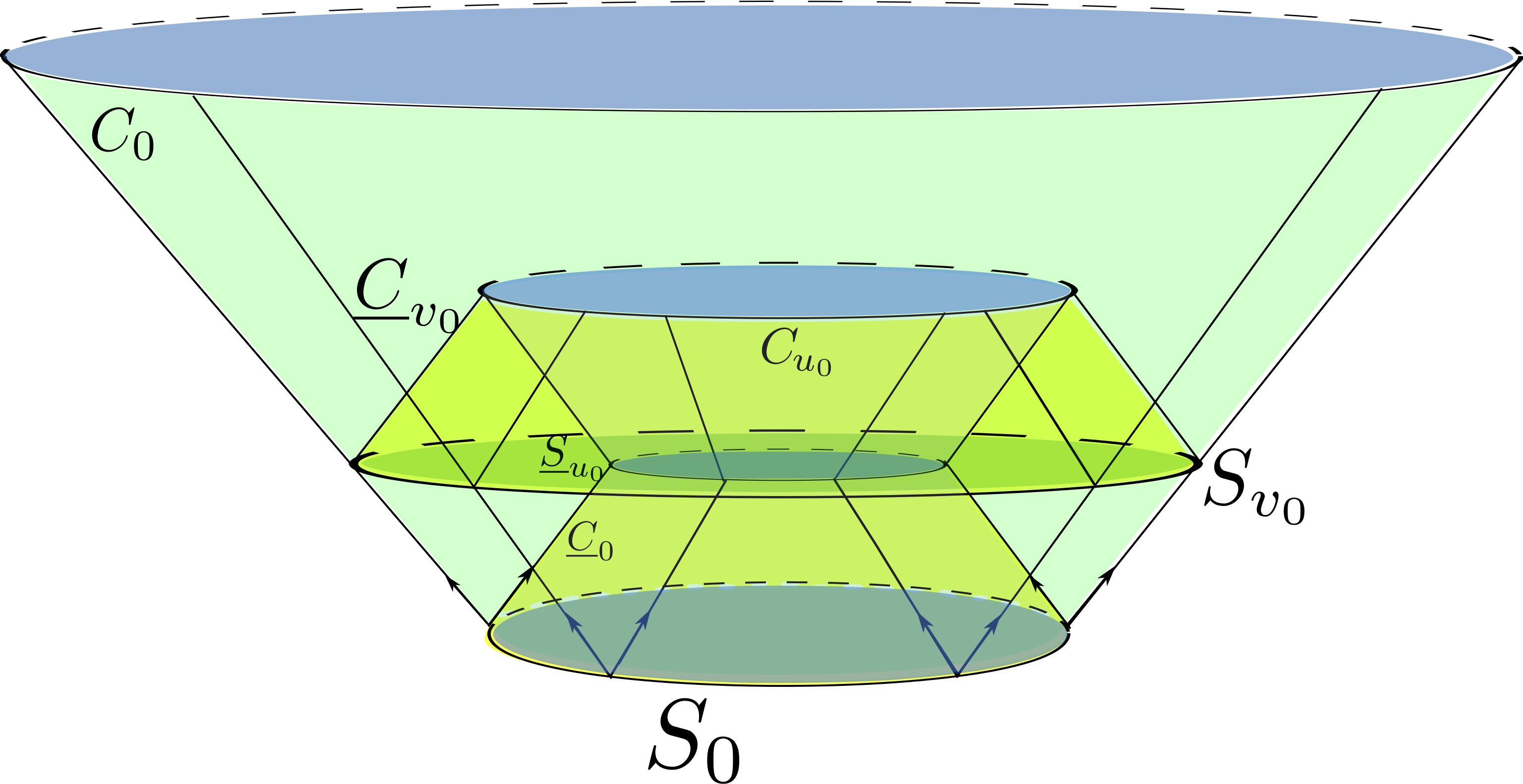}
	\label{fig:p42}
\end{figure}

 The importance of the renormalized vector fields $L,\underline{L}$ is manifest from the following 
\begin{proposition}
The optical functions $u,v$ satisfy the following relations
\[\nabla v=-\underline{L}_{geod}, \ \ \ \ \nabla u=-L_{geod}\]
and 
\[Lv=1, \ \ \ \ \underline{L} u=1. \]
\label{impo}
\end{proposition}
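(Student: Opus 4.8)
The plan is to unwind the definitions of $u,v$, $L_{geod}$, $\underline L_{geod}$ and the various extensions, and to exploit two structural facts: first, that the gradient of a function whose level sets are null hypersurfaces is itself null and tangent (indeed proportional) to the generators of those level sets; and second, that the specific normalizations $g(L_{geod},\underline L_{geod})=-\Omega^{-2}$, $Lu=0$, $\underline L v=0$, $Lv=1$, $\underline L u=1$ pin down the proportionality constants. Concretely, I would first establish the relations $Lv=1$ and $\underline L u=1$: the identity $Lv=1$ holds on $C_0=\hh$ by the very construction of $v$ as the affine parameter of $L$ with $v=0$ on $S_0$, and then it propagates off $\hh$ because $v$ is extended by $\underline L v = 0$ while $[L,\underline L]$ lies in the span of the frame in a way that preserves $Lv$; alternatively, and more cleanly, one checks that in the canonical double-null coordinates $(u,v,\theta^A)$ the vector fields are $L=\pa_v + (\text{angular})$ and $\underline L = \pa_u + (\text{angular})$ up to the lapse, so $Lv=1$, $\underline L u = 1$ are immediate. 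The symmetric statement $\underline L u=1$ is proved identically with the roles of $C_0$ and $\underline C_0$ interchanged.

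For the gradient relations, I would argue as follows. Since $C_\tau = \{u=\tau\}$ is a null hypersurface, $g^{-1}(du,du)=0$, i.e. $\nabla u$ is null; and $\nabla u$ is $g$-orthogonal to every vector tangent to $C_\tau$, in particular to the angular directions and to $L_{geod}$ itself (because $L_{geod}$ is tangent to the null generators of $C_\tau$, which lie inside $C_\tau$). A null vector orthogonal to a null hypersurface it is tangent to must be proportional to the generator, so $\nabla u = \lambda\, L_{geod}$ for some function $\lambda$. To compute $\lambda$, pair both sides with $\underline L_{geod}$: on one hand $g(\nabla u, \underline L_{geod}) = \underline L_{geod}(u) = \Omega^{-2}\underline L u = \Omega^{-2}$ using $\underline L u = 1$ and $\underline L = \Omega^2\underline L_{geod}$; on the other hand $g(\lambda L_{geod},\underline L_{geod}) = \lambda \cdot(-\Omega^{-2})$ by the normalization $g(L_{geod},\underline L_{geod})=-\Omega^{-2}$. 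Hence $\lambda = -1$, giving $\nabla u = -L_{geod}$. The relation $\nabla v = -\underline L_{geod}$ is obtained by the same computation with $u\leftrightarrow v$ and $L_{geod}\leftrightarrow \underline L_{geod}$.

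The main obstacle I anticipate is not any single computation but rather the bookkeeping needed to justify that the various off-hypersurface extensions (of $\Omega$, of $L_{geod}$ and $\underline L_{geod}$ onto the conjugate cones, and of $u,v$ by $Lu=0$, $\underline L v=0$) are mutually consistent, so that the normalization $g(L_{geod},\underline L_{geod})=-\Omega^{-2}$ actually holds identically in the full double-null region and not merely on $C_0\cup\underline C_0$; this is what legitimizes the pairing argument above at a general point. I would handle this either by citing the standard construction of canonical double-null coordinates as in \cite{DC09,christab} — in which these identities are built in — or by a short propagation argument showing that $g(L_{geod},\underline L_{geod})+\Omega^{-2}$ vanishes along the generators given that it vanishes initially and the $L_{geod},\underline L_{geod}$ are geodesic. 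Once that consistency is in place, the remaining steps are the routine linear-algebra identities indicated above.
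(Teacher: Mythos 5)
Your gradient argument is a genuinely different and elegant route: rather than decompose $\nabla u$ in the full null frame and compute each component as the paper does, you invoke the structural fact that the gradient of an optical function is null and tangent to its own level set, deduce $\nabla u = \lambda\, L_{geod}$, and then pin down $\lambda$ by pairing with $\underline{L}_{geod}$. That part is sound and arguably more geometric. However, there is a genuine circularity in how you arrange the logic. You want to establish $\underline{L}u=1$ and $Lv=1$ \emph{first} (you need, say, $\underline{L}u=1$ at a general point to compute $\lambda=-1$), but both of your proposed justifications for this presuppose what is being proved. By the paper's construction, $Lv=1$ is imposed only on $C_0$ and $\underline{L}u=1$ only on $\underline{C}_0$; off these initial hypersurfaces the extensions are $\underline{L}v=0$ and $Lu=0$, which say nothing directly about $Lv$ or $\underline{L}u$. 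Your bracket argument $\underline{L}(Lv)=[\underline{L},L]v$ needs $[L,\underline{L}]\in TS_{u,v}$, i.e.\ $[L,\underline{L}]v=0$, but $[L,\underline{L}]v = \underline{L}(Lv) - L(\underline{L}v) = \underline{L}(Lv)$, so you are assuming exactly the conclusion. Likewise the ``clean'' coordinate identity $\partial_u=\underline{L}$ is itself equivalent to $\underline{L}u=1$ (the angular coordinates and $v$ are already $\underline{L}$-constant by construction, and the residual normalization of $\partial_u$ against $\underline{L}$ is precisely $\underline{L}u$), and in the paper that identity is stated only \emph{after} Proposition~\ref{impo} is proved.

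The paper sidesteps this by doing the gradient computation first and only on $C_0$, where all the needed facts ($L_{geod}v=\Omega^{-2}$, $\underline{L}_{geod}v=0$) are by construction, and then invoking that both $\nabla v$ and $-\underline{L}_{geod}$ are geodesic fields tangent to the generators of $\underline{C}_v$, so equality on $C_0$ propagates along the generators. The relations $Lv=1$, $\underline{L}u=1$ are then \emph{consequences}. Your argument can be salvaged by a similar move: having shown $\nabla u = \lambda L_{geod}$, note that both $\nabla u$ and $L_{geod}$ satisfy the geodesic equation along the generators of $C_u$, so $\lambda$ is constant along each generator; compute $\lambda=-1$ on $\underline{C}_0$ where $\underline{L}u=1$ holds by fiat; and then conclude $\lambda\equiv -1$ since every generator meets $\underline{C}_0$. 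Finally, the obstacle you singled out — whether $g(L_{geod},\underline{L}_{geod})=-\Omega^{-2}$ holds off $C_0\cup\underline{C}_0$ — is not really an issue: in this construction $\Omega$ is the null lapse, i.e.\ that relation is how $\Omega$ is defined away from the initial hypersurfaces, so there is nothing to propagate. The thing that does need propagating is $\lambda$ (equivalently $\underline{L}u$), and the geodesic argument is the tool for it.
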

\begin{proof}

Since $u,v$ are optical functions we have that $\nabla u,\nabla v$ satisfy the geodesic equation. Since $L_{geod},\underline{L}_{geod}$ satisfy the geodesic equation as well, it suffices to show, for instance, that $\nabla v=-\underline{L}_{geod}$ on $C_{0}$. Expressing $\nabla v$ in terms of the null frame $(L_{geod},\underline{L}_{geod}, e_{1}, e_{2})$, where $e_{1},e_{2}$ is a local frame on the spheres $S_{\tau}$, we obtain
 \begin{equation*}
(\nabla v)^{\underline{L}_{geod}}=g^{\underline{L}_{geod}L_{geod}}\cdot(L_{geod}v)+g^{\underline{L}_{geod}\underline{L}_{geod}}\cdot(\underline{L}_{geod}v)=-\Omega^{2}\cdot\Omega^{-2}=-1,
\end{equation*}
on $C_{0}$, and similarly, we obtain that the remaining components with respect to the above frame are zero. This proves the first equation. For the second, it suffices to notice that 
\[Lv=g(L,\nabla v)=g(L,-\underline{L}_{geod})=-\Omega^{2}\cdot g(L_{geod},\underline{L}_{geod})=1.\]

\end{proof}
Note  that
\begin{equation}
g\big(\underline{L},L_{geod}\big)=-1,
\label{eq:unormalizationgeod}
\end{equation}
which shows that $L_{geod}$ determines to first order the optical function $u$ on $\hh=C_{0}$.

\paragraph{Gauge Freedom\medskip \\}
\label{sec:FixingAGauge}

The above analysis shows that a double null foliation $\mathcal{D}$ can be completely determined by the following 
\[\mathcal{D}=\left\langle S_{0}, \left. L_{geod}\right|_{S_{0}}, \left.\Omega\right|_{C_{0}},\left.\Omega\right|_{\underline{C}_{0}}\right\rangle.\]
Note that the freedom for the vector field $\left. L_{geod}\right|_{S_{0}}$ and the functions $\left.\Omega\right|_{C_{0}},\left.\Omega\right|_{\underline{C}_{0}}$ reflects the freedom for the functions $u'=u'(u), v'=v'(v)$. Indeed, fixing $\Omega$ and $L_{geod}$ determines up to additive constants the (optical) functions $u,v$.

\paragraph{The Diffeomorphisms $\Phi_{u,v}$\medskip \\}
\label{sec:coordinatescdiffeo}

We can construct a  diffeomorphism $\Phi_{u,v}$ from any sphere $S_{u,v}$ to  $S_{0}$ as follows: If $p\in S_{u,v}$ then we can consider the point $q\in S_{0,v}$ which is the intersection of $C_{0}$ and the null generator of $\underline{C}_{u}$ passing through $p$. We then let $\Phi_{u,v}(p)$ to be the unique point of intersection of $S_{0}$ and the null generator of $C_{0}$ passing through $q$. We can also consider a diffeomorphism 
\begin{equation}
\Phi: S_{0}\rightarrow\mathbb{S}^{2}
\label{phidiffeo}
\end{equation}
and compose $\Phi_{u,v}$ with $\Phi$ to obtain a diffeomorphism from $S_{u,v}$ to $\mathbb{S}^{2}$. This constructions allows us, for example, to equip all surfaces $S_{u,v}$ with the standard metric in a very precise way. See also the figure below.

%explain better here the diffeo to help also for the canonical coord. system. and use it everywhere because it IS VERY IMPORTANT!

\paragraph{The Canonical Coordinate System\medskip \\}
\label{sec:coordinatesc}
One can consider at each point a \textit{null frame} $(L,\underline{L},e_{1},e_{2})$ adapted to the double null foliation, where $e_{1},e_{2}$ is a local frame for the spheres $S_{u,v}$. However, these vector fields do not correspond to a coordinate system. 

Using the optical functions $u,v$ we will introduce a coordinate system suitably adapted to the corresponding double null foliation of the spacetime.

If $p\in\m$ then $p\in C_{u_{0}}\cap\underline{C}_{v_{0}}$ and hence $u(p)=u_{0}, v(p)=v_{0}$. We next prescribe angular coordinates for the point $p$ on the 2-surface $C_{u_{0}}\cap\underline{C}_{v_{0}}$.

Let $(\theta^{1},\theta^{2})$ denote a coordinate system on a domain of $S_{0}$.  Then  we assign to $p$ the angular coordinates of the point $\Phi_{u_{0},v_{0}}(p)$ as depicted below
 \begin{figure}[H]
   \centering
		\includegraphics[scale=0.07]{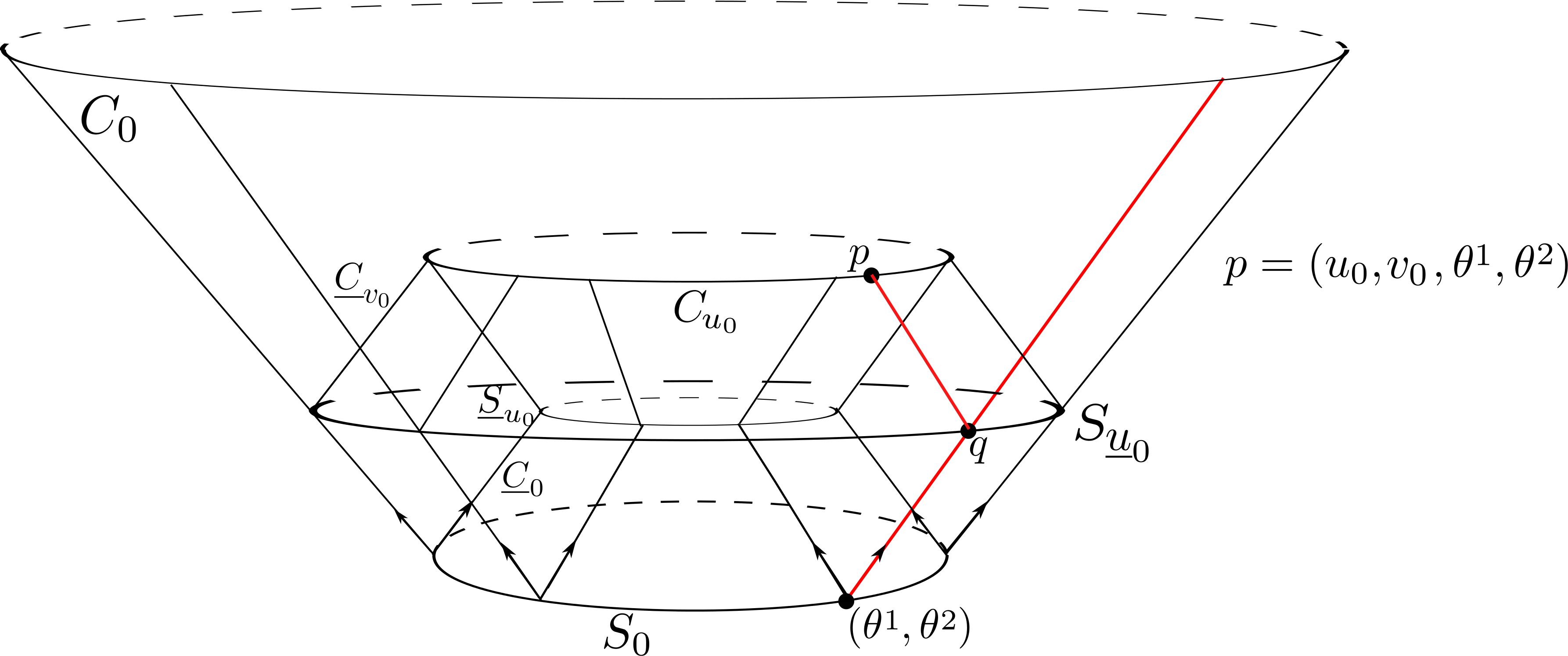}
	\label{fig:p45}
\end{figure}
By construction we have everywhere: \[\frac{\partial}{\partial{u}}=\underline{L}\] and 
\[\frac{\partial}{\partial{\theta^{1}}},\frac{\partial}{\partial{\theta^{2}}}\in TS_{u,v},\]
whereas
\[\frac{\partial}{\partial{v}}=L \text{ : on }C_{0}.\]
Note that the latter equation will not in general hold everywhere, as it is easily seen from the picture below:
 \begin{figure}[H]
   \centering
		\includegraphics[scale=0.08]{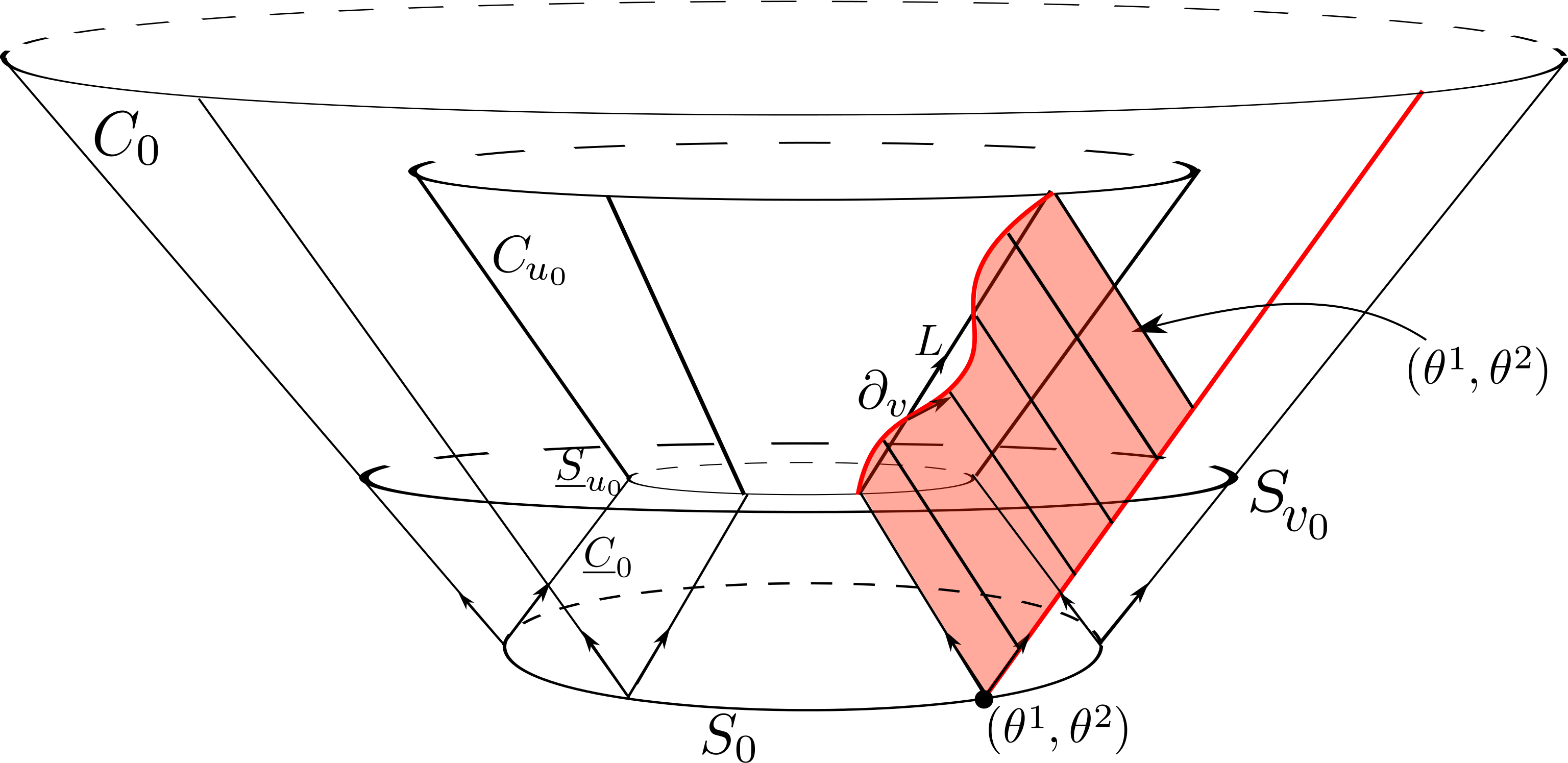}
	\label{fig:p46}
\end{figure}
From now on, for simplicity we denote $\partial_{v}=\frac{\partial}{\partial{v}}$, and so on.
 In general we have:
\[\partial_{v}=L+b^{i}\partial_{\theta^{i}}.\]
By virtue of the equations $\underline{L}=\partial_{u}$ and $[\partial_{u},\partial_{v}]=[\partial_{u},\partial_{\theta^{i}}]=0$ we obtain:
\begin{equation*}
[L,\underline{L}]=-\frac{\partial{b^{i}}}{\partial{u}}\partial_{\theta^{i}}\in TS_{u,v},
\end{equation*}
and therefore,
\begin{equation}
\frac{\partial{b^{i}}}{\partial{u}}=-d\theta^{i}([L,\underline{L}]), \text{ and }b^{i}=0 \text{ on }C_{0}=\left\{u=0\right\}. 
\label{cootor}
\end{equation}
Hence, the S-tangent vector field $b=b^{i}\partial_{\theta^{i}}$ is the obstruction to the integrability of  $\left\langle L,\underline{L}\right\rangle=\big(TS_{u,v}\big)^{\perp}$. In order to compute  $b$ it suffices to compute $[L,\underline{L}]$. Since $[L,\underline{L}]=\nabla_{L}\underline{L}-\nabla_{\underline{L}}L$, it suffices to compute the connection coefficients (see below).

The metric $g$ with respect to the canonical coordinates is given by
\begin{equation}
g=-2\Omega^{2}dudv+(b^{i}\, b^{j}\,\gi_{ij})dvdv-2(b^{i}\, \gi_{ij})d\theta^{j}dv+\gi_{ij}\,d\theta^{i}d\theta^{j},
\label{metriccan}
\end{equation}
where $\gi$ denotes the induced metric on the 2-surfaces $S_{u,v}=C_{u}\cap \underline{C}_{v}$. We immediately obtain
\begin{equation}
\text{det}(g)=-\Omega^{4}\cdot \text{det}(\gi).
\label{detg}
\end{equation}
\bigskip

\noindent\textbf{Null Frames}
\bigskip
 
\noindent From now on we denote $S_{u,v}=C_{u}\cap \underline{C}_{v}$. If $\left\{e_{1},e_{2}\right\}=\big(e_{A}\big)_{A=1,2}$ is an arbitrary frame on the spheres $S_{(u,v)}$ then we, in fact, have the following null frames:
\begin{itemize}
	\item 
\textbf{Geodesic frame:} $(e_{1},e_{2},L_{geod}, \underline{L}_{geod})$,
\item
\textbf{Equivariant frame:} $(e_{1},e_{2},L,\underline{L})$,
\item
\textbf{Normalized frame:} $(e_{1},e_{2}, e_{3},e_{4}),$
\item
\textbf{Coordinate frame:} $(\partial_{\theta^{1}},\partial_{\theta^{2}},\partial_{v},\partial_{u})$.

\end{itemize}
where \[e_{3}=\Omega \underline{L}_{geod}=\frac{1}{\Omega}\underline{L}, \ \ \ \ e_{4}=\Omega L_{geod}=\frac{1}{\Omega}{L}.\]
\begin{figure}[H]
   \centering
		\includegraphics[scale=0.135]{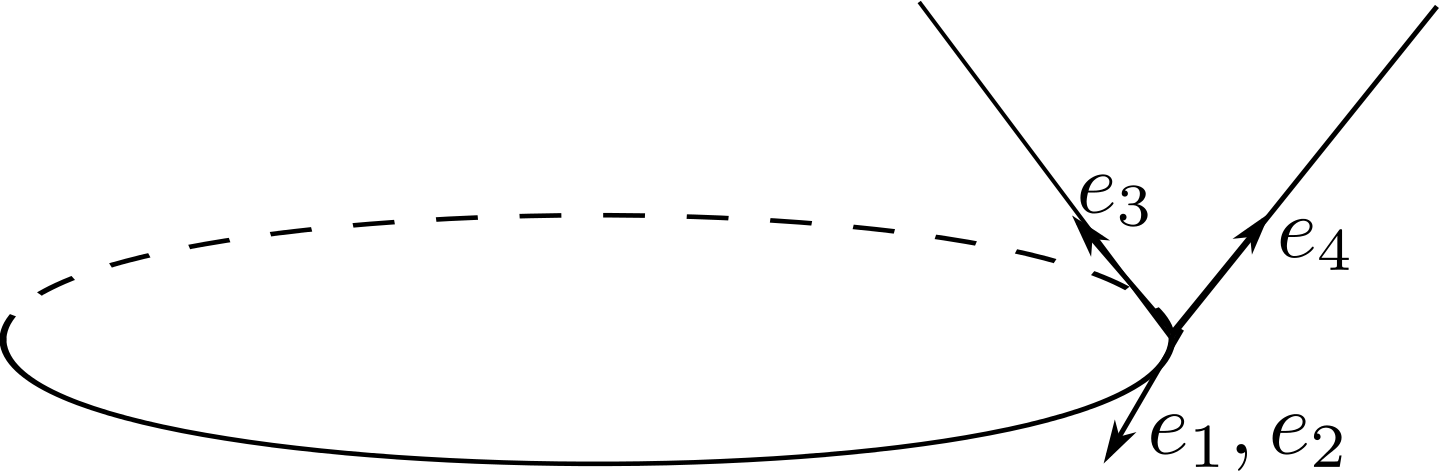}
	\label{fig:p3344}
\end{figure} 
Note that $e_{3},e_{4}$ satisfy the normalization property
\begin{equation}
g(e_{3},e_{4})=-1.
\label{eq:}
\end{equation}

\smallskip

\noindent\textbf{The conformal geometry and conformal factor}
\bigskip

The conformal class of $\gi$ contains a unique representative (metric) $\hat{g}$ such that $\sqrt{\hat{g}}=\sqrt{\gi_{\mathbb{S}^{2}}}$. Here we consider the diffeomorphism $\Phi\circ\Phi_{v}$ to identify the section $S_{v}$ with $\mathbb{S}^{2}$. Equivalently, $\gi$ and $h$ are such that the induced volume forms on $S_{v}$ are equal. Since $\gi$ and $\hat{g}$ are conformal there is a conformal factor such that $\gi=\phi^{2}\cdot \hat{g}$. Then, $\sqrt{\gi}=\phi^{2}\sqrt{\hat{g}}=\phi^{2}\sqrt{\gi_{\mathbb{S}^{2}}}$ and hence 
\[\phi^{2}= \frac{\sqrt{\gi}}{\sqrt{\gi_{\mathbb{S}^{2}}}}. \]
 Clearly the above imply that 
\begin{equation}
\phi=\frac{\sqrt[4]{\gi}}{\sqrt[4]{\gi_{\mathbb{S}^{2}}}}.
\label{phi}
\end{equation}
Note that $\phi$ is a smooth function on the sphere $S_{v}$ and does not depend on the choice of the coordinate system. Note also that for a spherically symmetric metric we have $\phi=r$, where $r$ is the radius.

\bigskip

\noindent\textbf{Connection Coefficients}

\bigskip

We consider the \textbf{normalized frame} $(e_{1},e_{2},e_{3},e_{4})$ defined above.  We define the connection coefficients with respect to this frame to be the smooth functions $\Gamma^{\lambda}_{\mu\nu}$
such that 
\[\nabla_{e_{\mu}}e_{\nu}=\Gamma^{\lambda}_{\mu\nu}e_{\lambda}, \ \ \lambda, \mu, \nu\in\left\{1,2,3,4\right\}\]
Here $\nabla$ denotes the connection of the spacetime metric $g$. We are mainly interested in the case where at least one of the indices $\lambda,\mu,\nu$ is either 3 or 4 (otherwise, we obtain the Christoffel symbols with respect to the induced metric $\gi$).  Following \cite{DC09,christab}, these coefficients  are completely determined by the following components:
\medskip

\noindent\textbf{The components} $\chi,\underline{\chi},\eta,\underline{\eta}, \omega,\underline{\omega},\zeta$:
\smallskip
\begin{equation}
\begin{split}
&\chi_{AB}=g(\nabla_{A}e_{4},e_{B}), \ \ \ \ \ \underline{\chi}_{AB}=g(\nabla_{A}e_{3},e_{B}),\\
&\ \eta_{A}=g(\nabla_{3}e_{4},e_{A}),\ \ \ \ \  \ \,  \underline{\eta}_{A}=g(\nabla_{4}e_{3},e_{A}),\\
&\ \ \ \omega=-g(\nabla_{4}e_{4},e_{3}), \ \ \ \  \underline{\omega}=-g(\nabla_{3}e_{3},e_{4}),\\
& \ \ \ \ \ \ \ \ \ \ \ \ \ \ \ \ \ \ \ \ \  \zeta_{A}=g(\nabla_{A}e_{4},e_{3})
\end{split}
\label{concoef}
\end{equation}
where $\big(e_{A}\big)_{A=1,2}$ is an arbitrary frame on the spheres $S_{v}$ and $\nabla_{\mu}=\nabla_{e_{\mu}}$. Note that $\underline{\zeta}=-\zeta$. The covariant tensor fields $\chi,\underline{\chi},\eta,\underline{\eta},\zeta$ are only defined on $T_{x}S_{v}$. \textbf{We can naturally extend these to tensor fields to be defined on $T_{x}\m$ by simply letting their value to be zero if they act on $e_{3}$ or $e_{4}$. Such tensor fields will in general be called \underline{$S$-tensor fields}. Note that a vector field is an $S$-vector field if it is tangent to the spheres $S_{v}$.} The importance the $S$-tensors originates from the fact that one is interested in understanding the embedding of $S_{v}$ in $\hh$.

The connection coefficients $\Gamma$ can be recovered by the following relations:
\begin{equation}
\begin{split}
\nabla_{A}e_{B}=\nabb_{A}e_{B}+&\chi_{AB}e_{3}+\underline{\chi}_{AB}e_{4},\\
\nabla_{3}e_{A}=\nabb_{3}e_{A}+\eta_{A}e_{3},& \ \ \ \nabla_{4}e_{A}=\nabb_{4}e_{A}+\underline{\eta}_{A}e_{4},\\
\nabla_{A}e_{3}=\underline{\chi}_{A}^{\ \ \sharp B}e_{B}+\zeta_{A}e_{3},& \ \ \ \nabla_{A}e_{4}={\chi}_{A}^{\ \ \sharp B}e_{B}-\zeta_{A}e_{4},\\
\nabla_{3}e_{4}=\eta^{\sharp A}e_{A}-\underline{\omega}e_{4},& \ \ \ \nabla_{4}e_{3}=\underline{\eta}^{\sharp A}e_{A}-\omega e_{3},\\
\nabla_{3}e_{3}=\underline{\omega}e_{3},& \ \ \ \nabla_{4}e_{4}=\omega e_{4},
\end{split}
\label{connectioncoef}
\end{equation}

\bigskip

\noindent\textbf{Curvature Components}

\bigskip

We next decompose the Riemann curvature $R$ in terms of the normalized null frame. First, we define the following components, which contain at most two S-tangential components (and hence at least 2 null components):
\begin{equation}
\begin{split}
\a_{AB}=R_{A4B4},&\ \ \ \ \underline{\a}_{AB}=R_{A3B3},\\
\beta_{A}=R_{A434},&\ \ \ \ \underline{\beta}_{A}=R_{A334},\\
\rho= R_{3434}, & \ \ \ \sigma=\frac{1}{2}\epsi^{AB}R_{AB34}.
\end{split}
\label{curvcompdeflist}
\end{equation}
Note that $R(\cdot,\cdot, e_{3},e_{4})$, when restricted on $T_{x}S_{v}$, is an antisymmetric form and hence collinear to the volume form $\epsi$ on $S_{v}$. Furthermore, if 
\[(*R)_{3434}=*\rho,\]
then $*\rho= 2\sigma$. Here,  the dual $*R$ of the Riemann curvature is defined to be the (0,4) tensor:
\[(*R)_{\a\b\gamma\delta}=\epsilon_{\mu\nu\a\beta}\,R^{\mu\nu}_{\ \ \  \gamma\delta}.\]
Clearly, the (0,2) $S$-tensor fields $\a,\underline{\a}$ are symmetric. Note that if the Einstein equations $Ric(g)=0$ are satisfied, then all the remaining curvature components can be expressed in terms of the above components. 

\bigskip

\noindent\textbf{Remarks:}
\medskip

\textbf{1.} Recall, that the second fundamental form of a manifold $S$ embedded in a manifold $\m$ is defined to be the symmetric $(0,2)$ tensor field $II$ such that for each $x\in S$ we have
\[II_{x}:T_{x}S\times T_{x}S\rightarrow (T_{x}S)^{\perp}, \]
where the $\perp$ is defined via the decomposition $T_{x}\m=T_{x}S\oplus(T_{x}S)^{\perp}$. Specifically, if $X,Y\in T_{x}S$ then \[II_{x}(X,Y)=\big(\nabla_{X}Y\big)^{\perp}\] and hence\[\nabla_{X}Y=\nabb_{X}Y+II(X,Y).\]
Here $\nabb$ denotes the induced connection on $S$ (which is taken by projecting the spacetime connection $\nabla$ on $T_{x}S$).

 The $S$-tensor fields $\chi,\underline{\chi}$ give us the projections of $II_{AB}$ on $e_{3}$ and $e_{4}$, respectively. Indeed
 \[II(X,Y)=\chi(X,Y)e_{3}+\underline{\chi}(X,Y)e_{4}.\]
 
 For this reason we will refer to 
 $\chi,\underline{\chi}$ as the \textit{null second fundamental forms} of $S_{v}$.   One can easily verify that $\chi$ and $\underline{\chi}$ are symmetric (0,2) $S$-tensor fields. Indeed, a simple calculation shows that if $X,Y$ are $S$-tangent vector fields then
 \begin{equation*}
 \begin{split}
  \chi(X,Y)-\chi(Y,X)=g(e_{4},[X,Y]) ,\\ \underline{\chi}(X,Y)-\underline{\chi}(Y,X)=g(e_{3},[X,Y]).
 \end{split}
 \end{equation*}
 Hence, $\chi,\underline{\chi}$ are symmetric if and only if $[X,Y]\perp e_{3}$ and $[X,Y]\perp e_{4}$ and thus if and only if $\left\langle e_{3},e_{4}\right\rangle^{\perp}\ni [X,Y]\in TS_{v}$. The symmetry of $\chi,\underline{\chi}$ is thus equivalent to the integrability of the orthogonal complement $\left\langle e_{3},e_{4}\right\rangle^{\perp}$.
 
  Furthermore, we can decompose $\chi$ and $\underline{\chi}$ into their trace and traceless parts by
 \begin{equation}
\chi=\hat{\chi}+\frac{1}{2}(tr\chi)\gi, \ \ \ \
\underline{\chi}=\hat{\underline{\chi}}+\frac{1}{2}(tr\underline{\chi})\gi.
\label{tracechi}
\end{equation}
The trace of the $S$-tensor fields $\chi,\underline{\chi}$ (and more general $S$-tensor fields) is taken with respect to the induced metric $\gi$. The trace $tr\chi$ is known as the  \textit{expansion} and the component $\hat{\chi}$ is called the \textit{shear} of $S_{v}$ with respect to $\hh$.

\medskip

\textbf{2.} The $S$ 1-form $\zeta$ is known as the \textit{torsion}. 
If $\di$ denotes the exterior derivative on $S_{v}$ then the $S$ 1-forms $\eta,\underline{\eta}$ are related to $\zeta$ via
\begin{equation*}
\begin{split}\eta=\zeta+\di(\log\Omega),\ \ \ \ \underline{\eta}=-\zeta+\di{\log\Omega},
\end{split}
\end{equation*}
and hence
\[\zeta= \frac{1}{2}(\eta-\underline{\eta}), \ \ \ \ \di\log\Omega=\frac{1}{2}(\eta+\underline{\eta}).\]
The 1-forms $\eta,\underline{\eta}$ can be thought of as the torsion of the null hypersurfaces with respect to the geodesic vector fields. Indeed, the previous relations imply
\[\eta_{A}=\Omega^{2}g(\nabla_{A}L_{geod}, \underline{L}_{geod}).\]
Furthermore, we have
\begin{equation}
\begin{split}
[L,\underline{L}]=-2\Omega^{2}\zeta^{\sharp}.
\end{split}
\label{torsionll}
\end{equation}
Hence the torsion $\zeta$ is the obstruction to the integrability of the timelike planes $\left\langle e_{3},e_{4}\right\rangle$ orthogonal to the spheres $S_{v}$.

\medskip

\textbf{3. } Let $\li_{L}$ denote the projection of the Lie derivative $\lie_{L}$ onto the spheres $S_{v}$. The \textbf{first variation formula} then reads
\begin{equation}
\li_{L}\gi=2\Omega\chi,\ \ \ \ \ \li_{L}(\gi^{-1})=-2\Omega\chi^{\sharp\sharp}.
\label{1stvarform}
\end{equation}Note that we use the induced metric $\gi$ to raise and lower indices.
 Hence, since $[L,\partial_{\theta^{i}}]=0$ on $\hh$,
\[ L\big(\gi_{ij}\big)=2\Omega\chi_{ij} \]
and hence
\begin{equation}
L\sqrt{\gi}=\Omega tr\chi\sqrt{\gi}
\label{derdet}
\end{equation}
on $\hh$. Similarly, since $\partial_{u}=\underline{L}$ we obtain
\[\partial_{u}\sqrt{\gi}=\Omega tr\underline{\chi}\sqrt{\gi}\]
everywhere. Therefore,
\begin{equation}
\partial_{u}\partial_{v}\sqrt[4]{\gi}=\left[\frac{1}{2}\partial_{v}(\Omega tr\underline{\chi})+\frac{1}{4}(\Omega tr\chi)(\Omega tr\underline{\chi})\right]\cdot \sqrt[4]{\gi}
\label{needforw}
\end{equation}
on $\hh$.

\medskip

\textbf{4. } We denote by $\li_{L},\nabb_{L}$ the projection of $\lie_{L},\nabla_{L}$ on the sections $S_{v}$ and by $\lapp,\nabb$ the induced Laplacian and gradient of $(S_{v},\gi)$.

\subsection{The wave equation}
\label{sec:TheWaveEquation}
Let $\mathcal{D}$ be a double null foliation of $\m$ such that $\hh=\left\{u=0\right\}$ as defined in Section \ref{sec:TheDoubleNullFoliation}. Let $\mathcal{S}$ be the associated foliation on $\hh$. Using the canonical coordinates $(u,v,\theta^{1},\theta^{2})$ and recalling that $b^{A}=0$ on $\h$ we obtain
\begin{equation*}
\begin{split}
\Box_{g}\psi&=\frac{1}{\sqrt{g}}\partial_{a}\Big(\sqrt{g}\cdot g^{ab}\cdot \partial_{b}\psi\Big)\\
&=\frac{1}{\sqrt{g}}\partial_{\theta^{i}}\Big(\sqrt{g}\cdot g^{ib}\cdot \partial_{b}\psi\Big)+\frac{1}{\sqrt{g}}\partial_{v}\Big(\sqrt{g}\cdot g^{vb}\cdot \partial_{b}\psi\Big)+\frac{1}{\sqrt{g}}\partial_{u}\Big(\sqrt{g}\cdot g^{ub}\cdot \partial_{b}\psi\Big),
\end{split}
\end{equation*} 
where for simplicity we denote $\sqrt{g}=\sqrt{|\text{det}(g)|}$. Since the vector fields $\partial_{\theta^{i}},\partial_{v}$ are tangential to $\h$, it suffices to compute $g^{ib},g^{vb}$ only for the case where $b^{i}=0$. In this case we have
\begin{equation*}
g^{vv}=g^{iv}=g^{iu}=0, \ \ \ \ \ g^{ij}=\gi^{ij}, \ \ \ \  \ g^{vu}=-\Omega^{-2}\ : \text{ on }\h.
\end{equation*} 
We also need to compute $g^{ub}$ everywhere (not just on $\h$). We obtain
\begin{equation*}
g^{uu}=O(b^{2}), \ \ \ \ \ g^{uv}=-\Omega^{-2},\ \ \ \ \ g^{ui}=-\frac{b^{i}}{\Omega^{2}}.
\end{equation*}
Therefore, using that $b^{i}=0$, we obtain on $\h$: 
\begin{equation}
\begin{split}
\Omega^{2}\cdot &(\Box_{g}\psi)= \\ &=-\frac{1}{\sqrt{\gi}}\partial_{v}\Big(\sqrt{\gi}\cdot \partial_{u}\psi\Big)-\frac{1}{\sqrt{\gi}}\partial_{u}\Big(\sqrt{\gi}\cdot \partial_{v}\psi\Big)+\frac{1}{\sqrt{\gi}}\partial_{\theta^{i}}\Big(\sqrt{\gi}\cdot\big(\Omega^{2}\cdot\gi^{ij}\cdot\partial_{j}\psi\big)\Big)-(\partial_{u}b^{i})\cdot (\partial_{\theta^{i}}\psi)\\
&=-\frac{1}{\sqrt{\gi}}\bigg(2\sqrt[4]{\gi}\partial_{v}\partial_{u}\left(\sqrt[4]{\gi}\psi\right)-2\sqrt[4]{\gi}\Big(\partial_{v}\partial_{u}\sqrt[4]{\gi}\Big)\cdot\psi\bigg)+\divv\Big(\Omega^{2}\,\nabb\psi\Big)-(\partial_{u}b^{i})\cdot (\partial_{\theta^{i}}\psi),\\
\end{split}
\label{important}
\end{equation}
where $\divv $ denotes the divergence on the 2-surfaces with respect to the metric $\gi$.

Recalling \eqref{torsionll} we can rewrite the restriction of the wave equation of $\hh$ with respect to a double null foliation $\mathcal{D}$ as follows
\begin{equation}
-2\partial_{v}\pu(\sqrt[4]{\gi}\psi) +\sqrt[4]{\gi}\cdot\mathcal{Q}^{\mathcal{S}}\psi=0,
\label{wenull}
\end{equation}
on $\hh$, where the operator $\mathcal{Q}^{\s}:C^{\infty}(\hh)\rightarrow\mathbb{R}$ is defined by
\begin{equation}
\mathcal{Q}^{\mathcal{S}}\psi=\Omega^{2}\cdot\lapp\psi+(\nabb\Omega^{2}-2\Omega^{2}\cdot\zeta^{\sharp})\cdot\nabb\psi+w\cdot\psi,
\label{ovoperator}
\end{equation}
where 
\begin{equation}
w=2\frac{\pv\pu\sqrt[4]{\gi}}{\sqrt[4]{\gi}}=\bigg[\partial_{v}(\Omega tr\underline{\chi})+\frac{1}{2}(\Omega tr\underline{\chi})(\Omega tr\chi)\bigg] \text{ on }\hh.
\label{w}
\end{equation}
Clearly, the operator $\mathcal{Q}^{\mathcal{S}}$ depends on the foliation $\mathcal{S}$ of $\hh$. 
If we introduce the conformal factor $\phi$ then, using that $\partial_{u}\sqrt{\gi_{_{\mathbb{S}^{2}}}}=\partial_{v}\sqrt{\gi_{_{\mathbb{S}^{2}}}}=0$, the wave equation can be rewritten as
\begin{equation}
-2\partial_{v}\pu(\phi\cdot\psi) +\phi\cdot\mathcal{Q}^{\mathcal{S}}\psi=0, 
\label{wenull}
\end{equation}
on $\hh$. If we define the operator $\mathcal{F}^{\mathcal{D}}:C^{\infty}(\m)\rightarrow\mathbb{R}$ such that
\begin{equation}
\mathcal{F}^{\mathcal{D}}\psi=\frac{2}{\phi}\cdot \pv\pu(\phi\cdot\psi),
\label{defF}
\end{equation} 
then the wave equation on $\hh$ reads
\begin{equation}
\mathcal{F}^{\mathcal{D}}\psi=\mathcal{Q}^{\mathcal{S}}\psi.
\label{wenull1}
\end{equation}

\section{Gluing constructions and conservation laws}
\label{sec:TheGeneralCase}

\subsection{Elliptic theory on $\hh$ and the operator $\mathcal{O}^{\s}$}
\label{sec:EllipticTheoryOnHh}

 In this subsection, we will introduce some basic facts about elliptic operators on the sections $S_{v}$ of a foliation $\s$ of $\hh$ that will be needed for the proof of the main theorem.

The restriction of the operator $\mathcal{Q}^{\mathcal{S}}$ on a section $S_{v}$ gives rise to an elliptic operator\footnote{For simplicity, we will often use the same notation for the operator $\mathcal{Q}^{\mathcal{S}}$ and its restriction  $\mathcal{Q}^{\mathcal{S}}_{v}$ on the section $S_{v}$ of $\mathcal{S}$ on $\hh$. }
\begin{equation}
\mathcal{Q}^{\mathcal{S}}_{v}:=\mathcal{Q}^{\mathcal{S}}\left.\right|_{S_{v}}:C^{\infty}\big(S_{v}\big)\rightarrow \mathbb{R}
\label{restofq}
\end{equation} whose spectrum consists only of discrete eigenvalues with the property that the only limit (accumulation) point is infinity. Indeed, if we consider the operator
\[\mathcal{Q}^{\mathcal{S}}_{temp}\psi=
\mathcal{Q}^{\mathcal{S}}\psi-\frac{1}{\epsilon}\psi= \Omega^{2}\cdot\lapp\psi+(\nabb\Omega^{2}+2\Omega^{2}\zeta^{\sharp})\cdot\nabb\psi+w\cdot\psi-\frac{1}{\epsilon}\psi   \]
then
\begin{equation*}
\begin{split}
\int_{S_{v}}\mathcal{Q}^{\mathcal{S}}_{temp}\psi\cdot\psi=&-\int_{S_{v}}\psi\nabb\Omega^{2}\cdot\nabb\psi+\Omega^{2}\cdot|\nabb\psi|^{2}+\Big(\frac{1}{2}\lapp\Omega^{2}+Div(\Omega^{2}\zeta^{\sharp})-w+\frac{1}{\epsilon}\Big)\psi^{2}\\
\leq &
\!-\!\!\!\int_{S_{v}}\!(\Omega^{2}-\epsilon_{1})\cdot|\nabb\psi|^{2}+\Big(\frac{1}{2}\lapp\Omega^{2}+Div(\Omega^{2}\zeta^{\sharp})-w-\frac{1}{\epsilon_{1}}|\nabb\Omega^{2}|^{2}+\frac{1}{\epsilon} \Big)\psi^{2}
 \end{split}
\end{equation*}

We first take $\epsilon_{1}$ sufficiently small such that  $\Omega^{2}>\epsilon_{1}$ on $S_{v}$ and then take $\epsilon$ sufficiently small so the coefficient of $\psi^{2}$ is strictly positive. Hence, for these choices, the operator $\mathcal{Q}^{\mathcal{S}}_{temp}$ is negative definite and hence it has trivial kernel. By the Atiyah--Singer theorem  we have  $ind\big(\mathcal{Q}^{\mathcal{S}}_{temp}\big)=0$ and hence by the Fredholm alternative the operator $\mathcal{Q}^{\mathcal{S}}_{temp}$ is invertible. By the Poincar\'{e} inequality and Rellich's theorem the inverse is a compact operator $\big(\mathcal{Q}^{\mathcal{S}}_{temp}\big)^{-1}:L^{2}(S_{v})\rightarrow L^{2}(S_{v})$. The spectrum of this operator contains zero and only discrete eigenvalues whose limit point is zero. Denote this spectrum by $\sigma_{temp}$. It follows that the spectrum of $\mathcal{Q}^{\mathcal{S}}_{temp}$ is the set $\frac{1}{\sigma_{temp}}$ which consists of discrete eigenvalues whose only limit point is infinity. Then, the spectrum of $\mathcal{Q}^{\mathcal{S}}$, restricted at the section $S_{v}$, is precisely the set $\sigma_{v}=\frac{1}{\sigma_{temp}}+\frac{1}{\epsilon}$.

Note also that if $f\in C^{\infty}(S_{v})$  then the equation 
\begin{equation}
\mathcal{Q}^{\mathcal{S}}\psi=f_{v},
\label{tosolveelliptic}
\end{equation} 
has a solution $\psi$ on $S_{v}$ if and only if $f_{v}$ lies in the orthogonal complement of the kernel of the adjoint of $\mathcal{Q}^{\mathcal{S}}$ with respect to the space $\big(S_{v},\gi\big)$. We have the following definition
\begin{definition}
Let $\s$ be a foliation of regular null hypersurface $\hh$ of a Lorentzian manifold $(\m,g)$, as defined in Section \ref{sec:TheDoubleNullFoliation}. We define the operator $\o^{\s}:C^{\infty}(\hh)\rightarrow\mathbb{R}$ given by 
\begin{equation}
\begin{split}
\o^{\mathcal{S}}\psi=& \Omega^{2}\cdot\lapp\psi+\left[\nabb\Omega^{2}+2\Omega^{2}\cdot\zeta^{\sharp}\right]\cdot\nabb\psi\\&+\left[2\divv\,\Big(\Omega^{2}\cdot\zeta^{\sharp}\Big)+\partial_{v}(\Omega tr\underline{\chi})+\frac{1}{2}(\Omega tr\underline{\chi})(\Omega tr\chi)\right]\cdot\psi.
\label{adjoint}
\end{split}
\end{equation}
We also denote by 
\begin{equation}
\mathcal{O}^{\mathcal{S}}_{v}:=\mathcal{O}^{\mathcal{S}}\left.\right|_{S_{v}}:C^{\infty}(S_{v})\rightarrow \mathbb{R}
\label{restofo}
\end{equation}
the restriction of $\mathcal{O}^{\mathcal{S}}$ on a section $S_{v}$. 
\label{definitiono}
\end{definition}
The operator $\mathcal{O}^{\mathcal{S}}_{v}$ is the adjoint of $\mathcal{Q}^{\mathcal{S}}_{v}$ with respect to the space $\big(S_{v},\gi\big)$. Hence we have 
\[ Im\big(\mathcal{Q}^{\s}_{v}\big)=\left(Ker(\o^{\s}_{v})\right)^{\perp}. \]
Therefore, the equation \eqref{tosolveelliptic} as a solution if and only if 
\begin{equation}
f_{v}\in \big(Ker(\mathcal{O}^{\mathcal{S}}_{v}) \big)^{\perp}.
\label{eq:integrability}
\end{equation}
where the operator $\mathcal{O}^{\mathcal{S}}$ is the adjoint of $\mathcal{Q}^{\mathcal{S}}$ with respect to the space $\big(S_{v},\gi\big)$ and is given by 

If $f_{v}$ depend smoothly on $v$ for all $v\in I=[v_{1},v_{2}]$  and \eqref{eq:integrability} is satisfied in $I$, then $\psi$ can be chosen to depend smoothly on $v$ too.

Finally, if $S_{v}$ are endowed with the standard unit metric $\gi_{\mathbb{S}^{2}}$ (see Section \ref{sec:TheDoubleNullFoliation}) and $f(v,\theta^{1},\theta^{2}):\hh\rightarrow \mathbb{R}$ is a smooth function then 
\[f(v,\theta^{1},\theta^{2})=\sum_{ml}f_{ml}(v)\cdot Y_{ml}(\theta^{1},\theta^{2}),\]
where $Y_{ml}$ are the standard spherical harmonics.

\subsection{The gluing construction}
\label{sec:GeneralizedConservationLaws}

Let $\mathcal{D}$ be a regular double null foliation such that $\hh=\left\{u=0\right\}$. Clearly $\mathcal{D}$ defines a foliation $\s=\big(S_{v}\big)_{v\in\mathbb{R}}$ of $\hh$. Let $S_{0}$ and $S_{1}$ be two sections of $\s$. We will first show that we can always glue data on $S_{0}$ to data on $S_{1}$ in the sense of Definition \ref{firstordergluingdefinition} if the operator $\mathcal{Q}^{\mathcal{S}}_{v}$ is surjective for some $v\in[0,1]$.

\begin{proposition}
Let $\s$ be a foliation of a regular null hypersurface $\hh$ of a four-dimensional Lorentzian manifold $(\m,g)$, as defined in Section \ref{sec:NullFoliationsandOpticalFunctions}. Let also $\o_{v}^{\s}$ be the elliptic operator given by \eqref{adjoint}. If there is $v_{0}\in[0,1]$ such that 
\[Ker(\mathcal{O}_{v_{0}}^{\s})= \left\{0\right\},\]
i.e.~if $0$ is \textbf{not} an eigenvalue of $\mathcal{O}^{\s}_{v_{0}}$, then we can glue arbitrary data at $S_{0}$ to arbitrary data at $S_{1}$ in the sense of Definition \ref{firstordergluingdefinition}. 
\label{pert1prop}
\end{proposition}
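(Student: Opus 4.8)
The strategy is to convert the gluing problem into a sequence of pointwise-in-$v$ elliptic equations on the sections $S_v$, and then use the invertibility of $\mathcal{Q}^{\mathcal{S}}_{v_0}$ (equivalently, the triviality of $\ker \mathcal{O}^{\mathcal{S}}_{v_0}$) to close the construction. Recall from \eqref{wenull1} that on $\hh$ the wave equation reads $\mathcal{F}^{\mathcal{D}}\psi = \mathcal{Q}^{\mathcal{S}}\psi$, i.e.
\[
\frac{2}{\phi}\,\partial_v\partial_u(\phi\cdot\psi) = \mathcal{Q}^{\mathcal{S}}\psi \quad \text{on } \hh.
\]
The data $\text{Data}(S_0)$ consist of $\{L^n\psi|_{S_0}\}_{n\ge 0}$ and $Y\psi|_{S_0}$, and similarly at $S_1$; here I will use $Y = e_3$ (or $\underline L_{geod}$), so that $Y\psi$ and $\partial_u\psi$ differ only by a known positive factor on $\hh$, and the transversal jet at a section is encoded by $\partial_u\psi|_{S_v}$ together with the tangential jet $\{L^n\psi|_{S_v}\}$. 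The plan is: (1) propagate the prescribed data at $S_0$ forward along $\hh$ using the wave equation, obtaining a candidate $\psi$ on all of $\hh\cap\{0\le v\le 1\}$ whose $v$-derivatives at $S_0$ match $\text{Data}(S_0)$; (2) simultaneously reserve enough freedom in the extension of $\psi$ on $\mathcal{G}$ to also hit the prescribed $\text{Data}(S_1)$ at the far end; (3) verify that the obstruction to step (2) is exactly solvability of an elliptic equation with operator $\mathcal{Q}^{\mathcal{S}}_{v}$ for some $v\in[0,1]$, which is removed by the hypothesis at $v_0$.

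Concretely, I would proceed as follows. Fix a cutoff $\chi(v)$ with $\chi\equiv 1$ near $v=0$ and $\chi\equiv 0$ near $v=1$, and seek the glued data $\psi|_{\hh}$ in the form $\psi = \chi\,\psi_{(0)} + (1-\chi)\,\psi_{(1)} + \psi_{corr}$, where $\psi_{(0)},\psi_{(1)}$ are germs of solutions along $\hh$ matching $\text{Data}(S_0)$ and $\text{Data}(S_1)$ respectively to all tangential orders and to first transversal order (these exist by the characteristic Cauchy–Kowalewski-type propagation: given $\psi|_{\hh}$ and $\partial_u\psi|_{S_v}$, the equation \eqref{wenull1} determines $\partial_u\psi$ along all of $\hh$, and higher $\partial_u$-jets are determined by differentiating the bulk wave equation). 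The point is that by construction the candidate already has the correct tangential jet $\{L^n\psi\}$ at both $S_0$ and $S_1$ — this is just a matter of prescribing $\psi|_{\hh}$ to agree with the given functions near the two endpoints — so the only thing left to match is the transversal derivative $\partial_u\psi$ (equivalently $Y\psi$) at $S_1$. Define $g_0(v) := \partial_u\psi|_{S_v}$, the transversal derivative of the solution $\psi_0$ determined by the data on $\underline{\mathcal{A}}_0, \mathcal{A}_0\cup\mathcal{G}$; it satisfies the transport equation obtained from \eqref{wenull1}, namely
\[
2\,\partial_v\!\left(\phi\cdot \partial_u\psi\right) = \phi\cdot\mathcal{Q}^{\mathcal{S}}\psi - 2\,(\partial_u\partial_v\phi)\cdot\psi \quad \text{on } \hh,
\]
so $\partial_u\psi|_{S_1}$ is an explicit functional of $\psi|_{\hh}$ on $[0,1]$ and of the initial value $\partial_u\psi|_{S_0}$. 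We want to choose the free part of $\psi|_{\hh}$ on the interior $\mathcal{G}$ so that this output equals the prescribed $Y\psi|_{S_1}$ (up to the known conversion factor).

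The mechanism for achieving this is the following. Perturb $\psi|_{\hh}$ on $\hh\cap\{0<v<1\}$ by a function $\delta\psi$ supported in a small $v$-interval around $v_0$, of the form $\delta\psi(v,\cdot) = \rho(v)\,h(\cdot)$ with $\rho$ a bump function and $h\in C^\infty(S_{v_0})$ to be chosen; such a perturbation does not affect the tangential jets at $S_0$ or $S_1$, nor does it affect $\partial_u\psi|_{S_0}$. Integrating the transport equation for $\partial_u\psi$ across the support of $\rho$, the induced change in $\partial_u\psi|_{S_1}$ is, to leading order in the width of $\rho$, a multiple of $\mathcal{Q}^{\mathcal{S}}_{v_0}h$ (the term $-2(\partial_u\partial_v\phi)\psi$ and the $\phi$-weights contribute lower-order or smooth corrections that can be absorbed). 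Since $\ker \mathcal{O}^{\mathcal{S}}_{v_0} = \{0\}$, the operator $\mathcal{Q}^{\mathcal{S}}_{v_0}$ is surjective onto $C^\infty(S_{v_0})$ (it is Fredholm of index zero by the Atiyah–Singer argument recalled above, with $\mathrm{Im}(\mathcal{Q}^{\mathcal{S}}_{v_0}) = (\ker \mathcal{O}^{\mathcal{S}}_{v_0})^\perp = C^\infty(S_{v_0})$), so $h$ can be chosen to produce any desired correction; this lets us match $Y\psi|_{S_1}$ exactly. Finally, by the smooth-dependence statement at the end of Section \ref{sec:EllipticTheoryOnHh}, the correction can be taken to depend smoothly on $v$, so $\psi|_{\hh}$ is smooth, hence $C^\infty$ along $\hh$ with continuous (indeed smooth) transversal derivative, giving $\psi_0 = \psi_1$ at $S_1$ to all tangential orders and $Y\psi_0 = Y\psi_1$ at $S_1$, which is precisely Definition \ref{firstordergluingdefinition}.

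\textbf{Main obstacle.} The delicate point is step (3): making rigorous the claim that the \emph{only} obstruction to matching $Y\psi|_{S_1}$ is the surjectivity of $\mathcal{Q}^{\mathcal{S}}_{v_0}$, and that a perturbation localized near $v_0$ genuinely realizes every element of $\mathrm{Im}(\mathcal{Q}^{\mathcal{S}}_{v_0})$ as a change in $\partial_u\psi|_{S_1}$ without spoiling the tangential data at either endpoint or reintroducing incompatibilities in the higher transversal jets. One must track how the transport equation for $\partial_u\psi$ couples to the full hierarchy $\{\partial_u^k\psi|_{S_v}\}$ via the ambient wave equation, and check that matching the first transversal derivative at $S_1$ — which is all Definition \ref{firstordergluingdefinition} requires — does not force constraints on higher-order transversal jets that could fail. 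I expect this bookkeeping, together with the careful identification of the leading-order perturbation operator as exactly $\mathcal{Q}^{\mathcal{S}}_{v_0}$ rather than some conjugate or lower-order modification of it, to be where the real work lies; the elliptic input itself (Fredholmness, the adjoint identity $\mathrm{Im}\,\mathcal{Q}^{\mathcal{S}}_v = (\ker\mathcal{O}^{\mathcal{S}}_v)^\perp$, and smooth $v$-dependence of solutions) is already in hand from Section \ref{sec:EllipticTheoryOnHh}.
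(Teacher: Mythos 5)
Your overall strategy coincides with the paper's: reduce the gluing problem to the integrated wave equation $2\pu(\phi\psi)\big|_{S_1}-2\pu(\phi\psi)\big|_{S_0}=\int_0^1\phi\,\mathcal{Q}^{\s}_v\psi\,dv$, localize the free part of $\psi|_{\hh}$ near $v_0$, and use the surjectivity of $\mathcal{Q}^{\s}_{v_0}$ supplied by $\ker\o^{\s}_{v_0}=\left\{0\right\}$ via the Fredholm alternative. However, there is a genuine gap in your closing step. With the separable ansatz $\delta\psi=\rho(v)\,h(\theta)$ the exact change produced in $\pu(\phi\psi)\big|_{S_1}$ is $\frac{1}{2}\int\rho(v)\,\phi(v,\cdot)\,\big(\mathcal{Q}^{\s}_v h\big)(\cdot)\,dv$, which is not a multiple of $\mathcal{Q}^{\s}_{v_0}h$; surjectivity of the single operator $\mathcal{Q}^{\s}_{v_0}$ does not by itself give an exact match, and the "to leading order in the width of $\rho$" statement is never upgraded to an equality. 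The paper avoids this entirely: it first invokes Kato's upper semicontinuity of the spectrum to conclude that $\ker\o^{\s}_v=\left\{0\right\}$ for all $v$ in a whole interval $[v_0-\epsilon,v_0+\epsilon]$ (the hypothesis only gives a single $v_0$, and you never note how to extend it), and then picks a $v$-dependent source $f(v,\theta)$, vanishing to all orders at the interval's endpoints, with $\int\phi f\,dv$ equal to the prescribed target, and solves the family $\mathcal{Q}^{\s}_v\psi(v,\cdot)=f(v,\cdot)$ exactly for each such $v$. In effect the paper drops your separable ansatz in favour of an exact $v$-dependent elliptic solve, which is the rigorous form of what you gesture at with "smooth dependence." To close your argument you would either adopt this construction, or prove that the operator $h\mapsto\int\rho\,\phi\,\mathcal{Q}^{\s}_v h\,dv$ remains invertible for $\rho$ of small enough support (an openness-of-invertibility step morally equivalent to the Kato argument you omitted).

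Two smaller points. Your transport equation for $\pu\psi$ is missing the term $-2(\pu\phi)(\pv\psi)$ on the right-hand side; this is harmless, as the extra term is still tangential data, but it should be recorded. And your stated worry about the perturbation "reintroducing incompatibilities in the higher transversal jets" is unfounded: Definition \ref{firstordergluingdefinition} only asks for matching to first transversal order, so the higher $\pu^k$-jets impose no constraint and need not be tracked.
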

\begin{proof}

Suppose that the spectrum $\sigma(\o_{v_{0}}^{\s})$ does not contain zero for some $v_{0}\in[0,1]$. Then zero is in the resolvent set $\rho(\o_{v_{0}}^{\s})$. By Kato's upper semicontinuity of spectrum (see \cite{kato}), the spectrum $\sigma(\o_{v}^{\s})$ of the operator $\sigma(\o_{v}^{\s})$ also does not contain zero for all $v\in[v_{0}-\epsilon,v_{0}+\epsilon]$ with $\epsilon>0$ sufficiently small. In other words, since the operators $\mathcal{O}^{\s}_{v}$ depend smoothly on $v$, there is a sufficiently small $\epsilon$ such that the resolvent set of the operators $\o_{v}^{\s}$ contains zero for all $v\in[v_{0}-\epsilon,v_{0}+\epsilon]$. In view of previous comments, these operators are also surjective.

If we integrate the wave equation \eqref{wenull} along the null generators of $\hh$ we obtain
\begin{equation}
2\pu(\phi\cdot\psi)_{\big|_{S_{1}}}-2\pu(\phi\cdot\psi)_{\big|_{S_{0}}} =\int_{0}^{1}\phi\cdot\mathcal{Q}_{v}^{\s}\psi \ dv.
\label{malista}
\end{equation}
In the context of our gluing problem the first two terms are given. We smoothly extend $\psi$  in the cylinders
\[(v,\theta^{1},\theta^{2})\in[0,v_{0}-\epsilon]\times\mathbb{S}^{2}, \ \ \ (v,\theta^{1},\theta^{2})\in[v_{0}+\epsilon,1]\times\mathbb{S}^{2}.\]
such that $\psi$ vanishes at all orders at the spheres $S_{v_{0}-\epsilon}$ and $S_{v_{0}+\epsilon}$. 
Then equation \eqref{malista} is satisfied if 
\[\int_{v_{0}-\epsilon}^{v_{0}+\epsilon}\phi(v,\theta^{1},\theta^{2})\cdot\mathcal{Q}_{v}^{\s}\psi(v,\theta^{1},\theta^{2}) \ dv =\rho(\theta^{1},\theta^{2}),     \]
where $\rho$ is a given (prescribed) function of the sphere (which depends only on the initial data at $S_{0}$ and $S_{1}$ and  the extension of $\psi$ in the complement of the cylinder for which $v\in[v_{0}-\epsilon,v_{0}+\epsilon]$. )

We consider a smooth function $f(v,\theta^{1},\theta^{2})$,  such that  
\[\int_{v_{0}-\epsilon}^{v_{0}+\epsilon}\phi\big(v,\theta^{1},\theta^{2}\big)\cdot f\big(v,\theta^{1},\theta^{2}\big)\ dv=\rho\big(\theta^{1},\theta^{2}\big) \]
and $f$  vanishes to all orders at $v=v_{0}-\epsilon$ and $v=v_{0}+\epsilon$ (such a function clearly exists). Then, we simply have to solve the equations
\[\mathcal{Q}_{v}^{\s}\psi=f\big(v,\cdot,\cdot\big)\]
on $S_{v}$ for all $v\in[v_{0}-\epsilon,v_{0}+\epsilon]$. This is clearly possible in view of the fact that the operators $\o_{v}^{\s}$ (and hence $\mathcal{Q}_{v}^{\s}$) are all invertible and the comments in the Section \ref{sec:EllipticTheoryOnHh}. Moreover, $\psi$ must necessarily vanish to all orders at $v=v_{0}-\epsilon$ and $v=v_{0}+\epsilon$ and hence extends to a smooth function in the cylinder where $\big(v,\theta^{1},\theta^{2}\big)\in [0,1]\times\mathbb{S}^{2}$. 

\end{proof}

Hence, in order to have a conservation law along $\hh$ we must have that $Ker\big(\o_{v}^{\s}\big)\neq \left\{0\right\}$ for all $v$  (otherwise we can perform gluing). The above result, however, does not exclude the possibility of gluing \textit{general} characteristic data even if $Ker\big(\o_{v}^{\s}\big)\neq \left\{0\right\}$ for all $v$. For we have the following general result

\begin{theorem}
Let $\mathcal{S}$ be a foliation of a regular null hypersurface $\hh$ of a four-dimensional Lorentzian manifold $(\m,g)$, as defined in Section \ref{sec:TheDoubleNullFoliation}. Let also $S_{0}$ and $S_{1}$ be two sections of $\s$. Then

\begin{enumerate}
	\item We can glue (to first order) general data on $S_{0}$ to general data on $S_{1}$ in the sense of Definition \ref{firstordergluingdefinition} if and only if $\hh$ does not admit conservation laws with respect to $\s$ in the sense of Definition \ref{definitionconservationlaw}. If $\hh$ admits conservation laws with respect to $\s$, then we can glue characteristic data if and only if their associated charges are equal, i.e. if and only if the data at $S_{0}$ and $S_{1}$ are such that
	\[\int_{S_{0}}Y^{\s}\big(\phi\cdot\psi\big)\cdot \Theta^{\s}\,d\mu_{_{\mathbb{S}^{2}}}=\int_{S_{1}}Y^{\s}\big(\phi\cdot\psi\big)\cdot \Theta^{\s}\,d\mu_{_{\mathbb{S}^{2}}}, \]
	for all $\Theta^{\s}\in  \mathcal{W}^{\s}$, where $\mathcal{W}^{\s}$ is the kernel of the conservation laws as defined in Section \ref{sec:ConservationLawsForTheWaveEquations}.  Here $\phi$ denotes the conformal factor of the sections of an associated double null foliation $\mathcal{D}$.
	
	\item The null hypersurface $\hh$ admits (first-order) conservation laws with respect to $\s=\big(S_{v}\big)_{v\in\mathbb{R}}$  in the sense of Definition \ref{definitionconservationlaw} if and only if there is a non-trivial linear space  $\mathcal{U}^{\mathcal{S}}\subset  \mathcal{V}_{\hh}$, where $\mathcal{V}_{\hh}$ is the linear space defined in \eqref{linearspace}, such that 
		\[\o^{\mathcal{S}}\left(\frac{1}{\phi}\cdot\Theta^{\s}\right)=0 \text{ on }\hh,\text{ for all }\Theta^{\s}\in \mathcal{U}^{\mathcal{S}}.\] Furthermore, the kernel of the conservation laws satisfies $\mathcal{W}^{\mathcal{S}}=\mathcal{U}^{\mathcal{S}}$, and moreover, $dim\, \mathcal{W}^{\mathcal{S}}=dim\, \mathcal{U}^{\mathcal{S}}<\infty$.

	\end{enumerate}
\label{theorem}
\end{theorem}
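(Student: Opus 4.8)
The plan is to prove both statements by exploiting the master equation \eqref{malista}, which says that the obstruction to first-order gluing is exactly the (smoothly $v$-dependent) family of elliptic problems $\mathcal{Q}_v^{\mathcal{S}}\psi = f(v,\cdot)$ on the sections $S_v$. First I would dualize the integrated wave equation. For any $\Theta^{\mathcal{S}} \in \mathcal{V}_{\hh}$ (i.e.\ $L\Theta^{\mathcal{S}}=0$), multiply \eqref{wenull} by $\frac{1}{\phi}\Theta^{\mathcal{S}}$, integrate over $S_v$ with respect to $d\mu_{\mathbb{S}^2}$, and integrate by parts on the sphere. Since $\mathcal{O}^{\mathcal{S}}_v$ is the adjoint of $\mathcal{Q}^{\mathcal{S}}_v$ on $(S_v,\gi)$, the boundary term on the left becomes $\partial_v$ of the charge $\int_{S_v} Y^{\mathcal{S}}(\phi\psi)\Theta^{\mathcal{S}}\,d\mu_{\mathbb{S}^2}$ — here one must carefully convert $\partial_u(\phi\psi)$ on the section into $Y^{\mathcal{S}}(\phi\psi)$ using the normalization \eqref{Y} and the fact that $\partial_u = \underline{L} = \Omega^2 Y^{\mathcal{S}}$ modulo $S$-tangential terms — and the bulk term becomes $\int_{S_v} \psi \cdot \mathcal{O}^{\mathcal{S}}_v(\frac{1}{\phi}\Theta^{\mathcal{S}})\,d\mu_{\mathbb{S}^2}$ times the appropriate factor. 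This identity immediately shows that if $\mathcal{O}^{\mathcal{S}}_v(\frac{1}{\phi}\Theta^{\mathcal{S}}) = 0$ for all $v$, then the charge is conserved for every solution $\psi$; hence $\mathcal{U}^{\mathcal{S}} \subset \mathcal{W}^{\mathcal{S}}$.

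For the reverse inclusion $\mathcal{W}^{\mathcal{S}} \subset \mathcal{U}^{\mathcal{S}}$, suppose $\Theta^{\mathcal{S}} \in \mathcal{W}^{\mathcal{S}}$ but $\mathcal{O}^{\mathcal{S}}_{v_*}(\frac{1}{\phi}\Theta^{\mathcal{S}}) \not\equiv 0$ at some $v_*$. Then there is a test function $h$ on $S_{v_*}$ with $\int_{S_{v_*}} h \cdot \mathcal{O}^{\mathcal{S}}_{v_*}(\frac{1}{\phi}\Theta^{\mathcal{S}})\,d\mu_{\mathbb{S}^2} \ne 0$, and I would build a solution $\psi$ of the wave equation on $\hh$ (compactly supported near $v_*$, say by prescribing data at a section and solving the transport/elliptic system \eqref{wenull1} — or more simply by prescribing $\psi|_{\hh}$ freely and noting the charge only sees $\psi|_{\hh}$ and $Y^{\mathcal{S}}\psi$ at the endpoints) realizing $\psi(v_*,\cdot) = h$ and making the charge change, contradicting conservation. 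The finiteness $\dim\mathcal{W}^{\mathcal{S}} < \infty$ then follows because $\Theta^{\mathcal{S}} \in \mathcal{U}^{\mathcal{S}}$ is determined by $\frac{1}{\phi}\Theta^{\mathcal{S}}|_{S_0} \in \mathrm{Ker}(\mathcal{O}^{\mathcal{S}}_0)$, which is finite-dimensional by the spectral discreteness established in Section \ref{sec:EllipticTheoryOnHh}; conversely one checks that an element of $\mathrm{Ker}(\mathcal{O}^{\mathcal{S}}_0)$ which extends (via $L\Theta^{\mathcal{S}}=0$) to a section-wise kernel element on all of $\hh$ is forced by the $v$-evolution — one differentiates $\mathcal{O}^{\mathcal{S}}_v(\frac{1}{\phi}\Theta^{\mathcal{S}}) = 0$ in $v$ and uses \eqref{needforw}, \eqref{w} — so the dimension count is exact.

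For statement (1), the hard direction is: if $\hh$ admits no conservation laws (equivalently, by statement (2), $\mathrm{Ker}(\mathcal{O}^{\mathcal{S}}_v)$ contains no element of the form $\frac{1}{\phi}\Theta^{\mathcal{S}}$ with $L\Theta^{\mathcal{S}}=0$, for the whole range $v\in[0,1]$), then we can always glue. Proposition \ref{pert1prop} already handles the case where $\mathrm{Ker}(\mathcal{O}^{\mathcal{S}}_{v_0}) = \{0\}$ for some $v_0$, so the remaining case is that $\mathrm{Ker}(\mathcal{O}^{\mathcal{S}}_v) \ne \{0\}$ for all $v \in [0,1]$ but no nonzero kernel element "persists" along $\hh$. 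Here the idea is to localize: cover $[0,1]$ and, on each subinterval, solve $\mathcal{Q}^{\mathcal{S}}_v\psi = f(v,\cdot)$ modulo the (finite-dimensional, smoothly varying) cokernel $\mathrm{Ker}(\mathcal{O}^{\mathcal{S}}_v)$, then show that the leftover obstruction — an element of $\prod_v \mathrm{Ker}(\mathcal{O}^{\mathcal{S}}_v)$ — can be killed by adjusting $\psi$ because the cokernel bundle has no global covariantly-constant (in the sense of the $v$-evolution \eqref{needforw}) section; this is precisely where the hypothesis "no conservation law" is used. The main obstacle, and the delicate analytic point, is exactly this step: turning the pointwise-in-$v$ non-existence of a persistent kernel element into a genuine solvability statement for the $v$-family of elliptic equations, uniformly in $v$, handling the jumps in $\dim\mathrm{Ker}(\mathcal{O}^{\mathcal{S}}_v)$; I expect one argues via a Lyapunov–Schmidt reduction along $[0,1]$ combined with the fact that the finite-rank obstruction evolves by a linear ODE in $v$ whose only solutions are the conservation-law kernel elements. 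Once this is done, the "if and only if" and the charge-matching criterion in (1) follow immediately by pairing \eqref{malista} against $\mathcal{W}^{\mathcal{S}}$.
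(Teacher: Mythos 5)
Your treatment of statement (2) is essentially aligned with the paper: the computation of $\partial_v$ of the candidate charge via integration by parts (producing $\int_{S_v}\psi\cdot\mathcal{O}_v^{\mathcal{S}}(\frac{1}{\phi}\Theta^{\mathcal{S}})\,d\mu_{\gi}$) is exactly the calculation behind the paper's derivation of conservation, and the reverse inclusion $\mathcal{W}^{\mathcal{S}}\subset\mathcal{U}^{\mathcal{S}}$ by testing against a localized solution is a legitimate route. The finiteness claim is also fine, coming from the discreteness of the spectrum.

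The gap is in your treatment of the hard direction of statement (1). You correctly identify the difficult case --- $\mathrm{Ker}(\mathcal{O}_v^{\mathcal{S}})\neq\{0\}$ for every $v$ but no element of $\mathrm{Ker}(\mathcal{O}_v^{\mathcal{S}})$ persists over the whole interval --- but the proposed mechanism does not work. There is no linear ODE for the obstruction: the kernels $\mathrm{Ker}(\mathcal{O}_v^{\mathcal{S}})$ are simply zero sets of a smooth family of elliptic operators, and the conservation-law kernel is the set-theoretic intersection $\bigcap_v\phi\cdot\mathrm{Ker}(\mathcal{O}_v^{\mathcal{S}})$, not a flow-invariant solution space of any evolution equation. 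A Lyapunov--Schmidt reduction would require a smoothly varying cokernel bundle, but a crucial difficulty the paper confronts head-on (Lemma \ref{lemmagiaelliptickernel}) is precisely that $\dim\mathrm{Ker}(\mathcal{O}_v^{\mathcal{S}})$ can jump. The paper's solution is combinatorial/linear-algebraic: it first shows the kernel dimension is bounded and varies smoothly only on a dense open set $\mathcal{Z}\subset[0,1]$, then proves the key technical result (Lemma \ref{deuterolemma}, extended in Lemma \ref{argitragelemma}) that given a smoothly varying subspace family $\Pi(v)$, the constraints $\int_I F_\rho\,dv=\rho$ and $F_\rho(v,\cdot)\perp\Pi(v)$ are simultaneously solvable if and only if $\rho\perp\bigcap_v\Pi(v)$, and finally observes that because the intersection is realized over finitely many $v$'s (a dimension-counting argument using compactness), one can apply the lemma to a finite union of good intervals to carry out the gluing. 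None of this has an ODE or Lyapunov--Schmidt flavor; it rests on the explicit construction of a finite-dimensional space $W$ and of the interpolating function $F_\rho$ via Lemma \ref{lemma1stoproof}. You would need this lemma --- or something equivalent --- before your sketch could become a proof; as written, the heart of the argument is still missing.
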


\begin{remark}
The conserved charges are independent of the choice of the diffeomorphism $\Phi$ defined by \eqref{phidiffeo}.  
\label{rem1giatheorema}
\end{remark}

\begin{remark}
Gluing is \textbf{not} always possible even if we allow to freely choose the initial data in an angular neighborhood $\mathcal{X}$ on  $S_{0}$ or $S_{1}$. Indeed, if the function  $\Theta^{\s}$ vanishes in $\mathcal{X}$, then the charges cannot change value even if we change the data in that region $\mathcal{X}$. Hence if the charges at $S_{0}$ and $S_{1}$ do not initially coincide then they will not coincide even after changing the data at $\mathcal{X}$. 
\label{re2}
\end{remark}

Before we give the proof of the Theorem \ref{theorem} we present some lemmata. The first one concerns the kernels of the elliptic operators $\o_{v}^{\s}$.
\begin{lemma}
\textbf{(Variation analysis of the kernel of elliptic operators)} There is an upper bound for the dimension of the kernel $K(v)\subset L^{2}\big(\mathbb{S}^{2}\big)$ of the operator $\o_{v}^{\s}$, given by \eqref{adjoint}, for $v\in[0,1]$. Moreover, there is a dense set $\mathcal{Z}\subseteq [0,1]$ of point $x$ for which there is an open neighborhood $V_{x}$ containing $x$ such that $K(v)$ varies smoothly for $v\in V_{x}$ (and hence in particular the dimension $\text{dim}\big(K(v)\big)$ is constant for all $v\in V_{x}$). 
\label{lemmagiaelliptickernel}
\end{lemma}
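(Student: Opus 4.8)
\textbf{Proof plan for Lemma \ref{lemmagiaelliptickernel}.}

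The plan is to analyze the family $v\mapsto \o^{\s}_{v}$ as a smooth (indeed real-analytic in $v$ if the metric is) family of second-order elliptic operators on the fixed compact manifold $\mathbb{S}^{2}$, after transporting everything to $\mathbb{S}^{2}$ via the diffeomorphisms $\Phi\circ\Phi_{v}$ introduced in Section \ref{sec:TheDoubleNullFoliation}. Under this identification the coefficients of $\o^{\s}_{v}$ — namely $\Omega^{2}\gi^{-1}$ in the principal part, the first-order term $\nabb\Omega^{2}+2\Omega^{2}\zeta^{\sharp}$, and the zeroth-order term $2\divv(\Omega^{2}\zeta^{\sharp})+\partial_{v}(\Omega\,tr\underline{\chi})+\tfrac12(\Omega\,tr\underline{\chi})(\Omega\,tr\chi)$ — are all smooth functions of $(v,\theta^{1},\theta^{2})$, so $\o^{\s}_{v}$ is a smooth family in the operator-norm sense as a map $H^{2}(\mathbb{S}^{2})\to L^{2}(\mathbb{S}^{2})$. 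As established in Section \ref{sec:EllipticTheoryOnHh}, each $\o^{\s}_{v}$ is Fredholm of index zero with discrete spectrum accumulating only at infinity, so $K(v)=\mathrm{Ker}(\o^{\s}_{v})$ is finite-dimensional for every $v$.

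First I would establish the \emph{uniform} upper bound on $\dim K(v)$. The clean way is the following: fix $v_{0}\in[0,1]$ and pick $\lambda<\min\sigma(\o^{\s}_{v_{0}})$... actually more robustly, consider the negative-definite comparison operator from Section \ref{sec:EllipticTheoryOnHh}: there is a constant $c>0$ (obtainable uniformly in $v\in[0,1]$ by compactness of $[0,1]$ and continuity of the coefficients, since one only needs $\Omega^{2}>\epsilon_{1}$ and a lower bound on the resulting coefficient of $\psi^{2}$) such that $\o^{\s}_{v}-c^{-1}$ is invertible for all $v$, with $\|(\o^{\s}_{v}-c^{-1})^{-1}\|_{L^{2}\to L^{2}}$ uniformly bounded. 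Then $K(v)$ is the $c^{-1}$-eigenspace of the compact self-adjoint-modulo-lower-order... — to avoid symmetry issues I would instead bound $\dim K(v)$ by the number of eigenvalues of the \emph{symmetric part} (or of an auxiliary genuinely self-adjoint elliptic operator with the same principal symbol) in a fixed interval, using min-max and the uniform coefficient bounds; the number of eigenvalues of a second-order elliptic operator on $\mathbb{S}^{2}$ below a fixed level is bounded in terms of the ellipticity constant and the sup-norms of the coefficients via Weyl-type / Cwikel–Lieb–Rozenblum estimates, all of which are uniform over the compact parameter interval. This yields $\sup_{v\in[0,1]}\dim K(v)=:N<\infty$.

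Next I would prove the density of the good set $\mathcal{Z}$. Let $N=\max_{v}\dim K(v)$ and let $\mathcal{Z}$ be the set of $v$ near which $\dim K(\cdot)$ is locally constant. The key fact is semicontinuity: $v\mapsto\dim K(v)$ is upper semicontinuous (if $\o^{\s}_{v_{0}}u=0$ with $u\neq 0$, then for $v$ near $v_{0}$ the Fredholm perturbation theory / Kato's results cited in the proof of Proposition \ref{pert1prop} give $\dim K(v)\le\dim K(v_{0})$; more precisely, writing $\o^{\s}_{v}=\o^{\s}_{v_{0}}+(v-v_{0})B_{v}$ with $B_{v}$ a uniformly bounded family of differential operators, and using that $\o^{\s}_{v_{0}}$ restricted to $K(v_{0})^{\perp}$ has a bounded inverse, a standard Lyapunov–Schmidt reduction identifies $K(v)$ with the kernel of a smooth finite $N\times N$-matrix-valued function $M(v)$ with $M(v_{0})=0$). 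On any subinterval $J\subseteq[0,1]$, $\dim K(\cdot)$ attains its minimum value $m_{J}$ on $J$; by upper semicontinuity the set $\{v\in J:\dim K(v)=m_{J}\}$ is \emph{open} in $J$ and nonempty, hence contains an interval, and on that interval the dimension is constant and — via the Lyapunov–Schmidt matrix $M(v)$, whose kernel has locally constant dimension exactly on the open set where its rank is locally constant — $K(v)$ varies smoothly (the kernel of a smooth matrix family of locally constant rank is a smooth vector bundle, spanned locally by smooth sections obtained from a smooth choice of complementary minor). Since every subinterval of $[0,1]$ meets such a locally-constant-dimension open set, $\mathcal{Z}$ is dense.

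The main obstacle, and the place requiring the most care, is the \emph{uniformity} of the bound $N<\infty$ across the full parameter interval and the \emph{smoothness} (not merely continuity) of $K(v)$ on $\mathcal{Z}$: Kato's semicontinuity as quoted in Proposition \ref{pert1prop} immediately gives openness of the resolvent condition and hence the qualitative statements, but promoting ``$\dim K$ locally constant'' to ``$K(v)$ a smooth family of subspaces'' needs the Lyapunov–Schmidt finite-dimensional reduction together with the observation that the reduced matrix $M(v)$ inherits smoothness in $v$ from the smoothness of the coefficients of $\o^{\s}_{v}$ in the canonical coordinates. Everything else — ellipticity, Fredholmness, discreteness of spectrum — is already in hand from Section \ref{sec:EllipticTheoryOnHh}.
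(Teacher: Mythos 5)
Your overall strategy matches the paper's: pull back to the fixed sphere via $\Phi\circ\Phi_v$, establish upper semicontinuity of $v\mapsto\dim K(v)$ from Fredholm perturbation theory, use this to produce the dense set $\mathcal{Z}$, and upgrade ``locally constant dimension'' to ``smoothly varying kernel'' by a finite-dimensional reduction against the bounded inverse of $\o^{\s}_{v_0}$ on $K(v_0)^{\perp}$. Your density argument (every subinterval $J$ contains an open set where $\dim K$ achieves its minimum $m_J$) is a clean repackaging of the paper's nested sets $A_n=\{v:\dim K(v)\ge n\}$, $B_n=A_n\setminus A_{n+1}$, $\mathcal{Z}=\bigcup\operatorname{int}(B_n)$; both are equivalent and both are correct. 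Your Lyapunov--Schmidt reduction is essentially the paper's hands-on perturbation argument for smoothness, just stated in more standard language.

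The one place you go wrong is the uniform upper bound on $\dim K(v)$. Your Weyl/Cwikel--Lieb--Rozenblum detour is both unnecessary and, as phrased, unsound: the kernel of the non-self-adjoint operator $\o^{\s}_v$ is not controlled by eigenvalue counting for its \emph{symmetric part} (they have no direct relation), and CLR-type bounds are formulated for Schr\"odinger operators, with $n=2$ being the borderline dimension in any case. There is no need for any of this. You already establish, later in your own argument, that $\dim K(\cdot)$ is upper semicontinuous and integer-valued, and you know each $\dim K(v)<\infty$ by ellipticity. An integer-valued upper semicontinuous function on a compact interval attains its supremum (equivalently: cover $[0,1]$ by the neighborhoods on which $\dim K(\cdot)\le\dim K(v_0)$, extract a finite subcover), so $\sup_{v\in[0,1]}\dim K(v)<\infty$ comes for free. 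This is precisely what the paper does via the observation that the $A_n$ are closed, nested, and have empty intersection, so $A_n=\emptyset$ for $n$ large. If you insist on an a priori eigenvalue-counting bound, the correct route would be via the self-adjoint operator $(\o^{\s}_v)^*\o^{\s}_v$ (whose kernel equals $K(v)$) together with a G\aa rding inequality uniform in $v$ to compare its low-lying spectrum with that of a fixed reference operator such as $(-\lapp_{\mathbb{S}^2})^2$ --- but that is strictly more work than the semicontinuity argument you already have in hand.
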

\begin{proof}
Following the idea of Section \ref{sec:EllipticTheoryOnHh} we have that for sufficiently large  $\lambda >0$ the operator $\o_{v}^{\s}-\lambda\cdot I$ has a compact inverse $\mathcal{C}_{v}:L^{2}(\mathbb{S}^{2})\rightarrow H^{1}(\mathbb{S}^{2})\subset L^{2}(\mathbb{S}^{2})$. By the continuity of the resolvent theorem (see \cite{extremumproblemsbook}, Chapter 2) we obtain that the operators $\mathcal{C}_{v}$ vary continuously in $v$ with respect to the topology of the space $\mathcal{L}\big(L^{2},L^{2} \big)$. 
The kernel $K(v)$ of $\o_{v}^{\s}$ coincides with the kernel of the operator 
\begin{equation}
\mathcal{P}_{v}=\mathcal{C}_{v}-\frac{1}{\lambda}\cdot I:L^{2}(\mathbb{S}^{2})\rightarrow  L^{2}(\mathbb{S}^{2}),
\label{operatorp}
\end{equation}
which also varies continuously in $v$. 
Since $\mathcal{C}_{v}$ is compact it is easily seen that the operator \[\Big. \mathcal{P}_{v}\Big|_{\big(K(v)\big)^{\perp}}   :\big(K(v)\big)^{\perp}\rightarrow  L^{2}(\mathbb{S}^{2})\]
is bounded from below. This implies that the mapping
\[K:[0,1]\rightarrow K(v):=ker\big(\o_{v}\big)\subset L^{2}\big(\mathbb{S}^{2}\big)  \]
is upper semicontinuous, i.e.~if $B(1)$ is the unit ball in $L^{2}\big(\mathbb{S}^{2}\big)$ then for all $\epsilon>0$ there is $\delta>0$ such that if $|v-v_{0}|<\delta$ then 
\[ K(v)\cap B(1)\subset B_{\epsilon}\Big( K(v_{0})\cap B(1)  \Big), \] 
where $B_{\epsilon}(S)$ denotes the set of points who distance from $S$ is at most  $\epsilon$. It thus follows that 
\begin{equation}
\limsup_{n}\text{dim}\big(K(v_{n})\big)\leq \text{dim}\big(K(v_{0})\big).
\label{simantikodiastasi}
\end{equation}
Define now the sets
\begin{equation}
A_{n}=\Big\{v\in[0,1]\, :\, \text{dim}\big(K(v)\big)\geq n\Big\}.
\label{tasinolaa}
\end{equation}
In view of \eqref{simantikodiastasi} the sets $A_{v}$ are closed in [0,1]. Moreover, $A_{n+1}\subset A_{n}$. By the compactness of $[0,1]$ it follows that there is a $n_{0}\in \mathbb{N}$ such that $A_{n}=\emptyset$ for all $n\geq n_{0}$. Hence, there is an upper bound on the dimension of the kernel $K(v)$ for all $v\in [0,1]$. We consider next the following sets
\begin{equation}
B_{n}=A_{n}/A_{n-1}=\Big\{v\in[0,1]\, :\, \text{dim}\big(K(v)\big)=n\Big\}.
\label{tasinolab}
\end{equation}
Clearly, 
\begin{equation}
\bigcup_{n=0}^{n_{0}}B_{n}=[0,1].
\label{tasinolab1}
\end{equation}
The set $B_{0}=[0,1]/A_{1}$ is open in $[0,1]$ and hence $B_{0}=\text{int}\big(B_{0}\big)$. We will show that for all $1\leq n\leq n_{0}$ we have 
\begin{equation}
\text{int}\big(B_{n}\big)=B_{n}/\text{clos}\big(B_{0}\cup...\cup B_{n-1}\big).
\label{interiorofb}
\end{equation}
The inclusion $\text{int}\big(B_{n}\big)\subseteq B_{n}/\text{clos}\big(B_{0}\cup...\cup B_{n-1}\big)$ is trivial. If now there is $x\in B_{n}/\text{clos}\big(B_{0}\cup...\cup B_{n-1}\big)$ such that $x\notin \text{int}\big(B_{n}\big)$ then there is a sequence $y_{k}\rightarrow x$ with $y_{k}\notin B_{n}$. Clearly, $y_{k}$ cannot have an infinite subsequence in either $B_{0}\cup ...\cup B_{n-1}$ since otherwise $x\in \text{clos}\big(B_{0}\cup...\cup B_{n-1}\big)$.   Hence, $y_{k}\in A_{n+1}$ and since $A_{n+1}$ is closed we have $x\in A_{n+1}$ and hence $x\notin B_{n}$, contradiction. 

Define the set
\begin{equation}
\mathcal{Z}=\bigcup_{n=1}^{n_{0}}\text{int}\big(B_{n}\big).
\label{eq:thesetz}
\end{equation}
The set $\mathcal{Z}$ is open in $[0,1]$ and dense. Indeed, in view of \eqref{tasinolab1} and \eqref{interiorofb} we have
\[\text{clos}(\mathcal{Z})= \text{clos}\big(B_{0}\cup...\cup B_{n_{0}}\big)=[0,1] .  \]
In other words, there is a dense set of points $x\in [0,1]$ for which there is an open neighborhood $V_{x}$ such that $\text{dim}\big(K(v)\big)$ is constant for all $v\in V_{x}$. It remains to show that $K(v)$ varies smoothly in $v$ for $v\in V_{x}$, i.e.~the curve $V_{x}\ni x\mapsto K(v)$ is smooth in the Grassmannian $Gr(L^{2},n)$ where $n=\text{dim}\big(K(v)\big)$. Using an adaptation of the aforementioned result of \cite{extremumproblemsbook} and the fact that the coefficients of $\o_{v}^{\s}$ depend smoothly in $v$ one can show that $\mathcal{C}_{v}$, and hence $\mathcal{P}_{v}$, varies smoothly in $v$ in the space $\mathcal{L}\big(L^{2},L^{2}\big)$. In view of the fact that $K(v)$ is upper semicontinuous and has constant dimension we obtain that if $v_{0}\in V_{x}$ then  
\[\Big. \text{proj}\Big|_{K(v_{0})\cap B(1)}\Big(K(v)\cap B(1)\Big)= K(v_{0})\cap B(1)  \]
for all $v$ sufficiently close to $v_{0}$. Given $x_{v_{0}}\in K(v_{0})\cap B(1)$ there is a unique $x_{v}\in B_{\epsilon}\Big(K(v)\cap B(1) \Big)$ such that $x_{v}=x_{v_{0}}+a_{v}^{\perp}$. Since $\mathcal{P}_{v}=\mathcal{P}_{v_{0}}+(v-v_{0})\cdot Z+O\big((v-v_{0})^{2}\big)$ and  $\mathcal{P}_{v}(x_{v})=0$ we obtain $\mathcal{P}_{v_{0}}(a_{v}^{\perp})=(v_{0}-v)\cdot Zx_{v}+O(v^{2})  $ and since $\Big. \mathcal{P}_{v_{0}}\Big|_{\big(K(v)\big)^{\perp}}:\big(K(v)\big)^{\perp}\rightarrow Im(\mathcal{P}_{0})$ has a bounded inverse we obtain that $a_{v}^{\perp}\rightarrow 0$ in $L^{2}$ as $v\rightarrow v_{0}$ in a differentiable manner. This shows that $K(v)$ varies differentiably in $v$. Similary we can show that $K(v)$ is smooth in $v$.

\end{proof}

If there is $v_{0}\in[0,1]$ such that $Ker\big(\o^{\s}_{v}\big)= 0$ then the Theorem \ref{theorem} follows from Proposition \ref{pert1prop}. In particular, in this case we have $dim\,\mathcal{W}^{\s}=dim\,\mathcal{U}^{\s}=0$. We assume that $Ker\big(\o_{v}^{\s}\big)\neq 0$ for all $v\in[0,1]$, that is $B_{0}=\emptyset$, where $B_{0}$ is defined by \eqref{tasinolab}. Then according to the above lemma there is a dense set of points $x$ which have an open neighborhood $V_{x}$ in which the kernels $K(v)$ vary smoothly.   In each of these intervals, we can find a smoothly varying in $v$ basis 
\[\mathcal{B}_{v}^{\s}=\left\{ \big(E_{1}^{\s}\big)_{v}, \big(E_{2}^{\s}\big)_{v},\cdots, \big(E_{i}^{\s}\big)_{v}, \ i=dimKer\big(\o_{v}^{\s}\big) \right\}\]
of $K(v)=Ker\big(\o_{v}^{\s}\big)$. We  next localize in each of these intervals.

Let $v_{0}\in V_{x}$ for some $x$ as above. We can smoothly extend $\psi\left.\right|_{S_{0}}$ and $\psi\left.\right|_{S_{1}}$ in the cylinders $[0,v_{0}-\epsilon]\times\mathbb{S}^{2}$ and $[v_{0}+\epsilon,1]\times \mathbb{S}^{2}$, where $\epsilon>0$ is sufficiently small such that $v_{0}-\epsilon,v_{0}+\epsilon\in V_{x}$, such that $\psi$ vanishes to infinite order  on $\hh$  at $\left\{v_{0}-\epsilon\right\}\times\mathbb{S}^{2}$ and $\left\{v_{0}+\epsilon\right\}\times\mathbb{S}^{2}$.  Then, $\o^{\s}_{v}\psi$ is also known in the union of these two cylinders and hence, by \eqref{wenull1}, $\mathcal{F}^{\mathcal{D}}\psi$, given by \eqref{defF}, is  known there. In fact $\mathcal{F}^{\mathcal{D}}\psi$  vanishes to infinite order on $\hh$  at $\left\{v_{0}-\epsilon\right\}\times\mathbb{S}^{2}$ and $\left\{v_{0}+\epsilon\right\}\times\mathbb{S}^{2}$.  We wish to extend $\psi$ smoothly in $[0,1]\times\mathbb{S}^{2}$ such that $\psi$ solves \eqref{wenull1} and the transversal derivatives $\pu\psi\left.\right|_{S_{0}}$ and $\pu\psi\left.\right|_{S_{1}}$ agree with the given characteristic data.

 We will do so by first smoothly extending $\mathcal{F}^{\mathcal{D}}\psi$ everywhere in $[0,1]\times\mathbb{S}^{2}$ such that the above are satisfied and, using \eqref{wenull1}, we can solve with respect to $\psi$. In other words, gluing is possible if we can extend $\mathcal{F}^{\mathcal{D}}\psi$, or equivalently $\mathcal{F}^{\mathcal{D}}\psi\cdot\phi$, in $[v_{0}-\epsilon,v_{0}+\epsilon]\times\mathbb{S}^{2}$ such that the following conditions hold:

\medskip

\noindent{\underline{\textbf{The conditions 1--3:}}}

\medskip

\begin{enumerate}
	\item \textbf{\underline{Smoothness on $\hh$}:} The function $\mathcal{F}^{\mathcal{D}}\psi$, or equivalently $\mathcal{F}^{\mathcal{D}}\psi\cdot\phi$, vanishes to all orders on $\hh$  at $\left\{v_{0}-\epsilon\right\}\times\mathbb{S}^{2}$ and $\left\{v_{0}+\epsilon\right\}\times\mathbb{S}^{2}$. 
	\item \textbf{\underline{Gluing for the transerval derivative $\pu\psi$}:} In view of \eqref{defF} and \eqref{wenull1}, $\mathcal{F}^{\mathcal{D}}\psi$ satisfies on $\hh$:  \[\displaystyle\int_{v_{0}-\epsilon}^{v_{0}+\epsilon}\!\!\!\big(\mathcal{F}^{\mathcal{D}}\psi\big)(v,\theta^{1},\theta^{2})\cdot \phi(v,\theta^{1},\theta^{2})\, dv=\rho(\theta^{1},\theta^{2}),\] where $\rho$ is a given (prescribed) function of the sphere (which depends only on the initial data at $S_{0}$ and $S_{1}$ and also the extension of $\psi$ in the complement of the cylinder for which $v\in[v_{0}-\epsilon,v_{0}+\epsilon]$).
	
	\item \textbf{\underline{Integrability (orthogonality) condition}:} Given $\mathcal{F}^{\mathcal{D}}\psi$ we can solve with respect to $\psi$ (i.e. ``invert'' the operator $\mathcal{F}^{\mathcal{D}}$) if, using \eqref{wenull1} and the comments of Section \ref{sec:EllipticTheoryOnHh}, 
	\[\mathcal{F}^{\mathcal{D}}\psi \in  Im\big(\mathcal{Q}^{\s}_{v}\big)=\left(Ker(\o^{\s}_{v})\right)^{\perp},   \]
	or equivalently,
	\begin{equation}
	\displaystyle\int_{S_{v}}\mathcal{F}^{\mathcal{D}}\psi\cdot \big(E_{n}^{\s}\big)_{v}\ d\mu_{_{\gi}}=0
	\label{1inte}
	\end{equation} for all $v\in[v_{0}-\epsilon,v_{0}+\epsilon]$ and $n=1,..., dimKer\big(\o_{v}^{\s}\big)$.
	\end{enumerate}
	Note that since
	\begin{equation*}
\begin{split}
\int_{S_{v}}\mathcal{F}^{\mathcal{D}}\psi\cdot \big(E_{n}^{\s}\big)_{v}\ d\mu_{_{\gi}}=&\int_{S_{v}}2\frac{\sqrt[4]{\gi_{\mathbb{S}^{2}}}}{\sqrt[4]{\gi}}\cdot\pv\pu(\phi\cdot\psi)\cdot \big(E_{n}^{\s}\big)_{v}\sqrt{\gi}\, d\theta^{1}\, d\theta^{2}\\=&\int_{S_{v}}2\frac{\sqrt[4]{\gi}}{\sqrt[4]{\gi_{\mathbb{S}^{2}}}}\cdot\pv\pu(\phi\cdot\psi)\cdot \big(E_{n}^{\s}\big)_{v}\sqrt{\gi_{\mathbb{S}^{2}}}\, d\theta^{1}\, d\theta^{2}\\=&\int_{S_{v}}\Big(\mathcal{F}^{\mathcal{D}}\psi\cdot\phi\Big)\cdot\Big(\big(E_{n}^{\s}\big)_{v}\cdot \phi\Big)\, d\mu_{_{\mathbb{S}^{2}}},
\end{split}
\end{equation*}
the integrability condition \eqref{1inte} is equivalent to the following:
\begin{equation}
\int_{S_{v}}\Big(\mathcal{F}^{\mathcal{D}}\psi\cdot\phi\Big)\cdot\Big(\big(E_{n}^{\s}\big)_{v}\cdot \phi\Big)\, d\mu_{_{\mathbb{S}^{2}}}=0,
\label{integrability}
\end{equation}
for all $n=1,..., dimKer\big(\o_{v}^{\s}\big)$. As we shall see, splitting $\phi$ in both terms is very important as it also is the fact that the above integral is with respect to the standard unit metric on $S_{v}$. We define the functions
\begin{equation}
G_{n}=\big(E_{n}^{\s}\big)_{v}\cdot\phi, 
\label{g0}
\end{equation}
for $n=1,\cdots, dimKer\big(\o_{v}^{\s}\big)$ and thus \eqref{1inte} is equivalent to 
\begin{equation}
\int_{S_{v}}\Big(\mathcal{F}^{\mathcal{D}}\psi\cdot\phi\Big)\cdot\big(G_{n}\big)\, d\mu_{_{\mathbb{S}^{2}}}=0,
\label{integrability}
\end{equation}
for $n=1,\cdots, dimKer\big(\o_{v}^{\s}\big)$. 
\bigskip

\noindent\underline{ \textbf{A special case:}} $dim Ker\big(\o_{v}^{\s}\big)=1$ for all $v\in[0,1]$

\medskip

Before we consider the  general case let us first consider the special case for which $dim Ker\big(\o_{v}^{\s}\big)=1$ for all $v\in[0,1]$ (and hence $V_{x}=[0,1]$).  This will make our argument clear. We simplify the notation by denoting $G=G_{1}$.

We  decompose $\mathcal{F}^{\mathcal{D}}\psi\cdot\phi,\,G$ in (standard) angular frequencies. Let
\begin{equation}
\big(\mathcal{F}^{\mathcal{D}}\psi\cdot\phi\big)(v,\theta^{1},\theta^{2})=\sum_{ml}F_{ml}(v)\cdot Y^{ml}(\theta^{1},\theta^{2})
\label{f1}
\end{equation}
and 
\begin{equation}
G(v,\theta^{1},\theta^{2})=\sum_{ml}G_{ml}(v)\cdot Y^{ml}(\theta^{1},\theta^{2}),
\label{g1}
\end{equation}
where $Y^{ml}$ denote the standard spherical harmonics on $\mathbb{S}^{2}$. 

We can glue to first order characteristic data on $S_{0}$ and $S_{1}$ if there exist smooth functions  $F_{ml}(v)\!:\![v_{0}-\epsilon,v_{0}+\epsilon]\rightarrow\mathbb{R}$ such that the following conditions are satisfied
\begin{enumerate}
\item \underline{Smoothness on $\hh$}:
The functions $F_{ml}(v)$ vanish to all orders  at $v=v_{0}-\epsilon$ and $v=v_{0}+\epsilon$ for all $m,l$.
\item 
\underline{Gluing for the transversal derivative $\partial_{u}\psi$}:
The integrals
\[\int_{v_{0}-\epsilon}^{v_{0}+\epsilon}F_{ml}(v)\, dv\]
 are all given. 
 \item 
\underline{Orthogonality condition}: The integrability condition \eqref{integrability} is satisfied:
\begin{equation*}
\begin{split}
\sum_{ml} F_{ml}(v)\cdot G_{ml}(v)=0.
\end{split}
\end{equation*}
\end{enumerate}

\underline{\textbf{Case I}}:\medskip  \\ 
For all $m,l$ we have $G_{ml}(v)=\textbf{G}(v)\cdot c_{ml}$, for all $v\in[0,1]$, for some function $\textbf{G}:[0,1]\rightarrow\mathbb{R}$ and some constant non-zero $l^{2}$ sequence $c_{ml}$.  That is to say, for all $m,l,m',l'$ we have $\frac{G_{ml}(v)}{G_{m'l'}(v)}=c_{mlm'l'}$, where $c_{mlm'l'}$ is a constant. Then, the condition 3  becomes
\[\sum_{ml}F_{ml}(v)\cdot c_{ml}=0. \]
  The above equation implies that the functions $F_{ml}(v)$ are linearly dependent and hence condition  2 \textbf{cannot} be satisfied \textit{in general}. In fact, in this case we have that $G(v,\theta^{1},\theta^{2})=\textbf{G}(v)\cdot{\Theta}(\theta^{1},\theta^{2}), $ for some function ${{\Theta}}$ which is constant along the null generators, i.e.~$\Theta\in\vh$. Then, by \eqref{g0}, we have the splitting 
\[  \big(E^{\s}_{v}\cdot\phi\big)(v,\theta^{1},\theta^{2})= \mathbf{G}(v)\cdot\Theta(\theta^{1},\theta^{2}) \]
which shows that 
\[E_{v}^{\s}(\theta^{1},\theta^{2})= \mathbf{G}(v)\cdot \frac{1}{\phi}\cdot \Theta(\theta^{1},\theta^{2}), \]
and since the function $\mathbf{G}$ does not depends on the angular coordinates we have that 
\[\frac{1}{\phi}\cdot\Theta\in Ker(\o_{v}^{\s}),  \]
for all $v\in[0,1],$ as required. Furthermore,  by  \eqref{integrability} and \eqref{defF} we have 
\[0=\int_{S_{v}}\big(\mathcal{F}^{\mathcal{D}}\psi\cdot\phi\big)\cdot\Theta\, d\mu_{_{\mathbb{S}^{2}}}=\int_{S_{v}}\Big(\pv\pu(\phi\cdot\psi)\Big)\cdot\Theta\, d\mu_{_{\mathbb{S}^{2}}} \]
and, since the function $\Theta$ and the measure of integration do not depend on $v$, we obtain that the quantity
\begin{equation}
\int_{S_{v}}\big(\pu(\phi\cdot\psi)\big)\cdot\Theta\ d\mu_{_{\mathbb{S}^{2}}}
\label{eq:conserved}
\end{equation}
is conserved, i.e.~independent of $v$. Clearly the above conservation law is an obstruction to gluing of general initial data on $S_{0}$ to general initial data on $S_{1}$. 

We will next show that this conservation law is the \textbf{only} obstruction to gluing. Indeed suppose that  the initial data on $S_{0}$ and $S_{1}$ are such that 
\[\int_{S_{0}}\big(\pu(\phi\cdot\psi)\big)\cdot\Theta\ d\mu_{_{\mathbb{S}^{2}}}=\int_{S_{1}}\big(\pu(\phi\cdot\psi)\big)\cdot\Theta\ d\mu_{_{\mathbb{S}^{2}}}.\]
In this case we need to construct functions $F_{ml}:[v_{0}-\epsilon,v_{0}+\epsilon]\rightarrow\mathbb{R}$ such that 
\begin{enumerate}
\item $F_{ml}(v)$ vanishes to infinite order at $v=v_{0}-\epsilon$ and $v=v_{0}+\epsilon$ for all $m,l$.
\item \begin{equation}
\sum_{ml}F_{ml}(v)\cdot c_{ml}=0
\label{in1}
\end{equation} for $v\in[v_{0}-\epsilon,v_{0}+\epsilon]$, where $c_{ml}$ is a constant non-zero $l^{2}$ sequence. 

\item The integrals
\[ \mathbf{I}_{ml}=\int_{v_{0}-\epsilon}^{v_{0}+\epsilon}F_{ml}(v)\, dv \]are all prescribed such that    
\begin{equation}
\sum_{ml}\mathbf{I}_{ml}\cdot c_{ml} =0, 
\label{in2}
\end{equation} 
since
\[Char(S_{v_{0}-\epsilon})=Char(S_{0})=Char(S_{1})=Char(S_{v_{0}+\epsilon}),\]
and using \eqref{defF} and \eqref{malista}, where we denote 
\[char(S_{v})=\int_{S_{v}}\big(\pu(\phi\cdot\psi)\big)\cdot\Theta\ d\mu_{_{\mathbb{S}^{2}}}.\] \end{enumerate}

For simplicity we rename the sequences $F_{ml},G_{ml,}c_{ml}$ as $F_{i},G_{i},c_{i},i\geq 0$. We assume without loss of generality that $c_{0}\neq 0$. We construct the functions $F_{i}, i\geq 1,$ such that condition 1 is satisfied and such that the integrals $\int_{v_{0}-\epsilon}^{v_{0}+\epsilon}F_{i}(v)\,dv$ agree with the prescribed values (condition 3).  We next construct $F_{0}$ by solving the equation \eqref{in1} with respect to $F_{0}$. Clearly, condition 1 is satisfied for $F_{0}$. Condition 2 holds by construction. Moreover, the integral $\int_{v_{0}-\epsilon}^{v_{0}+\epsilon}F_{0}(v)\, dv$ agrees with its prescribed value in view of \eqref{in2} and the linearity of the integrals. This finishes the construction of $F_{ml}$ which in turn allows us to extend $\psi$ and hence to obtain the gluing of the characteristic data. 

We remark that as along as gluing is possible then it can be achieved in a highly non-unique way.  

\bigskip

\underline{\textbf{Case II}}:\medskip\\ Using the above simplified notation, we can assume without loss of generality that \begin{equation}
G_{1}(v)=a(v)\cdot G_{0}(v) \text{ and } \pv a\neq 0,\  G_{0}(v)\neq 0
\label{gmalista}
\end{equation} in a (sufficiently) small interval $I$ of $v$ (and hence $G_{1}$ is not linearly dependent on $G_{0}$). We choose $v_{0}$ to be in $I$ and we take $\epsilon>0$ small enough such that $[v_{0}-\epsilon,v_{0}+\epsilon]\subset I$. By relabeling $m,l$, as before, we can rewrite condition 3 as follows 
\[ F_{0}(v)\cdot G_{0}(v)+F_{1}(v)\cdot G_{1}(v)+\sum_{i\geq 2}F_{i}(v)\cdot G_{i}(v)=0 \]
and hence 
\begin{equation}
 F_{0}(v)=-\frac{G_{1}(v)}{G_{0}(v)}F_{1}(v)-\sum_{i\geq 2}\frac{G_{i}(v)}{G_{0}(v)}F_{i}(v)=-a(v)\cdot F_{1}(v)-\sum_{i\geq 2}\frac{G_{i}(v)}{G_{0}(v)}F_{i}(v). 
\label{eq:pro3}
\end{equation}
We can then prescribe $F_{i}(v)$, $i\geq 2$, such that condition 1 is satisfied and such that the integrals $\int_{v_{0}-\epsilon}^{v_{0}+\epsilon}F_{i}(v)\,dv$ agree with the prescribed values (condition 3). 

We can then prescribe $F_{1}$ such that condition 1 is satisfied, the integral $\int_{v_{0}-\epsilon}^{v_{0}-\epsilon}F_{1}(v)\, dv$ agrees with its prescribed value \textbf{and} the integral  $\int_{v_{0}-\epsilon}^{v_{0}-\epsilon}a(v)\cdot F_{1}(v)\, dv$ is such that the integral $\int_{v_{0}-\epsilon}^{v_{0}+\epsilon}F_{0}(v)\, dv$, computed via  \eqref{eq:pro3}, agrees with its prescribed value. Note that in view of \eqref{gmalista}, the integrals $\int_{v_{0}-\epsilon}^{v_{0}-\epsilon}a(v)\cdot F_{1}(v)\, dv$ and $\int_{v_{0}-\epsilon}^{v_{0}+\epsilon}F_{0}(v)\, dv$ are independent. In view of the fact that condition 1 is satisfied for all $F_{i},\, i\geq 1$, we have that it is automatically satisfied for $F_{0}$, again via \eqref{eq:pro3}. The remaining conditions hold by construction.  This finishes the construction of $F_{ml}$'s for the case where $dimKer\big(\o_{v}^{\s}\big)=1$ for all $v\in [0,1]$.

\bigskip

\noindent\underline{\textbf{The general case}}

\medskip

We now return to the general case. We first derive the following lemmata.

\begin{lemma}
Let $I$ be a closed interval of $\mathbb{R}$. Given $n\in\mathbb{N}$ linearly independent functions $f_{1},f_{2},...,f_{n}\in C^{\infty}\big(I\big)$ and $\lambda_{1},\lambda_{2},\cdots, \lambda_{n}\in \mathbb{R}$ there is a function $\a\in C^{\infty}\big(I\big)$ such that  %$\a(v)$ vanishes to infinite order at the endpoints of $I$ and 
 \[\int_{I}\a(v)\cdot f_{i}(v)\, dv=\lambda_{i}, \text{ for all } i=1,2,...,n.\]
\label{lemma1stoproof}
\end{lemma}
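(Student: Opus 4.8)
The plan is to reduce the claim to a standard fact from linear algebra via a non-degenerate pairing. First I would consider the finite-dimensional subspace $V = \mathrm{span}\{f_1,\dots,f_n\}\subset C^\infty(I)$, which has dimension exactly $n$ by the assumed linear independence. Equip $C^\infty(I)$ with the $L^2(I)$ inner product $\langle \a, f\rangle = \int_I \a(v) f(v)\,dv$; this is well-defined since $I$ is compact and the functions are smooth, hence integrable. The key observation is that the bilinear pairing $C^\infty(I)\times V \to \mathbb{R}$ given by this integral is non-degenerate on the right: if $f\in V$ satisfies $\int_I \a f = 0$ for all $\a\in C^\infty(I)$, then taking $\a = f$ forces $\int_I f^2 = 0$, so $f\equiv 0$. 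Equivalently, the linear map $C^\infty(I)\to V^*$ sending $\a\mapsto (f\mapsto \int_I \a f)$ has image that separates points of $V$, and since $V^*$ is finite-dimensional this map must be surjective onto $V^*$.

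Concretely, I would argue as follows. The Gram matrix $M_{ij} = \int_I f_i(v) f_j(v)\,dv$ is symmetric positive definite: for any nonzero vector $c=(c_1,\dots,c_n)$, $\sum_{ij} c_i c_j M_{ij} = \int_I \big(\sum_i c_i f_i\big)^2\,dv > 0$ because $\sum_i c_i f_i$ is a nonzero smooth function and hence nonzero on a set of positive measure. Therefore $M$ is invertible. Now seek $\a$ in the form $\a = \sum_{k=1}^n a_k f_k \in V$ itself; the requirement $\int_I \a f_i = \lambda_i$ becomes the linear system $\sum_k M_{ik} a_k = \lambda_i$, i.e. $M a = \lambda$, which has the unique solution $a = M^{-1}\lambda$. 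Setting $\a = \sum_k (M^{-1}\lambda)_k f_k$ gives a function in $C^\infty(I)$ with the desired properties.

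The main (and essentially only) subtlety is the strict positivity of the quadratic form, which rests on the elementary fact that a nonzero continuous function on $I$ cannot vanish almost everywhere — this is where both the continuity (smoothness) of the $f_i$ and the linear independence assumption are genuinely used. Everything else is routine finite-dimensional linear algebra. I do not anticipate any real obstacle here; the lemma is a standard Gram-matrix/Riesz-type argument, and the proof above is self-contained modulo the positive-definiteness of the Gram matrix of linearly independent $L^2$ functions.
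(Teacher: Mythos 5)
Your proposal is correct and takes essentially the same approach as the paper: both reduce the problem to an $n\times n$ linear system by seeking $\a$ inside the finite-dimensional span of the $f_i$, with invertibility supplied by linear independence. The only cosmetic difference is that the paper first runs Gram--Schmidt to get a triangular system, whereas you invert the Gram matrix directly by positive definiteness; either way the key point is the same.
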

\begin{proof}
%Consider the curve $\gamma:I\rightarrow \mathbb{R}^{n}$ given by $\gamma(v)=\big(f_{1}(v),...,f_{n}(v)\big)$. Then $\gamma(v)$ cannot be enclosed in a proper linear subspace of $\mathbb{R}^{n}$ for all $v\in I$  since otherwise the functions $f_{1},...,f_{n}$ would have to be linear dependent. Therefore, there is a basis $u_{1},...,u_{n}$ of $\mathbb{R}^{n}$ and $v_{1},...,v_{n}\in I$ such that $\gamma(v_{i})=u_{i},$ $i=1,...,n$. Without loss of generality we can assume that $v_{1},...,v_{n}$ lie in the interior of $I$ (since otherwise we can consider small perturbations of  $v_{i}$ whose images still form a basis of $\mathbb{R}^{n}$). Let $J$ be a  closed proper subinterval of $I$ which contains $v_{i},i=1,...,n$.

Let $V_{n}=\Big\langle f_{1},f_{2},\cdots, f_{n}\Big\rangle\subset L^{2}(I)$
denote the $n$-dimensional span of the functions $f_{1},f_{2},\cdots, f_{n}$. Using the Gram--Schmidt process we produce an orthonormal basis $\left\{e_{1},e_{2},\cdots, e_{n}\right\}$ of $V_{n}$. We extend this basis to obtain an orthonormal basis $\left\{e_{1},\cdots, e_{n},e_{n+1},\cdots\right\}$ of $L^{2}(I)$. Clearly  for all $i=1,2,...,n$ we have that $e_{i}\in C^{\infty}\big(I\big)$ and  
\[f_{i}=\sum_{k=1}^{i}\left\langle e_{k},f_{i}\right\rangle\cdot e_{k} \]
with $\left\langle e_{i},f_{i}\right\rangle>0$, where $\left\langle\, \cdot\, ,\, \cdot\, \right\rangle$ denotes the inner product of $L^{2}(I)$.
We want to construct a function $\a\in C^{\infty}\big(I\big)$ such that $\left\langle a, f_{i}\right\rangle=\lambda_{i}$ for $i=1,2,...,n$. The system
\[\lambda_{i}= \sum_{k=1}^{i}\left\langle e_{k},f_{i}\right\rangle\cdot x_{i} \]
has a unique solution with respect to $x_{1},x_{2},...,x_{n}$. We then define the function $\a$ such that 
$\left\langle \a, e_{i}\right\rangle=x_{i}$ for $i=1,2,...,n$ and $\left\langle \a, e_{i}\right\rangle=0$ for $i\geq n+1$, which clearly satisfies the required relations.

\end{proof}

\begin{lemma}
Let $I$ be a compact interval of $\mathbb{R}$. Let $G_{1},G_{2},...,G_{n}\in C^{\infty}\big(I\times \mathbb{S}^{2}\big)$ be $n$ functions such that for each $v\in I$ the functions $G_{1}(v,\cdot), G_{2}(v,\cdot),...,G_{n}(v,\cdot)\in C^{\infty}\big(\mathbb{S}^{2}\big)$ are linearly independent and let  \[\Pi(v)=\Big\langle G_{1}(v,\cdot), G_{2}(v,\cdot),...,G_{n}(v,\cdot)\Big\rangle\subset L^{2}\big(\mathbb{S}^{2}\big), \] denote the ($v$-dependent) $n$-dimensional subspace of $L^{2}\big(\mathbb{S}^{2}\big)$ spanned by them.  Then, given $\rho\in C^{\infty}\big(\mathbb{S}^{2}\big)$ there is a function $F_{\rho}\in C^{\infty}\big(I\times\mathbb{S}^{2}\big)$  which vanishes to infinite order at $\partial I\times \mathbb{S}^{2}$ and is such that 
\begin{equation}
\int_{I}F_{\rho}(v,\cdot)\, dv= \rho(\cdot)
\label{eq:rel1}
\end{equation}
and 
\begin{equation}
\int_{\mathbb{S}^{2}}F_{\rho}(v,\cdot)\cdot G_{i}(v,\cdot)\, d\mu_{_{\mathbb{S}^{2}}}=0  \text{ for all }   i=1,2,...,n,
\label{eq:rel2}
\end{equation}
if and only if 
\[\rho\in \Bigg(\bigcap_{v\in I}\Pi(v)\Bigg)^{\perp}\subset L^{2}\big(\mathbb{S}^{2}\big). \]
\label{deuterolemma}
\end{lemma}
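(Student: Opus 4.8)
The plan is to run a functional-analytic argument on $L^{2}(\mathbb{S}^{2})$, encoding the $v$-family of subspaces $\Pi(v)$ through the family of orthogonal projections onto their complements. Write $\mathcal{G}(v)_{ij}=\int_{\mathbb{S}^{2}}G_{i}(v,\cdot)\,G_{j}(v,\cdot)\,d\mu_{_{\mathbb{S}^{2}}}$ for the Gram matrix (invertible for each $v$ by the assumed linear independence, smooth in $v$), let $Q(v)f=\sum_{i,j}[\mathcal{G}(v)^{-1}]_{ij}\langle f,G_{j}(v,\cdot)\rangle\,G_{i}(v,\cdot)$ be the orthogonal projection onto $\Pi(v)$, and set $P(v)=I-Q(v)$, the projection onto $\Pi(v)^{\perp}$. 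From these formulas $v\mapsto Q(v),P(v)$ are smooth in operator norm, $Q(v),P(v)$ carry $C^{\infty}(\mathbb{S}^{2})$ into itself, and $(v,\theta)\mapsto [Q(v)f](\theta)$ is jointly smooth for $f\in C^{\infty}(\mathbb{S}^{2})$. I would also record that $\Sigma:=\bigcap_{v\in I}\Pi(v)=\{f\in L^{2}(\mathbb{S}^{2}):P(v)f=0\ \forall v\in I\}$: since $v\mapsto\|P(v)f\|$ is continuous, the set $\{v:P(v)f=0\}$ is closed, hence equals $I$ as soon as it is conull; in particular $\Sigma=\{f:P(v)f=0$ for a.e.\ $v\}$, and $\Sigma$ is finite-dimensional (it sits inside the $n$-dimensional $\Pi(v_{0})$) and spanned by smooth functions (each element is a fixed finite combination of the $G_{i}(v_{0},\cdot)$).

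For the easy direction ($\Rightarrow$), suppose $F_{\rho}$ as in the statement exists and let $h\in\Sigma$. For each $v$ write $h=\sum_{i}c_{i}(v)G_{i}(v,\cdot)$ with $c(v)=\mathcal{G}(v)^{-1}\big(\langle h,G_{j}(v,\cdot)\rangle\big)_{j}$, which is smooth in $v$; then $\langle F_{\rho}(v,\cdot),h\rangle=\sum_{i}c_{i}(v)\int_{\mathbb{S}^{2}}F_{\rho}(v,\cdot)G_{i}(v,\cdot)\,d\mu_{_{\mathbb{S}^{2}}}=0$ by \eqref{eq:rel2}, so integrating in $v$ and using \eqref{eq:rel1} gives $\langle\rho,h\rangle=\langle\int_{I}F_{\rho}(v,\cdot)\,dv,h\rangle=\int_{I}\langle F_{\rho}(v,\cdot),h\rangle\,dv=0$. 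Hence $\rho\in\Sigma^{\perp}$.

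For the main direction ($\Leftarrow$) I would fix once and for all a smooth $\zeta:I\to[0,\infty)$ which vanishes to infinite order at $\partial I$ and is strictly positive on the interior, and put $c_{\zeta}=\int_{I}\zeta(v)\,dv>0$. Consider the bounded self-adjoint operator $K_{\zeta}=\int_{I}\zeta(v)Q(v)\,dv$ on $L^{2}(\mathbb{S}^{2})$: it is an integral operator with smooth kernel on $\mathbb{S}^{2}\times\mathbb{S}^{2}$, hence compact and smoothing ($K_{\zeta}L^{2}\subset C^{\infty}$), satisfies $0\le K_{\zeta}\le c_{\zeta}I$, and maps $\Sigma^{\perp}$ into itself (since $\langle K_{\zeta}w,h\rangle=\int\zeta\langle w,Q(v)h\rangle\,dv=c_{\zeta}\langle w,h\rangle$ vanishes for $w\in\Sigma^{\perp},h\in\Sigma$). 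The crucial point is that the operator norm of $K_{\zeta}$ restricted to $\Sigma^{\perp}$ is strictly less than $c_{\zeta}$: equality in $\langle K_{\zeta}w,w\rangle\le c_{\zeta}\|w\|^{2}$ forces $\int\zeta\|P(v)w\|^{2}\,dv=0$, i.e.\ $P(v)w=0$ for a.e.\ $v$, i.e.\ $w\in\Sigma$, so the $c_{\zeta}$-eigenspace of $K_{\zeta}$ is exactly $\Sigma$, and by compactness the spectrum of $K_{\zeta}$ on $\Sigma^{\perp}$ consists of eigenvalues accumulating only at $0$ with supremum strictly below $c_{\zeta}$. Therefore $A_{\zeta}:=c_{\zeta}I-K_{\zeta}$ is invertible on $\Sigma^{\perp}$ (inverse $\tfrac1{c_{\zeta}}\sum_{k\ge0}(K_{\zeta}/c_{\zeta})^{k}$, norm-convergent there), and this inverse preserves smoothness: $A_{\zeta}g=\rho$ with $\rho\in C^{\infty}$ gives $g=\tfrac1{c_{\zeta}}(\rho+K_{\zeta}g)\in C^{\infty}$ since $K_{\zeta}$ is smoothing. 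Given $\rho\in\Sigma^{\perp}\cap C^{\infty}(\mathbb{S}^{2})$, set $g_{0}:=A_{\zeta}^{-1}\rho\in\Sigma^{\perp}\cap C^{\infty}(\mathbb{S}^{2})$ and define
\[F_{\rho}(v,\theta):=\zeta(v)\,[P(v)g_{0}](\theta).\]
Then $F_{\rho}\in C^{\infty}(I\times\mathbb{S}^{2})$ and vanishes to infinite order at $\partial I\times\mathbb{S}^{2}$ (since $\zeta$ does and $P(v)g_{0}$ with all its $\theta$-derivatives is bounded in $v$); for each $v$ one has $F_{\rho}(v,\cdot)=\zeta(v)P(v)g_{0}\in\Pi(v)^{\perp}$, which is exactly \eqref{eq:rel2}; and $\int_{I}F_{\rho}(v,\cdot)\,dv=\int_{I}\zeta(v)P(v)g_{0}\,dv=A_{\zeta}g_{0}=\rho$, which is \eqref{eq:rel1}.

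I expect the only real obstacle to be the strict spectral gap for $K_{\zeta}$ on $\Sigma^{\perp}$: proving it genuinely uses both the compactness of $K_{\zeta}$ and the continuity-in-$v$ argument that upgrades ``$P(v)w=0$ a.e.'' to ``$P(v)w=0$ everywhere'', so that the degenerate eigenspace is precisely the geometrically meaningful space $\Sigma=\bigcap_{v}\Pi(v)$. Everything else — the smoothness and mapping properties of $Q(v),P(v),K_{\zeta}$, the one-step bootstrap for the inverse, and the bookkeeping that $F_{\rho}$ is smooth and vanishes at the endpoints — is routine. One could alternatively avoid the projection formalism by expanding $F$, the $G_{i}$ and $\rho$ in spherical harmonics and solving the resulting moment problem using Lemma \ref{lemma1stoproof}, but the operator-theoretic route keeps the role of $\Sigma$ transparent.
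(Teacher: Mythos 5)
Your proof is correct, and it takes a genuinely different route from the paper's. The paper's argument is a hands-on, finite-dimensional construction: it decomposes $L^{2}(\mathbb{S}^{2})$ as $V\oplus V^{\perp}$ with $V=\bigcap_{v}\Pi(v)$, projects the $\Pi(v)$ into $V^{\perp}$, picks finitely many $v_{1},\dots,v_{n}$ whose projected spaces intersect trivially, assembles a finite-dimensional buffer $W$, refines the splitting to $V\oplus\widetilde{\Pi}(v)\oplus\widetilde{O}(v)\oplus W^{\perp}$, and then builds $F_{\rho}$ component-by-component, with the $\widetilde{O}(v)$-piece obtained by laying down a smooth curve $\gamma$ in $W$ and invoking Lemma \ref{lemma1stoproof} to prescribe moments. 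Your approach instead packages the whole problem into a single operator equation: with $Q(v)$ the explicit Gram-matrix projection onto $\Pi(v)$ and $\zeta$ a fixed bump, the operator $K_{\zeta}=\int_{I}\zeta(v)Q(v)\,dv$ is compact, self-adjoint, smoothing, and satisfies $0\le K_{\zeta}\le c_{\zeta}I$ with equality exactly on $\Sigma$; compactness then gives a spectral gap below $c_{\zeta}$ on $\Sigma^{\perp}$, so $A_{\zeta}=c_{\zeta}I-K_{\zeta}$ is invertible there, and the ansatz $F_{\rho}=\zeta(v)P(v)A_{\zeta}^{-1}\rho$ satisfies both constraints by construction. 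What the paper's route buys is self-containedness in elementary linear algebra (no spectral theory), and it exposes the finite-dimensional combinatorics that also reappear in the variants (Lemma \ref{argitragelemma}); what yours buys is a shorter, gauge-free argument where the role of $\Sigma$ as the degenerate eigenspace is transparent, the smoothness of $F_{\rho}$ and its vanishing at $\partial I$ fall out of the single factor $\zeta$, and the one-step bootstrap $g_{0}=\tfrac{1}{c_{\zeta}}(\rho+K_{\zeta}g_{0})$ replaces the paper's case analysis for regularity. You correctly flag the one genuine subtlety—that $P(v)w=0$ for a.e.\ $v$ implies $w\in\Sigma$, which uses continuity of $v\mapsto\|P(v)w\|$—and you handle the mapping of $\Sigma^{\perp}$ into itself and the preservation of smoothness by $A_{\zeta}^{-1}$ correctly, so there is no gap.
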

\begin{proof}
We denote 
\begin{equation}
V=\bigcap_{v\in I}\Pi(v)\subset L^{2}\big(\mathbb{S}^{2}\big).
\label{sxesi1}
\end{equation}
 Clearly, $V$ is finite dimensional. 

 If given a function $\rho$ the function $F_{\rho}$ exists then $F_{\rho}\in \Big(\Pi(v)\Big)^{\perp}$ for all $v$ and hence $F_{\rho}\in V^{\perp}$. Therefore,  an immediate application of Fubini's theorem yields that $\rho\in V^{\perp}$.

Let us assume now that $\rho \in V^{\perp}$. We will show that a function $F_{\rho}\in C^{\infty}\big(I\times \mathbb{S}^{2}\big)$  satisfying the above properties exists.  

Since $V$ is finite dimensional, we have the decomposition
\[L^{2}\big(\mathbb{S}^{2} \big)=V\oplus V^{\perp}.  \]
If we define the spaces
\[\Big.\Pi(v)\Big|_{V^{\perp}} :=\Big.\text{proj}\Big|_{V^{\perp}}\big(\Pi(v)\big).   \]
Since $V\subset \Pi(v)$ for all $v\in I$ the space $\Big.\Pi(v)\Big|_{V^{\perp}} $ varies smoothly in $v$. Indeed, since $\Pi(v)$ varies smoothly in $v\in I$ we can write $\Pi(v)=V\oplus T(v)$ where $T(v)$ varies smoothly in $v\in I$ and $\text{dim}\big(T(v)\big)=n-\text{dim}(V)$, where $\text{dim}\big(\Pi(v)\big)=n$ for all $v\in I$. Then $\Big.\Pi(v)\Big|_{V^{\perp}}=\Big.T(v)\Big|_{V^{\perp}}$. The projection
\[\Big.\text{proj}\Big|_{V^{\perp}}: T(v)\rightarrow V^{\perp} \]
has full rank since otherwise we would have $T(v)\cap V\neq \left\{0\right\}$, contradiction. Therefore, since $T(v)$ varies smoothly in $v$ the space $\Big.T(v)\Big|_{V^{\perp}}$ varies smoothly in $v$ and hence so does $\Big.\Pi(v)\Big|_{V^{\perp}}$. It also follows that 
\begin{equation}
\text{dim}\left(\Big.\Pi(v)\Big|_{V^{\perp}}\right)=n-\text{dim}(V).
\label{dimerelationw}
\end{equation}  
Furthermore, we  obtain
\begin{equation}
\bigcap_{v\in I}\left( \Big.\Pi(v)\Big|_{V^{\perp}} \right)=\left\{0\right\}.
\label{sxesi2}
\end{equation}
Indeed, if the line $\left\langle l\right\rangle$ lies in the above intersection then for every $v\in I$ there is $y_{v}\in \Pi(v)$ such that $y_{v}=x_{v}+l,$ where $x_{v}\in V$ (and $l\in V^{\perp}$). However, by \eqref{sxesi1}, $x_{v}\in \Pi(v)$ and hence by linearity $l\in \Pi(v)$. Since this holds of all $v$, it immediately contradicts \eqref{sxesi1}.

We next show that there is a finite dimensional subspace $W\subset V^{\perp}$ such that if 
\[\widetilde{\Pi}(v):= \Big.\text{proj}\Big|_{W} \Big( \Big.\Pi(v)\Big|_{V^{\perp}}\Big)\subset W,\]
then 
\begin{equation}
\bigcap_{v\in[0,1]}\widetilde{\Pi}(v)=\left\{0\right\}.
\label{xrisimipi}
\end{equation}
Indeed, in view of \eqref{dimerelationw},\eqref{sxesi2}, there are $v_{1},v_{2}\in I$ such that $K_{1}= \Big.\Pi(v_{1})\Big|_{V^{\perp}}\cap \Big.\Pi(v_{2})\Big|_{V^{\perp}}$ is at most $(n-1)$-dimensional (where  the dimension of $\Pi(v)$ is $n$). In view of \eqref{sxesi2}, there is $v_{3}\in I$ such that $K_{2}=K_{1}\cap \Big.\Pi(v_{3})\Big|_{V^{\perp}}$ is at most $(n-2)$-dimensional. Continuing inductively we deduce that there  are $v_{1},v_{2},...,v_{n}$ such that 
\begin{equation}
\bigcap_{v_{i}}\Big.\Pi(v_{i})\Big|_{V^{\perp}}=\left\{0\right\}.
\label{finiteintersectionmiden}
\end{equation}
 We then define $W$ by
\begin{equation}
W=\Big\langle \Big.\Pi(v_{1})\Big|_{V^{\perp}},\Big.\Pi(v_{2})\Big|_{V^{\perp}},...,\Big.\Pi(v_{n})\Big|_{V^{\perp}} \Big\rangle,
\label{eq:definitionofw}
\end{equation}
which is clearly finite dimensional and satisfies \eqref{xrisimipi}. Indeed, by the construction of $W$,  \eqref{xrisimipi} is satisfied even if we restrict the intersection for the values of $v$ in the set $ \left\{v_{1},...,v_{n}\right\}$.

We next consider the projection
\[\Big.\text{proj}\Big|_{W}(v): \Big.\Pi(v)\Big|_{V^{\perp}}\rightarrow W. \]
By virtue of \eqref{eq:definitionofw} we have that $\Big.\text{proj}\Big|_{W}(v)$ has full rank for $v=v_{1},...,v_{n}$. Since $\Big.\Pi(v)\Big|_{V^{\perp}}$ varies smoothly in $v$ we have that $\Big.\text{proj}\Big|_{W}(v)$ has full rank in the union of sufficiently small intervals $J_{i}$ containing $v_{i}$.

Let also $\widetilde{O}(v)$ denote the orthogonal complement of $\widetilde{\Pi}(v)$ in $W$. It is important to note that the space $\widetilde{O}(v)$ varies smoothly for $v\in \cup J_{i}$ since $\widetilde{\Pi}(v)$ varies smoothly for $v\in \cup J_{i}$. In view of \eqref{finiteintersectionmiden} we have that 
\begin{equation}
\Big\langle \widetilde{O}(v_{1}),...,\widetilde{O}(v_{n}) \Big\rangle= W.
\label{finitespanofo}
\end{equation}
Indeed, if there is a proper subspace $X\subset W$ such that $\widetilde{O}(v)\subset X$ for all $v\in \left\{v_{1},...v_{n}\right\}$ then $X^{\perp}\subset \big(\widetilde{O}(v)\big)^{\perp}\subset W$ for all $v\in \left\{v_{1},...v_{n}\right\}$. This is however contradiction since, by definition, $\big(\widetilde{O}(v)\big)^{\perp}=\widetilde{\Pi}(v)$ and by \eqref{finiteintersectionmiden} these spaces cannot have non-trivial common intersection. Therefore, there are $u_{1},...,u_{\text{dim}(W)}\in\left\{v_{1},...,v_{n}\right\}\subset \cup J_{i}$ and $x_{i}\in \widetilde{O}(u_{i})$ with $i=1,...,\text{dim}(W)$ such that the set $\left\{x_{i}, i=1,...,\text{dim}(W)\right\}$ is a basis of $W$. We can assume that $u_{i}$ are pairwise distinct since otherwise we can consider small perturbations $u_{i}^{\text{perp}}$ of them in the union $\cup J_{i}$. Then, since $\widetilde{O}(v)$ varies smoothly in $\cup J_{i}$, the perturbed vectors $x_{i}^{\text{perp}}=\widetilde{O}\big(u_{i}^{perp}\big)$ still form a basis of $W$. Moreover, we can assume that $u_{i}^{\text{perp}}$ lie in the interior of $I$ for all $i=1,...,\text{dim}(W)$. We can thus define the following closed intervals $I_{i},i=1,...,\text{dim}(W)$ as follows:
\begin{equation}
I_{i}\subset J_{i}\cap \text{int}I \text{ such that } u_{i}^{\text{perp}}\in I_{i},
\label{orismosdiastimatos}
\end{equation}
where $\text{int}I$ denotes the interior of $I$.  

Let now $W^{\perp}$ denote the orthogonal complement of $W$ in $V^{\perp}$ and hence 
\[L^{2}\big(\mathbb{S}^{2}\big)=V\oplus W \oplus W^{\perp}.  \] Since  $W=\widetilde{\Pi}(v)\oplus \widetilde{O}(v)$ we have

\begin{equation}
L^{2}\big(\mathbb{S}^{2}\big)= V\oplus \widetilde{\Pi}(v)\oplus \widetilde{O}(v)\oplus W^{\perp}. 
\label{sxesi3}
\end{equation}
Then, for any $F\in C^{\infty}\big(I\times \mathbb{S}^{2}\big)$ we have
\begin{equation}
F(v,\cdot)=\Big.F\Big|_{V}+\Big.F\Big|_{\widetilde{\Pi}(v)}+\Big.F\Big|_{\widetilde{O}(v)}+\Big.F\Big|_{W^{\perp}}
\label{sxesiF}
\end{equation}
where we denote $\Big.F\Big|_{Z}=\Big.\text{proj}\Big|_{Z}\big(F\big)$.

Given a function $\rho\in V^{\perp}\cap C^{\infty}\big(\mathbb{S}^{2}\big)$ we want to construct a function $F_{\rho}\in C^{\infty}\big(I\times \mathbb{S}^{2}\big)$ such that 
\begin{equation}
\int_{I}F_{\rho}(v,\cdot)dv=\rho\in V^{\perp} 
\label{edw1}
\end{equation}and
\begin{equation}
F_{\rho}(v,\cdot)\in \Big(\Pi(v)\Big)^{\perp}\subset V^{\perp}\ \text{ for all }\ v\in I.
\label{edw2}
\end{equation}
The first condition we impose on $F_{\rho}$ is:
\begin{equation}
\Big.F_{\rho}(v,\cdot)\Big|_{V}=0 \text{ for all } v\in I.
\label{conditionF1}
\end{equation}
Then condition \eqref{edw1} is equivalent to  
\begin{equation}
\int_{I}\Big.F_{\rho}(v,\cdot)\Big|_{W^{\perp}}dv=\Big.\rho\Big|_{W^{\perp}} 
\label{zitoumeno1}
\end{equation}
and 
\begin{equation}
\int_{I}\Big.F_{\rho}(v,\cdot)\Big|_{\widetilde{\Pi}(v)}dv+\int_{I}\Big.F_{\rho}(v,\cdot)\Big|_{\widetilde{O}(v)}dv=\Big.\rho\Big|_{W}.
\label{zitoumeno2}
\end{equation}
Moreover, condition \eqref{edw2}, using \eqref{conditionF1}, is equivalent to the following: 
\begin{equation}
\Big\langle \Big.F_{\rho}(v,\cdot)\Big|_{\widetilde{\Pi}(v)}, \big. w\big|_{\widetilde{\Pi}(v)} \Big\rangle =- \Big\langle \Big. F_{\rho}(v,\cdot)\Big|_{W^{\perp}},  \big. w\big|_{W^{\perp}}   \Big\rangle\text{ for any }w\in \Pi(v). 
\label{conditionF2}
\end{equation}
Indeed, for any $w\in \Pi(v)$ we have
\begin{equation*}
\begin{split}
0=&\Big\langle F_{\rho}(v,\cdot), w\Big\rangle=\Big\langle  \Big.F_{\rho}\Big|_{\widetilde{\Pi}(v)}+\Big.F_{\rho}\Big|_{\widetilde{O}(v)}+\Big.F_{\rho}\Big|_{W^{\perp}},  \big.w\big|_{V^{\perp}} \Big\rangle \\
& =\Big\langle \Big.F_{\rho}\Big|_{\widetilde{\Pi}(v)}+\Big.F_{\rho}\Big|_{\widetilde{O}(v)}+\Big.F_{\rho}\Big|_{W^{\perp}}, \big. w\big|_{\widetilde{\Pi}(v)} + \big. w\big|_{W^{\perp}} \Big\rangle\\
&=\Big\langle \Big.F_{\rho}\Big|_{\widetilde{\Pi}(v)},\big. w\big|_{\widetilde{\Pi}(v)} \Big\rangle+\Big\langle \Big.F_{\rho}\Big|_{W^{\perp}},  \big. w\big|_{W^{\perp}}  \Big\rangle,
\end{split}
\end{equation*}
since $\Big.F_{\rho}(v,\cdot)\Big|_{V}=\big.w\big|_{\widetilde{O}(v)}=0$. Note thus that  condition \eqref{conditionF2} (and hence \eqref{edw2}) is independent of  $\Big. F_{\rho}(v,\cdot)\Big|_{\widetilde{O}(v)}$. 

\bigskip

Let $\mathcal{B}_{1}=\left\{ e_{1},...,e_{\text{dim}(V)} \right\}$, $\mathcal{B}_{2}=\left\{ e_{\text{dim}(V)+1},...,e_{\text{dim}(V)+\text{dim}(W)} \right\}$, $\mathcal{B}_{3}=\left\{ e_{\text{dim}(V)+\text{dim}(W)+1},... \right\}$ be orthonormal bases of the spaces $V,W, W^{\perp}$, respectively. Clearly, $\mathcal{B}_{1}\cup\mathcal{B}_{2}\cup\mathcal{B}_{3}$ is an orthonormal basis of $L^{2}\big(\mathbb{S}^{2}\big)$. Since $\Pi(v)$ are spanned by smooth functions, it is easy to see that we can take $e_{i}$ to be smooth functions on $\mathbb{S}^{2}$.  Any function $F\in C^{\infty}\big(I\times \mathbb{S}^{2}\big)$ can be written as
\[F(v,\cdot)=\sum_{i\geq 1} F_{i}(v)\cdot e_{i}. \]
We impose $\big(F_{\rho}\big)_{i}(v)$ such that  for  $1\leq i\leq \text{dim}(V)$ as follows:
\begin{itemize}
	\item  $\big(F_{\rho}\big)_{i}(v)=0$ for all $v\in I$,  in accordance with \eqref{conditionF1}. 
\end{itemize}
 Moreover, we  prescribe  $\big(F_{\rho}\big)_{i}(v)$ for $i\geq \text{dim}(V)+\text{dim}(W)+1$ such that: 
\begin{itemize}
	\item $\left\{\big(F_{\rho}\big)_{i}(v) \right\}_{i}\in \ell^{2}$ for all $v\in I,$
\item 	$\big(F_{\rho}\big)_{i}(v)=0  $ for all $v\in I/ \big(I_{1}\cup ....\cup I_{\text{dim}(W)}\big),$ where $I_{k}$ as defined in \eqref{orismosdiastimatos},
\item $\big(F_{\rho}\big)_{i}(v)$ for $v\in I_{1}\cup ....\cup I_{\text{dim}(W)}$ are such that   $\int_{I}\big(F_{\rho}\big)_{i}(v)dv$ are imposed by condition \eqref{zitoumeno1}.
\end{itemize}
Clearly such functions exist and hence the functions $\Big.F_{\rho}\Big|_{W^{\perp}}(v)\in C^{\infty}\big(\mathbb{S}^{2}\big)$ are determined and, in particular are such that they vanish for $v\in I/\big(I_{1}\cup...\cup I_{\text{dim}(W)}\big)$.  

We next construct $\Big.F_{\rho}\Big|_{W}(v)=\Big.F_{\rho}\Big|_{\widetilde{\Pi}(v)}(v)+\Big.F_{\rho}\Big|_{\widetilde{O}(v)}(v)$ or, equivalently, to construct $\big(F_{\rho}\big)_{i}(v)$ for $\text{dim}(V)+1\leq i\leq \text{dim}(V)+\text{dim}(W)+1$. We first impose 
\begin{itemize}
	\item $\Big.F_{\rho}\Big|_{W}(v)=0$ for all $v\in I/\big(I_{1}\cup...\cup I_{\text{dim}(W)}\big)$ and hence $\big(F_{\rho}\big)_{i}(v)=0$ for $\text{dim}(V)+1\leq i\leq \text{dim}(V)+\text{dim}(W)+1$ and $v\in I/\big(I_{1}\cup...\cup I_{\text{dim}(W)}\big)$. 
\end{itemize}
Clearly, condition \eqref{conditionF2} is then satisfied for all $v\in I/\big(I_{1}\cup...\cup I_{\text{dim}(W)}\big)$.
It remains to construct the functions  $\Big.F_{\rho}\Big|_{\widetilde{\Pi}(v)}(v),\, \Big.F_{\rho}\Big|_{\widetilde{O}(v)}(v)$ for $v\in \big(I_{1}\cup...\cup I_{\text{dim}(W)}\big)$.

We note that condition \eqref{conditionF2} uniquely determines a well-defined $\Big.F_{\rho}\Big|_{\widetilde{\Pi}(v)}(v)$ for all $v\in\big(I_{1}\cup...\cup I_{\text{dim}(W)}\big)$.  Indeed, if $\widetilde{\Pi}(v)=\left\{0\right\}$ for some $v\in \big(I_{1}\cup...\cup I_{\text{dim}(W)}\big)$ then we have $\widetilde{\Pi}(v)=\left\{0\right\}$ for all $v\in\big(I_{1}\cup...\cup I_{\text{dim}(W)}\big)$ and hence $\Pi(v)=V$ for all $v\in I$.  In this case we take $W=\left\{0\right\}$ and \eqref{conditionF2} holds trivially. If, on the other hand, the spaces $\widetilde{\Pi}(v)$ are non-trivial and vary smoothly for all $v\in \big(I_{1}\cup...\cup I_{\text{dim}(W)}\big)$ then condition \eqref{conditionF2} determines $\Big.F_{\rho}\Big|_{\widetilde{\Pi}(v)}(v)$ for $v\in\big(I_{1}\cup...\cup I_{\text{dim}(W)}\big)$. Since $\Big.F_{\rho}\Big|_{W^{\perp}}(v)$  a smooth function in $I$ which vanishes in $I/\big(I_{1}\cup...\cup I_{\text{dim}(W)}\big)$ we obtain that $\Big.F_{\rho}\Big|_{\widetilde{\Pi}(v)}(v)$, as defined above,  is also a smooth function in $I$ which vanishes in $I/\big(I_{1}\cup...\cup I_{\text{dim}(W)}\big)$. 

We finally construct the function  $\Big.F_{\rho}\Big|_{\widetilde{O}(v)}(v)$ for $v\in \big(I_{1}\cup...\cup I_{\text{dim}(W)}\big)$ such that \eqref{zitoumeno2} holds (recall that this function vanishes in the complement of this union in $I$).  We will indeed show that this is possible.

Recall that  $\widetilde{O}(v)$ varies smoothly in $v\in I_{1}\cup...\cup I_{\text{dim}(W)}$. Moreover,  by the definition of the intervals $I_{1},...,I_{\text{dim}(W)}$, there are vectors $x_{i}\in \widetilde{O}(v_{i})$ with $v_{i}\in I_{i}, i=1,...,\text{dim}(W)$ such that $\left\{x_{i},i=1,...,\text{dim}(W)\right\}$ is a basis of $W$. We can construct a smooth curve 
\[\gamma:I\rightarrow W  \]
such that 
\begin{equation}
\gamma(v_{i})=x_{i} 
\label{gam1}
\end{equation}
and 
\begin{equation}
\gamma(v)=0 \text{ for all } v\in I/\big(I_{1}\cup...\cup I_{\text{dim}(W)}\big).
\label{gam2}
\end{equation}
The smoothness of the curve $\gamma$ can be guaranteed by the smoothness of $\widetilde{O}(v)$ in $I_{1}\cup...\cup I_{\text{dim}(W)}$. If $\big(f_{1}(v),f_{2}(v),...,f_{\text{dim}(W)}(v) \big)$ are the coordinates of $\gamma(v)$ with respect to the basis $\mathcal{B}_{2}$ of $W$, then, in view of \eqref{gam1}, the smooth functions $f_{i}:I\rightarrow\mathbb{R},i=1,...,\text{dim}(W),$ are linearly independent.  We will find an appropriate smooth function $\a:I\rightarrow\mathbb{R}$ such that 
\begin{equation}
\Big. F_{\rho}\Big|_{\widetilde{O}(v)}=\a(v)\cdot \gamma(v)\in \widetilde{O}(v). 
\label{ansatz1}
\end{equation}
Let us assume that $\Big(\big(F_{\rho}\big)_{1}(v),...,\big(F_{\rho}\big)_{\text{dim}(W)}(v) \Big)$ are the coordinates of $\Big.F_{\rho}(v,\cdot)\Big|_{\widetilde{O}(v)}$ with respect to the basis $\mathcal{B}_{2}$ of $W$. Condition \eqref{zitoumeno2} is satisfied if we choose these functions such that 
\[ \int_{I}\big(F_{\rho}\big)_{i}(v)\, dv=\lambda_{i},\]for all $i=1,2,...,\text{dim}(W)$, 
where $\lambda_{i}$ is completely determined by \eqref{zitoumeno2} and our previous constructions. Equivalently, in view of our ansatz \eqref{ansatz1}, it suffices to show the existence of a smooth function $\a:I\rightarrow\mathbb{R}$ such that\begin{equation}
\int_{I}\a(v)\cdot f_{i}(v)\, dv=\lambda_{i},
\label{eq:nai}
\end{equation}
for all $i=1,2,...,\text{dim}(W)$. Since the functions $f_{i}$ are linearly independent, the existence of  $\a$ follows from Lemma \ref{lemma1stoproof}. This completes the construction of the function $\Big.F_{\rho}(v,\cdot)\Big|_{\widetilde{O}(v)}$. Note that in view of \eqref{gam2} the function $\Big.F_{\rho}(v,\cdot)\Big|_{\widetilde{O}(v)}$ vanishes in $I/\big(I_{1}\cup...\cup I_{\text{dim}(W)}\big)$. 

This completes the construction of the function $F_{\rho}$ with all the required properties.
\end{proof}
Lemma \ref{lemma1stoproof} holds not only for intervals but also for union of intervals. Moreover, a modification of the above proof yields the following result
\begin{lemma}
Let $I=\bigcup_{k=1}^{n}I_{k}$, where $I_{k}$, with $k=1,2,...,n$, are compact intervals of $\mathbb{R}$. For each $k\in\left\{1,2,...,n\right\}$ consider  $\Pi_{k}(v), v\in I_{k}$, to be a smoothly varying $n_{k}$-dimensional subspace of $L^{2}\big(\mathbb{S}^{2} \big)$ spanned by $k$ smooth functions on $\mathbb{S}^{2}$. Define the subspaces $V_{k}\subset L^{2}\big(\mathbb{S}^{2} \big)$ as follows
\[ V_{k}=\bigcap_{v\in I_{k}}\Pi_{k}(v).  \]
Given a function $\rho\in C^{\infty}\big(\mathbb{S}^{2}\big)$ there is a function $F_{\rho}\in C^{\infty}\big(I\times\mathbb{S}^{2}\big)$  which vanishes to infinite order at $\partial I\times \mathbb{S}^{2}$ and is such that 
\begin{equation}
\int_{I}F_{\rho}(v,\cdot)\, dv= \rho(\cdot)
\label{eq:rel1}
\end{equation}
and 
\begin{equation}
F_{\rho}(v,\cdot)\in \Big( \Pi_{k}(v) \Big)^{\perp} \text{ for all }v\in I_{k} \text{ and  }k=1,2...,n
\label{eq:arbitragelemma}
\end{equation}
if and only if 
\[\rho\in \Big(V_{1}\cap V_{2}\cap...\cap V_{n}\Big)^{\perp}\subset L^{2}\big(\mathbb{S}^{2}\big). \]
\label{argitragelemma}
\end{lemma}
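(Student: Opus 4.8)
The plan is to deduce Lemma~\ref{argitragelemma} from Lemma~\ref{deuterolemma} applied one interval at a time, after splitting the prescribed function $\rho$ into pieces adapted to the individual families $\Pi_{k}$ (this is a cleaner route than re-running the proof of Lemma~\ref{deuterolemma} with $I$ in place of a single interval). Throughout, the intervals $I_{k}$ are taken to be pairwise disjoint, which is the situation of interest; were overlaps allowed, the sharp condition would instead involve $\bigcap_{v}\big(\sum_{k:\,v\in I_{k}}\Pi_{k}(v)\big)$ rather than $V_{1}\cap\cdots\cap V_{n}$, so disjointness is part of the intended hypothesis. Two elementary observations set the stage: first, each $V_{k}=\bigcap_{v\in I_{k}}\Pi_{k}(v)$ is finite dimensional, since for fixed $v_{0}\in I_{k}$ one has $V_{k}\subseteq\Pi_{k}(v_{0})$ with $\dim\Pi_{k}(v_{0})=n_{k}$, and every element of $V_{k}$ is smooth, being a finite linear combination of the smooth functions spanning $\Pi_{k}(v_{0})$; consequently $\mathbf{V}:=V_{1}+\cdots+V_{n}$ is a finite dimensional subspace of $C^{\infty}(\mathbb{S}^{2})$, and the orthogonal projection $P$ onto $\mathbf{V}$ has finite rank and maps smooth functions to smooth functions.

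For the ``only if'' direction: if $F_{\rho}$ exists then for $v\in I_{k}$ one has $F_{\rho}(v,\cdot)\in(\Pi_{k}(v))^{\perp}\subseteq (V_{k})^{\perp}$ because $V_{k}\subseteq\Pi_{k}(v)$; integrating over $I_{k}$ (using that $(V_{k})^{\perp}$ is closed) gives $\int_{I_{k}}F_{\rho}(v,\cdot)\,dv\in(V_{k})^{\perp}$, and summing over the disjoint $I_{k}$ gives $\rho\in\sum_{k}(V_{k})^{\perp}\subseteq(V_{1}\cap\cdots\cap V_{n})^{\perp}$.

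For the converse, given $\rho\in(V_{1}\cap\cdots\cap V_{n})^{\perp}\cap C^{\infty}(\mathbb{S}^{2})$ I would write $\rho=P\rho+(\rho-P\rho)$. Both summands are smooth; $\rho-P\rho$ is orthogonal to $\mathbf{V}$, hence to every $V_{k}$, so $\rho-P\rho\in(V_{k})^{\perp}$ for all $k$; and $P\rho=\rho-(\rho-P\rho)$ is orthogonal to $V_{1}\cap\cdots\cap V_{n}$ (which is contained in $\mathbf{V}$). Working inside the finite dimensional inner product space $\mathbf{V}$, the elementary identity $(W_{1}\cap\cdots\cap W_{n})^{\perp}=W_{1}^{\perp}+\cdots+W_{n}^{\perp}$ applied to $W_{k}=V_{k}$ shows $P\rho=\tau_{1}+\cdots+\tau_{n}$ with $\tau_{k}\in(V_{k})^{\perp}\cap\mathbf{V}\subset C^{\infty}(\mathbb{S}^{2})$. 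Setting $\rho_{1}:=\tau_{1}+(\rho-P\rho)$ and $\rho_{k}:=\tau_{k}$ for $k\geq 2$ then yields smooth functions with $\sum_{k}\rho_{k}=\rho$ and $\rho_{k}\in(V_{k})^{\perp}=\big(\bigcap_{v\in I_{k}}\Pi_{k}(v)\big)^{\perp}$.

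It remains to apply Lemma~\ref{deuterolemma} on each $I_{k}$ with the family $\Pi_{k}(v)$ and target $\rho_{k}$: since $\rho_{k}\in(\bigcap_{v\in I_{k}}\Pi_{k}(v))^{\perp}$, it produces $F^{(k)}\in C^{\infty}(I_{k}\times\mathbb{S}^{2})$ vanishing to infinite order on $\partial I_{k}\times\mathbb{S}^{2}$, with $\int_{I_{k}}F^{(k)}(v,\cdot)\,dv=\rho_{k}$ and $F^{(k)}(v,\cdot)\in(\Pi_{k}(v))^{\perp}$ for $v\in I_{k}$. Extending each $F^{(k)}$ by zero and putting $F_{\rho}:=\sum_{k}F^{(k)}$ gives, since the $I_{k}$ are disjoint and the pieces vanish to infinite order at their endpoints, a function in $C^{\infty}(I\times\mathbb{S}^{2})$ vanishing to infinite order at $\partial I\times\mathbb{S}^{2}$, with $\int_{I}F_{\rho}(v,\cdot)\,dv=\rho$ and $F_{\rho}(v,\cdot)\in(\Pi_{k}(v))^{\perp}$ for $v\in I_{k}$, which is what is required. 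The step I expect to be most delicate is precisely the splitting $\rho=\sum_{k}\rho_{k}$: the orthogonality relation $(V_{1}\cap\cdots\cap V_{n})^{\perp}=\sum_{k}(V_{k})^{\perp}$ is a statement about $L^{2}$, but the pieces must remain smooth, and it is the finite dimensionality of the $V_{k}$ (inherited from that of the $\Pi_{k}(v)$), through the finite-rank projection $P$, that lets the decomposition be carried out inside $\mathbf{V}\subset C^{\infty}(\mathbb{S}^{2})$.
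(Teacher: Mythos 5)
Your argument is correct, and it takes a genuinely different route from the paper's. The paper's proof is a one-line reference back to Lemma~\ref{deuterolemma}: ``the same arguments\ldots\ if we replace $V$ by $V_{1}\cap\cdots\cap V_{n}$.'' That is, the paper asks the reader to re-run the entire earlier construction (the projections onto $V^{\perp}$, the selection of the finitely many points $v_{i}$, the auxiliary space $W$, the intervals $I_{i}$, and so on) in the slightly more general setting where the family $\Pi(v)$ changes its law from interval to interval. You instead keep Lemma~\ref{deuterolemma} as a black box and reduce the present statement to $n$ independent applications of it, one per interval, at the price of first splitting the target $\rho$ into pieces $\rho_{k}\in V_{k}^{\perp}$. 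The splitting is exactly where the nontrivial input lies, and you isolate it cleanly: the identity $(V_{1}\cap\cdots\cap V_{n})^{\perp}=V_{1}^{\perp}+\cdots+V_{n}^{\perp}$ is false in general in $L^{2}$ (the sum of orthogonal complements need not be closed), but it does hold inside the finite-dimensional subspace $\mathbf{V}=V_{1}+\cdots+V_{n}$, and the finite-rank projection $P$ onto $\mathbf{V}\subset C^{\infty}(\mathbb{S}^{2})$ both makes that identity available and keeps the pieces smooth, which Lemma~\ref{deuterolemma} requires as input. Your flagging of the disjointness hypothesis is also correct and worth recording explicitly: with overlapping $I_{k}$ and genuinely different families $\Pi_{k}$ the conclusion is false (at an overlap $v$ one would need orthogonality to $\sum_{k:\,v\in I_{k}}\Pi_{k}(v)$, which can be strictly larger than what $V_{1}\cap\cdots\cap V_{n}$ controls), and disjointness is exactly what holds in the application in the proof of Theorem~\ref{theorem} (Case II chooses $V_{x_{i}}\cap V_{x_{j}}=\emptyset$). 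So what you gain is modularity and a transparent identification of the single piece of new linear algebra; what the paper's route gains is avoiding the decomposition step altogether, at the cost of re-verifying a longer construction under weaker hypotheses.
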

\begin{proof}
The above lemma is proved using the same arguments as in the proof of Lemma \ref{deuterolemma} if we replace the space $V$ by $V_{1}\cap V_{2}\cap...\cap V_{n}$.
\end{proof}

\bigskip

We now have all the tools needed for the proof of Theorem \ref{theorem}.

 Let $x\in\mathcal{Z}$ (where $\mathcal{Z}$ is given by Lemma \ref{lemmagiaelliptickernel}) and $V_{x}$ is a closed interval containing $x$ and such that  $dimKer(\o_{v}^{*})=n\geq 1$. If we apply Lemma \ref{deuterolemma} for $I=V_{x}$ and $\Pi(v)=\left\langle G_{1}(v),...,G_{n}(v)\right\rangle$, where  the functions $G_{n}$ given by \eqref{g0}, then the only obstruction to satisfying conditions 1--3  (and thus to gluing) is the existence of the following intersection
\begin{equation}
\mathcal{W}(V_{x})=\bigcap_{v\in V_{x}}\Pi(v). 
\label{subkernel}
\end{equation}
In other words, if there is $x\in \mathcal{Z}$ such that $\mathcal{W}(V_{x})=\left\{0\right\}$ then we can perform gluing. Suppose now that for all $x\in\mathcal{Z}$ we have $\mathcal{W}(V_{x})\neq \left\{0\right\}$. Let us fix an $x\in\mathcal{Z}$ and let $\Theta\in \mathcal{W}(V_{x})$ such that $\Theta\neq 0$.  By  \eqref{defF} and \eqref{integrability}, and since $\partial_{v}\Theta=0$ for $v\in V_{x}$,   we have 
\[0=\int_{S_{v}}(\mathcal{F}^{\mathcal{D}}\psi\cdot\phi)\cdot\Theta\, d\mu_{_{\mathbb{S}^{2}}}=\int_{S_{v}}\Big(\pv\pu(\phi\cdot\psi)\Big)\cdot\Theta\, d\mu_{_{\mathbb{S}^{2}}}=\pv\left(\int_{S_{v}}\Big(\pu(\phi\cdot\psi)\Big)\cdot\Theta\, d\mu_{_{\mathbb{S}^{2}}}\right), \]
where we have also used that the measure of integration does not depend on $v$. Hence, the quantity
\begin{equation}
\int_{S_{v}}\Big(\pu(\phi\cdot\psi)\Big)\cdot\Theta\ d\mu_{_{\mathbb{S}^{2}}}
\label{eq:conserved}
\end{equation}
is conserved in $V_{x}$, i.e.~independent of $v$ for all $v\in V_{x}$. Therefore, we have conservation laws in each interval $V_{x}$ and the kernel of the conservation laws is precisely the  space
 $\mathcal{W}(V_{x})$.  

We now consider two cases.

\medskip

\underline{ \textbf{Case I:}} 

\medskip

Suppose that 
\begin{equation}
\bigcap_{x\in\mathcal{Z}}\mathcal{W}(V_{x})=\left\{0\right\}.
\label{suppose1}
\end{equation}
By virtue of Lemma \ref{lemmagiaelliptickernel}, we have that there is $n\in\mathbb{N}$ such that $\text{dim}\big(\mathcal{W}(V_{x})\big)\leq n$ for all $x\in\mathcal{Z}$. In view of \eqref{suppose1}, there are $x_{1},x_{2}\in\mathcal{Z}$ such that $\mathcal{W}(V_{x_{1}})\cap\mathcal{W}(V_{x_{2}})$ is at most $(n-1)$-dimensional.  In view again of \eqref{suppose1}, there is $x_{3}\in\mathcal{Z}$ such that $\mathcal{W}(V_{x_{1}})\cap\mathcal{W}(V_{x_{2}})\cap\mathcal{W}(V_{x_{3}})$ is at most $(n-2)$-dimensional.  Continuing inductively we obtain $x_{1},...,x_{n+1}\in\mathcal{Z}$ such that $\mathcal{W}(V_{x_{1}})\cap\mathcal{W}(V_{x_{2}})\cap...\cap\mathcal{W}(V_{x_{n+1}})=\left\{0\right\}$. Thus, applying Lemma \ref{argitragelemma} for $V_{x_{1}},...,V_{x_{n+1}}$ we can satisfy conditions 1--3 and thus perform gluing. The smoothness of the extension of $\psi$ on $\hh$ follows from the results of Lemma \ref{lemmagiaelliptickernel} and their analogue in higher Sobolev spaces.

\medskip

\underline{ \textbf{Case II:}} 

\medskip

Suppose that 
\begin{equation}
\bigcap_{x\in\mathcal{Z}}\mathcal{W}(V_{x})\neq\left\{0\right\}.
\label{suppose1}
\end{equation}
Let $\Theta \in \bigcap_{x\in\mathcal{Z}}\mathcal{W}(V_{x})$ with $\Theta\neq 0$. Then the integrals \eqref{eq:conserved} are conserved for $v\in V_{x}$ for all $x\in \mathcal{Z}$. For any  $x,y\in\mathcal{Z}$ with $V_{x}\cap V_{y}\neq \emptyset$ we have that the integrals \eqref{eq:conserved} are conserved for $v\in V_{x}\cup V_{y}$. We will show that the integrals \eqref{eq:conserved} are in fact conserved for all $v\in[0,1]$. Let us denote the integral \eqref{eq:conserved} over $S_{v}$ by $f(v)$. Then  $f$ is a smooth function on $[0,1]$ which is constant on the connected components of the open and dense subset $\mathcal{Z}$ of $[0,1]$. If $f$ is not constant then there is $v_{0}\in [0,1]$ for which $\big.\partial_{v}f\big|_{v_{0}}\neq 0$ and hence there is an interval containing $v_{0}$ where $f$ is strictly monotonic. This is however contradiction since, by assumption, for every interval there is a subinterval where $f$ is constant.

Therefore, the integrals \eqref{eq:conserved} are conserved for $v\in[0,1]$ for all $\Theta \in \bigcap_{x\in\mathcal{Z}}\mathcal{W}(V_{x})$. Clearly, these conservation laws are legitimate obstructions to gluing since we cannot do gluing unless the corresponding integrals of the initial data at $S_{0}$ and $S_{1}$ are equal. Suppose, therefore, that the initial data at $S_{0}$ and $S_{1}$ are such that the corresponding integrals \eqref{eq:conserved} are equal. Then we will show that we can perform gluing (and hence there are no additional obstructions to gluing apart from the above conservation laws). 

In view of \eqref{wenull1} and \eqref{malista} and the second condition for the gluing of transversal derivative we need to construct a function $C^{\infty}\big([0,1]\times\mathbb{S}^{2}\big)$ such that 
\begin{equation}
\int_{0}^{1}F(v,\cdot)\, dv= 2\pu(\phi\cdot\psi)_{\big|_{S_{1}}}-2\pu(\phi\cdot\psi)_{\big|_{S_{0}}}.
\label{apopano}
\end{equation}
In view of our assumption on the initial data at $S_{0},S_{1}$ we have that the right hand side of \eqref{apopano} is orthogonal to the space $\bigcap_{x\in\mathcal{Z}}\mathcal{W}(V_{x})$ with respect to the inner product of $L^{2}\big(\mathbb{S}^{2}\big)$. Using a similar argument as above, we can choose finitely many $x_{1},...,x_{m}\in \mathcal{Z}$ such that $V_{x_{i}}\cap V_{x_{j}}=\emptyset$ for $i\neq j$ and  
\[\mathcal{W}(V_{x_{1}})\cap...\cap\mathcal{W}(V_{x_{m}})=\bigcap_{x\in\mathcal{Z}}\mathcal{W}(V_{x}). \]
It now suffices to extend $\psi$ in the complement of $V_{x_{1}}\cup...\cup V_{x_{m}}$ in $[0,1]$ and apply Lemma \ref{argitragelemma} for the  intervals $V_{x_{1}}, ....,  V_{x_{m}}$ in order to complete the gluing construction.  This finishes the proof of Theorem \ref{theorem}.

\subsection{Change of foliation and conservation laws}
\label{sec:EffectOfGaugeOnConservationLaws}

So far we have consider characteristic gluing constructions and conservation laws on null hypersurfaces with respect to a \textbf{fixed} foliation. In particular, it is clear the the conserved charges depend, at least a priori, on the choice of foliation\footnote{After all, the conserved charges are appropriate integrals over the leaves of the foliation.}. In this subsection we address the issue of change of foliation and its effect on the conservation laws. 

Let $\s=\Big\langle S_{0}, L_{geod}, \Omega\Big\rangle$ and $\s'=\left\langle S_{0}', L_{geod}', \Omega'\right\rangle$ be two foliations of a regular null hypersurface $\hh$, as defined in Section \ref{sec:NullFoliationsandOpticalFunctions}. Let $\vh$ be the linear space defined by \eqref{linearspace}. Consider the kernels $\mathcal{W}^{\s},\mathcal{W}^{\s'}\subset \vh$ of the conservation laws with respect to the foliations $\s,\s'$, respectively, as defined in Section \ref{sec:ConservationLawsForTheWaveEquations}. Given also the operators $\o^{\s},\o^{\s'}$ associated to the foliations $\s,\s'$, respectively, and defined by \eqref{adjoint} we consider the linear spaces $\mathcal{U}^{\s},\mathcal{U}^{\s'}\subset\vh$ which are (appropriately rescaled) subspaces of  $Ker\big(\o^{\s}\big),Ker\big(\o^{\s'}\big)$ defined as in Theorem \ref{theorem}. Then, according to Theorem \ref{theorem} we have
\begin{equation}
\mathcal{W}^{\s}=\mathcal{U}^{\s}\ \text{ and }\  \mathcal{W}^{\s'}=\mathcal{U}^{\s'}.
\label{eq:change1}
\end{equation}
The following proposition derives the relation between the spaces $\mathcal{U}^{\s}$ and $\mathcal{U}^{\s'}$. 
\begin{proposition}
Let $\s=\Big\langle S_{0}, L_{geod}, \Omega\Big\rangle$ and $\s'=\left\langle S_{0}', L_{geod}', \Omega'\right\rangle$ be two foliations of a (regular) null hypersurface $\hh$ of a four-dimensional Lorentzian manifold $(\m,g)$, as defined in Section \ref{sec:NullFoliationsandOpticalFunctions}. Let $f\in\vh$ be such that   
\begin{equation}
L_{geod}'=f^{2}\cdot L_{geod}\ : \text{  on }\, \hh.
\label{geodesicf}
\end{equation}
Then, we have
\begin{equation}
\mathcal{U}^{\s'}=f^{2}\cdot\mathcal{U}^{\s}=\Big\{f^{2}\cdot\Theta\ :\ \Theta\in\mathcal{U}^{\s} \Big\}.
\label{changef12}
\end{equation}
\label{corm1}
\end{proposition}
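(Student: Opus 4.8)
Step 1. By Theorem~\ref{theorem} we have $\mathcal{W}^{\s}=\mathcal{U}^{\s}$ and $\mathcal{W}^{\s'}=\mathcal{U}^{\s'}$, so it suffices to prove $\mathcal{U}^{\s'}=f^{2}\cdot\mathcal{U}^{\s}$. The space $\vh$ from \eqref{linearspace} depends only on $\hh$ and $f\in\vh$, so for $\Theta\in\vh$ we have $f^{2}\Theta\in\vh$ and, using \eqref{geodesicf}, also $L'(f^{2}\Theta)=0$; hence the constraint in \eqref{newdef} is automatic for $\Theta'=f^{2}\Theta$, and the proposition reduces to the assertion that, for every $\Theta\in\vh$,
\[
\o^{\s}\!\left(\tfrac{1}{\phi}\Theta\right)=0 \text{ on }\hh \qquad\Longleftrightarrow\qquad \o^{\s'}\!\left(\tfrac{1}{\phi'}\,f^{2}\Theta\right)=0 \text{ on }\hh ,
\]
where $\phi,\phi'$ denote the conformal factors of $\s,\s'$.

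Step 2. The factor $f^{2}$ I would dispose of by passing first from $\s$ to the auxiliary foliation $\widetilde{\s}=\langle S_{0},f^{2}L_{geod},\Omega/f\rangle$. Its renormalised generator is $(\Omega/f)^{2}\cdot f^{2}L_{geod}=\Omega^{2}L_{geod}=L$, so $\widetilde{\s}$ has the same leaves, the same conformal factor and the same volume forms as $\s$, and differs from $\s$ only in that, by the normalisation \eqref{Y}, $Y^{\widetilde{\s}}=f^{-2}Y^{\s}$. In \eqref{eq:integrals} the factor $f^{2}$ multiplying $\Theta$ exactly compensates this rescaling of $Y^{\s}$, so $\mathcal{U}^{\widetilde{\s}}=f^{2}\cdot\mathcal{U}^{\s}$ and the value of every charge is unchanged. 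Since $\widetilde{\s}$ and $\s'$ share the same null generator $f^{2}L_{geod}$ and differ only by a change of lapse ($\Omega/f\mapsto\Omega'$) and of base section ($S_{0}\mapsto S_{0}'$), what remains is to show that such a pure re-slicing of $\hh$ leaves $\mathcal{U}$ unchanged.

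Step 3. For the re-slicing invariance I would use the identity obtained, as in the proof of Theorem~\ref{theorem}, by differentiating \eqref{eq:integrals} along the leaves of $\s$, substituting the wave equation \eqref{wenull1} on $\hh$, and using that $\o^{\s}$ is the adjoint of $\mathcal{Q}^{\s}$ on $(S_{v},\gi)$ together with $d\mu_{\mathbb{S}^{2}}=\phi^{-2}d\mu_{\gi}$, namely
\[
\partial_{v}\!\left(\int_{S_{v}}Y^{\s}(\phi\psi)\,\Theta\,d\mu_{\mathbb{S}^{2}}\right)=\tfrac12\int_{S_{v}}\psi\cdot\o^{\s}\!\left(\tfrac{1}{\phi}\Theta\right)d\mu_{\gi}
\]
for every solution $\psi$ and every $\Theta\in\vh$. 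Thus $\Theta\in\mathcal{U}^{\s}$ iff $\o^{\s}(\tfrac1\phi\Theta)\equiv0$ on $\hh$, and the re-slicing invariance of $\mathcal{U}$ reduces to the transformation formula: for two foliations $\s,\widehat{\s}$ of $\hh$ with the same null generator, $\o^{\widehat{\s}}(\tfrac{1}{\widehat{\phi}}\Theta)$ is a nowhere-vanishing multiple of $\o^{\s}(\tfrac{1}{\phi}\Theta)$ for all $\Theta\in\vh$.

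The main obstacle is the verification of this transformation formula. A generic change of lapse or of base section tilts the leaves of the foliation, so the induced metric $\gi$ — hence $\phi$ via \eqref{phi} and the Laplacian $\lapp$ — the torsion $\zeta$, and the null expansions $\tr\chi,\tr\chib$ entering \eqref{adjoint} all change, and one must check that, after accounting for the ratio of the two conformal factors, the resulting terms recombine into a single nonvanishing multiple of $\o^{\s}$ with no residual first- or zeroth-order contribution. The delicate term is $\partial_{v}(\Omega\,\tr\chib)$ in \eqref{adjoint}: on $\hh$ this equals $L(\Omega\,\tr\chib)$, the derivative of a leaf-dependent scalar along the generators, so the cancellation requires the re-slicing transformation laws for $\chib$ and $\Omega$ — part of the elliptic structure developed in \cite{aretakiselliptic} — to interact correctly with the first variation formulas \eqref{1stvarform}, \eqref{derdet}. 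Once these transformation laws are available, the formula, and hence \eqref{changef12}, follows by a direct if lengthy computation.
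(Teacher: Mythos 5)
Your Step~2 reduction through the auxiliary foliation $\widetilde{\s}=\langle S_{0},f^{2}L_{geod},\Omega/f\rangle$ is a clean and correct observation: since $(\Omega/f)^{2}f^{2}L_{geod}=\Omega^{2}L_{geod}=L$, the leaves, the induced metrics and hence the conformal factor of $\widetilde{\s}$ coincide with those of $\s$, while $Y^{\widetilde{\s}}=f^{-2}Y^{\s}$ by the normalization \eqref{Y}, so the $f^{2}$ in $f^{2}\Theta$ exactly cancels the rescaling of the transversal vector field and $\mathcal{U}^{\widetilde{\s}}=f^{2}\mathcal{U}^{\s}$ follows directly from the definition of the charges and $\mathcal{W}^{\s}=\mathcal{U}^{\s}$. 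This disposes of the multiplicative factor $f^{2}$ by a genuinely different, and more elementary, route than the one the paper takes. The integral identity you write in Step~3 relating $\partial_{v}$ of the charge to $\int_{S_{v}}\psi\cdot\o^{\s}(\phi^{-1}\Theta)\,d\mu_{\gi}$ is also correct and is indeed the mechanism linking Definition~\ref{definitionconservationlaw} to the operator $\o^{\s}$.

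The gap is that you never actually discharge Step~3. The paper's proof consists of exactly two ingredients: the explicit transformation identity $\o^{\s}\big(\phi^{-1}\Theta\big)=(\Omega/\Omega')^{2}f^{-2}\o^{\s'}\big(f^{2}\phi^{-1}\Theta\big)$ on $\hh$, quoted from \cite{aretakiselliptic}, and the ``sweeping'' observation that this is a pointwise relation on $\hh$, so vanishing of the left side on every $S_{v}$ implies vanishing of the right side on every $S'_{v'}$. You correctly identify that such a transformation law is what is needed and correctly attribute it to \cite{aretakiselliptic}, but you neither state its precise form nor apply it; instead you end with ``once these transformation laws are available \ldots the formula follows.'' That is a description of the remaining work, not a proof of it. In particular you do not address the point the paper handles via the sweeping argument, namely that $\o^{\s'}_{v'}$ is a differential operator intrinsic to the tilted leaf $S'_{v'}$, so one must check that the pointwise identity suffices when the two foliations' leaves are not nested. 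Your two-step factorization is a legitimate alternative organization and actually reduces the needed identity from \cite{aretakiselliptic} to the special case $f\equiv 1$ of pure re-slicing, but without stating and invoking that identity the argument is incomplete where the real content lies.
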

\begin{proof}
According to the main result of \cite{aretakiselliptic}, the operators $\mathcal{O}^{\s}$ and $\o^{\s'}$ satisfy the following relation
\begin{equation}
\o^{\s}\left(\frac{1}{\phi}\cdot\Theta\right)=\left(\frac{\Omega}{\Omega'}\right)^{2}\cdot\frac{1}{f^{2}}\cdot \o^{\s'}\left(f^{2}\cdot\frac{1}{\phi}\cdot\Theta \right),
\label{fromotherpaper}
\end{equation}
on $\hh$, for all functions $\Theta\in\vh$. Let now $\Theta\in\mathcal{U}^{\s}$ and $S'_{v'}$ be a section of the foliation $\s'$. Then, $S'_{v'}$ can be sweeped by the sections $S_{v}$ of the foliation $\s$ as depicted below. 
\begin{figure}[H]
   \centering
		\includegraphics[scale=0.13]{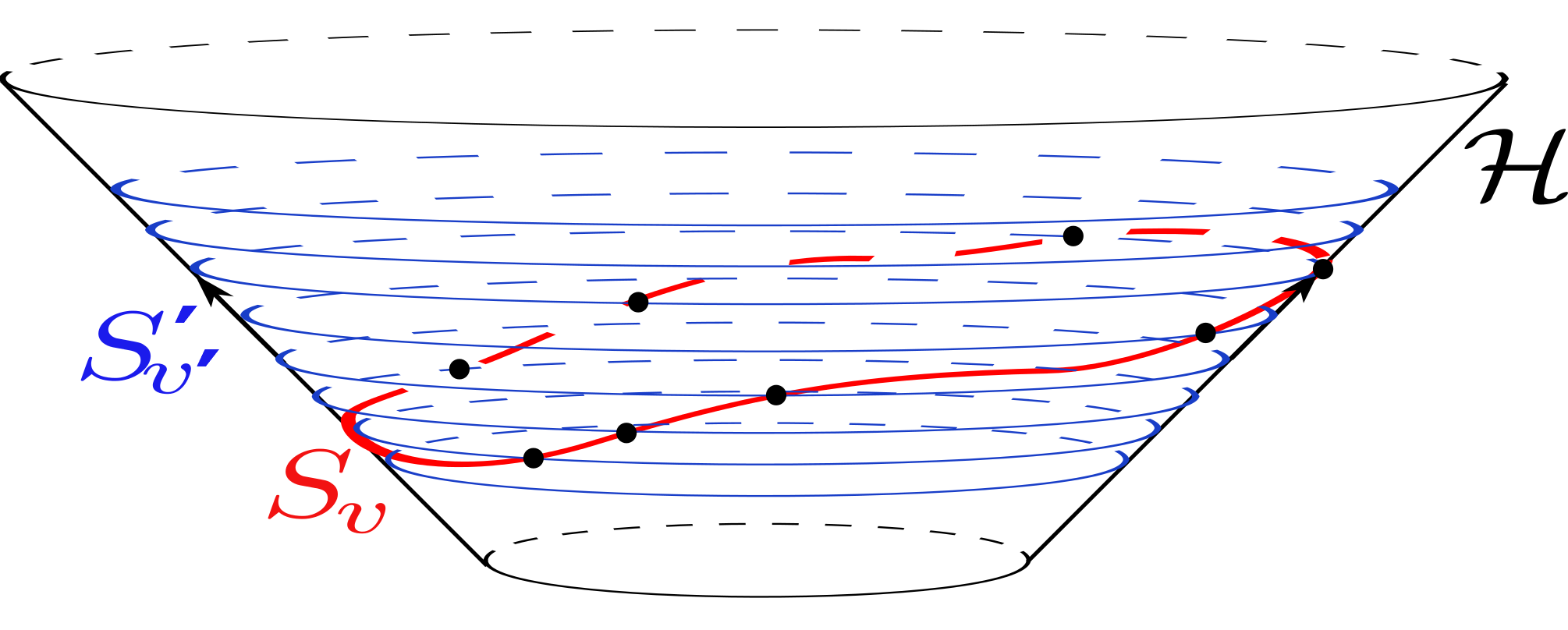}
	\label{fig:nullsectionxax}
\end{figure}
For each intersection point of $S'_{v'}$ with the sections of $\s$ we apply the relation \eqref{fromotherpaper} to deduce that 
\[\text{ If }\ \o_{v}^{\s}\left(\frac{1}{\phi}\cdot \Theta \right)=0 \ \text{for all }v\in\mathbb{R}\ \text{ then }\ \o^{\s'}_{v'}\left(\frac{1}{\phi}\cdot \big(f^{2}\cdot\Theta \big)\right)=0. \]
Hence, $ f^{2}\cdot\mathcal{U}^{\s}  \subseteq \mathcal{U}^{\s'}$. The inverse inclusion can be similarly shown by sweeping the section of the foliation $\s$ with the sections of $\s'$, yielding the required result. 
\end{proof}
We remark that the equation \eqref{changef12} holds only if we can sweep the sections of the foliation $\s$ with those of $\s'$ and vice versa. This shows that the existence of conservation laws (and hence the obstructions to gluing) are due to global properties of the sections of the null hypersurface (and not just pointwise or local; cf. Remark 1 in Section \ref{sec:Remarks}). Specifically, they are related with the properties of the kernels of the elliptic operators $\o^{\s}_{v}, v\in\mathbb{R}$. The calculation of \cite{aretakiselliptic} showed that these elliptic operators are covariant (in the sense of \eqref{fromotherpaper}) under change of foliation. This covariance property holds also for the kernels of these operators as long as the sweeping property holds. Hence, although capturing the elliptic structure associated to the sections of $\s$ is of fundamental importance in the present paper, it turns out this elliptic structure does not depend on the choice of the foliation of $\hh$.

In view of the results of Theorem \ref{theorem}, an immediate corollary of Proposition \ref{corm1} is the following
\begin{corollary}
A null hypersurface $\hh$ of a four-dimensional Lorentzian manifold $(\m,g)$ admits conservation laws with respect to a foliation $\s=\Big\langle S_{0}, L_{geod}, \Omega\Big\rangle$ in the sense of Definition \ref{definitionconservationlaw} if and only it admits conservation laws with respect to any other foliation $\s'=\Big\langle S_{0}', L_{geod}', \Omega'\Big\rangle$. Specifically, the kernels $\mathcal{W}^{S},\mathcal{W}^{\s'}$ of the conservation laws satisfy
\begin{equation}
\mathcal{W}^{\s'}=f^{2}\cdot\mathcal{W}^{\s}=\Big\{f^{2}\cdot\Theta\ :\ \Theta\in\mathcal{W}^{\s} \Big\},
\label{corequchangeoffoliation}
\end{equation}
where $f$ is given by \eqref{geodesicf}. In other words, the integrals 
\begin{equation}
char\big(S_{v}\big)[\psi]=\int_{S_{v}}Y^{\s}\big(\phi\cdot\psi\big)\cdot\Theta \, d\mu_{_{\mathbb{S}^{2}}}
\label{corollaryequation1}
\end{equation}
with $\Theta\in\vh$ are conserved (i.e.~independent of $v$), for all solutions $\psi$ to the wave equation, if and only if the integrals 
\begin{equation}
char\big(S'_{v'}\big)[\psi;\Theta]=\int_{S'_{v'}}f^{2}\cdot Y^{\s'}\big(\phi\cdot\psi\big)\cdot\Theta \, d\mu_{_{\mathbb{S}^{2}}} 
\label{corollaryequation2}
\end{equation}
are conserved, i.e.~independent of $v$, for all solutions $\psi$ to the wave equation. The vector fields  $Y^{\s},\, f^{2}\cdot Y^{\s'}$ are null and normal to the sections of $\s,\s'$, respectively, conjugate to $\hh$ and normalized such that 
\[g\big(Y^{\s},L_{geod}\big)=g\big(f^{2}\cdot Y^{\s'},L_{geod}\big)=-1.  \]
Moreover, if $\hh$ admits conservation laws then we in fact have
\begin{equation}
char\big(S_{v}\big)[\psi;\Theta]=char\big(S'_{v'}\big)[\psi;\Theta]
\label{corollaryequation3}
\end{equation}
and hence the value of the conserved charges is independent of the choice of foliation. 
\label{corollarychangeofoliation}
\end{corollary}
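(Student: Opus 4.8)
The plan is to read off the two displayed conclusions \eqref{corequchangeoffoliation} and \eqref{corollaryequation3} from Theorem~\ref{theorem} and Proposition~\ref{corm1}, the only substantive point being the equality of the \emph{values} of the charges. By Theorem~\ref{theorem} one has $\mathcal{W}^{\s}=\mathcal{U}^{\s}$ and $\mathcal{W}^{\s'}=\mathcal{U}^{\s'}$, both finite-dimensional, and by Proposition~\ref{corm1} $\mathcal{U}^{\s'}=f^{2}\cdot\mathcal{U}^{\s}$; composing gives $\mathcal{W}^{\s'}=f^{2}\cdot\mathcal{W}^{\s}$, which is \eqref{corequchangeoffoliation}. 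The function $f^{2}$ of \eqref{geodesicf} is a square that is nowhere zero (since $L_{geod}'$ is nonvanishing), so multiplication by $f^{2}$ is a linear automorphism of $\vh$; hence $\dim\mathcal{W}^{\s'}=\dim\mathcal{W}^{\s}$, and $\hh$ admits conservation laws with respect to $\s$ if and only if it does with respect to $\s'$.

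Next, the normalization assertion is a direct check: $Y^{\s'}$ is, by \eqref{Y} applied to $\s'$, the unique null vector field normal to the leaves of $\s'$, conjugate to $\hh$, with $g(L_{geod}',Y^{\s'})=-1$; scaling it by the positive function $f^{2}$ preserves nullity, normality and conjugacy, while $g(f^{2}\cdot Y^{\s'},L_{geod})=g(Y^{\s'},f^{2}\cdot L_{geod})=g(Y^{\s'},L_{geod}')=-1$ by \eqref{geodesicf}, and $g(Y^{\s},L_{geod})=-1$ is \eqref{Y} itself. Pulling the scalar $f^{2}$ across, $\int_{S'_{v'}}f^{2}\cdot Y^{\s'}(\phi\cdot\psi)\cdot\Theta\,d\mu_{_{\mathbb{S}^{2}}}=\int_{S'_{v'}}Y^{\s'}(\phi\cdot\psi)\cdot(f^{2}\cdot\Theta)\,d\mu_{_{\mathbb{S}^{2}}}$, so the integral in \eqref{corollaryequation2} is precisely the conserved charge of $\s'$ attached to the kernel element $f^{2}\cdot\Theta\in\mathcal{W}^{\s'}$. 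Thus \eqref{corollaryequation3} reduces to the identity $char(S_{v})[\psi;\Theta]=char(S'_{v'})[\psi;f^{2}\cdot\Theta]$ for all $\Theta\in\mathcal{W}^{\s}$.

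For this last identity, fix a solution $\psi$ of \eqref{wave}. By Remark~\ref{rem1giatheorema} one may choose the identification diffeomorphisms of $S_{0},S_{0}'$ with $\mathbb{S}^{2}$ so that the two conformal factors are comparable functions on $\hh$. On $\hh$ both $Y^{\s}$ and $f^{2}\cdot Y^{\s'}$ are null, conjugate to $\hh$, and satisfy $g(\cdot,L_{geod})=-1$; decomposing their difference in the normalized frame $(e_{1},e_{2},e_{3},e_{4})$ and using $g(e_{3},e_{4})=-1$ one finds that the $e_{3}$-component vanishes, so $f^{2}\cdot Y^{\s'}-Y^{\s}=X+c\cdot L_{geod}$ with $X$ an $S$-tangential vector field and $c\in C^{\infty}(\hh)$. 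Hence $(f^{2}\cdot Y^{\s'}-Y^{\s})(\phi\cdot\psi)$ is a combination of derivatives of $\phi\cdot\psi$ tangential to $\hh$ (an $S$-gradient term and a generator-direction term), so it is determined by $\psi|_{\hh}$. The plan is then to integrate this combination against $\Theta\,d\mu_{_{\mathbb{S}^{2}}}$ over a leaf, integrate the $S$-tangential part by parts on the sphere, rewrite the generator-direction part via \eqref{wenull1}, and verify that what remains is annihilated exactly because $\tfrac1\phi\cdot\Theta\in Ker(\o^{\s})$ -- equivalently, via the covariance identity \eqref{fromotherpaper} of \cite{aretakiselliptic}, because $\tfrac1\phi\cdot(f^{2}\cdot\Theta)\in Ker(\o^{\s'})$. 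Since every point of a leaf of $\s'$ lies on a leaf of $\s$ and conversely (the ``sweeping'' property already used in Proposition~\ref{corm1}), this pointwise reconciliation of the two charge densities forces the constants $char(S_{v})[\psi;\Theta]$ and $char(S'_{v'})[\psi;f^{2}\cdot\Theta]$ to agree; one may also package the same computation as the construction of a single foliation-independent closed $2$-form on $\hh$ whose flux through any cross-section is the charge, concluding by Stokes since the cross-sections of $\hh\cong\mathbb{R}\times\mathbb{S}^{2}$ are homologous. I expect the main obstacle to be precisely this reconciliation -- bookkeeping the two identification maps and conformal factors, and checking that the frame-change term integrates to zero -- which rests on the covariance computation of \cite{aretakiselliptic} together with the kernel membership rather than on anything elementary.
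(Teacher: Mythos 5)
Your derivation of \eqref{corequchangeoffoliation} is exactly the paper's route: compose $\mathcal{W}^{\s}=\mathcal{U}^{\s}$, $\mathcal{W}^{\s'}=\mathcal{U}^{\s'}$ from Theorem~\ref{theorem} with $\mathcal{U}^{\s'}=f^{2}\cdot\mathcal{U}^{\s}$ from Proposition~\ref{corm1}, and note $f^2>0$. The normalization check for $f^{2}\cdot Y^{\s'}$ via $g(f^2 Y^{\s'},L_{geod})=g(Y^{\s'},L_{geod}')=-1$ is also correct. Up to there you match the paper, which indeed declares the corollary ``immediate'' from those two results and supplies no further argument.

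For \eqref{corollaryequation3} your text is a plan rather than a proof, and there is a concrete gap beyond the bookkeeping you flag. You form the difference $(f^{2}Y^{\s'}-Y^{\s})(\phi\cdot\psi)$ with a single conformal factor $\phi$, but the two charges in \eqref{corollaryequation1}--\eqref{corollaryequation2} each use the conformal factor attached to their own double-null foliation, so the quantity you must control is $f^{2}Y^{\s'}(\phi'\cdot\psi)-Y^{\s}(\phi\cdot\psi)$. On $\hh$ itself one has $\phi=\phi'$ (the induced metrics on $T_pS_v$ and $T_pS'_{v'}$ agree because they are both pulled back from the degenerate quotient metric on $T_p\hh/\langle L_{geod}\rangle$), but the transversal derivatives $Y\phi$ and $Y\phi'$ are extensions off $\hh$ tied to different double-null frames and do not coincide, so this extra term cannot simply be absorbed into your tangential decomposition $f^{2}Y^{\s'}-Y^{\s}=X+c\,L_{geod}$ (which, as such, is correct). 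Remark~\ref{rem1giatheorema} says the charge is independent of the identification diffeomorphism $\Phi$; it does not say one can arrange the two conformal factors to be ``comparable'' and is not the tool needed here. Your closed-$2$-form/Stokes reformulation is the right conceptual shape, but it, and the cancellation against $\tfrac1\phi\Theta\in\mathrm{Ker}(\o^{\s})$ through \eqref{fromotherpaper}, would still need to be carried out with the two conformal factors kept distinct; neither you nor the paper actually does this computation.
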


\subsection{Perturbation analysis}
\label{sec:PerturbativeAnalysis}

We next investigate the stability/genericity properties of the conservation laws on a null hypersurface $\hh$ of a Lorentzian manifold $(\m,g)$ under perturbations of the ambient metric $g$ (i.e.~under perturbations of the geometry of $\hh$). We have the following

\begin{proposition}
Let $\hh$ be a null hypersurface as in Theorem \ref{theoremmainintro} which admits conservation laws for the wave equation on a Lorentzian manifold $(\m,g)$. There are  arbitrarily small perturbations of the ambient metric $g$ for which $\hh$ is a null hypersurface without, however, admitting conservation laws (and hence gluing is possible on these perturbed null hypersurfaces).  In fact, the set of all ambient metrics for which $\hh$ admits consevation laws is of positive codimension. 

Furthermore, if the underlying Lorentzian manifold $(\m,g)$ satisfies the Einstein--vacuum equations then we can perturb the metric $g$ in the class of spacetimes satisfying the Einstein--vacuum equations such that the above conclusions still hold.

\label{prop2pert}
\end{proposition}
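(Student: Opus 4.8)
The plan is to exploit the characterization of conservation laws given in Theorem \ref{theorem}: $\hh$ admits conservation laws with respect to a foliation $\s$ if and only if the elliptic operator $\o^{\s}_v$ has nontrivial kernel for every $v$ in the relevant range, equivalently $\mathcal{U}^{\s}\neq\{0\}$. Since the dimension of $\ker(\o^{\s}_v)$ is semicontinuous (Lemma \ref{lemmagiaelliptickernel}) and, generically among elliptic operators on $\mathbb{S}^2$ of the given type, zero is not an eigenvalue, the idea is that the condition "$0\in\mathrm{spec}(\o^{\s}_v)$ for all $v$" is a codimension-one (in fact infinite-codimension, but at least positive-codimension) condition on the coefficients of $\o^{\s}$, and those coefficients depend on the ambient metric $g$ through the geometric quantities $\Omega$, $\zeta$, $\mathrm{tr}\chi$, $\mathrm{tr}\chib$ and $\partial_v(\Omega\,\mathrm{tr}\chib)$ appearing in \eqref{adjoint}. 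So the real content is: (i) a single elliptic operator $\o^{\s}_v$ on a fixed $v$-section can be perturbed, by an arbitrarily small change of $g$, so that $0$ leaves its spectrum; and (ii) such a perturbation of $g$ can be realized (a) as a genuine metric perturbation keeping $\hh$ null, and (b) in the Einstein-vacuum case, within the class of vacuum metrics.

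First I would make the reduction precise. Pick any section $S_{v_0}$ and suppose $0$ is an eigenvalue of $\o^{\s}_{v_0}$ with eigenspace $K$. Following the construction in Section \ref{sec:EllipticTheoryOnHh}, write $\o^{\s}_{v_0}-\lambda I$ as having compact inverse for $\lambda\gg 0$, so that the spectrum near $0$ is governed by finitely many analytically-varying eigenvalues under perturbation. If I perturb the zeroth-order coefficient $w$ in \eqref{adjoint} by adding a small constant $-\varepsilon$, the eigenvalues move by $-\varepsilon$ to first order (by the usual Kato analytic perturbation theory, using that $\o^{\s}_{v_0}$ is the adjoint of $\mathcal{Q}^{\s}_{v_0}$, hence has the same real spectrum), so for generic small $\varepsilon$, $0$ is no longer an eigenvalue at $v_0$; by Kato's upper semicontinuity (as used in Proposition \ref{pert1prop}) it then stays out of the spectrum for $v$ in a neighborhood of $v_0$, and by Theorem \ref{theorem} this already destroys all conservation laws on $\hh$ (since the kernel $\mathcal{W}^{\s}=\mathcal{U}^{\s}$ would have to vanish at $v_0$, hence be $\{0\}$). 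It remains only to produce such a change in $w$ by an ambient metric perturbation. Inspecting \eqref{w}, $w=\partial_v(\Omega\,\mathrm{tr}\chib)+\tfrac12(\Omega\,\mathrm{tr}\chib)(\Omega\,\mathrm{tr}\chi)$, and $\mathrm{tr}\chib$ is a transversal (to $\hh$) geometric quantity: a localized perturbation of $g$ supported in a small neighborhood of $S_{v_0}$ transversal to $\hh$, changing the value of $\nabla_{e_3}e_3$ or of the second fundamental form $\chib$ of $S_{v_0}$, changes $w$ at $S_{v_0}$ while leaving $\hh$ itself null and leaving the induced data on $\hh$ away from $S_{v_0}$ untouched. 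Concretely I would add to $g$ a term $\varepsilon\,\chi_{cut}(u,v)\,h$, with $\chi_{cut}$ a bump function in the canonical coordinates $(u,v)$ near $(0,v_0)$ and $h$ a fixed symmetric 2-tensor chosen so that the resulting first-order change in $\mathrm{tr}\chib|_{S_{v_0}}$ (equivalently in $\partial_v(\Omega\,\mathrm{tr}\chib)$) is a nonzero constant on $S_{v_0}$; since $h$ has support disjoint from the generators of $\hh$ except at $S_{v_0}$, the null condition $g(\partial_u,\partial_u)=0$ defining $\hh$ is unaffected. This proves the first two assertions, and the "positive codimension" claim follows because the set of $g$ for which $0\in\bigcap_v\mathrm{spec}(\o^{\s}_v)$ is cut out inside metric space by the vanishing of the (real-analytic in the perturbation parameter) smallest-modulus eigenvalue of $\o^{\s}_{v_0}$, a nonconstant real-analytic function along the one-parameter family just constructed, hence a proper closed subset of positive codimension.

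For the Einstein-vacuum statement, the subtlety is that an arbitrary localized perturbation of $g$ will not solve $\mathrm{Ric}=0$, so I cannot freely prescribe $\chib$. Instead I would work at the level of the characteristic initial value problem for the vacuum equations: on the pair of null hypersurfaces $\hh=C_0$ and $\underline{C}_0$ (with their foliations and the diffeomorphisms $\Phi_{u,v}$ of Section \ref{sec:coordinatescdiffeo}), the free vacuum characteristic data are the conformal classes of the induced metrics $\hat{g}$ on the sections (together with the lapse $\Omega$ and torsion at the corner), and the transversal expansion $\mathrm{tr}\chib$ on $C_0$ is \emph{not} free — it is propagated by the Raychaudhuri-type equations along $C_0$ from its value at $S_0$. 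So I would perturb the seed data at the corner sphere $S_0$: change $\mathrm{tr}\chib|_{S_0}$ (or the conformal data on $\underline{C}_0$) by an amount $\varepsilon$ supported appropriately, solve the vacuum constraint and evolution equations to obtain a nearby vacuum metric $g_\varepsilon$ (possible by standard well-posedness of the vacuum characteristic problem, e.g. Rendall/Luk, for $\varepsilon$ small, and continuous dependence on the data gives $\|g_\varepsilon-g\|\to 0$), and track how $w|_{S_{v_0}}$ — an integrated expression in $\mathrm{tr}\chib$, $\mathrm{tr}\chi$, $\Omega$ and curvature along $C_0$ — responds. The main obstacle, and the step I expect to require the most care, is precisely this: showing that the induced first-order variation of $w$ (hence of the bottom eigenvalue of $\o^{\s}_{v_0}$) is \emph{not identically zero} as $\varepsilon$ varies over the admissible vacuum deformations — i.e. that the vacuum equations do not conspire to make the conservation-law condition stable. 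This is plausible because $w$ involves $\partial_v(\Omega\,\mathrm{tr}\chib)$ and $\mathrm{tr}\chib$ genuinely depends on the free data via the propagation equations, but it needs a concrete computation of the linearized propagation along $C_0$ (a linear ODE in $v$ with the perturbation entering as a source), together with the observation that one still has the freedom to choose the $v_0$ at which to kill the eigenvalue; once nonconstancy is established, the same real-analyticity argument as before shows the vacuum metrics admitting conservation laws form a set of positive codimension within the vacuum class, completing the proof.
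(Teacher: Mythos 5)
Your non-vacuum argument is essentially correct and mirrors the paper's approach: both reduce to killing the kernel of $\o^{\s}_{v_0}$ at a single $v_0$, and both realize this by shifting the zeroth-order coefficient $w$ by a small constant $\epsilon$ and tracing that shift back to a metric perturbation affecting $\partial_v\big(\Omega\,\mathrm{tr}\underline{\chi}\big)$ near $S_{v_0}$. Two small remarks: (a) the appeal to "Kato analytic perturbation theory" and to $\o^{\s}$ having "real spectrum" is unnecessary and the latter is unjustified ($\o^{\s}$ is not self-adjoint); the perturbation $\o^{\s}_{v_0}\mapsto\o^{\s}_{v_0}-\epsilon$ shifts the spectrum rigidly, so the discreteness of the spectrum alone does the job, as the paper observes. (b) Your bump-function construction needs the extra requirement that $h(\partial_u,\partial_u)=0$ on $\hh$ to preserve the null condition, which is implicit but not cleanly stated; the paper instead specifies directly which geometric quantities ($\gi|_{S_0}$, $\zeta|_{\hh}$, $\mathrm{tr}\chi|_{\hh}$, $\mathrm{tr}\underline{\chi}|_{S_0}$) are held fixed and which ($\partial_v\mathrm{tr}\underline{\chi}|_{S_0}$) is perturbed, which is tidier.

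The vacuum case, however, contains a genuine gap that you yourself flag: you observe that $\mathrm{tr}\underline{\chi}$ on $\hh$ is propagated rather than free, and leave unverified the claim that the induced variation of $w$ under admissible vacuum deformations is nonzero, proposing a linearized propagation analysis as the way to check it. The paper closes this gap by a much simpler algebraic step, with no linearization or $v$-integration needed. Combining the transport equation
\[
\partial_v\mathrm{tr}\underline{\chi}=\divv\zeta+|\zeta|^2+\rho-(\chi,\underline{\chi})
\]
with the Gauss equation $\rho-(\chi,\underline{\chi})=-K-\mathrm{tr}\chi\,\mathrm{tr}\underline{\chi}$ eliminates the propagated quantity $\partial_v\mathrm{tr}\underline{\chi}$ and gives, for the zeroth-order coefficient of $\o^{\s}$ at $S_0$,
\[
w=3\,\divv\zeta+|\zeta|^2-K-\tfrac12\,\mathrm{tr}\underline{\chi}\,\mathrm{tr}\chi.
\]
Here the first three terms depend only on the conformal geometry of $\hh$, the intrinsic metric of $S_0$ and the torsion at $S_0$ — all of which are held fixed in the vacuum characteristic data — while the last term depends on the freely prescribable initial expansions $\mathrm{tr}\chi|_{S_0},\mathrm{tr}\underline{\chi}|_{S_0}$. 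So perturbing the product $\mathrm{tr}\chi\cdot\mathrm{tr}\underline{\chi}$ at $S_0$ gives $w_\epsilon=w+\epsilon$ at $S_0$ directly, settling the nonvanishing of the variation by inspection rather than by the ODE analysis you anticipated.
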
 
\begin{proof}
In view of the results of Section \ref{sec:EffectOfGaugeOnConservationLaws} it suffices to consider a geodesic foliation $\s=\left\langle S_{0}, \,L_{geod}, \, \Omega=1\right\rangle$ of $\hh$. By assumption we have that $dim\mathcal{W}^{\s}=\dim\mathcal{U}^{\s}\geq 1$, which, in particular, implies that $Ker\big(\o_{v}^{\s}\big)\neq \left\{0\right\}$ for all $v\in\mathbb{R}$. Recall that (for the unperturbed metric):
\begin{equation}
\o_{v}^{\s}\psi=\lapp\psi+2\zeta^{\sharp}\cdot\nabb\psi+\Big[2\divv\zeta+\pv( tr\underline{\chi})+\frac{1}{2}( tr\underline{\chi})\cdot( tr\chi) \Big]\cdot\psi 
\label{toxreiazomai}
\end{equation}
We will show that for any fixed $v$ there is an $\epsilon_{1}>0$ such that for all $\epsilon\in(0,\epsilon_{1})$ the operator
\[\o_{v}^{\s,\epsilon}\psi=\o_{v}^{\s}\psi-\epsilon\cdot \psi\]
has \textit{trivial} kernel. Indeed, if we consider  the operator
\[\o_{temp}\psi=\o_{v}^{\s}\psi-\frac{1}{\epsilon_{0}}\cdot \psi \]
then we can show as before that if we take $\epsilon_{0}$ sufficiently small (depending on $v$) then the operator $\o_{temp}$ is invertible and hence has a discrete spectrum consisting only of eigenvalues. Therefore, the operator $\o_{v}^{\s}$ has a discrete set of eigenvalues in the spectrum (whose limit point is infinity). By assumption, one of these eigenvalues is zero. By the discreteness of the spectrum, if we take $\epsilon_{1}$ sufficiently small, then the operator $\o_{v}^{\s,\epsilon}$ has trivial kernel. 

Therefore, if we consider a perturbed metric $g_{\epsilon}$ for which $\gi_{\epsilon}\!\left.\right|_{S_{0}}=\gi\!\left.\right|_{S_{0}}$, $\zeta_{\epsilon}\!\left.\right|_{\hh}=\zeta\!\left.\right|_{\hh},$  $\tr\chi_{\epsilon}\!\left.\right|_{\hh}=\tr\chi\!\left.\right|_{\hh},$ $tr\underline{\chi}_{\epsilon}\!\left.\right|_{S_{0}}=tr\underline{\chi}\!\left.\right|_{S_{0}}$, $\partial_{v}tr\underline{\chi}_{\epsilon}\!\left.\right|_{S_{0}}=\partial_{v}tr\underline{\chi}\!\left.\right|_{S_{0}}+\epsilon$ for some $v$. Then, 
\[\widetilde{\o}_{v}^{\s}=\o^{\s,\epsilon}_{v},\]
where $\widetilde{\o}_{v}^{\s}$ denotes the elliptic operator $\o_{v}^{\s}$ with respect to the metric $g_{\epsilon}$. By the above discussion and Proposition \ref{pert1prop} we deduce that $\hh$ does not admit conservation laws as a null hypersurface embedded in the Lorentzian manifold  $\big(\m,g_{\epsilon}\big)$.

We next consider perturbations in the class of spacetimes satisfying the Einstein--vacuum equations. The freely prescribable initial data of the geometry of $\hh$ in the context of the characteristic problem of the Einstein--vacuum equations is the following:
\begin{enumerate}
	\item The conformal geometry of $\hh$,
\item The metric of $S_{0}$,
	\item The expansions $tr\chi$ and $tr\underline{\chi}$ at $S_{0}$ and
		\item The torsion $\zeta$ at $S_{0}$.
\end{enumerate}
In our perturbation argument, we will fix the conformal geometry of $\hh$, the metric of $S_{0}$ and the torsion $\zeta$ at $S_{0}$ and we will perturb the initial expansions $tr\chi,tr\underline{\chi}$ at $S_{0}$.

The propagation equation for the transversal null second fundamental form yields (see \cite{DC09})
\[\partial_{v}tr\underline{\chi}=\divv\zeta+|\zeta|^{2}+\rho-(\chi,\underline{\chi}). \]
Furthermore, the Gauss equation gives us
\[\rho-(\chi,\underline{\chi})=-K-tr\chi tr\underline{\chi}. \]
Therefore, by eliminating the left hand side of the above equation we obtain a linear propagation equation for $tr\underline{\chi}$ along $\hh$:
\begin{equation}
\partial_{v}tr\underline{\chi}=\divv\zeta+|\zeta|^{2}-K-tr\chi tr\underline{\chi}.
\label{propaforzeta}
\end{equation}
Hence the coefficient of the zeroth order term in \eqref{toxreiazomai} becomes
\begin{equation}
w=3\divv\zeta+|\zeta|^{2}-K-\frac{1}{2}( tr\underline{\chi})\cdot( tr\chi). 
\label{eq:w}
\end{equation}
By assumption the first three terms on the right hand side are fixed under our perturbations. We can then freely perturb $tr\chi$ and $tr\underline{\chi}$ such that 
\[w_{\epsilon}=w+\epsilon\]
at $S_{0}$. The result then follows from the above discussion that Proposition \ref{pert1prop}.

\end{proof}

\section{Black hole spacetimes}
\label{sec:KillingHorizons}

Let $(\m,g)$ be a stationary spacetime with a black hole region (for the relevant definitions see \cite{haw}). For such a  spacetime the event horizon $\hh$ is  a Killing horizon, that is there is a Killing vector field $\xi$ normal to $\hh$. In this case $\xi$ satisfies
\begin{equation}
\nabla_{\xi}\xi=\kappa\cdot \xi\, : \text{ on  }\hh,
\label{surface}
\end{equation}
where $\kappa$ is constant along the null generators of $\hh$. In consistency with the zeroth law of black hole mechanics, we will assume that $\kappa$ is globally constant on $\hh$ in which case $\kappa$ is the so-called \textit{surface gravity} of $\hh$. If $\kappa=0$, then $\hh$ is called an \textit{extremal horizon}. We will next investigate the existence of conservation laws on the event horizon $\hh$. 

The following properties of Killing horizons were shown in \cite{aretakiselliptic}:
\begin{lemma}
Let $\hh$ be a Killing horizon and $\mathcal{S}=\left\langle S_{0},L_{geod}, \Omega=1\right\rangle$ be a geodesic foliation of $\hh$, as defined in Section \ref{sec:NullFoliationsandOpticalFunctions}. Assume that $\xi$ is a Killing vector field normal to $\hh$ and such that \eqref{surface} is satisfied. Then, the following  relations hold on $\hh$: 
\begin{enumerate}
	\item $\chi=0$, 
\item $\li_{L}\gi=0$,
	\item $\di\kappa=g(\xi, \underline{L})\cdot \beta$, where the curvature component $\beta$ is given by \eqref{curvcompdeflist},
	\item $\li_{L}\zeta=\nabb_{L}\zeta=-\beta$,
	\item If, in addition, we take $\left.L_{geod}\right|_{S_{0}}=\left.\xi\right|_{S_{0}}$ and $\kappa$ is constant on $\hh$, then
	
	$\li_{L}\underline{\chi}=\nabb_{L}\underline{\chi}=\displaystyle\frac{\kappa}{f}\cdot\underline{\chi}$, where $f$ is such that $\xi=f\cdot {L}_{geod}$ on $\hh$.

\end{enumerate}
\label{lemma}
\end{lemma}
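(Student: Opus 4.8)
The plan is to verify each of the five relations by working in the geodesic foliation $\mathcal{S}=\langle S_0, L_{geod},\Omega=1\rangle$ and exploiting the Killing property of $\xi$ together with the normalization conventions set up in Section~\ref{sec:TheDoubleNullFoliation}. Since $\Omega=1$ we have $L=L_{geod}$, $e_4=L_{geod}$, $e_3=\underline{L}_{geod}=\underline{L}$, $\omega=\underline\omega=0$, $\eta=\zeta$, $\underline\eta=-\zeta$, and $d\!\log\Omega=0$; these identifications will be used freely throughout.

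\emph{Step 1: $\chi=0$.} The key point is that the null generators of $\hh$ are integral curves of $\xi$ up to reparametrization, so $\xi$ restricted to $\hh$ is proportional to $L_{geod}$; write $\xi=f\cdot L_{geod}$ on $\hh$. Since $\xi$ is Killing, $\lie_\xi g=0$, and projecting onto the sections $S_v$ gives $\li_\xi \gi=0$. Using $\xi=f\cdot L$ and the first variation formula \eqref{1stvarform}, $\li_\xi\gi = f\cdot\li_L\gi + (\text{terms in }\di f\otimes\cdots)$; more precisely $\li_{fL}\gi = f\,\li_L\gi$ when $f$ is constant on $S_v$ (which holds since $\kappa$, and hence $f$, is constant along generators and one may arrange $f$ to depend only on $v$). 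Thus $f\cdot 2\Omega\chi = 0$, and since $f\neq 0$ we get $\chi=0$. Alternatively, and more robustly, one computes $\chi_{AB}=g(\nabla_A e_4,e_B)$ directly: the deformation tensor $\,^{(\xi)}\pi_{AB}=g(\nabla_A\xi,e_B)+g(\nabla_B\xi,e_A)$ vanishes, and combined with the symmetry of $\chi$ (established in Remark~1 of Section~\ref{sec:TheDoubleNullFoliation}, since $\langle e_3,e_4\rangle^\perp$ is integrable — indeed $b^i=0$ on $\hh$) this forces $\chi=0$ on $\hh$.

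\emph{Step 2: $\li_L\gi=0$.} This is immediate from Step 1 and the first variation formula \eqref{1stvarform}: $\li_L\gi=2\Omega\chi=2\chi=0$.

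\emph{Step 3: $\di\kappa=g(\xi,\underline{L})\cdot\beta$.} Here I would differentiate the surface-gravity relation \eqref{surface}, $\nabla_\xi\xi=\kappa\xi$, in a direction $e_A$ tangent to $S_v$ and pair with $\underline L$. The left side produces, after commuting $\nabla_A$ past $\nabla_\xi$, a Riemann curvature term $R(e_A,\xi,\xi,\underline L)$ plus connection terms; using $\xi=f\cdot e_4$ and the definitions \eqref{curvcompdeflist} of the curvature components — in particular $\beta_A=R_{A434}$ — the curvature term is proportional to $f^2\,\beta_A$, while the remaining connection terms vanish by virtue of $\chi=0$ (Step 1) and the vanishing of $\omega$. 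The right side gives $(\di\kappa)_A\,g(\xi,\underline L)+\kappa\,g(\nabla_A\xi,\underline L)$; one checks the last term vanishes (again using $\chi=0$ and Killing). Matching the two sides, and absorbing $f$ and $g(\xi,\underline{L})$ appropriately, yields $\di\kappa = g(\xi,\underline L)\cdot\beta$.

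\emph{Step 4: $\li_L\zeta=\nabb_L\zeta=-\beta$.} Since $\zeta$ is an $S$-tensor (a $1$-form on the leaves), $\li_L\zeta=\nabb_L\zeta$ follows from $\li_L\gi=0$ (Step 2), because the difference between the Lie and covariant derivatives along $L$ involves $\li_L\gi$. For the identification with $-\beta$: write $\zeta_A=g(\nabla_A e_4,e_3)$ and differentiate along $L=e_4$. Commuting $\nabla_4$ and $\nabla_A$ introduces $R(e_4,e_A,e_4,e_3)=-R_{A434}=-\beta_A$ (up to sign conventions) plus connection terms; the latter all involve $\chi$, $\omega$, or $\underline\eta$ contracted against $\chi=0$, hence vanish. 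This gives $\nabb_L\zeta=-\beta$.

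\emph{Step 5: $\li_L\underline\chi=\nabb_L\underline\chi=\frac{\kappa}{f}\underline\chi$.} This is the main obstacle and where the additional hypotheses ($L_{geod}|_{S_0}=\xi|_{S_0}$ and $\kappa$ globally constant) are essential. First, $\li_L\underline\chi=\nabb_L\underline\chi$ again follows from $\li_L\gi=0$. The heart is the propagation equation for $\underline\chi$ along $\hh$: one has a Raychaudhuri-type / transport equation of the schematic form $\nabb_L\underline\chi_{AB} = -\tfrac12(\tr\chi)\underline\chi_{AB}+\cdots+\text{curvature}$, but with $\chi=0$ the $\tr\chi$ term drops and the curvature term and the $\omega$-type term must be handled. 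The condition $\xi=f\cdot L_{geod}$ with $\kappa$ constant forces, via \eqref{surface}, a precise relation $L f = \kappa$ along the generators (since $\nabla_\xi\xi = (\xi f) L_{geod}+f^2\nabla_{L_{geod}}L_{geod}=(\xi f)L_{geod}$ and $\xi f = f\cdot Lf$, giving $f\cdot Lf=\kappa f$, i.e. $Lf=\kappa$). Combined with the Killing equation $\lie_\xi\underline\chi$-type computation — more precisely, $\underline\chi$ transforms homogeneously of a definite weight under the rescaling relating the geodesic vector field to $\xi$ — one extracts $\nabb_L\underline\chi=\frac{\kappa}{f}\underline\chi$. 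Concretely, I would write $\underline\chi$ in terms of $\nabla\xi$ and the known identity $g(\nabla_A\xi,e_B)+g(\nabla_B\xi,e_A)=0$, differentiate along $\xi$, use $\nabla_\xi\nabla\xi = R(\xi,\cdot)\xi + \nabla(\kappa\xi)$ (a consequence of Killing's identity $\nabla_a\nabla_b\xi_c = R_{cba}{}^d\xi_d$), and check that on $\hh$, with $\chi=0$, everything collapses to the stated scalar multiple of $\underline\chi$. The delicate points are keeping track of the factor $f$ (and its derivative $Lf=\kappa$) consistently, and verifying the curvature contributions cancel — this cancellation is exactly what the vacuum-type structure or, failing that, the Killing identity guarantees.

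Throughout, I would lean on the Killing identity $\nabla_\mu\nabla_\nu\xi_\lambda = R_{\lambda\nu\mu}{}^\rho\xi_\rho$ as the single unifying tool: Steps 3, 4, 5 are all instances of projecting this identity onto suitable frame vectors, with $\chi=0$ (Step 1) repeatedly killing the error terms. The hardest step is clearly Step 5, both because it requires the extra hypotheses and because the homogeneous scaling behavior of $\underline\chi$ under $L_{geod}\mapsto f^{-1}\xi$ has to be tracked carefully; I expect that to be where most of the real work lies, while Steps 1--4 are comparatively mechanical once the frame conventions are fixed.
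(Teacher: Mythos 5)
The paper does not prove Lemma \ref{lemma} in the text; it cites the companion paper \cite{aretakiselliptic} for it, so there is no in-paper proof to compare your attempt against. That said, your sketch follows the route one would expect: items 1--2 via vanishing of the tangential--tangential components of the Killing deformation tensor, items 3--5 via the Killing identity $\nabla_a\nabla_b\xi_c=R_{cba}{}^{d}\xi_d$ projected onto the null frame, with $\chi=0$ repeatedly killing the error terms.

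Two substantive remarks. First, in Step 1 your primary argument using $\li_{fL}\gi=f\,\li_L\gi$ requires $\di f=0$ on the sections, which is not available in the general setting (it is a consequence of the extra hypotheses that enter only in item 5); your ``alternative'' argument, $0={}^{(\xi)}\pi_{AB}=g(\nabla_A\xi,e_B)+g(\nabla_B\xi,e_A)=2f\chi_{AB}$ (using symmetry of $\chi$, i.e.~integrability of $\langle e_3,e_4\rangle^{\perp}$), is the correct complete one and should be promoted to the main argument. Second, Steps 3 and 5 are outlines with the right tools but without the computation carried through. For Step 5 the chain that actually needs to be written out is: $Lf=\kappa$ from $\nabla_\xi\xi=\kappa\xi$ (you have this); with $f|_{S_0}=1$ and $\kappa$ globally constant this gives $f=1+\kappa v$, so $f$ is constant on each $S_v$; from $\lie_\xi L_{geod}=-\kappa L_{geod}$ and the normalization $g(L_{geod},\underline{L})=-1$ one gets $\lie_\xi\underline{L}=\kappa\underline{L}$ on $\hh$, and hence, since $\xi$ is Killing, $\lie_\xi\underline{\chi}=\kappa\underline{\chi}$; finally $\lie_\xi\underline{\chi}=f\,\li_L\underline{\chi}$ because $\di f=0$ on sections, giving $\li_L\underline{\chi}=(\kappa/f)\underline{\chi}$, and $\li_L=\nabb_L$ on $S$-tensors because $\chi=0$. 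You correctly identify these as the delicate points and the hypotheses that make them work, but they are asserted rather than verified; as it stands the proposal establishes 1, 2 and 4 and sketches, but does not prove, 3 and 5.
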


\noindent Recall that since $\Omega=1$ we have $L_{geod}=L=\pv$. If we trace the last identity of the above lemma we obtain
\[L tr\underline{\chi}=\frac{\kappa}{f}\cdot tr\underline{\chi}.  \]
Since $Lf=\kappa$ we obtain
\begin{equation}
tr\underline{\chi}=\left.tr\underline{\chi}\right|_{S_{0}}\cdot f
\label{trchibarlambda}
\end{equation} and so 
\begin{equation}
\pv tr\underline{\chi}=\left.tr\underline{\chi}\right|_{S_{0}}\cdot\kappa.
\label{tracechibar}
\end{equation}
Therefore, we obtain
\begin{equation}
	\begin{split}
\o^{\s}\psi=\lapp\psi+2\zeta^{\sharp}\cdot\nabb\psi+\left[2\divv\,\zeta^{\sharp}+\left.tr\underline{\chi}\right|_{S_{0}}\cdot\kappa\right]\cdot\psi,
\end{split}
\label{killingoperator}
\end{equation}
with respect to the foliation \begin{equation}
\mathcal{S}=\left\langle S_{0}, \left.L_{geod}\right|_{S_{0}}=\left.\xi\right|_{S_{0}},\,  \Omega=1\right\rangle.
\label{foliationkilling}
\end{equation}  

Since $\li_{L}\gi=0$ all the sections of $\hh$ are isometric. Moreover, since $\kappa$ is constant on $\hh$, and hence $\di k=0$, by Lemma \ref{lemma}, we obtain that  $\beta=0$ on $\hh$ and hence $\zeta$ is conserved on $\hi$, i.e.~$\li_{L}\zeta=0$. 	From \eqref{tracechibar}, $\pv tr\underline{\chi}$ does not depend on $v$. By virtue of the equation $L\phi=\frac{1}{2}tr\chi\cdot\phi$, the conformal factor $\phi$ also does not depend on $v$. Therefore, the operators $\mathcal{O}_{v}^{\s}$ do \textbf{not} depend on $v$ (modulo identifying the sections $S_{v}$ with $S_{0}$ via the diffeomorphisms $\Phi_{v}$). 

If we now consider a general foliation ${\mathcal{S}'}=\left\langle S_{0},\left.L_{geod}\right|_{S_{0}}=\left.\xi\right|_{S_{0}},\Omega\right\rangle$ where $\Omega$ is a smooth function on $\hh$, then by virtue of \eqref{fromotherpaper} and using the fact that the geodesic vector field $L_{geod}$ is the same for both foliations $\s,\s'$, we obtain
\[\frac{1}{\Omega^{2}}\cdot \o^{{\s'}}\Psi=\o^{{\s}}\Psi\ :\ \text{ for all }\, \Psi\in\vh. \]
and hence the operators $\frac{1}{\Omega^{2}}\cdot \o^{{\s}'}_{v}$ do not depend on $v$ (again modulo identifying the sections $S_{v}'$ of $\s'$ with $S_{0}$ via the diffeomorphisms $\Phi_{v}$). 

\begin{remark}
Given a foliation $\mathcal{S}=\left\langle S_{0}, \left.L_{geod}\right|_{S_{0}}=\left.\xi\right|_{S_{0}},\,  \Omega=1\right\rangle$, as defined in Section \ref{sec:NullFoliationsandOpticalFunctions},  of a Killing horizon of a four-dimensional Lorentzian manifold $(\m,g)$, we can rewrite the operator $\mathcal{O}^{\s}_{v}$ given by \eqref{killingoperator} as follows
\begin{equation}
\o_{v}^{\s}\psi=\underbrace{\lapp\psi+2\zeta^{\sharp}\cdot\nabb\psi+2\divv\,\zeta^{\sharp}\cdot \psi}_{{\mathcal{K}_{v}^{\s}\psi}}+\underbrace{\left.tr\underline{\chi}\right|_{S_{0}}\cdot\kappa\cdot\psi}_{\mathcal{T}_{v}^{\s}\psi},
\label{killingoperator1}
\end{equation}
The section $S_{0}$ can be freely chosen for the foliation $\mathcal{S}$. In view of the above discussion the operator $\mathcal{K}_{v}^{\s}$ does not depend on the choice of the section $S_{0}$ (again, modulo identifying all sections of $\hh$ via the flow of the null generators). However, clearly the operator $\mathcal{T}_{v}^{\s}$ depends on $S_{0}$. Specifically, if we consider another foliation $\mathcal{S}'=\left\langle S_{0}', \left.L_{geod}\right|_{S_{0}'}=\left.\xi\right|_{S_{0}'},\,  \Omega=1\right\rangle$ then in view of \eqref{fromotherpaper}, and recalling that $L\phi=0$, we have that
\begin{equation}
\o_{v}^{\s}(\Psi)=\frac{1}{f^{2}}\cdot\o_{v}^{\s'}\big(f^{2}\cdot\Psi\big)\ :\ \text{ for all }\, \Psi\in\vh,
\label{changeofsection}
\end{equation}
where $f$ is such that $L_{geod}'=f^{2}\cdot L_{geod}$, where $L_{geod}',L_{geod}$ denote the geodesic vector fields of $\s',\s$, respectively.
\label{remarkgiazeta}
\end{remark}
In view of the above proposition, the main Theorem \ref{theorem} and the fact that $L\phi=0$, i.e.~$\phi\in\vh$, the existence of conservation laws along Killing horizons is equivalent to the non-triviality of the kernel $Ker\big(\o_{v}^{\s}\big)$ for some fixed $v$. Specifically, we have shown that 
\begin{equation}
dim\, \mathcal{W}^{\s}= dim\, \mathcal{U}^{\s}= \phi\cdot Ker\big(\o_{v}^{\s}\big).
\label{killingapotelesma}
\end{equation}

Summarizing we have shown the following
\begin{proposition}
Let $\hh$ be a Killing horizon with constant surface gravity $\kappa$ of a four-dimensional Lorentzian manifold $(\m,g)$. Let also $S=\left\langle S_{0},\left.L_{geod}\right|_{S_{0}}=\left.\xi\right|_{S_{0}},\Omega\right\rangle$ be a foliation of $\hh$, as defined in Section \ref{sec:NullFoliationsandOpticalFunctions}. Then the operators $\frac{1}{\Omega^{2}}\cdot\o^{\mathcal{S}}_{v}$ do not depend on $v$ modulo identifying $S_{v}$ with $S_{0}$ via the diffeomorphism $\Phi_{v}$, i.e.
\[ \big(\Phi_{v}\big)^{*}\left(\frac{1}{\Omega^{2}}\cdot\mathcal{O}_{v}^{\s}\right)=\frac{1}{\Omega^{2}}\cdot\mathcal{O}_{0}^{\s}   .\] 
Under the same identification we have 
\begin{equation}
Ker\big(\o_{v}^{\s}\big)=Ker\big(\o_{0}^{\s}\big),
\label{kernelrelation}
\end{equation}
for all $v\in\mathbb{R}$.  Moreover, the kernel of the conservation laws along $\hh$ satisfies
\[dim\, \mathcal{W}^{\s}= dim\, \mathcal{U}^{\s}= \phi\cdot Ker\big(\o_{v}^{\s}\big)=\left\{\phi\cdot f\, :\, f\in Ker\big(\o_{v}^{\s}\big)\right\}.\]
\label{lastprop}
\end{proposition}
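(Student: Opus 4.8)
The plan is to assemble the Proposition from Lemma~\ref{lemma}, the covariance identity \eqref{fromotherpaper}, and Theorem~\ref{theorem}; the statement is in essence a summary of the computations already carried out in this subsection, so the work lies in organizing them.

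\emph{Geodesic foliation.} I would first treat $\s_{0}=\left\langle S_{0},\left.L_{geod}\right|_{S_{0}}=\left.\xi\right|_{S_{0}},\Omega=1\right\rangle$, so that $L=L_{geod}=\pv$. By Lemma~\ref{lemma}(1), $\chi=0$; hence \eqref{1stvarform} gives $\li_{L}\gi=0$, so the flow of $L$ is an isometry between leaves and $\left(\Phi_{v}\right)^{*}\lapp_{S_{v}}=\lapp_{S_{0}}$, while $L\phi=\tfrac12(tr\chi)\,\phi=0$, i.e.\ $\phi\in\vh$. Because $\kappa$ is globally constant, $\di\kappa=0$, so Lemma~\ref{lemma}(3) forces $\beta=0$ on $\hh$, and then Lemma~\ref{lemma}(4) gives $\li_{L}\zeta=0$; thus $\zeta^{\sharp}$ and $\divv\zeta^{\sharp}$ are constant along the null generators. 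Tracing Lemma~\ref{lemma}(5), $L\,tr\underline{\chi}=\tfrac{\kappa}{f}tr\underline{\chi}$ with $Lf=\kappa$ integrates to \eqref{trchibarlambda}, hence $\pv tr\underline{\chi}=\kappa\cdot\left.tr\underline{\chi}\right|_{S_{0}}\in\vh$ as in \eqref{tracechibar}. Feeding $\Omega=1$, $\chi=0$ and this identity into \eqref{adjoint} produces exactly \eqref{killingoperator}, every coefficient of which is invariant under the flow of $L$; this establishes $\left(\Phi_{v}\right)^{*}\o^{\s_{0}}_{v}=\o^{\s_{0}}_{0}$.

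\emph{General lapse.} Now let $\s=\left\langle S_{0},\left.L_{geod}\right|_{S_{0}}=\left.\xi\right|_{S_{0}},\Omega\right\rangle$ for an arbitrary smooth positive $\Omega$ on $\hh$. Since $\s$ and $\s_{0}$ are built from the same restricted field $\left.L_{geod}\right|_{S_{0}}$, they share the geodesic vector field $L_{geod}$ on all of $\hh$ — only $L=\Omega^{2}L_{geod}$ and the affine parameter change — so in \eqref{fromotherpaper} (with $\s_{0}$ as the lapse-$1$ foliation and $\s$ as the lapse-$\Omega$ foliation) the relating factor is $f\equiv1$; also $\chi^{\s}=\Omega\,\chi^{\s_{0}}=0$, so $\phi$ is the same element of $\vh$ for both foliations and $L\phi=0$. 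Therefore \eqref{fromotherpaper} reads $\o^{\s_{0}}\!\left(\tfrac1\phi\Theta\right)=\tfrac1{\Omega^{2}}\,\o^{\s}\!\left(\tfrac1\phi\Theta\right)$ for all $\Theta\in\vh$; restricted to any fixed leaf $S_{v}$ the functions $\tfrac1\phi\Theta|_{S_{v}}$ range over all of $C^{\infty}(S_{v})$ (since $\phi>0$ and every smooth function on $S_{v}$ extends uniquely along the generators to an element of $\vh$), so this is the operator identity $\tfrac1{\Omega^{2}}\o^{\s}_{v}=\o^{\s_{0}}_{v}$. Combining with the previous paragraph gives the first displayed claim, $\left(\Phi_{v}\right)^{*}\!\left(\tfrac1{\Omega^{2}}\o^{\s}_{v}\right)=\tfrac1{\Omega^{2}}\o^{\s}_{0}$, and taking kernels (using $Ker\big(\tfrac1{\Omega^{2}}\o^{\s}_{v}\big)=Ker\big(\o^{\s}_{v}\big)$) yields \eqref{kernelrelation}.

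\emph{Conservation laws and obstacle.} By Theorem~\ref{theorem}, $\mathcal{W}^{\s}=\mathcal{U}^{\s}=\big\{\Theta\in\vh:\o^{\s}\big(\tfrac1\phi\Theta\big)=0\text{ on }\hh\big\}$ with $\dim\mathcal{W}^{\s}=\dim\mathcal{U}^{\s}<\infty$. Since $\phi\in\vh$, multiplication by $\phi$ is a linear isomorphism of $\vh$, and $\Theta\in\mathcal{U}^{\s}$ precisely when $\tfrac1\phi\Theta\in Ker\big(\o^{\s}_{v}\big)$ for every $v$, which by \eqref{kernelrelation} holds as soon as it holds for one value of $v$. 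Hence $\mathcal{U}^{\s}=\phi\cdot Ker\big(\o^{\s}_{v}\big)=\{\phi\cdot f:f\in Ker(\o^{\s}_{v})\}$ with matching finite dimension, the last assertion. I expect the only delicate point to be the bookkeeping of the $\Phi_{v}$-identifications together with the definition \eqref{phi} of $\phi$ via the fixed diffeomorphism \eqref{phidiffeo} — one must check that comparing $\o^{\s}_{v}$ on the leaves of $\s$ with $\o^{\s_{0}}_{v}$ on those of $\s_{0}$ through the null-generator flow is compatible with all these identifications, and that $f\equiv1$ is genuinely justified (a change of null lapse with $\left.L_{geod}\right|_{S_{0}}$ held fixed leaves $L_{geod}$ unchanged on $\hh$, as it is propagated by the geodesic equation from the same data). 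These are routine given Section~\ref{sec:NullFoliationsandOpticalFunctions} but are where care is needed.
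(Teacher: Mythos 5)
Your proposal reproduces the paper's own argument: first establish \eqref{killingoperator} for the geodesic foliation via Lemma~\ref{lemma} and the global constancy of $\kappa$, observe that all its coefficients are Lie-dragged by $L$, then pass to a general lapse through the covariance relation \eqref{fromotherpaper} with $f\equiv 1$, and finally invoke Theorem~\ref{theorem} together with $\phi\in\vh$ to identify $\mathcal{W}^{\s}=\mathcal{U}^{\s}=\phi\cdot Ker(\o_{v}^{\s})$. The reasoning and the order of steps match the paper's; the extra remarks you add (surjectivity of $\Theta\mapsto\tfrac1\phi\Theta|_{S_{v}}$ onto $C^{\infty}(S_{v})$, and the $\Phi_{v}$ bookkeeping) are correct clarifications rather than a different route.
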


\subsection{Conservation laws on extremal black holes}
\label{sec:ConservationLawsOnExtremalHorizons}

By definition,  extremal black holes satisfy $\kappa=0$. In this case the operator $\mathcal{O}_{v}^{\s}$ takes the form 
\begin{equation}
\o_{v}^{\s}\psi=\lapp\psi+\divv\big(2\psi\cdot\zeta\big)
\label{operatorextremal}
\end{equation}
with respect to the foliation $\s$ given by \eqref{foliationkilling}. 
Following the argument of Lucietti and Reall \cite{hj2012} one obtains that $dim\mathcal{U}^{\s}=dimKer(\o_{v}^{\s})=1$ for all $v$. Indeed, since $\int_{S_{v}}\o^{\s}_{v}\psi=0$ for all $\psi$, the unique positive principal eigenfunction $\Psi$ of $\o_{v}^{\s}$ (see \cite{andersson, evans}) must lie in the kernel of $\o_{v}^{\s}$. This shows the following
\begin{proposition}
Let $\hh$ be an extremal horizon (i.e.~$\kappa=0$) of a four-dimensional Lorentzian manifold $(\m,g)$. If $\s$ is the foliation of $\hh$ given by \eqref{foliationkilling} then 
\[dim\,\mathcal{W}^{\s}=dim\,\mathcal{U}^{\s}=dimKer(\o_{v}^{\s})=1.\]
Therefore, $\hh$ admits a unique conservation law with respect to the foliation $\s$ (and thus with respect to any foliation). This law coincides with the 
conservation law found in \cite{aretakis4, hj2012, murata2012}.
\label{propositionextremal}
\end{proposition}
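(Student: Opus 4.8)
The plan is to reduce the assertion, via Proposition~\ref{lastprop}, to the single statement $\dim\ker(\o_{v}^{\s})=1$ for one section $S_{v}$, and then to extract this from the principal eigenvalue (Krein--Rutman type) theory applied to the non-self-adjoint elliptic operator \eqref{operatorextremal}. First I would invoke Proposition~\ref{lastprop}: an extremal horizon is a Killing horizon with $\kappa=0$, so for the foliation \eqref{foliationkilling} the operators $\o_{v}^{\s}$ coincide, via the diffeomorphisms $\Phi_{v}$, with $\o_{0}^{\s}$ (hence the kernel is $v$-independent, cf. \eqref{kernelrelation}), and \eqref{killingapotelesma} gives $\dim\mathcal{W}^{\s}=\dim\mathcal{U}^{\s}=\dim\!\bigl(\phi\cdot\ker(\o_{v}^{\s})\bigr)$. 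Since $\phi>0$ is a fixed positive function on $S_{v}$, multiplication by $\phi$ is a linear isomorphism, so it remains only to prove $\dim\ker(\o_{v}^{\s})=1$.

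The key structural point is that, in the extremal case, \eqref{operatorextremal} is purely of divergence form on $(S_{v},\gi)$:
\[
\o_{v}^{\s}\psi=\lapp\psi+\divv\!\bigl(2\psi\cdot\zeta^{\sharp}\bigr)=\divv\!\bigl(\nabb\psi+2\psi\cdot\zeta^{\sharp}\bigr),
\]
the absence of a potential term reflecting the identities $\chi=0$ (Lemma~\ref{lemma}) and $\pv(\tr\chib)=0$ (from \eqref{tracechibar} with $\kappa=0$). By the divergence theorem on the closed surface $S_{v}$ one therefore has $\int_{S_{v}}\o_{v}^{\s}\psi\,d\mu_{_{\gi}}=0$ for every $\psi\in C^{\infty}(S_{v})$.

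As in Section~\ref{sec:EllipticTheoryOnHh}, $\o_{v}^{\s}$ is a second order elliptic operator on the compact surface $S_{v}$ with discrete spectrum; by the principal eigenvalue theory for such (not necessarily self-adjoint) operators (see \cite{andersson,evans}) it admits a real, simple \emph{principal} eigenvalue $\lambda_{1}$ with a strictly positive eigenfunction $\Psi$, $\o_{v}^{\s}\Psi=\lambda_{1}\Psi$. Integrating this relation over $S_{v}$ and using the vanishing of the integral above yields $\lambda_{1}\int_{S_{v}}\Psi\,d\mu_{_{\gi}}=0$; since $\Psi>0$ this forces $\lambda_{1}=0$. Hence $\ker(\o_{v}^{\s})$ is exactly the (one-dimensional, by simplicity) eigenspace of $\lambda_{1}$, i.e. $\ker(\o_{v}^{\s})=\langle\Psi\rangle$, and by the reduction above $\dim\mathcal{W}^{\s}=\dim\mathcal{U}^{\s}=1$. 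The resulting conserved charge, unique up to a multiplicative constant, is $\int_{S_{v}}Y^{\s}\bigl(\phi\cdot\psi\bigr)\cdot\bigl(\phi\cdot\Psi\bigr)\,d\mu_{_{\mathbb{S}^{2}}}$.

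To identify this with the conservation law of \cite{aretakis4,hj2012,murata2012}, one uses $\dim\mathcal{W}^{\s}=1$: the charge exhibited in those references is nontrivial and of the form of Definition~\ref{definitionconservationlaw} (compare the Examples of Section~\ref{sec:examples}), so it corresponds to a nonzero element of the one-dimensional space $\mathcal{W}^{\s}$, hence to a scalar multiple of $\phi\cdot\Psi$, and thus agrees up to an overall constant with the charge above. The only substantive ingredient, beyond bookkeeping already contained in \eqref{operatorextremal} and Proposition~\ref{lastprop}, is the existence of a simple principal eigenvalue with a positive eigenfunction for $\o_{v}^{\s}$; this, together with the divergence structure forced by $\kappa=0$ (which pins $\lambda_{1}=0$), is exactly what makes the kernel one-dimensional.
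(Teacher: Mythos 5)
Your proof is correct and follows essentially the same route as the paper's: reduce via Proposition~\ref{lastprop} and \eqref{killingapotelesma} to showing $\dim\ker(\o_{v}^{\s})=1$, observe that with $\kappa=0$ the operator \eqref{operatorextremal} is in divergence form so $\int_{S_v}\o_v^{\s}\psi\,d\mu_{\gi}=0$, and then invoke principal-eigenvalue theory to conclude $\lambda_1=0$ is the simple principal eigenvalue with a positive eigenfunction. The only difference is that you spell out the integration step that forces $\lambda_1=0$ and the $\phi$-isomorphism reduction more explicitly, which is helpful but not a change of method.
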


The above conservation law coupled with dispersive estimates away from the event horizon forces higher order derivatives of generic solutions to the wave equation to blow up asymptotically along the event horizon (see \cite{aretakis1,aretakis3}).  We remark that the previous result is in stark contrast with the subextremal case for which Dafermos and Rodnianski \cite{lecturesMD,enadio,tria} have derived quantitative decay estimates for all higher order derivatives in the exterior region up to and including  the event horizon.

\subsection{Gluing constructions for sub-extremal black holes}
\label{sec:GluingConstructionsForNonExtremalHorizons}

We next consider sub-extremal Killing horizons and specifically such that $\kappa>0$. We also assume that there exists a section $S_{0}$ such that $\left.tr\underline{\chi}\right|_{S_{0}}<0$ on $S_{0}$.\footnote{Note that under these assumptions we can use the calculations in \cite{aretakiselliptic} and the method of the present subsection to deduce that there must exist a section $S$ such that $\left.tr\underline{\chi}\right|=c,$ where $c<0$ is constant on $S$. } Let $\s$ be the foliation given by \eqref{foliationkilling} such that its  `initial' section is the above one. In view of \eqref{killingoperator} the operator $\o_{v}^{\s}$ can then be written as
\[\o_{v}^{\s}\psi=\lapp\psi+\divv\big(2\psi\cdot\zeta\big)+\left[\left.tr\underline{\chi}\right|_{S_{0}}\cdot\kappa\right]\cdot\psi.\]
Let $\Psi>0$ be the unique (up to rescaling) positive principal eigenfunction of $\o_{v}^{\s}$ and let $\lambda$ be its principal (maximum) eigenvalue. Then, we immediately obtain
\[ \int_{S_{v}}\left(\left.tr\underline{\chi}\right|_{S_{0}}\cdot\kappa\right)\cdot \Psi \, d\mu_{_{\gi}}=\lambda \cdot\int_{S_{v}}\Psi\, d\mu_{_{\gi}}. \]
The left hand side is manifestly negative which forces the maximum eigenvalue $\lambda$ to be strictly negative and hence $ Ker\big(\o_{v}^{\s}\big)=\left\{0\right\}$. We have thus shown the following
\begin{proposition}
Let $\hh$ be a Killing horizon with positive surface gravity $\kappa>0$. We additionally assume that there is a  spherical section $S_{0}$ of $\hh$ with negative transversal null expansion, i.e.~$\big.tr\underline{\chi}\big|_{S_{0}}<0$. Then $\hh$ does not admit any conservation laws (i.e.~$dim\,\mathcal{W}^{\s}=0$) and hence gluing in the sense of Definition \ref{firstordergluingdefinition} of characteristic data is always possible on $\hh$. 
\label{gluingnonextremal}
\end{proposition}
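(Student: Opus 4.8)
The plan is to reduce, exactly as in the extremal case, to the computation of the kernel $Ker\big(\o^{\s}_{v}\big)$ of the elliptic operator for the natural foliation, and to show it is trivial via a principal-eigenvalue argument. First I would fix the foliation $\s$ given by \eqref{foliationkilling} whose initial section $S_{0}$ is the one on which $\left.tr\underline{\chi}\right|_{S_{0}}<0$; by Corollary \ref{corollarychangeofoliation} it is enough to rule out conservation laws with respect to this one foliation. For it, \eqref{killingoperator} gives
\[
\o^{\s}_{v}\psi=\lapp\psi+\divv\big(2\psi\cdot\zeta\big)+\big[\left.tr\underline{\chi}\right|_{S_{0}}\cdot\kappa\big]\cdot\psi,
\]
and by Proposition \ref{lastprop} this operator is, modulo the identification $\Phi_{v}$, independent of $v$, so it suffices to analyse $\o^{\s}_{0}$ on the compact surface $S_{0}$.

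Next I would invoke the standard theory of the principal eigenvalue of a second-order elliptic operator on a compact manifold (as in \cite{andersson, evans}): there is a real, simple eigenvalue $\lambda$ --- the principal eigenvalue --- carrying a strictly positive eigenfunction $\Psi>0$, and $\lambda$ is strictly larger than the real part of every other point of the discrete spectrum of $\o^{\s}_{0}$. Integrating $\o^{\s}_{0}\Psi=\lambda\Psi$ over $S_{0}$, and using that $\lapp\Psi$ and $\divv\big(2\Psi\cdot\zeta\big)$ are total divergences and hence integrate to zero on the closed surface $S_{0}$, one gets
\[
\lambda\int_{S_{0}}\Psi\,d\mu_{_{\gi}}=\kappa\int_{S_{0}}\left.tr\underline{\chi}\right|_{S_{0}}\cdot\Psi\,d\mu_{_{\gi}}.
\]
The right-hand side is strictly negative, since $\kappa>0$, $\left.tr\underline{\chi}\right|_{S_{0}}<0$ and $\Psi>0$, whereas $\int_{S_{0}}\Psi\,d\mu_{_{\gi}}>0$; hence $\lambda<0$. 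Since $\lambda$ dominates the real part of the entire spectrum, $0$ cannot be an eigenvalue of $\o^{\s}_{0}$, i.e.\ $Ker\big(\o^{\s}_{v}\big)=\{0\}$ for every $v$.

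Finally, having $Ker\big(\o^{\s}_{v_{0}}\big)=\{0\}$ for some (hence every) $v_{0}\in[0,1]$, Proposition \ref{pert1prop} shows that arbitrary first-order characteristic data can be glued along $\hh$, so gluing in the sense of Definition \ref{firstordergluingdefinition} is always possible; equivalently, by Theorem \ref{theorem} together with Proposition \ref{lastprop}, $\dim\mathcal{W}^{\s}=\dim\mathcal{U}^{\s}=\dim\big(\phi\cdot Ker(\o^{\s}_{v})\big)=0$, so $\hh$ admits no conservation laws. The one step needing care is the eigenvalue argument: the operator $\o^{\s}_{0}$ fails to be self-adjoint because of the torsion term hidden in $\divv\big(2\psi\cdot\zeta\big)$, so one cannot rely on a variational characterisation; instead one uses the Krein--Rutman-type fact that a general second-order elliptic operator on a compact manifold still possesses a real, simple principal eigenvalue with a positive eigenfunction which maximises the real part over the spectrum --- this is precisely what turns the sign computation $\lambda<0$ into triviality of the kernel.
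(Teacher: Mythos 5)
Your proof is correct and follows essentially the same route as the paper's: fix the foliation whose initial section has $tr\underline{\chi}|_{S_0}<0$, use the $v$-independence from Proposition~\ref{lastprop} to reduce to a single operator, integrate the principal eigenvalue equation to conclude $\lambda<0$ by positivity and the sign of $\kappa\,tr\underline{\chi}|_{S_0}$, and hence $Ker(\mathcal{O}^{\s}_v)=\{0\}$, so Proposition~\ref{pert1prop} and Theorem~\ref{theorem} give the conclusion. The only difference is that you make explicit the Krein--Rutman--type input (the principal eigenvalue of a non-self-adjoint second-order elliptic operator on a compact surface is real, simple, carried by a positive eigenfunction, and dominates the real part of the spectrum), which the paper uses implicitly when it calls $\lambda$ the ``principal (maximum) eigenvalue.''
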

We note that the non-existence of  conservation laws on the event horizon $\hh$ of a subextremal Kerr black hole $(|a|<M)$ does not follow from the previously mentioned decay results of Dafermos and Rodnianski. Indeed, according to Definition \ref{definitionconservationlaw}, if $\hh$ admitted conservation laws then those would involve the $Y^{\s}$ derivative of the scalar field $\psi$. However, Dafermos and Rodnianski have shown decay results for the translation-invariant derivative $Y^{dec}$, where 
\[Y^{\s}=e^{\kappa v}\cdot Y^{dec}.   \]

We also remark that the assumption on the negativity of the transversal null expansion is necessary. For a counterexample see the second example in Section \ref{sec:AnExampleOfAMetricForWhichC1Neq0AndC20}.

\section{The Newman--Penrose constants}
\label{sec:TheNullInfinityMathcalI}

In this section we will show that the Newman--Penrose constants (see \cite{np1}) can be recovered by Theorem \ref{theorem}. We first start with a discussion about the geometry of the null infinity $\mathcal{I}$ of general asymptotically flat four-dimensional Lorentzian manifolds. 

\subsection{The null infinity $\mathcal{I}$}
\label{sec:TheAsympoticGauge}

Let $(\m,g)$ be a general asymptotically flat four-dimensional Lorentzian manifold. 
Consider an outgoing null hypersurface $\hh_{0}=\left\{u=0\right\}$ of $\m$ with future complete null generators. Let $\s=\left\langle S_{0},L_{geod},\Omega=1\right\rangle$ be a foliation of $\hh_{0}$ such that  $S_{0}$ is a surface embedded in a Cauchy hypersurface $\Sigma$ terminating at the spacelike infinity $i^{0}$ (see figure below). Let $\tau$ be the affine parameter of $L_{geod}$ on $\hh_{0}$, i.e.~$\left.\tau\right|_{S_{0}}=0$ and $L_{geod}(\tau)=1$, and let $S_{\tau}$ denote the corresponding sections on $\hh_{0}$. Let $\underline{\hh}_{\tau}$ denote the ingoing null hypersurface of $\m$ generated by incoming null geodesics normal to  $S_{\tau}$. We consider the  collection $\left\{\mathcal{D}_{\tau},\, \tau\geq 0\right\}$ of the double null foliations generated by
$\mathcal{D}_{\tau}=\left\langle S_{\tau}, \left.L_{geod}\right|_{S_{\tau}}, \left.\Omega\right|_{\hh_{0}}=1, \left.\Omega\right|_{\underline{\hh}_{\tau}}=1 \right\rangle$ 
where $\left.L_{geod}\right|_{S_{\tau}}$ is normalized such that 
\begin{equation}
tr\chi+tr\underline{\chi}=0\ : \ \text{ on }\, S_{\tau}.
\label{normalizationnp}
\end{equation}
\begin{figure}[H]
   \centering
		\includegraphics[scale=0.115]{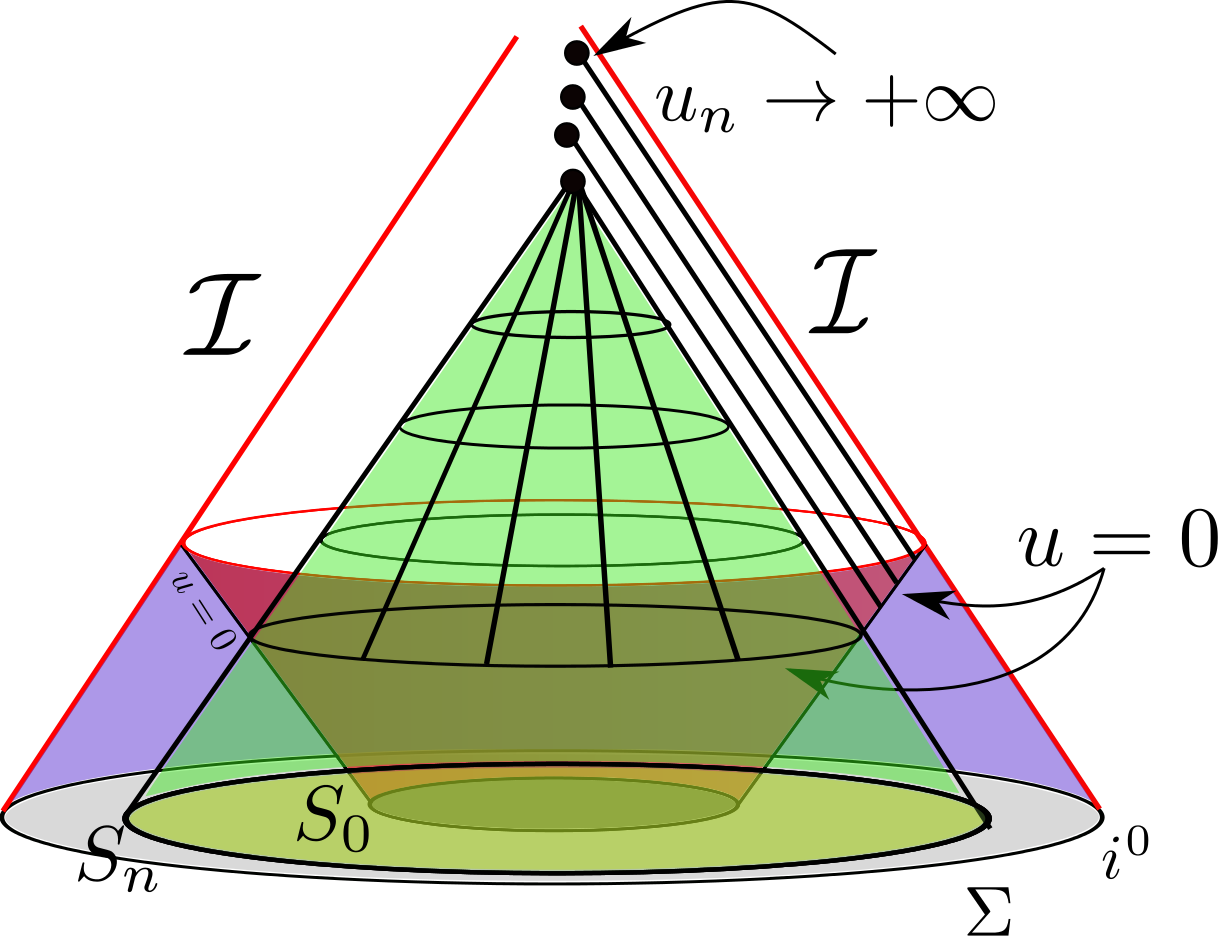}
		\label{fig:nullinf2}
\end{figure}
Consider now the induced foliation $\underline{\s}_{\tau}=\left\langle S_{\tau}, \left.\underline{L}_{geod}\right|_{S_{\tau}},  \left.\Omega\right|_{\underline{\hh}_{\tau}}=1 \right\rangle$ on $\underline{\hh}_{\tau}$. Let $\gi$ be the induced metric, $\nabb$ the induced covariant derivative and $\lapp$ the induced Laplacian. Let also $A$ be the area and $r=\sqrt{A/4\pi}$ the radius function. 

Suppose that  as $\tau\rightarrow +\infty$ we have $r(S_{\tau})\rightarrow +\infty$.
Then, the null infinity $\mathcal{I}$ is defined to be the limit of the hypersurfaces $\underline{\hh}_{\tau}$ as $\tau\rightarrow +\infty$.  In view of its limiting character, quantites associated to $\mathcal{I}$ can only be understood in a limiting, appropriately rescaled, sense. The limit of the foliations $\underline{\mathcal{S}}_{\tau}$, as $\tau\rightarrow+\infty$, gives rise to a foliation of $\mathcal{I}$. We have the following asymptotic behavior (see, for example, \cite{memorychistodoulou}):
\begin{equation}
\begin{split}
&\frac{1}{r^{2}}\gi\rightarrow \gi_{\mathbb{S}^{2}}, \ \ \ \frac{1}{r^{2}}\sqrt{\gi}\rightarrow \sin\theta,\ \ \  {r^{2}}\nabb\rightarrow \nabb_{\mathbb{S}^{2}}, \ \ \ r^{2}\lapp\rightarrow \lapp_{\mathbb{S}^{2}}, \ \ \ \frac{1}{r}\phi\rightarrow 1,\\
&\partial_{v}r\rightarrow \frac{1}{\sqrt{2}}, \ \ \ \partial_{u}r\rightarrow- \frac{1}{\sqrt{2}}, \ \ \ r\zeta\rightarrow Z,
\\&  rtr\chi\rightarrow \sqrt{2}, \ \ \ rtr\underline{\chi}\rightarrow -\sqrt{2}, \ \ \ r^{2}\partial_{v}tr\underline{\chi}\rightarrow 1, \ \ \ r^{2}\partial_{u}tr{\chi}\rightarrow 1,\\
\end{split}
\label{asym}
\end{equation}
\begin{figure}[H]
   \centering
		\includegraphics[scale=0.125]{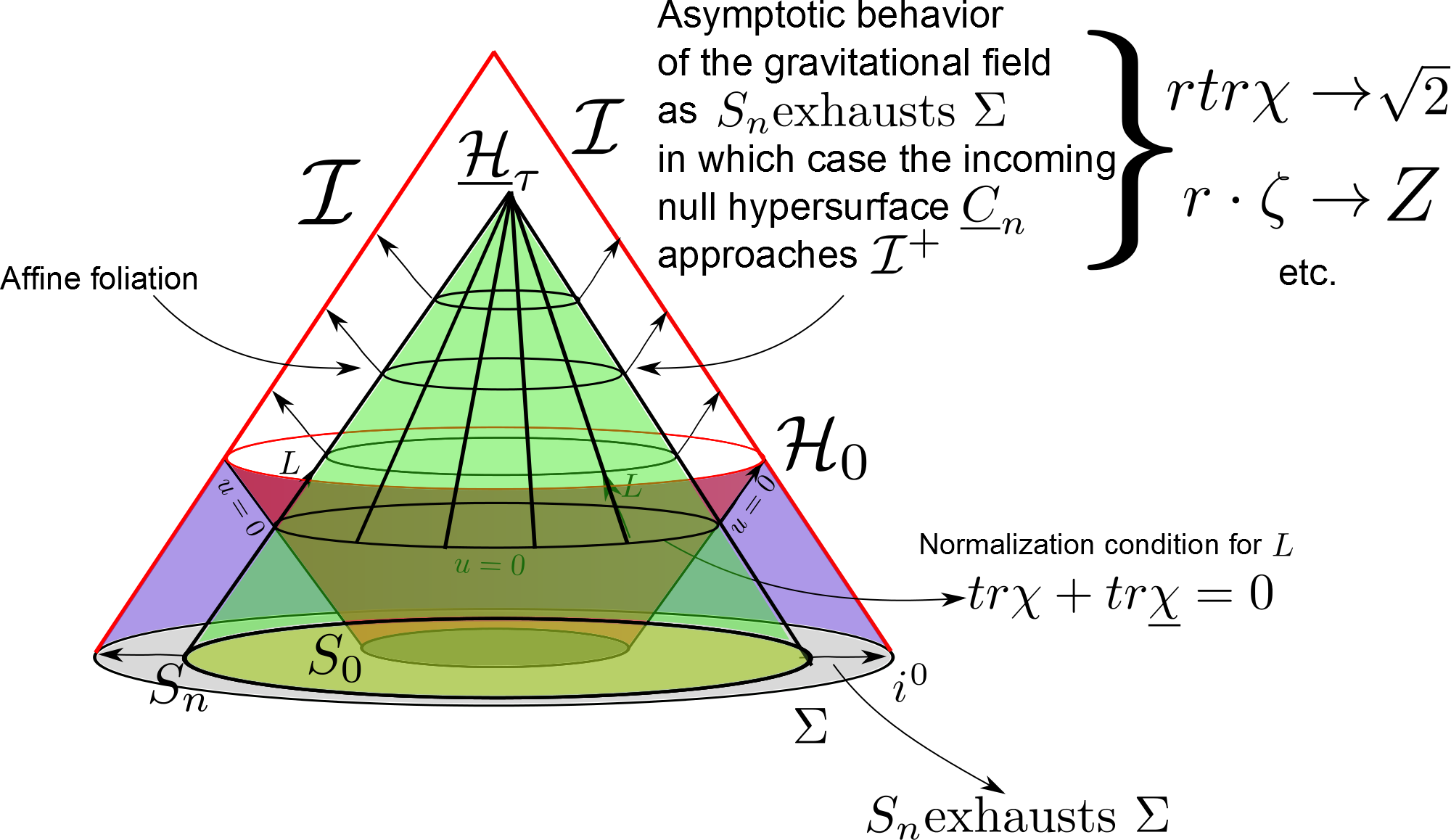}
		\label{fig:nullinfpeel}
\end{figure}
\noindent where $Z$ is a 1-form on $\mathbb{S}^{2}$. The above limits should be understood in terms of  the pullback of the induced tensor fields to the standard sphere via tha diffeomorphism $\Phi_{u,v}$ (see Section \ref{sec:TheDoubleNullFoliation}). 

%REMARK3: For example, for the $\partial_{v}r$ term it suffices to see the %following:
%\[4\pi\partial_{v}r^{2}=\partial_{v}\int_{S_{v}}=\int_{S_{v}}\partial_{v}\sqrt{\gi}=\int_{S_{v}}tr\chi\rightarrow \frac{\sqrt{2}}{r}A=\frac{\sqrt{2}}{r}r^{2}4\pi \Rightarrow \partial_{v}r^{2}\rightarrow \sqrt{2}r\Rightarrow \partial_{v}r\rightarrow\frac{1}{\sqrt{2}}.   \]

%REMARK4: Also
%\[\partial_{u}tr\underline{\chi}\rightarrow -\frac{1}{r^{2}}.   \]Would Raychaudhuri give the same result? I guess, the behavior of $\hat{\underline\chi}$ is given by  Raychaudhuri, in which case it is ok. 

\subsection{The Newman--Penrose constants}
\label{sec:TheNewmanPenroseConstants}

Let us briefly recall the (first-order) Newman--Penrose constant. If we  use the coordinate system $(u,r,\theta^{1},\theta^{2})$ to make the dependence on the powers of $r$ more explicit then we obtain
\[\psi(u,r,\theta)=\frac{\a_{1}(u,\theta^{1},\theta^{2})}{r}+\frac{\a_{2}(u,\theta^{1},\theta^{2})}{r^{2}}+O\left(\frac{1}{r^{3}}\right)\]
close to null infinity $\mathcal{I}$. Note that $\partial_{u}$ is tangential to $\mathcal{I}$ and $\partial_{v}$ is transversal. 
Here, \[ \a_{1}(u,\theta^{1},\theta^{2})=\lim_{r\rightarrow+\infty}(r\psi)(u,r,\theta^{1},\theta^{2}) \]
is the radiation field of $\psi$ on $\mathcal{I}$. 
 Furthermore,
\[ \partial_{v}(r\psi)= -\sqrt{2}\cdot \a_{2}(u,\theta^{1},\theta^{2})\cdot \frac{1}{r^{2}}+O\left(\frac{1}{r^{3}}\right).\]
Therefore, as $r\rightarrow +\infty$ we have $\partial_{v}(r\psi)\rightarrow 0$ and $r\cdot \partial_{v}(r\psi)\rightarrow 0$. The higher order non-trivial quantity
\[r^{2}\cdot \partial_{v}(r\psi)=-\sqrt{2}\cdot \a_{2}(u,\theta^{1},\theta^{2})\]  
gives rise to the Newman--Penrose costant
\begin{equation}
\lim_{r\rightarrow+\infty}\int_{S_{u}}r^{2}\partial_{v}(r\psi)\, d\mu_{_{\mathbb{S}^{2}}}=-\sqrt{2}\cdot \int_{S_{u}}\a_{2}(u,\theta) \, d\mu_{_{\mathbb{S}^{2}}}
\label{npconstant}
\end{equation}
which turns out to be  conserved (i.e. independent of $u$) along $\mathcal{I}$.

We will show that the (first-order) Newman--Penrose constant is a limiting case of Theorem \ref{theorem}. The  wave equation with respect to the double null foliation $\mathcal{D}_{\tau}$ can be written on $\underline{\hh}_{\tau}$ as follows
\begin{equation}
-2\partial_{u}\pv(\phi\cdot\psi) +\phi\cdot \mathcal{Q}^{\underline{\s}_{\tau}}\psi=0,
\label{wenullinf}
\end{equation}
where 
\begin{equation}
\mathcal{Q}^{\underline{\s}_{\tau}}\psi=  \lapp\psi-2\zeta^{\sharp}\cdot \nabb\psi+ \left[\pu( tr{{\chi}})+\frac{1}{2}(tr\underline{\chi})\cdot(tr\chi)\right]\cdot\psi.
\label{ovoperatorinf}
\end{equation}
Similarly, the adjoint operator is given by
 	\begin{equation*}
	\begin{split}
\o^{\underline{\s}_{\tau}}\psi=\lapp\psi+2\zeta^{\sharp}\cdot \nabb\psi+\left[2\divv\,\zeta^{\sharp}+\pu( tr{\chi})+\frac{1}{2}(tr\underline{\chi})\cdot(tr\chi)\right]\cdot\psi
\end{split}
\end{equation*}
on $\underline{\hh}_{\tau}$. Let $\underline{\s}$ denote the limiting foliation  $\lim_{\tau\rightarrow+\infty}\underline{\s}_{\tau}$. Clearly, in this limit $\underline{\hh}_{\tau}\rightarrow\mathcal{I}$. We define the following rescaled limiting operator
\begin{equation}
\o^{\underline{\s}}\psi=\lim_{r\rightarrow+\infty}\o^{\underline{\s}_{\tau}}\big(r^{2}\cdot\psi\big) \ : \ \text{ on }\, \mathcal{I}.
\label{limitingoperator}
\end{equation}
In view of the  asymptotics \eqref{asym}, we obtain
	\begin{equation}
	\begin{split}
\o^{\underline{S}}\psi=\lapp_{{\mathbb{S}^{2}}}\psi. 
\end{split}
\label{operatorfori}
\end{equation}
\textbf{The key observation here is that, in view of the asymptotics \eqref{asym}, $\zeta$, and more importantly, $  r^{2}\cdot\bigg(\partial_{u}tr{\chi}+\frac{1}{2}tr\chi tr\underline{\chi}\bigg)$  decay like $1/r$} and hence all the terms but the Laplacian  vanish at the limit $r\rightarrow +\infty$.

In view of the presence of limiting quantities we need to be particularly careful in order to apply Theorem \ref{theorem}. Indeed, the integrability condition \eqref{1inte} does not hold on $\mathcal{I}$ since the measure of integration has infinite area element. Using however that on $\mathcal{I}$ we have $d\mu_{_{\gi}}=r^{2}\cdot d\mu_{_{\mathbb{S}^{2}}}$ and hence \eqref{1inte}, using \eqref{wenull1}, takes the following limiting form
\begin{equation}
\int_{S_{u}}\Big(\mathcal{Q}_{u}^{\underline{\s}}(\psi)\Big) \cdot\Big( \big(E^{\underline{S}}\big)_{u}\Big)\, d\mu_{_{\gi}}=\int_{S_{u}}\psi\cdot \o^{\underline{\s}}_{u}\Big(\big(E^{\s}\big)_{u}\Big)\, d\mu_{_{\gi}}=\int_{S_{u}}(r\psi)\cdot \o^{\underline{\s}}_{u}\Big(r\cdot \big(E^{\underline{\s}}\big)_{u}\Big) \,d\mu_{_{\mathbb{S}^{2}}} =0
\label{limitingform}
\end{equation}
where, for the last equation, we take $\big(E^{\underline{\s}}\big)_{u}$ such that \begin{equation}
r\cdot\big(E^{\underline{\s}}\big)_{u}\in Ker\big(\o_{u}^{\underline{\s}}\big).
\label{eq:notethat}
\end{equation} Note that $r\psi$, being the radiation field of $\psi$ on $\mathcal{I}$, is finite. 

Recall that the kernel $\mathcal{W}^{\underline{\s}}_{\mathcal{I}}$ of the conservation laws on $\mathcal{I}$ consists of all functions $\Theta^{\underline{\s}}$ which are ``constant'' along the null generators of $\mathcal{I}$ such that 
\begin{equation}
\Theta^{\underline{\s}}=\phi\cdot \big(E^{\underline{\s}}\big)_{u}=r\cdot \big(E^{\underline{\s}}\big)_{u}.
\label{rescaledu}
\end{equation}
In view of \eqref{limitingoperator}, \eqref{operatorfori} and \eqref{eq:notethat} we can (only) take $\big(E^{\underline{\s}}\big)_{u}=r$ and hence 
\begin{equation}
\Theta^{\underline{\s}}=r^{2}.
\label{eq:thetanp}
\end{equation}
Modulo a trivial rescaling we have, $Y^{\underline{\s}}=\pu$, where $Y^{\underline{\s}}$ is the vector field defined in Section \ref{sec:ConservationLawsForTheWaveEquations},  and in view of the asymptotic behavior of the function $r$ we can in fact take
\begin{equation}
Y^{\underline{\s}}=\partial_{r} \ : \ \text{ on }\, \mathcal{I}{+}.
\label{ynp}
\end{equation}
Hence, the conserved charge \eqref{eq:integrals} on $\mathcal{I}$ is equal to
\begin{equation}
\lim_{r\rightarrow+\infty}\int_{S_{u}}r^{2}\cdot \partial_{r}(r\psi)\,d\mu_{_{\mathbb{S}^{2}}},
\label{conpcha}
\end{equation}
which coincides with the Newman--Penrose constant \eqref{npconstant}.
\begin{remark}
Clearly, the function $r$ is not constant on the incoming null hypersurfaces $\underline{\hh}_{\tau}$. However, it can be considered constant on $\mathcal{I}$ in a limiting sense . As far as the conservation laws are concerned we have the following: Since 
  $\phi-r\in O\left(\frac{1}{r^{a}}\right),\ a>0$, we obtain
\begin{equation*}
\begin{split}  r^{2}\partial_{u}\partial_{v}(\phi\cdot\psi)&=\partial_{u}\big(r^{2}\partial_{v}(\phi\cdot\psi)\big)-2r\partial_{u}r\partial_{v}(\phi\cdot\psi)\\
&=\left[\partial_{u}\big(r^{2}\partial_{v}(r\psi)\big)-2r\partial_{u}r\partial_{v}(r\psi)+\partial_{u}\big(r^{2}\partial_{v}(r^{-a}\psi)\big)-2r\partial_{u}r\partial_{v}(r^{-a}\psi)\right]\\
&\rightarrow \partial_{
u}\big(r^{2}\partial_{v}(r\psi)\big).
\end{split}
\end{equation*}
\label{remarknkconse}
\end{remark}
\noindent The restriction $\o_{u}^{\underline{\s}}$ on $S_{u}$ of the operator $\o^{\underline{\s}}$ on $\mathcal{I}$ is given by
\begin{equation}
\o_{u}^{\underline{\s}}=\lapp_{\mathbb{S}^{2}}
\label{eq:oni}
\end{equation}
and hence is independent of $u$. This ``$u$-invariance''  of the operator $\mathcal{O}^{\underline{\s}}_{u}$ is due to the BMS symmetry group of $\mathcal{I}$ (see \cite{wald}). Recall from Section \ref{sec:KillingHorizons} that a similar result holds for Killing horizon and in fact the kernel of $\o_{u}^{\underline{\s}}$ is isomorphic to the kernel of the associated operator on extremal horizons. This reveals yet another common property of $\mathcal{I}$ and extremal horizons. 

If $\mathcal{W}^{\underline{\s}}$ denotes the kernel of the (limiting) conservation laws on $\mathcal{I}$ and $\mathcal{U}^{\underline{\s}}$ denotes the appropriately rescaled kernel of the operator $\o^{\underline{\s}}$ (see \eqref{rescaledu}) then, in view of \eqref{operatorfori}, we have
\[dim\,\mathcal{W}^{\underline{\s}}=dim\,\mathcal{U}^{\underline{\s}}=1\]
and, therefore, there exists \textbf{only} one (non-trivial limiting) conservation law on $\mathcal{I}$, namely that given by the Newman--Penrose constants.  Moreover, the result of Section \ref{sec:EffectOfGaugeOnConservationLaws} applies for this conservation law. Specifically, if ${\underline{\s}}'$ is a foliation of $\mathcal{I}$ and $\widetilde{Y}^{\underline{\s}'}$ denotes the unique null conjugate to ${\underline{\s}}'$ vector field normalized  such that $\widetilde{Y}^{\underline{\s}'}r=1$ then the integrals
\[\lim_{r\rightarrow+\infty}\int_{S'_{u'}}r^{2}\cdot \widetilde{Y}^{\underline{\s}'}(r\cdot\psi) \, d\mu_{_{\mathbb{S}^{2}}} \]
are conserved, i.e.~independent of $u$. 

Summarizing  we have shown the following
\begin{proposition}
Let $\underline{\s}=\big(S_{u}\big)_{u\in\mathbb{R}}$ be a foliation of the null infinity $\mathcal{I}$ of an asymptotically flat spacetime $(\m,g)$, as defined in Section \ref{sec:TheAsympoticGauge}. Then, the appropriately rescaled operator $\mathcal{O}^{\underline{\s}}$ given by \eqref{operatorfori} is $u$-invariant, i.e.~the operators $\mathcal{O}^{\underline{\s}}_{u}$ do not depend on $u$. Moreover, $dim\,\mathcal{W}^{\underline{\s}}=dim\,\mathcal{U}^{\underline{\s}}=1$ and the unique associated conservation law on $\mathcal{I}$ gives rise to the (first-order) Newman--Penrose constant. 
\label{nullinfiprop}
\end{proposition}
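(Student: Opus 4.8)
The strategy is to combine the asymptotic expansions collected in Section~\ref{sec:TheAsympoticGauge} with the general classification of Theorem~\ref{theorem}, applied in a limiting sense along the family of incoming null hypersurfaces $\underline{\hh}_{\tau}$ that exhaust $\mathcal{I}$.

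First I would record that for each finite $\tau$ the restriction of the wave equation to $\underline{\hh}_{\tau}$ with respect to the double null foliation $\mathcal{D}_{\tau}$ reads \eqref{wenullinf}, with elliptic operator $\mathcal{Q}^{\underline{\s}_{\tau}}$ as in \eqref{ovoperatorinf} and adjoint $\mathcal{O}^{\underline{\s}_{\tau}}$ as displayed there; Theorem~\ref{theorem} then identifies, for each $\tau$, the kernel of the conservation laws on $\underline{\hh}_{\tau}$ with the conformal rescaling of $Ker(\mathcal{O}^{\underline{\s}_{\tau}}_{u})$. The first substantive step is to pass to the limit $\tau\to+\infty$: inserting the asymptotics \eqref{asym} term by term into the $r^{2}$-rescaling \eqref{limitingoperator}, one sees that $r\zeta\to Z$, so the first-order coefficient $2\zeta^{\sharp}$ and the term $2\divv(\zeta^{\sharp})$ both drop out in the limit, and --- crucially --- that, by the normalization \eqref{normalizationnp} together with asymptotic flatness, $r^{2}\big(\partial_{u}tr\chi+\tfrac12 tr\chi\, tr\underline{\chi}\big)$ decays like $1/r$; hence every zeroth- and first-order contribution disappears and only the rescaled Laplacian survives, which is \eqref{operatorfori}, i.e. $\mathcal{O}^{\underline{\s}}\psi=\lapp_{\mathbb{S}^{2}}\psi$. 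Since $\mathcal{O}^{\underline{\s}}_{u}=\lapp_{\mathbb{S}^{2}}$ carries no $u$-dependence, the asserted $u$-invariance is immediate --- this is precisely the imprint of the BMS symmetry of $\mathcal{I}$ --- and since the kernel of $\lapp_{\mathbb{S}^{2}}$ on $\mathbb{S}^{2}$ is the one-dimensional space of constants, $\dim\mathcal{U}^{\underline{\s}}=1$.

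The delicate step --- and the main obstacle --- is that Theorem~\ref{theorem} was proved for genuine null hypersurfaces with compact sections of finite area, whereas $\mathcal{I}$ arises only as a limit and the area element of its sections is infinite, so the integrability pairing \eqref{1inte} degenerates. I would circumvent this exactly as indicated around \eqref{limitingform}: rewrite \eqref{1inte} using $d\mu_{\gi}=r^{2}\,d\mu_{\mathbb{S}^{2}}$, which recasts it as a pairing of the \emph{finite} radiation field $r\psi$ against the rescaled candidate kernel element $r\cdot(E^{\underline{\s}})_{u}$, with the rescaled operator $\mathcal{O}^{\underline{\s}}_{u}$ in place of $\mathcal{O}^{\underline{\s}_{\tau}}_{u}$. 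In this rescaled form the construction underlying the proof of Theorem~\ref{theorem} (extending $\mathcal{F}^{\mathcal{D}}\psi$ subject to the smoothness, transversal-derivative and orthogonality conditions, with the kernels of the relevant elliptic operators varying smoothly) carries over to the limit without change, so that $\mathcal{W}^{\underline{\s}}=\mathcal{U}^{\underline{\s}}$ and hence $\dim\mathcal{W}^{\underline{\s}}=1$.

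Finally I would identify the surviving charge. Unwinding the normalizations, \eqref{eq:notethat} requires the rescaled kernel element to satisfy $r\cdot(E^{\underline{\s}})_{u}\in Ker(\mathcal{O}^{\underline{\s}}_{u})$, which, together with \eqref{operatorfori} and the asymptotic behavior of $r$, forces (up to an inessential constant) $(E^{\underline{\s}})_{u}=r$ and hence $\Theta^{\underline{\s}}=r^{2}$, while the conjugate normal vector field is, modulo a trivial rescaling, $Y^{\underline{\s}}=\partial_{r}$ on $\mathcal{I}$. Substituting into the conserved integral \eqref{eq:integrals} yields $\lim_{r\to+\infty}\int_{S_{u}}r^{2}\,\partial_{r}(r\psi)\,d\mu_{\mathbb{S}^{2}}$, which is precisely the first-order Newman--Penrose constant \eqref{npconstant}; since $\dim\mathcal{W}^{\underline{\s}}=1$ this is the unique non-trivial conservation law on $\mathcal{I}$, and the change-of-foliation results of Section~\ref{sec:EffectOfGaugeOnConservationLaws} apply to it in the same limiting sense, which completes the proof.
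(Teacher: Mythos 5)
Your proposal reproduces the paper's argument essentially verbatim: passing to the $r^{2}$-rescaled limiting operator via \eqref{limitingoperator} and the asymptotics \eqref{asym} to obtain $\mathcal{O}^{\underline{\s}}=\lapp_{\mathbb{S}^{2}}$ (hence $u$-invariance, one-dimensional kernel), rewriting the degenerate integrability pairing with $d\mu_{\gi}=r^{2}d\mu_{\mathbb{S}^{2}}$ as in \eqref{limitingform} so that the finite radiation field enters, and then identifying $\Theta^{\underline{\s}}=r^{2}$, $Y^{\underline{\s}}=\partial_{r}$ to recover the Newman--Penrose constant \eqref{npconstant}. The only cosmetic difference is that you make explicit that $\ker\lapp_{\mathbb{S}^{2}}$ is the constants, which the paper leaves implicit.
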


 Note that all the higher order Newman--Penrose constants can be obtained by commuting the wave equation with $\pv^{k}$ (see also Section \ref{sec:Genericity}).

\section{Spherical symmetry}
\label{sec:Genericity}

In this section we investigate the existence of higher order conservation laws. Although our method applies for general spacetimes for the sake of simplicity we focus on spherically symmetric backgrounds.

Let $\hi$ be a spherically symmetric null hypersurface and  $\s=\big(S_{v}\big)_{v\in\mathbb{R}}$ be a spherically symmetric  foliation on $\hh$ in a spherically symmetric four-dimensional Lorentzian manifold $(\m,g)$. Then the wave equation restricted on $\hh$ can be written as
\begin{equation*}
\begin{split}
\Box_{g}\psi
=&-2\partial_{u}\partial_{v}(r\psi)+ \o^{\s}(r\psi)=0,  
\end{split}
\end{equation*}
where \[\o^{\s}\psi=\Omega^{2}\frac{1}{r^{2}}\lapp_{\mathbb{S}^{2}}\psi +2\Omega^{2}\cdot \frac{(\partial_{u}\partial_{v}r)}{r}\cdot \psi.\]
We next assume that  $\Omega=1$ on $\hh$. Since all the expressions are spherically symmetric, in view of Theorem \ref{theorem}, we have a (first order) conservation law if and only if 
\[\frac{2\partial_{u}\pv r}{r}=\frac{l(l+1)}{r^{2}}, \]
for some $l\in\mathbb{N}$.

\subsection{Higher order conservation laws}
\label{sec:HigherOrderConservationLaws}

For a given $n\in\mathbb{N}$, we want to find necessary and sufficient conditions under which we can glue general data 
\begin{equation}
\big.\psi\big|_{S_{0}}, \ \ \ \left.\partial_{u}^{k}\psi\right|_{S_{0}},\ \ \  1\leq k\leq n
\label{data1}
\end{equation}
on $S_{0}$ to general data  
\begin{equation}
\big.\psi\big|_{S_{1}}, \ \ \ \left.\partial_{u}^{k}\psi\right|_{S_{1}},\  \ 1\leq k\leq n, 
\label{data2}
\end{equation}
on $S_{1}$ as explained in Section \ref{sec:TheCharacteristicInitialValueProblem}.

As we shall see the only obstruction to such gluings is higher order conservation laws. By decomposing $\psi$ in angular frequencies we can assume that it is supported on the $l$ angular frequency. Then, the wave equation on $\hh$ reads
\begin{equation*}
\begin{split}
\Box_{g}\psi
=&-2\partial_{u}\partial_{v}(r\psi)+ \o^{\s}(r\psi)=0,  
\end{split}
\end{equation*}
where \[\o^{\s}\psi=\Omega^{2}\cdot\left[2\frac{\pu\pv r}{r}-\frac{l(l+1)}{r^{2}} \right]\cdot\psi.\]
Set 
\begin{equation}
\Psi=r\psi, \ \ \ c_{1}=\Omega^{2}\cdot\left[\frac{\pu\pv r}{r}-\frac{l(l+1)}{2r^{2}} \right],\ \ \ Y=\pu.
\label{phi}
\end{equation}
The data for $\Psi $ must satisfy the constraint equations
\begin{equation}
\partial_{v}\big(Y\Psi \big)=c_{1}\cdot\Psi , \ \ \ \partial_{v}\big(Y^{k+1}_{u}\Psi \big)=Y^{k}\big(c_{1}\cdot\Psi \big), \ k\geq 1,
\label{constraint}
\end{equation}
 By integrating along the null generators we obtain
\begin{equation}
(Y\Psi )(\tau)=(Y\Psi )(0)+\int_{0}^{1}c_{1}\cdot\Psi ,
\label{con1}
\end{equation}
and more generally
\begin{equation}
(Y^{k+1}\Psi )(\tau)=(Y^{k+1}\Psi )(0)+ \int_{0}^{1}Y^{k}(c_{1}\cdot\Psi )
\label{conk}
\end{equation}
We consider first the case $k=1$. Clearly, we want to construct $\Psi $ such that 
\begin{equation*}
\begin{split}
\int_{0}^{1}c_{1}\cdot\Psi =\a_{1}, \ \ \ 
\int_{0}^{1}\Big((Yc_{1})\cdot\Psi + c_{1}\cdot (Y\Psi )\Big)d\tilde{\tau}=\a_{2}, 
 \end{split}
\end{equation*}
where $\a_{1},\a_{2}$ are arbitrary real numbers. Using \eqref{con1}, with $\tau\mapsto\tilde{\tau}$, the above is equivalent to 
\begin{equation*}
\begin{split}
\int_{0}^{1}\bigg[(Y c_{1} )\cdot\Psi + c_{1}\cdot (Y\Psi )(0)+ c_{1}\cdot \int_{0}^{\tilde{\tau}} c_{1}\cdot \Psi   \bigg]d\tilde{\tau}=\a_{2}.
\end{split}
\end{equation*}
If $c_{1}=0$ but and $Yc_{1}=0$ then we have two conservation laws. If $c_{1}=0$ but $Yc_{1}\neq 0$ then we have a first order conservation law and gluing for second order derivatives.  If $c_{1}\neq 0$  for all  $v\in[a,b]$ then we compute
\begin{equation*}
\begin{split}
\a_{2}=&\int_{0}^{a}\bigg[(Y c_{1} )\cdot\Psi + c_{1}\cdot (Y\Psi )(0) + c_{1}\cdot \int_{0}^{\tilde{\tau}} c_{1}\cdot \Psi \bigg]d\tilde{\tau}+\int_{b}^{1}\bigg[(Y c_{1} )\cdot\Psi + c_{1}\cdot (Y\Psi )(0) \bigg]d\tilde{\tau}\\
&+\int_{b}^{1}\bigg[ c_{1}\cdot \int_{0}^{a} c_{1}\cdot \Psi + c_{1}\cdot \int_{b}^{\tilde{\tau}} c_{1}\cdot \Psi \bigg]d\tilde{\tau}\\
&+\int_{a}^{b}\bigg[ c_{1}\cdot (Y\Psi )(0)+ c_{1}\cdot \int_{0}^{a} c_{1}\cdot \Psi \bigg]d\tilde{\tau} \\
&+\int_{a}^{b}\bigg[(Y c_{1} )\cdot\Psi + c_{1}\cdot \int_{a}^{\tilde{\tau}} c_{1}\cdot \Psi  \bigg]d\tilde{\tau}+
\Bigg(\int_{a}^{b} c_{1}\cdot \Psi \Bigg)\cdot\Bigg(\int_{b}^{1}c_{1}\Bigg). 
 \end{split}
\end{equation*}
Note that the quantities in the first three lines depend only of the values of $\Psi $ in the region $\mathcal{C}=[0,a]\cup[b,1]$ and on $(Y\Psi )(0)$. We want to extend $\Psi$ everywhere in $[0,1]$ so we can do gluing. It suffices to construct $\Psi $  in $[a,b]$ such that 
\begin{equation*}
\begin{split}
\int_{a}^{b} c_{1}\cdot \Psi =\b_{1},\\
\int_{a}^{b}\bigg[(Y c_{1} )\cdot\Psi  &+ c_{1}\cdot \int_{a}^{\tilde{\tau}} c_{1}\cdot \Psi  \bigg]d\tilde{\tau}=\b_{2},
\end{split}
\end{equation*}
where $\b_{1},\b_{2}$ are given.
Define the function
\begin{equation}
\Phi_{1} :[a,b]\rightarrow\mathbb{R}: \ \ \ \Phi_{1} (t)=\int_{a}^{t} c_{1}\cdot \Psi .
\label{phidef}
\end{equation}
Note $\Phi_{1} '(t)=c_{1}(t)\cdot\Psi (t)$ and hence, since $c_{1}\neq 0$ in $[a,b]$, the function  $\Phi_{1} $ determines $\Psi $ in $[a,b]$. Therefore, it suffices to construct $\Phi_{1}$ such that $\Phi_{1} (b)=\beta_{1}$ and all derivatives of $\Phi$ at $b$ are prescribed and such that 
\begin{equation}
\begin{split}
\int_{a}^{b}\bigg[\frac{Y c_{1} }{c_{1}}\cdot \Phi_{1} '+ c_{1}\cdot \Phi_{1}  \bigg]d\tilde{\tau}=\beta_{2}
\end{split}
\label{conditions1}
\end{equation}
Now, by integration by parts we obtain
\[\int_{a}^{b}\frac{Y c_{1} }{c_{1}}\cdot \Phi_{1} '=\frac{(Y c_{1} )(b)}{c_{1}(b)}\cdot\Phi_{1} (b)-\int_{a}^{b}\bigg(\frac{Y c_{1} }{c_{1}}\Bigg)'\cdot\Phi_{1} .   \]
Therefore, if $f=-\bigg(\frac{Y c_{1} }{c_{1}}\Bigg)'$ then \eqref{conditions1} can be rewritten
\begin{equation}
\begin{split}
\int_{a}^{b}\bigg[(f+ c_{1}) \cdot\Phi_{1}  \bigg]d\tilde{\tau}=\beta_{3}.
\end{split}
\label{conditions2}
\end{equation}
Clearly, this has a solution if and only if $(f+ c_{1}) \neq 0$ at a point $\tau_{0}\in[a,b]$. Since
\begin{equation*}
\begin{split}
f+c_{1}=-\frac{(Y c_{1} )'\cdot c_{1}-c_{1}'\cdot(Y c_{1} )-c_{1}^{3}}{c_{1}^{2}}
\end{split}
\end{equation*}
we obtain that we have glue data to second order if  $c_{1}\neq 0$ and 
\begin{equation}c_{2}=
c_{1}^{3}+(\partial_{v} c_{1}) (\partial_{u} c_{1}) - c_{1}\cdot (\partial_{v}\partial_{u} c_{1}) \neq 0. 
\label{h1condition}
\end{equation}
We will next show that if $c_{2}=0$ (and $c_{1}\neq 0$) on $\hh$ then we have a second order conservation law. 

 Recall the definition \eqref{phidef} for the function $\Phi_{1} $. Then, in view of \eqref{con1} and \eqref{conk} we obtain
\[\Phi_{1} '= c_{1}\cdot \Psi ,\]
\begin{equation*}
\begin{split}
 (Y\Psi )(\tau)&=(Y\Psi )(0)+\Phi_{1} (\tau)
 \end{split}
\end{equation*}
and
\begin{equation*}
\begin{split} (YY\Psi)(\tau)&=(YY\Psi)(0)+\int_{0}^{\tau}\bigg[(Yc_{1})\cdot\Psi+c_{1}\cdot (Y\Psi)(0)+c_{1}\cdot \int_{0}^{\tilde{\tau}}c_{1}\cdot\Psi \bigg]d\tilde{\tau}\\
 &=(YY\Psi)(0)+\int_{0}^{\tau}\bigg[\frac{Yc_{1}}{c_{1}}\cdot\Phi_{1}' +c_{1}\cdot(Y\Psi)(0)+c_{1}\cdot\Phi_{1} \bigg]d\tilde{\tau}\\
  &=(YY\Psi)(0)+(Y\Psi)(0)\cdot\bigg(\int_{0}^{\tau}c_{1}\bigg) +\frac{(Yc_{1})(\tau)}{c_{1}(\tau)}\cdot\Phi_{1}(\tau)+\int_{0}^{\tau}\bigg[\bigg(-\left(\frac{Yc_{1}}{c_{1}}\right)'+c_{1}\bigg)\cdot\Phi_{1}    \bigg]d\tilde{\tau}\\
   &=(YY\Psi)(0)+(Y\Psi)(0)\cdot\bigg(\int_{0}^{\tau}c_{1}\bigg) +\frac{(Yc_{1})(\tau)}{c_{1}(\tau)}\cdot\Phi_{1}(\tau)\\
      &=(YY\Psi)(0)+(Y\Psi)(0)\cdot\bigg(\int_{0}^{\tau}c_{1}\bigg) +\frac{(Yc_{1})(\tau)}{c_{1}(\tau)}\cdot\Big((Y\Psi)(\tau)-(Y\Psi)(0)\Big)\\
&=(YY\Psi)(0)+\frac{(Yc_{1})(\tau)}{c_{1}(\tau)}\cdot\big((Y\Psi)(\tau)\big)+(Y\Psi)(0)\cdot\bigg[\bigg(\int_{0}^{\tau}c_{1}\bigg)-\frac{(Yc_{1})(\tau)}{c_{1}(\tau)}\bigg]\\
&=(YY\Psi)(0)+\frac{(Yc_{1})(\tau)}{c_{1}(\tau)}\cdot\big((Y\Psi)(\tau)\big)-(Y\Psi)(0)\cdot\frac{(Yc_{1})(0)}{c_{1}(0)},\\
  \end{split}
\end{equation*}
where we repeatedly used that $c_{1}=\left(\frac{Y c_{1} }{c_{1}}\right)'$. Therefore,  if $c_{1}\neq 0$ and $c_{1}=\left(\frac{Y c_{1} }{c_{1}}\right)'$ then the quantity
\begin{equation}
Y^{2}\Psi -\frac{Y c_{1} }{c_{1}}\cdot Y\Psi 
\label{newcon1}
\end{equation}
is conserved on $\hi$. 

The above provides a scheme in order to find necessary and sufficient conditions for the existence of higher order conservations laws. Clearly, these conservation laws are the only obstruction to gluing.

We next consider third order gluing constructions.   We have
\begin{equation*}
\begin{split}
(Y^{3}\Psi )(\tau)-(Y^{3}\Psi )(0)=&\int_{0}^{\tau}\bigg[c_{1}\cdot(Y^{2}\Psi )(0)+2\cdot(Y c_{1} )\cdot(Y\Psi )(0)+(Y^{2} c_{1}) \cdot\Psi      \bigg]d\tau_{1}\\
&+ \int_{0}^{\tau}2(Y c_{1} )\Bigg[\int_{0}^{\tau_{1}} c_{1}\cdot \Psi     \Bigg]d\tau_{1}\\
&+\int_{0}^{\tau}\Bigg[c_{1}\cdot\int_{0}^{\tau_{1}}\bigg[\bigg(
(Y c_{1} )\cdot\Psi +c_{1}\cdot(Y\Psi )(0)+c_{1}\cdot\int_{0}^{\tau_{2}}c_{1}\cdot\Psi  \Bigg)\bigg]d\tau_{2}  \Bigg]d\tau_{1}.
\end{split}
\end{equation*}
Then,
\begin{equation*}
\begin{split}
&(Y^{3}\Psi )(\tau)-(Y^{3}\Psi )(0)
\\=& \int_{0}^{\tau}\bigg[ c_{1}\cdot (Y^{2}\Psi )(0)+2\cdot(Y c_{1} )\cdot(Y\Psi )(0)+\frac{Y^{2}c_{1}}{c_{1}}\cdot (\Phi _{1})'+2(Y c_{1} )\cdot\Phi _{1} +(Y\Psi )(0)\cdot c_{1}\cdot\int_{0}^{\tau_{1}}c_{1}\bigg]d\tau_{1}
\\&+\int_{0}^{\tau}\Bigg[(Y c_{1} )\cdot\Phi _{1}+c_{1}\cdot\int_{0}^{\tau_{1}}\bigg[ \bigg(-\left(\frac{Y c_{1} }{c_{1}}\right)'+c_{1}\Bigg)\cdot\Phi _{1} \bigg]d\tau_{2}\Bigg]d\tau_{1}.\\
\end{split}
\end{equation*}
We now define the function
\[\Phi _{2}(t)=\int_{0}^{t}\bigg(-\left(\frac{Y c_{1} }{c_{1}}\right)'+c_{1}\Bigg)\cdot\Phi _{1}\]
and the function
\[c_{2}=\bigg(-\left(\frac{Y c_{1} }{c_{1}}\right)'+c_{1}\Bigg)\neq 0.\]
We thus obtain
\begin{equation*}
\begin{split}
(Y^{3}\Psi )(\tau)-(Y^{3}\Psi )(0)=&\int_{0}^{\tau}\bigg[ c_{1}\cdot (Y^{2}\Psi )(0)+2\cdot(Y c_{1} )\cdot(Y\Psi )(0)+(Y\Psi )(0)\cdot c_{1}\cdot\int_{0}^{\tau_{1}}c_{1}\bigg]d\tau_{1}
\\&+\frac{(Y^{2} c_{1}) (\tau)}{c_{1}(\tau)\cdot c_{2}(\tau)}\cdot\Phi '_{2}(\tau)
+\frac{1}{c_{2}(\tau)}\Bigg(-\left(\frac{Y^{2}c_{1}}{c_{1}}\right)'+3(Y c_{1} )\Bigg)(\tau)\cdot\Phi _{2}(\tau)\\&
+\int_{0}^{\tau}\Bigg[\Bigg(\Bigg(\frac{1}{c_{2}}\Bigg(\left(\frac{Y^{2}c_{1}}{c_{1}}\right)'-3(Y c_{1} )\Bigg)\Bigg)'+c_{1}\Bigg)\cdot\Phi _{2}\Bigg]d\tau_{1}.\\
\end{split}
\end{equation*}
The first line is completely determined by the initial data and the geometry of $\hi$. The second line is determined (by the geometry of $\hi$) and by $\Phi _{2}(\tau), \Phi '_{2}(\tau)$.

 Hence, we can arbitrarily prescribe $\Psi (\tau), Y\Psi (\tau),Y^{2}\Psi (\tau),Y^{3}\Psi (\tau)$ if and only if the function
 \begin{equation}
c_{3}=\Bigg(\Bigg(\frac{1}{c_{2}}\Bigg(\left(\frac{Y^{2}g}{g}\right)'-3(Y c_{1} )\Bigg) \Bigg) '+g\Bigg)
\label{eq:3}
\end{equation}
is non-zero at at least a point on $\hi$. If, on the other hand, we have $c_{3}=0$ everywhere on $\hi$ then the quantity
\begin{equation*}
 Y^{3}\Psi +\frac{1}{c_{2}}\cdot\left(\left(\frac{Y^{2}c_{1}}{c_{1}}\right)'-3(Y c_{1} )\right)\cdot Y^{2}\Psi +\Bigg[\frac{Y c_{1} }{c_{1}}\cdot\left[\frac{1}{c_{2}}\left(\left(\frac{Y^{2}c_{1}}{c_{1}}\right)'-3(Y c_{1} )\right)\right]+\frac{Y^{2}c_{1}}{c_{1}}\Bigg]  \cdot Y\Psi 
\label{phi3conserved}
\end{equation*}
is conserved, i.e.~independent of $\tau$. Using the above scheme, Theorem \ref{theo4} can be proved inductively.

\subsection{Some examples}
\label{sec:AnExampleOfAMetricForWhichC1Neq0AndC20}

\paragraph{1. An example of a metric for which $c_{1}\neq 0$ and $c_{2}=0$ for $l=0$\medskip\\}
\label{sec:AnExampleOfAMetricForWhichC1Neq0AndC20}

Consider a spherically symmetric metric such that 
\begin{equation}
\Omega(u,v)=1, \ \ \ r(u,v)=\frac{1}{2}+\frac{1}{2}(1+uv)^{2}>0. 
\label{examplec1c2}
\end{equation}
Let now $\hi=\left\{u=0\right\}$ along which $r=1, \partial_{u}r=v, \partial_{u}\partial_{u}r=v^{2}$. Then along $\hi$ we obtain   $c_{1}=1$ and  $c_{2}=0$. Therefore, $\hh$ admits a second-order conservation law for all spherically symmetric solutions to the wave equation. This example shows that the order of the conservation law and the angular frequency associated to its kernel are independent.

\paragraph{2. Conservation law on the Cauchy horizon of Reissner--Nordstr\"{o}m\medskip\\}
\label{sec:AnExampleOfdsaf}

The  Reissner--Norstr\"{o}m metric satisfies:
\[\partial_{u}\partial_{v}r=-\frac{\Omega^{2}}{4r}-\frac{1}{r}\partial_{v}r\partial_{u}r+\frac{1}{4}\Omega^{2}r^{-3}e^{2}. \]
On the (inner) horizon we have $\partial_{v}r=0$ and in fact 
\[r=M-\sqrt{M^{2}-e^{2}}.\]
Then, we have a first-order conservation law if:
\begin{equation}
\partial_{v}\partial_{u}r=\Omega^{2}\cdot l(l+1)\cdot\frac{1}{2r}. 
\label{malistatwra}
\end{equation}
Equation \eqref{malistatwra} is satisfied for a discrete set of values for the charge $e$. Indeed, we need $e$ to satisfy:
\[\left(\frac{e}{r}\right)^{2}=\frac{2l(l+1)+1}{2}.\]
In this case the kernel of the conservation law consists of all the eigenfunctions of the standard spherical Laplacian $\lapp_{\mathbb{S}^{2}}$ which correspond to the eigenvalue $-l(l+1)$.

\paragraph{3. A hierarchy of conservation laws for some spacetimes\medskip\\}
\label{sec:AnExampleOfdsaf}

The spherically symmetric null hypersurfaces of Minkowski spacetimes, the null infinity of asymptotically flat spacetimes and the event horizon of extremal Reissner--Norstr\"{o}m and extremal Kerr black holes admit the following hierarchy of conservation laws. Specifically, we have $R_{l+1,l}=0$ for all $l\in\mathbb{N}$. Hence, for all $l\in\mathbb{N}$ there is an $(l+1)$-order conservation law and its kernel consists of all the eigenfunctions of the standard spherical Laplacian $\lapp_{\mathbb{S}^{2}}$ which correspond to the eigenvalue $-l(l+1)$.

\section{Acknowledgements}
\label{sec:Acknowledgements}
	I would like to thank Mihalis Dafermos, Georgios Moschidis, Willie Wong and Shiwu Yang for their help and insights. I would also like to thank Harvey Reall, Sergiu Klainerman, Jan Sbierski and Jeremy Szeftel for several very stimulating discussions and comments. I acknowledge support through NSF grant  DMS-1265538.

\appendix

\section{Conservation laws and high frequency solutions}
\label{sec:ConservationLawsAndHighFrequencySolutions}

The main result of the present paper concerned characterizing the nature of the information which can be  propagated by \textbf{all} solutions to the wave equation along null hypersurfaces. We will here study the charges associated to high frequency solutions which themselves convey information along null geodesics (as the frequency tends to infinity).

\medskip

\noindent\textbf{ Relation with geometric optics approximation}

\medskip

Given  a neighborhood $\mathcal{N}$ of a null geodesic $\gamma$ one can construct (approximate) solutions to the wave equation which are of the form
\begin{equation}
\psi_{\lambda}=\frac{1}{\lambda}\cdot a\cdot e^{i\lambda k},
\label{hfs}
\end{equation}
where $a,k$ are real-valued smooth functions supported on $\mathcal{N}$ (and independent of $\lambda$). The normalization $1/\lambda$ is such that $\psi_{\lambda}$  has finite energy on a (fixed) Cauchy hypersurface. One is interested in highly oscillatory solutions that arise in the high frequency limit $\lambda\rightarrow +\infty$.  In order for $\psi_{\lambda}$ to be solution to the wave equation (in the limit $\lambda\rightarrow +\infty$), the functions $a,k$ must satisfy:
\begin{equation}
\begin{split}
dk\cdot dk=0,\\
2(\nabla k)(a)+\Box_{g}k\cdot a=0
\end{split}
\label{goapprox1}
\end{equation}
The first equation is the eikonal equation and hence $k$ must be an optical function. Therefore, the level sets of $k$ are null hypersurfaces. These hypersurfaces are generated by null geodesics which are integral curves of $\nabla k$.  Since we want to localize around the null geodesic $\gamma$ we need to take one of the integral curves of $\nabla k$ to coincide with $\gamma$. Hence, $k$ is constant along $\gamma$. The function $a$ then solves a transport equation. 
 Assume that construction for $k$ is possible for arbitrarily long time along $\gamma$. Let now $X$ be a vector field along $\gamma$. Then,
\begin{equation}
X \psi_{\lambda}= \frac{1}{\lambda}\cdot (Xa)\cdot e^{i\lambda k}+i \cdot a \cdot (Xk)\cdot e^{i\lambda k},
\label{psiapprox1}
\end{equation}
and therefore, as $\lambda\rightarrow +\infty$, the term that dominates is the one involving $Xk$. However, $\overset{\cdot}{\gamma}k=0$ along $\gamma$ and hence $\psi_{\lambda}$ oscillates in a direction transversal to $\gamma$.

Let now $\hh$ be a null hypersurface admitting conservation laws. We will investigate the conserved charges associated to $\psi_{\lambda}$ (as $\lambda\rightarrow+\infty$). Let $\s=\left\langle S_{0},L_{geod}, \Omega\right\rangle$ be a foliation of $\hh$ and $v$ be the associated optical function such that $Lv=1$, where $L=\Omega^{2}\cdot L_{geod}$. Let $u$ be the conjugate null coordinate and $Y^{\s}$ be the conjugate null vector field as defined in Section \ref{sec:ConservationLawsForTheWaveEquations} (for more details about the double null foliation see Section \ref{sec:TheDoubleNullFoliation}).  Note that 
\begin{equation}
Y^{\s}(\phi\cdot\psi_{\lambda})=\frac{1}{\lambda}\cdot Y^{\s}{\phi}\cdot a\cdot e^{i\lambda\cdot k}+\frac{1}{\lambda}\cdot \phi\cdot (Y^{\s}a)\cdot e^{i\lambda k}+i\cdot \phi\cdot a \cdot (Y^{\s}k)\cdot e^{i\lambda k}.
\label{gopequation}
\end{equation}
We distinguish the following two cases:

\smallskip

\noindent\textbf{Case I: $\gamma$ coincides with one of the null generators of $\hh$}

\smallskip

Without loss of generality, we can assume in this case that $k=u$, since $u=0$ on $\hh$. Then, 
\[Y^{\s}k=Y^{\s}u=1\]
on $\hh$, and hence the high frequency (approximate) solution $\psi_{\lambda}$ oscillates in the direction $Y^{\s}$ which coincides with the derivative appearing in the conservation law $\eqref{eq:integrals}$.  \begin{figure}[H]
   \centering
		\includegraphics[scale=0.06]{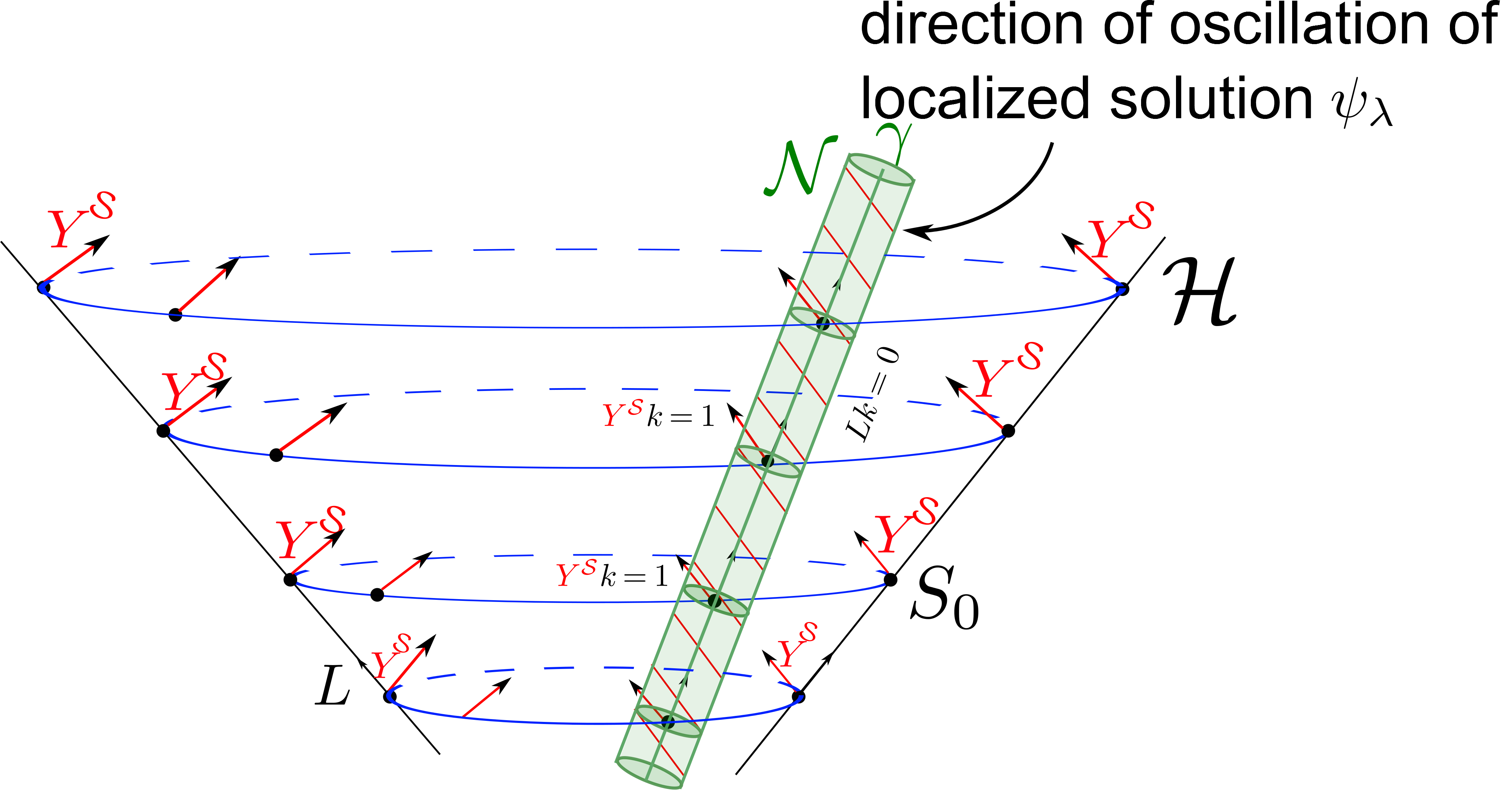}
	\label{fig:geomo1}
\end{figure}
Moreover, since $u$ is constant on $\hh$ we obtain in the limit $\lambda\rightarrow+\infty$
\[\left|char\big(S_{v}\big)[\psi_{\lambda}]\right|= \left|\int_{S_{v}} Y^{\s}\big(\phi\cdot\psi_{\lambda}\big)\cdot \Theta^{\s}\, d\mu_{_{\mathbb{S}^{2}}}\right| \sim\left|\int_{S_{v}}\phi\cdot a\cdot \Theta^{\s}\, d\mu_{_{\mathbb{S}^{2}}}\right|,  \]
where we have assumed that $\Theta^{\s}$ does not vanish along $\gamma$. Therefore, high frequency solutions localized in a neighborhood of a null generator of $\hh$ can ``carry'' arbitrarily large charges along $\hh$.  

\medskip

\noindent\textbf{Case II: $\gamma$ intersects $\hh$ transversally}

\smallskip

In this case we can take $k=v$ where $v$ is constant on conjugate null hypersurfaces $\underline{\hh}_{v}$ such that the null geodesic $\gamma$ is a null generator of one of them, say $\underline{\hh}_{0}$. Note that the optical function $v$ gives rise to a foliation  $\s=\big(S_{v}\big)_{v\in\mathbb{R}}$ of $\hh$. Since $\hh$ admits a conservation law, according to the main result, $\hh$ must admit a conservation law with respect to the foliation $\s$.  Let $S_{0}=\hh\cap \underline{\hh}_{0}$. We have
\[Lk=Lv=1\]
on $S_{0}$ and so $\psi_{\lambda}$ oscillates in the direction of the null generators of $\hh$. Moreover,
\[Y^{\s}k=0\]on $S_{0}$. 
\begin{figure}[H]
   \centering
		\includegraphics[scale=0.085]{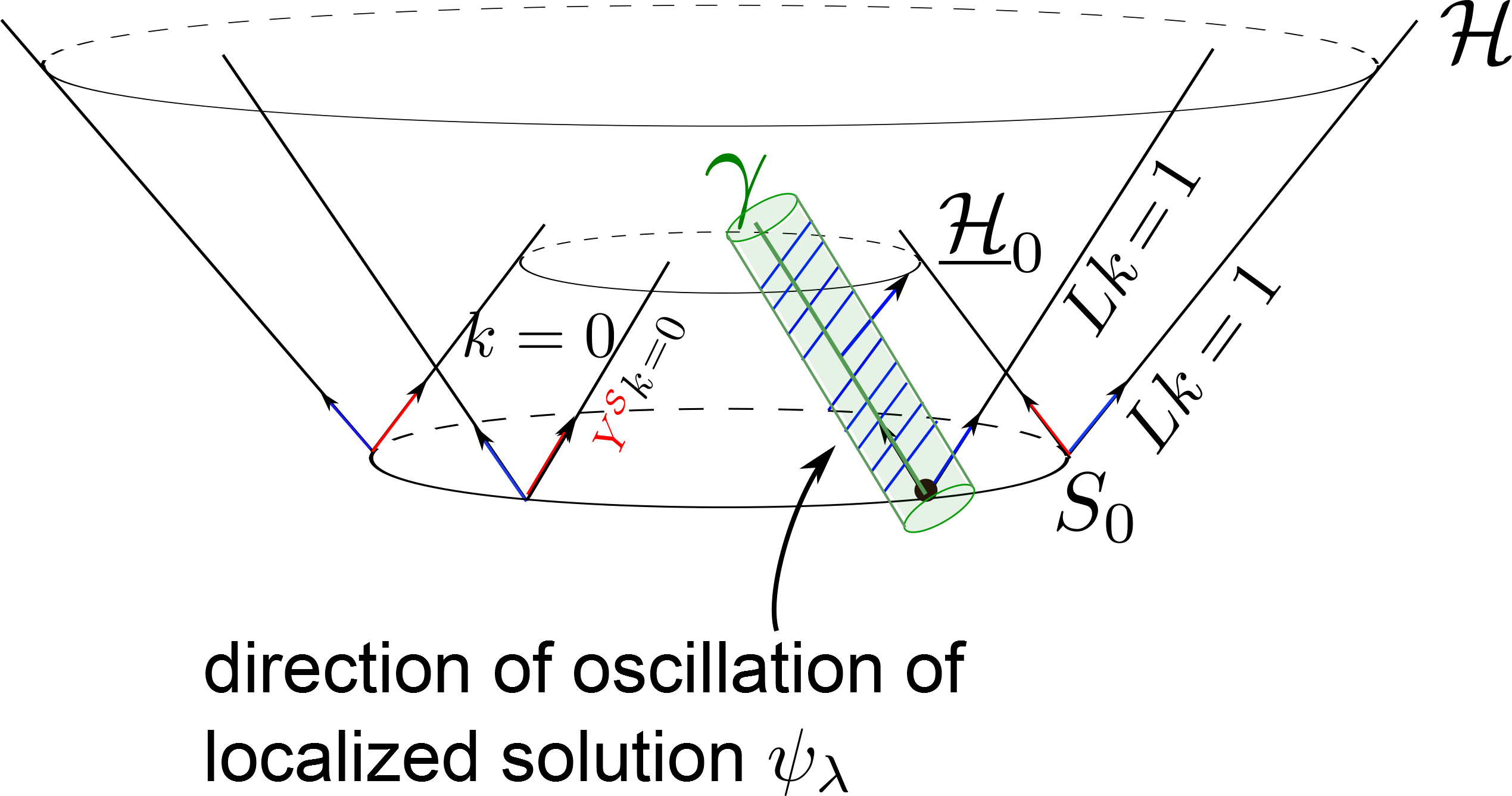}
	\label{fig:geomo2}
\end{figure}
Therefore, in view of \eqref{gopequation}, high frequency solutions localized in a neighborhood of a null geodesic $\gamma$ which intersects $\hh$ transversally have vanishing charges.

\begin{remark}
\textbf{Non-existence of conservation laws on timelike hypersurfaces. }The wave equation clearly does not admit conservation laws on spacelike hypersurfaces. 
 Using high frequency solutions we can also rule out the existence of conservation laws on timelike hypersurfaces. Indeed, let $\mathcal{T}=\big(S_{t}\big)_{t\in\mathbb{R}}$ be a timelike hypersurface which admits conserved charges with respect to a foliation with 2-spheres $S_{t}$.  Let $\mathcal{H}_{t}$ denote the null hypersurface  generated by null geodesics normal to $S_{t}$ and $\underline{\mathcal{H}}_{t}$ denote the conjugate null hypersurface generated by conjugate null geodesics  normal to $S_{t}$. Let also $L,Y$ be tangential to the null generators of $\hh_{t},\underline{\hh}_{t}$, respectively.  We define the optical functions $u,v$ such that their level sets are given by
\[ \left\{u=t\right\}=\hh_{t},\  \ \ \ \left\{v=t\right\}=\underline{\hh}_{t}. \]  
Let us first assume that there is a point $p\in S_{t}$ such that the derivative $N$ involved at the charge over $S_{t}$ at the point $p$ is distinct from $L$. We consider the high frequency solution $\psi_{\lambda}$ for which $k=u$ and $a$ is localized in a sufficiently small neighborhood of the point $p$. Then $\psi_{\lambda}$ is localized around a neighborhood of the null generator of $\hh_{t}$ emanating from the point $p$.\begin{figure}[H]
   \centering
		\includegraphics[scale=0.065]{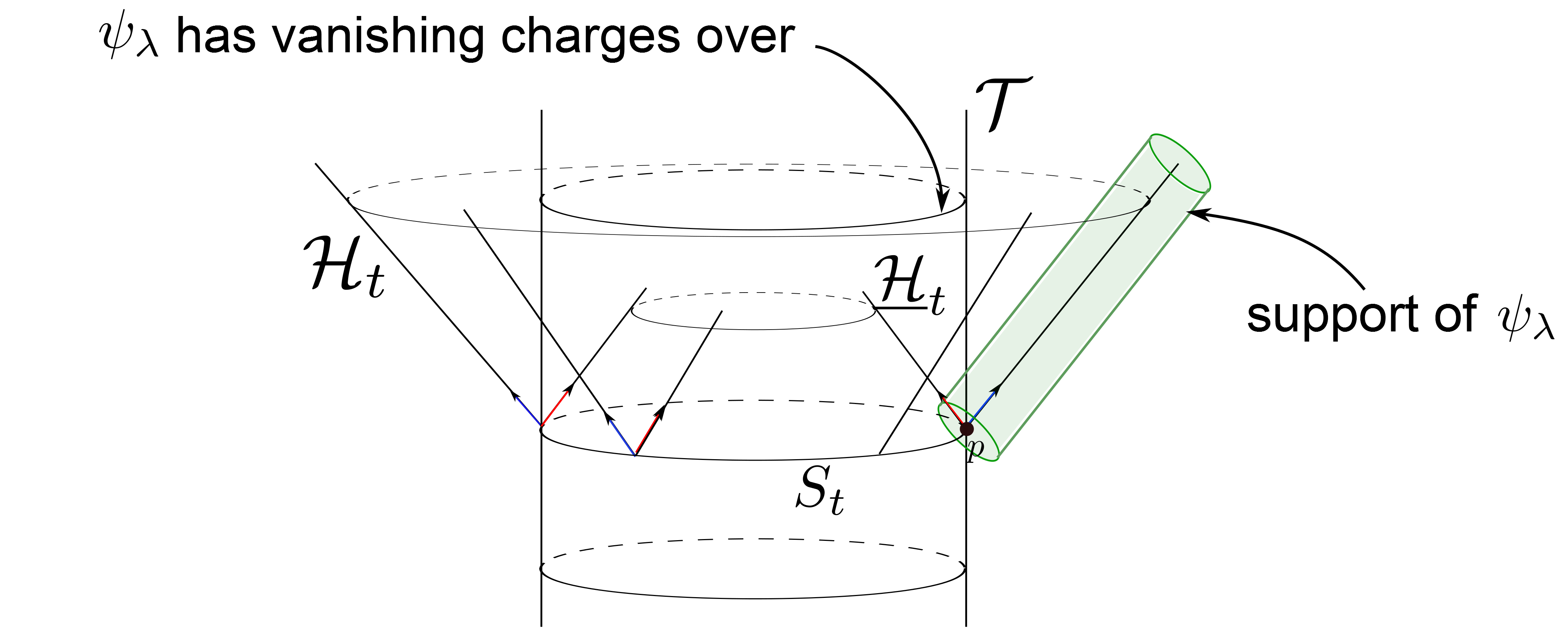}
	\label{fig:geomo3pci}
\end{figure} 
In view of \eqref{psiapprox1}, as $\lambda\rightarrow+\infty$ we obtain 
\begin{equation}
N\psi_{\lambda}\sim i\cdot a\cdot (Nu)\cdot e^{i\lambda u}.
\label{nonexistenceremark}
\end{equation}
at the point $p$. By our assumption we have $Nu\neq 0$ at $p$, and hence there is a sufficiently small neighborhood $\mathcal{N}_{p}$ of $p$ such that $Nu>\epsilon>0$ at $\mathcal{N}_{p}$.  Note also that since $u$ is constant on $S_{t}$ the term $e^{i\lambda u}$ does not interact with the integral over $S_{t}$. Therefore, we can choose $a$ so $\left.a\right|_{S_{t}}$ is supported in an sufficiently small neighborhood of $p$ and such that the charge over $S_{t}$ of  $\psi_{\lambda}$ is arbitrarily large.
This contradicts the fact that the charges of $\psi_{\lambda}$ over sections of $\mathcal{T}$ in the future of $S_{t}$ are necessarily zero. Therefore, the conserved charges must necessarily involve only the $L$ derivative (since it is only this derivative for which $L\psi_{\lambda}=0$). However,  if we consider solutions localized in a neighborhood of the null generators of $\underline{\hh}_{t}$ then we are again led to contradiction. 

\label{remarktimelike}
\end{remark}

\medskip

\bigskip

\noindent\textbf{ Relation of conservation laws and Gaussian beams} 

\medskip

The eikonal equation (and hence the geometric optics approximation) breaks down if caustics form. In particular, the above constructions are not valid if there are conjugate points along $\gamma$. However, one can still construct high frequency solutions $\psi_{\lambda}$ to the wave equation which at the limit $\lambda\rightarrow+\infty$ are supported only on given null geodesic $\gamma$. These solutions are of the form
\begin{equation}
\psi_{\lambda}=\frac{1}{\lambda^{\frac{1}{4}}}\cdot a\cdot e^{i\lambda \tau}
\label{gaubeam1}
\end{equation}
where $a$, and $\tau$ are complex valued functions. In order for $\Box_{g}\psi_{\lambda}$ to be small (in the $L^{2}$ sense) one still requires $d\tau\cdot d\tau$ to vanish to (at least) third order on $\gamma$ (and $a$ to satisfy a transport equation along $\gamma$). The main idea however is to additionally assume that $Im(\tau)\geq C\cdot (x^{1}_{2}+x_{2}^{2}+x_{3}^{2})$ where $(x_{0},x_{1},x_{2},x_{3})$ is a coordinate system covering a neighborhood of $\gamma$ such that $\gamma=\left\{x_{1}=0,x_{2}=0, x_{3}=0\right\}$. This construction is known as the \textit{Gaussian beams} method (see \cite{ralston2}). The normalization factor $\lambda^{1/4}$ is introduced so that $\psi_{\lambda}$ has finite energy on a fixed Cauchy hypersurface. 

A slightly more involved calculation than above  shows that the charges associated to Gaussian beams are zero regardless of whether $\gamma$ is a null generator of $\hh$ or not. On the other hand,  Sbierski \cite{janpaper} was able to show  that the \textit{energy} of Gaussian beams localized on a null generator $\gamma$ of a degenerate horizon\footnote{which by Theorem \ref{the2} admits a unique conservation law.} does \textit{not} decay (even though these solutions carry vanishing charges). This in particular implies that one cannot show a uniform local integrated energy decay estimate without degeneracy or loss of derivatives in a region containing the Killing horizon. Optimal such (degenerate) estimates were derived in \cite{aretakis1,aretakis3}.

\bibliographystyle{acm}
\bibliography{../../../../bibliography}

\begin{thebibliography}{10}

\bibitem{blukerr}
{\sc Andersson, L., and Blue, P.}
\newblock Hidden symmetries and decay for the wave equation on the {K}err
  spacetime.
\newblock {\em arXiv:0908.2265\/}.

\bibitem{andersson}
{\sc Andersson, L., Mars, M., and Simon, W.}
\newblock Stability of marginally outer trapped surfaces and existence of
  marginally outer trapped tubes.
\newblock {\em Advances in Theoretical and Mathematical Physics 12\/} (2008),
  853--888.

\bibitem{aretakis1}
{\sc Aretakis, S.}
\newblock Stability and instability of extreme {R}eissner--{N}ordstr\"om black
  hole spacetimes for linear scalar perturbations {I}.
\newblock {\em Commun. Math. Phys. 307\/} (2011), 17--63.

\bibitem{aretakis2}
{\sc Aretakis, S.}
\newblock Stability and instability of extreme {R}eissner--{N}ordstr\"om black
  hole spacetimes for linear scalar perturbations {II}.
\newblock {\em Ann. Henri Poincar\'{e} 12\/} (2011), 1491--1538.

\bibitem{aretakis3}
{\sc Aretakis, S.}
\newblock Decay of axisymmetric solutions of the wave equation on extreme
  {K}err backgrounds.
\newblock {\em J. Funct. Analysis 263\/} (2012), 2770--2831.

\bibitem{aretakis4}
{\sc Aretakis, S.}
\newblock Horizon instability of extremal black holes.
\newblock {\em ar{X}iv:1206.6598\/} (2012).

\bibitem{aretakiselliptic}
{\sc Aretakis, S.}
\newblock On a foliation-covariant elliptic operator on null hypersurfaces.
\newblock {\em preprint\/} (2013).

\bibitem{aretakis2013}
{\sc Aretakis, S.}
\newblock On a non-linear instability of extremal black holes.
\newblock {\em Phys. Rev. D 87\/} (2013), 084052.

\bibitem{bizon2012}
{\sc Bizon, P., and Friedrich, H.}
\newblock A remark about the wave equations on the extreme
  {R}eissner--{N}ordstr\"om black hole exterior.
\newblock {\em Class. Quantum Grav. 30\/} (2013), 065001.

\bibitem{memorychistodoulou}
{\sc Christodoulou, D.}
\newblock Nonlinear nature of gravitation and gravitational-wave experiments.
\newblock {\em Phys. Rev. Lett. 67\/} (1991), 1486--1489.

\bibitem{DC09}
{\sc Christodoulou, D.}
\newblock {\em The formation of black holes in general relativity}.
\newblock European Mathematical Society Publishing House, 2009.

\bibitem{christab}
{\sc Christodoulou, D., and Klainerman, S.}
\newblock {\em The Global Nonlinear Stability of the {M}inkowski Space}.
\newblock Princeton University Press, 1994.

\bibitem{chrugrav}
{\sc Chru\'{s}ciel, P.~T., Mac{C}allum, M. A.~H., and Singleton, D.~B.}
\newblock Gravitational waves in general relativity {XIV}. {B}ondi expansions
  and the ``polyhomogeneity'' of {S}cri.
\newblock {\em Phil. Trans. R. Soc. Lond. A. 350\/} (1995), 113.

\bibitem{enadio}
{\sc Dafermos, M., and Rodnianski, I.}
\newblock Decay for solutions of the wave equation on {K}err exterior
  spacetimes {I-II}: The cases $|a|\ll m$ or axisymmetry.
\newblock {\em ar{X}iv:1010.5132\/} (2010).

\bibitem{tria}
{\sc Dafermos, M., and Rodnianski, I.}
\newblock The black hole stability problem for linear scalar perturbations.
\newblock {\em Proceedings of the 12 Marcel Grossmann Meeting, edited by T.
  Damour et al (ed.), World Scientific, Singapore\/} (2011), 132--189,
  ar{X}iv:1010.5137.

\bibitem{lecturesMD}
{\sc Dafermos, M., and Rodnianski, I.}
\newblock Lectures on black holes and linear waves.
\newblock {\em in Evolution equations, {C}lay {M}athematics {P}roceedings,
  {V}ol. 17, Amer. Math. Soc., Providence, RI,\/} (2013), 97--205,
  ar{X}iv:0811.0354.

\bibitem{dd2012}
{\sc Dain, S., and Dotti, G.}
\newblock The wave equation on the extreme {R}eissner--{N}ordstr\"om black
  hole.
\newblock {\em ar{X}iv:1209.0213\/} (2012).

\bibitem{evans}
{\sc Evans, L.~C.}
\newblock {\em Partial Differential Equations}.
\newblock Graduate Studies in Mathematics, 1998.

\bibitem{npexton}
{\sc Exton, A.~R., Newman, E.~T., and Penrose, R.}
\newblock Conserved quantities in the {E}instein-{M}axwell theory.
\newblock {\em J. Math. Phys. 10\/} (1969), 1566--1570.

\bibitem{goldberg1}
{\sc Goldberg, J.~N.}
\newblock Invariant transformations and {N}ewman-{P}enrose constants.
\newblock {\em J. Math. Phys. 8\/} (1967), 2161--2166.

\bibitem{goldberg2}
{\sc Goldberg, J.~N.}
\newblock Green's theorem and invariant tranformations.
\newblock {\em J. Math. Phys. 9\/} (1968), 674--679.

\bibitem{goldberg3}
{\sc Goldberg, J.~N.}
\newblock Conservation of the {N}ewman--{P}enrose conserved quantities.
\newblock {\em Phys. Rev. Lett. 28\/} (1972), 1400.

\bibitem{haw}
{\sc Hawking, S., and { G.F.R. Ellis}}.
\newblock {\em The large scale structure of spacetime}.
\newblock Cambridge University Press, 1973.

\bibitem{extremumproblemsbook}
{\sc Henrot, A.}
\newblock {\em Extremum problems for eigenvalues of elliptic operators}.
\newblock Birkh\"auser, 2006.

\bibitem{kato}
{\sc Kato, T.}
\newblock {\em Perturbation theory for linear operators}.
\newblock Springer, 1995.

\bibitem{SK86}
{\sc Klainerman, S.}
\newblock The null condition and global existence to nonlinear wave equations.
\newblock {\em Lect. Appl. Math. 23\/} (1986), 293--326.

\bibitem{valientenp1}
{\sc Kroon, J. A.~V.}
\newblock Conserved quantities for polyhomogeneous spacetimes.
\newblock {\em Class. Quantum Grav. 15\/} (1998), 2479.

\bibitem{valientenp2}
{\sc Kroon, J. A.~V.}
\newblock Logarithmic {N}ewman--{P}enrose constants for arbitrary
  polyhomogeneous spacetimes.
\newblock {\em Class. Quantum Grav. 16\/} (1999), 1653.

\bibitem{valientenp3}
{\sc Kroon, J. A.~V.}
\newblock On {K}illing vector fields and {N}ewman--{P}enrose constants.
\newblock {\em J. Math. Phys. 41\/} (2000), 898.

\bibitem{hm2012}
{\sc Lucietti, J., Murata, K., Reall, H.~S., and Tanahashi, N.}
\newblock On the horizon instability of an extreme {R}eissner--{N}ordstr\"om
  black hole.
\newblock {\em JHEP 1303\/} (2013), 035, arXiv:1212.2557.

\bibitem{hj2012}
{\sc Lucietti, J., and Reall, H.}
\newblock Gravitational instability of an extreme {K}err black hole.
\newblock {\em Phys. Rev. D86:104030\/} (2012).

\bibitem{murata2012}
{\sc Murata, K.}
\newblock Instability of higher dimensional extreme black holes.
\newblock {\em Class. Quantum Grav. 30\/} (2013), 075002.

\bibitem{harvey2013}
{\sc Murata, K., Reall, H.~S., and Tanahashi, N.}
\newblock What happens at the horizon(s) of an extreme black hole?
\newblock {\em arXiv:1307.6800\/} (2013).

\bibitem{np1}
{\sc Newman, E.~T., and Penrose, R.}
\newblock 10 exact gravitationally conserved quantities.
\newblock {\em Phys. Rev. Lett. 15\/} (1965), 231.

\bibitem{np2}
{\sc Newman, E.~T., and Penrose, R.}
\newblock New conservation laws for zero rest mass fields in asympotically flat
  space-time.
\newblock {\em Proc. R. Soc. A 305\/} (1968), 175204.

\bibitem{ori2013}
{\sc Ori, A.}
\newblock Late-time tails in extremal {R}eissner-{N}ordstr\"{o}m spacetime.
\newblock {\em arXiv:1305.1564\/} (2013).

\bibitem{pressnp}
{\sc Press, W.~H., and Bardeen, J.~M.}
\newblock Non-conservation of the {N}ewman--{P}enrose conserved quantities.
\newblock {\em Phys. Rev. Lett. 27\/} (1971), 1303.

\bibitem{ralston2}
{\sc Ralston, J.}
\newblock Gaussian beams and the propagation of singularities.
\newblock {\em Studies in Partial Differential Equations, MAA Studies in
  Mathematics 23\/} (1983), 206--248.

\bibitem{robinson}
{\sc Robinson, D.~C.}
\newblock Conserved quantities of {N}ewman and {P}enrose.
\newblock {\em J. Math. Phys. 9\/} (1969), 1745--1753.

\bibitem{janpaper}
{\sc Sbierski, J.}
\newblock Characterisation of the energy of {G}aussian beams on {L}orentzian
  manifolds with applications to black hole spacetimes.
\newblock {\em preprint\/} (2013).

\bibitem{tataru2}
{\sc Tataru, D., and Tohaneanu, M.}
\newblock A local energy estimate on {K}err black hole backgrounds.
\newblock {\em Int. Math. Res. Not. 2011\/} (2008), 248--292.

\bibitem{wald}
{\sc Wald, R.~M.}
\newblock {\em General Relativity}.
\newblock The University of Chicago Press, 1984.

\end{thebibliography}
%BIG ISSUE HERE: GAUGE INVARIANCE OF THIS CONDITION!!!!!

\end{document}